\documentclass[letterpaper,12pt]{article}

\usepackage{setspace}
\usepackage{arxiv}
\usepackage{booktabs}
\mathtoolsset{showonlyrefs}
\allowdisplaybreaks   

\usepackage{lineno}
\usepackage{tikz}
\usetikzlibrary{positioning,arrows.meta}
\newcommand{\indsim}{\overset{\mathrm{ind}}{\sim}}
\newcommand{\pto}{\overset{\mathrm{p.}}{\longrightarrow}}
\newcommand{\ADE}{\mathrm{ADE}}
\newcommand{\AIE}{\mathrm{AIE}}
\newcommand{\MPE}{\mathrm{MPE}}
\newcommand{\TOT}{\mathrm{TOT}}
\newcommand{\PC}{\mathrm{PC}}
\newcommand{\rmL}{\mathrm{L}}
\newcommand{\rmG}{\mathrm{G}}

\newcommand{\RCT}{\mathrm{RCT}}
\newcommand{\oracle}{\mathrm{oracle}}
\newcommand{\functional}{\mathrm{func}}

\newcommand{\spillovertype}[1]{\hyperlink{#1}{$\mathsf{(#1)}$}}

\usepackage{float}

\usepackage{graphicx}
\usepackage{caption}
\usepackage{subcaption}
\usepackage{float} 
\usepackage{array}

\begin{document}


\title{\Large Decomposition of Spillover Effects Under Misspecification:\\
Pseudo-True Estimands and a Local-Global Extension\thanks{\setlength{\baselineskip}{4mm}We thank Isaiah Andrews and Davide Viviano for generous support. We also benefited from helpful comments by Boan Chen, Raj Chetty, Michael Leung, Shuangning Li, Evan Munro, Daniel Nevo, Jesse Shapiro, Elie Tamer, Austin Zheng, and Liang Zhong, as well as from discussions with participants at the Harvard Graduate Student Workshop in Econometrics and the Statistics Seminar at Harvard University. Xiaodong Yang was supported by his advisor Subhabrata Sen using Grants FA9950-23-1-0429 from AFOSR and N00014-23-1-2489 from ONR. Yechan Park acknowledges support from the Harvard Griffin Fund in Economics. Stefan Nicov provided excellent research assistance. All remaining errors are ours.\smallskip}}

\author{
Yechan Park\thanks{
Department of Economics, Harvard University.
Littauer Center, 1805 Cambridge St, Cambridge, MA 02138.
Email: \texttt{yechanpark@fas.harvard.edu}
}
\and
Xiaodong Yang\thanks{
Department of Statistics, Harvard University.
Maxwell Dworkin 242, 33 Oxford St, Cambridge, MA 02138.
Email: \texttt{xyang@g.harvard.edu}
}
\thanks{The authors contributed equally to this work.}
}

\date{}

\maketitle
\begin{abstract}\setlength{\baselineskip}{5.6mm}


To measure spillovers, researchers often summarize who else was treated with a simple measure, such as the share of treated neighbors. We study what the researcher estimates when that summary is misspecified. We show that the researcher estimates the best approximation to the true policy effect among all functionals of the chosen summary, and that the usual direct and spillover estimates are exactly the components of this approximation. Under a monotonicity restriction, the estimates also preserve the signs of the true effects. We then specialize this framework to a setting common in applications such as cash transfers: treatment spills over both globally, through market equilibrium, and locally, through network externalities, while the researcher models only one channel. The network estimator still recovers the network spillover, and the equilibrium estimator the equilibrium spillover, even when the other channel is ignored. We illustrate with a simulation calibrated to a large cash-transfer experiment.

\bigskip
\noindent{\bf Keywords:} interference; exposure mappings; spillover effects; misspecification; marginal policy effects
\end{abstract}

\section{Introduction}

In many empirical settings, one unit’s treatment affects the outcomes of others. Vaccination programs change not only the health of vaccinated individuals, but also the infection risks of their contacts \citep{HudgensHalloran2008}. Large-scale anti-poverty programs alter local markets and prices, with consequences that propagate through space \citep{egger2022ge}. The challenge in these settings is that the assignment vector acts as a single, system-wide shock rather than a collection of independent unit-level treatments: the outcome for each unit can depend on many, or even all, components of the treatment vector. From the researcher’s perspective, this means that one realized assignment from a fixed design must be used to learn about this complex pattern of dependence.

Applied work typically approaches these settings through \emph{exposure mappings} \citep{AronowSamii2017}: low-dimensional summaries of the underlying interference structure, such as the fraction of treated neighbors in a network \citep{cai2015social}, a spatial ring kernel \citep{egger2022ge}, or market prices \citep{munro2021treatment}. Outcomes are modeled as depending on own treatment and this exposure measure, and empirical work reports “direct effects’’ and “spillover effects’’ defined within this reduced description of the interference structure.

A central difficulty is that exposure mappings are often inevitably misspecified \citep{savje2024causal}. They compress a rich pattern of interference into a simple index that omits many potentially relevant details about who is treated and how spillovers operate. This raises a basic question: when the exposure mapping is only an approximation to the true interference structure, what policy object are exposure-based estimands actually targeting, and how should we interpret their direct and spillover components relative to the underlying policy question?

This paper answers that question by taking a primitive policy object as the target and then characterizing what the exposure-based estimands recover relative to it. We focus on the effect of marginally changing the treatment assignment rule, often called the \textit{marginal policy effect} in the recent literature \citep{carneiro2010evaluating,munro2021treatment}. This quantity is defined directly from the experimental or quasi-experimental design, without reference to any exposure mapping. It has a natural interpretation as a social multiplier that captures the aggregate impact of a small change in treatment intensity, and it is often more tractable than other counterfactual quantities \citep{munro2021treatment}. Many recent theoretical contributions analyze marginal policy effects of this form \citep[e.g.,][]{li2022random,munro2021treatment,arkhangelsky2025evaluating,hu2022average}, and they have a number of practical applications, for example as regression coefficients in linear regressions with both an own-treatment indicator and an exposure variable \citep{egger2022ge,muralidharan2023general}; see also the three examples in \citet{hu2022average}.

Given this policy primitive, we then study what happens when the analyst commits to using an exposure mapping chosen on the basis of domain knowledge. We show that this restriction induces a pseudo-true outcome model: among all outcome models that depend on the assignment vector only through the chosen exposure, there is a unique model that provides the best mean-squared approximation to the true outcomes. This pseudo-true model in turn induces marginal-policy, direct, and spillover effects. This extends the decomposition of \citet{hu2022average}, who show that under correct specification the marginal policy effect splits exactly into a direct and a spillover effect, to the pseudo-true objects induced by a misspecified exposure mapping. Thus, our contribution is not to claim that an arbitrary misspecified exposure mapping is automatically informative about the oracle policy effect, but rather to characterize the canonical target implied by the maintained exposure restriction. We then quantify when these pseudo-true estimands are close to their oracle counterparts. In addition, under a monotonicity condition on the exposure mapping, they admit a sign-preserving representation as nonnegative linear combinations of primitive switching contrasts. Accordingly, Section~\ref{sec:misspecified estimand} provides a general framework for interpreting what exposure-based procedures target under misspecification.

So far, we have intentionally been agnostic about the detailed structure of interference. As emphasized by \citet{leung2024discussion} and \citet{auerbach2024discussion}, without additional structure one should not expect exposure-based analyses to recover finer channels beyond what the maintained restriction encodes. At the same time, many applied settings feature multiple distinct spillover mechanisms, which raises the question of when the general pseudo-true objects above admit a sharper interpretation. To study this, we introduce a structured model class that nests many important empirical contexts \citep[e.g.,][]{egger2022ge,AngelucciDeGiorgi2009} and theoretical work on interference \citep[e.g.,][]{li2022random,munro2021treatment}. Specifically, we focus on environments in which local network spillovers and global spillovers, such as equilibrium prices, wages, or epidemic states, operate simultaneously.

This local-global model should be viewed as a structured specialization of our general misspecification framework, in which the pseudo-true interpretation sharpens into a statement about oracle channel-specific components. In this class of models, the oracle marginal policy effect admits an asymptotic three-way decomposition into a direct effect, a local spillover effect, and a global spillover effect. Specifically, a researcher who uses only a local exposure mapping can still be viewed as targeting the local component, while a researcher who uses only a global exposure mapping targets the global component, even though each omits the other first-order channel. More generally, Section~\ref{sec:misspecified estimand} continues to allow for additional omitted channels beyond these maintained local and global ones; when those residual channels are asymptotically negligible, the corresponding pseudo-true estimands remain close to the local-global oracle targets, and when they are not, they still retain their interpretation as the best exposure-based $L^2$ approximations.

An important implication is that many existing methods are more robust than previously understood once we reinterpret their targets as channel-specific components of this pseudo-true estimand. In particular, network estimators of Li--Wager type remain consistent for the local spillover component even in the presence of global spillovers. With additional sources of variation, such as augmented randomization schemes or instrumental-variable perturbations of global state variables, the global spillover component can also be separately recovered. We illustrate this idea through a semi-synthetic experiment calibrated to real data from the large-scale cash-transfer experiment studied by \citet{filmer2023cash}.

\paragraph{Roadmap.}
Figure~\ref{fig:intro_bridge} summarizes the relation between the paper’s two main conceptual components. Section~\ref{sec:misspecified estimand} develops a general pseudo-true framework for misspecified exposure mappings and studies the resulting direct, indirect, and marginal policy effects. Section~\ref{sec:example expo} then discusses motivating exposure mappings and examples in which local and global spillover channels may coexist. Building on this, Section~\ref{sec:local vs global example} specializes the general framework to a structured local-global environment, in which the pseudo-true objects from Section~\ref{sec:misspecified estimand} admit a sharper channel-specific interpretation and can be estimated using procedures adapted to the corresponding channel. Section~\ref{sec:gen_simu} evaluates these ideas in simulation and semi-synthetic designs. Section~\ref{sec:conclusion} concludes.

\begin{figure}[!htbp]
\centering
\footnotesize
\begin{minipage}[t]{0.43\textwidth}
\centering
\begin{tikzpicture}[
    >=Latex,
    font=\scriptsize,
    box/.style={
        draw,
        rectangle,
        rounded corners=1.5pt,
        minimum width=1.15cm,
        minimum height=0.58cm,
        align=center,
        inner sep=1.5pt
    }
]
\node[box] (w) at (0,0) {$\mathbf W$};
\node[box] (d) at (1.75,0) {$d_i(\mathbf W)$};
\node[box] (y) at (3.50,0) {$y_i(\mathbf W)$};
\draw[->] (w) -- (d);
\draw[->] (d) -- (y);
\draw[->, dashed, red!80, thick]
    (w.south east) .. controls +(0.55,-0.5) and +(-0.55,-0.5) .. (y.south west);
\node[red!80, font=\tiny] at (1.75,-1.0) {omitted channel(s)};
\end{tikzpicture}
\par\vspace{0.4em}
{\scriptsize \textbf{(a) General framework} (Section~\ref{sec:misspecified estimand})}\par
\end{minipage}
\hspace{0.02\textwidth}
\begin{minipage}[t]{0.52\textwidth}
\centering
\begin{tikzpicture}[
    >=Latex,
    font=\scriptsize,
    box/.style={
        draw,
        rectangle,
        rounded corners=1.5pt,
        minimum width=1.2cm,
        minimum height=0.58cm,
        align=center,
        inner sep=1.5pt
    }
]
\node[box] (w)  at (0,0) {$\mathbf W$};
\node[box] (dL) at (1.95,0.95) {$d_i^{L}(\mathbf W)$};
\node[box] (dG) at (1.95,-0.95) {$d_i^{G}(\mathbf W)$};
\node[box] (y)  at (3.90,0) {$y_i(\mathbf W)$};
\draw[->] (w) -- (dL);
\draw[->] (w) -- (dG);
\draw[->] (dL) -- (y);
\draw[->] (dG) -- (y);
\draw[->, dashed, red!80, thick]
    (w.east) .. controls +(0.85,-0.16) and +(-0.85,-0.16) .. (y.west);
\node[red!80, font=\tiny, align=center] at (1.95,-0.44)
    {extra omitted\\ channel(s)};
\node[font=\tiny, align=center] at (1.95,1.72) {local only};
\node[font=\tiny, align=center] at (1.95,-1.72) {global only};
\draw[->, dashed] (1.95,1.54) -- (dL.north);
\draw[->, dashed] (1.95,-1.54) -- (dG.south);
\end{tikzpicture}
\par\vspace{0.4em}
{\scriptsize \textbf{(b) Structured local-global model} (Section~\ref{sec:local vs global example})}\par
\end{minipage}
\caption{\footnotesize Panel (a) illustrates the general framework (Section \ref{sec:misspecified estimand}), in which a researcher summarizes the assignment vector \(\mathbf W\) through an exposure mapping \(d_i(\mathbf W)\), while outcomes may still depend on additional omitted channels. The resulting estimand is the pseudo-true outcome model: the best mean-squared approximation among functions that depend on \(\mathbf W\) only through \(d_i(\mathbf W)\). Panel (b) illustrates the structured local-global environment (Section \ref{sec:local vs global example}), in which outcomes depend on both a local exposure \(d_i^L(\mathbf W)\) and a global exposure \(d_i^G(\mathbf W)\), even though the researcher may model only one channel. In this setting, the general pseudo-true framework sharpens to a channel-specific interpretation under the local-global asymptotic structure. The dashed red arrow denotes possible additional omitted channels beyond the maintained local and global ones; these are asymptotically harmless when their residual contribution after conditioning is negligible (Corollary~\ref{corollary:approximate sufficiency}).}
\label{fig:intro_bridge}
\end{figure}

\vspace*{2mm}\noindent\textbf{Related literature}

\textbf{Misspecification in spillover estimation and policy-relevant primitives.}
A large literature now treats exposure mappings and randomization-based designs as the basic language for analyzing interference \citep[e.g.][]{Aronow2012Detect,AronowSamii2017,AtheyEcklesImbens2018,ogburn2024causal}. Against this backdrop, a more recent line of work takes misspecification of spillover structures seriously \citep{loomba2025policy,shuangning_additional_draft,weinstein2026causal,Park2026Exposure}. In particular, \citet{SavjeAronowHudgens2021} show that ADE can be estimated even under unknown interference, while \citet{savje2024causal} treats exposure mappings as researcher-defined summaries and derives conditions for consistent estimation under misspecification, prompting discussion of the policy content of the resulting exposure effects \citep{auerbach2024discussion,leung2024discussion}. Relatedly, \citet{leung2022causal} formalizes approximate neighborhood interference, under which standard exposure-based estimators remain well behaved even when distant treatments matter, and \citet{menzel2025fixed} defines conditional-on-assignment estimands that remain identified under very general interference and can be recovered by inverse-probability weighting in single-network experiments. We introduce a pseudo-true estimand perspective and formally construct it using a two-copy conditioning device. We show approximation guarantees and characterize sign-preserving conditions, building on \citet{leung2024identifying}. This parallels classic pseudo-true parameter ideas in econometrics and finance, where maximum likelihood under misspecification converges to a Kullback--Leibler projection \citep{white1982maximum} and Hansen--Jagannathan distance selects the stochastic discount factor that minimizes a pricing-error norm \citep{hansen1997assessing}.

\textbf{Spillover decompositions and a local-global extension}
A separate literature uses decompositions of overall policy effects into ``direct'' and ``indirect'' components to organize mechanisms. A long line of work \citep{Sobel2006,HudgensHalloran2008}
has formalized direct and indirect effects under partial interference, with extensions to general exposure mappings in \citet{AronowSamii2017} and design-averaged estimands under unknown interference in \citet{SavjeAronowHudgens2021}. Within this tradition, \citet{hu2022average} define average direct and indirect effects under general exposure mappings and show that, in Bernoulli trials, their sum coincides with the effect of an infinitesimal increase in the treatment probability, an approach adopted in structured settings such as the market-equilibrium model of \citet{munro2021treatment}. Much of this work effectively treats the indirect component as a single channel. Our analysis shows, first, that this basic direct--indirect decomposition survives misspecification once exposure effects are interpreted as pseudo-true components of a marginal policy effect. We obtain a further sharper decomposition under a structured local-global framework, where we nest the local network framework of \citet{li2022random} and the equilibrium-spillover framework of \citet{munro2021treatment}; see also related global-state settings in \citet{arkhangelsky2025evaluating,halloran1991direct,lin2024sir}\footnote{Other complementary work includes \citet{bhattacharya2025causal}, who use mean-field methods to study global treatment effects.}. We show how to interpret these path-specific settings under other forms of misspecification, and we show that contamination from the other channel is asymptotically negligible, admitting a separable decomposition into the direct, local, and global indirect effects. Recent complementary work by \citet{Ritzwoller2025Spillovers} shows that regressions on proximity-weighted treatments blend multiple channels unless the proximity measure is residualized.



\section{Average effects with a misspecified exposure}\label{sec:misspecified estimand}

Consider a sample of $n$ units indexed by
$\{1,\ldots,n\}$, where each unit is assigned one of two possible treatments $\{0,1\}$. The
collection of all (potentially counterfactual) assignments is thus denoted as
$\bw=(w_1,\ldots,w_n)\in\{0,1\}^n$. A (possibly randomized) function
$y_i:\{0,1\}^n\to\RR$ gives the potential outcome for unit $i$ under a specific assignment.
We impose no additional structure on $y_i(\cdot)$ until
Section~\ref{sec:local vs global example}.

Throughout, we focus on experimental designs where the actual assignment vector $\bW\in\{0,1\}^n$ is generated randomly. In particular, we consider the simplest design, a \emph{randomized controlled trial} with a homogeneous treatment probability $\pi\in(0,1)$.
\begin{assumption}\label{assump:RCT}
    Draw $\bW=(W_1,\ldots,W_n)\sim\RCT(\pi)$, i.e.\ each $W_i$ independently satisfies
    $\PP(W_i=1)=1-\PP(W_i=0)=\pi$.
\end{assumption}

Our benchmark target is the oracle marginal policy effect
\begin{equation}\label{eq:oracle estimand}
    \tau_{\MPE}^{\oracle}(\pi) = \frac{1}{n}\sum_{j=1}^n\sum_{i=1}^n\EE_{\bW\sim\RCT(\pi)}\sbr{ y_j\rbr{w_i=1,\bW_{-i}} - y_j\rbr{w_i=0,\bW_{-i}} }.
\end{equation}
This estimand is defined without reference to any exposure mapping and serves as the oracle benchmark. Since $y_i(\cdot)$ is defined on the exponentially large assignment space $\{0,1\}^n$, directly estimating \eqref{eq:oracle estimand} is generally intractable.

Sections~\ref{sec:pseudo-true outcome} and \ref{sec:estimands defined via conditioning} therefore construct pseudo-true outcome models by conditioning on a researcher-chosen exposure mapping and use them to define tractable marginal-policy, direct, and indirect effects. These estimands admit an intervention-based interpretation as components of the induced pseudo-true model and, under monotonicity, a sign-preserving representation.

Misspecification arises whenever the researcher-chosen exposure mapping \(d_i(\bW)\) is not a sufficient summary of the assignment vector for the outcome \(y_i(\bW)\). This can happen for many reasons, but two broad motivations are especially relevant for our analysis. First, the analyst may use a low-dimensional exposure mapping as a tractable approximation to a richer interference structure, for example because the true mechanism is too complex to model directly or because tuning details of the exposure are uncertain. In this case, Section~\ref{sec:error to oracle} shows that the resulting pseudo-true estimands remain close to the oracle targets when the chosen exposure mapping is sufficiently informative about outcomes. Second, the analyst may intentionally work with only one exposure channel in order to obtain a more interpretable decomposition of spillovers. In that case, the resulting estimands should be viewed as channel-specific pseudo-true effects rather than as attempts to recover the full oracle object; Section~\ref{sec:local vs global example} shows that this interpretation is especially sharp in a local-global environment.

\subsection{Pseudo-true outcome model}\label{sec:pseudo-true outcome}
A large empirical literature works with \emph{exposure mappings} that summarize the features of the assignment vector $\bw$ that practitioners believe to be most relevant for unit $i$. Formally, the analyst specifies
$d_i:\{0,1\}^n\to\cD_i$\footnote{See, among many others,
\citet{HudgensHalloran2008,tchetgen2012causal,AronowSamii2017,hu2022average,li2022random,munro2021treatment}
for examples in epidemiology, statistics, and economics.}. Choosing $d_i$ is problem specific and requires domain expertise. We offer several examples in Section~\ref{sec:example expo}. In the well-specified setting, one posits that $y_i(\bw)$ depends on $\bw$ only through $d_i(\bw)$\footnote{Technically, one typically assumes that $y_i(\bw)$ depends on $\bw$ only through
$(w_i,d_i(\bw))$. Since we can redefine the exposure mapping as
$\tilde d_i(\bw) := (w_i,d_i(\bw))$, this is without loss of generality; in what follows we
often treat $d_i$ as already including the own-treatment component.}

However, in realistic environments with rich local and global spillovers, misspecification of exposure mappings is hard to avoid. Under such circumstances, we aim for tractable alternatives for the oracle estimands in~\eqref{eq:oracle estimand}.  A large body of empirical and methodological work postulates that potential outcomes depend on $\bw$ only through an exposure mapping $d_i(\bw)$, and then estimates causal effects by working with outcome models of the form $h_i\bigl(d_i(\bw)\bigr)$—for instance by pooling outcomes across units with the same or similar exposure, or by fitting flexible regressions of $Y_i$ on $d_i(\bW)$; see, e.g., \citet{AronowSamii2017,auerbach2021local,zivich2022tmle}. 

In the same spirit, we build exposure–based outcome models of the form $h_i\bigl(d_i(\bw)\bigr)$ and use them to define alternative estimands of interest under interference.
A natural criterion is to optimize over $h_i$ so that the following square loss is minimized:
\begin{equation}\label{eq:square loss criterion}
    \min_{h_i} \, \EE_{\bW\sim\RCT(\pi)} 
        \bigl[\, y_i(\bW) - h_i\bigl(d_i(\bW)\bigr) \,\bigr]^2.
\end{equation}
The solution is the conditional expectation:
\begin{equation}\label{eq:hi-star-def}
    h_i^{\ast}(d;\pi)
    \;=\;
    \EE_{\bW\sim\RCT(\pi)}\bigl[\,y_i(\bW)\,\big|\,d_i(\bW)=d\,\bigr],
    \quad \text{for any }d\in\cD_i.
\end{equation}
Here we slightly abuse notation by incorporating $\pi$ as another argument of
$h_i^\ast$, simply to emphasize that this optimal solution is \emph{design-induced}. We then define the corresponding \emph{exposure-based outcome model}
\begin{equation}\label{eq:exposure-based outcome proxy}
    \tilde{y}_i(\bw;\pi)
    \;:=\;
    h_i^{\ast}\bigl(d_i(\bw);\pi\bigr)
    \;=\;
    \EE_{\bW^{(2)}\sim\RCT(\pi)}
    \bigl[\,y_i(\bW^{(2)})\,\big|\,d_i(\bW^{(2)})=d_i(\bw)\,\bigr],
\end{equation}
where $\bW^{(2)}$ is an independent copy of the treatment vector, introduced to average out omitted interference conditional on the exposure. Among all outcome models that depend on $\bw$ only through $d_i(\bw)$, $\tilde{y}_i$ is the unique solution that minimizes the mean-squared discrepancy from the true $y_i$ under the design. We therefore call it \emph{pseudo-true}, following the misspecification literature on pseudo-true parameters in minimum-distance, likelihood, and related settings \citep{white1982maximum,hall2003large,hansen1997assessing,andrews2024misspecified}.

An important practical feature of~\eqref{eq:hi-star-def}--\eqref{eq:exposure-based outcome proxy}
is that, once the exposure mapping $\{d_i\}$ is fixed, the pseudo-true outcomes are
functions only of the joint distribution of $(Y_i,d_i(\bW))$ under $\RCT(\pi)$. In particular,
any flexible estimator of a conditional expectation—including classical inverse–probability-weighted and regression estimators, as well as modern machine–learning methods for nuisance functions—can be used to approximate $h_i^\ast$ and hence
$\tilde y_i(\cdot;\pi)$ without modeling the full interference structure; see, for example, \citet{chernozhukov2018double,wager2018estimation}.

\subsection{Estimands built on conditioning}\label{sec:estimands defined via conditioning}

We now use the pseudo-true outcome models~\eqref{eq:exposure-based outcome proxy} to
define average effects of interest. Our estimands extend the familiar ADE/AIE objects
studied under correctly specified exposure mappings in
\cite{hu2022average,li2022random,munro2021treatment}. In the correctly specified case,
the celebrated result of \citet{hu2022average} shows that, under the Bernoulli design
$\RCT(\pi)$, the marginal policy effect $\tau_{\MPE}(\pi)$ admits a principled
decomposition into an average direct effect and an average indirect effect. This
decomposition has been used both in recent theoretical analysis
(e.g., \citet{munro2021treatment,arkhangelsky2025evaluating,loomba2025policy}) and in applied work (e.g., \citet{behaghel2022encouraging}). Here we extend it to the misspecified case by replacing oracle outcomes
$y_i(\cdot)$ with the pseudo-true outcomes $\tilde y_i(\cdot;\pi)$.

\vspace*{2mm}\noindent\textbf{Marginal policy effect.}
Recall from~\eqref{eq:oracle estimand} and the results of \cite{hu2022average}, the oracle marginal effect can be written as
\begin{equation}
    \tau_{\MPE}^{\oracle}(\pi) = \frac{1}{n} \frac{\partial}{\partial\pi}\sum_{i=1}^n \EE\sbr{ y_i(\bW)} = \frac{1}{n} \frac{\partial}{\partial\pi}\sum_{i=1}^n \EE_{\bW\sim\RCT(\pi)}\sbr{ h_i^{\ast}(d_i(\bW);\pi)}.
\end{equation}
To isolate the role of $d_i$ from other unknown interference mechanisms, consider two independent assignments
$\bW^{(1)}\sim\RCT(\pi_1)$ and $\bW^{(2)}\sim\RCT(\pi_2)$. Under this two copy setup, we can freely differentiate with respect to an infinitesimal change in the law of $d_i(\bW^{(1)})$. Equivalent to the conditioning idea in
\eqref{eq:exposure-based outcome proxy}, we define
\begin{align}            
    \mu(\pi_1,\pi_2)
    \;&:=\;
    \frac{1}{n} \sum_{i=1}^n
    \EE_{\bW^{(1)}\sim\RCT(\pi_1)}
    \Bigl\{
      \EE_{\bW^{(2)} \sim\RCT(\pi_2)}
      \bigl[
        y_i\bigl(\bW^{(2)}\bigr)
        \,\big|\,
        d_i\bigl(\bW^{(2)}\bigr)=d_i\bigl(\bW^{(1)}\bigr)
      \bigr]
    \Bigr\} \\
    &= \; \frac{1}{n} \sum_{i=1}^n
    \EE_{\bW^{(1)}\sim\RCT(\pi_1)}
    \Bigl\{
      h_i^{\ast}(d_i(\bW^{(1)});\pi_2)
    \Bigr\}.
\end{align}
Then the marginal policy effect under exposure mappings $\{d_i\}$ and $\RCT(\pi)$ is given by
\begin{equation}
    \tau_{\MPE}(\pi)
    \;:=\;
    \frac{\partial}{\partial\pi_1} \mu(\pi_1,\pi_2)  \Big|_{\pi_1=\pi_2=\pi}.
\end{equation}

\vspace*{2mm}\noindent\textbf{Direct and indirect effects.}
The direct and indirect treatment effects under exposure mappings $\{d_i,i\in[n]\}$ and $\RCT(\pi)$ are defined analogously:
\begin{align}
    \tau_{\ADE}(\pi)
    &:= \frac{1}{n}\sum_{i=1}^n
    \EE_{\bW\sim\RCT(\pi)}
    \bigl[
      \tilde{y}_i\bigl(w_i=1,\bW_{-i};\pi\bigr)
      - \tilde{y}_i\bigl(w_i=0,\bW_{-i};\pi\bigr)
    \bigr],
    \label{eq:ADE-def}\\
    \tau_{\AIE}(\pi)
    &:= \frac{1}{n}\sum_{j=1}^n\sum_{i\neq j}
    \EE_{\bW\sim\RCT(\pi)}
    \bigl[
      \tilde{y}_j\bigl(w_i=1,\bW_{-i};\pi\bigr)
      - \tilde{y}_j\bigl(w_i=0,\bW_{-i};\pi\bigr)
    \bigr].
    \label{eq:AIE-def}
\end{align}

When exposures are correctly specified, these estimands coincide with their oracle counterparts in \eqref{eq:oracle estimand}.
Our first result states that these exposure-based estimands admit exactly the same decomposition as in \citet{hu2022average}. The proof is simple and is deferred to the appendix.

\begin{theorem}\label{thm:mpe equal direct plus indirect}
    Under $\RCT(\pi)$ (Assumption~\ref{assump:RCT}), it holds that \[\tau_{\MPE}(\pi)
        \;=\;
        \tau_{\ADE}(\pi)
        \;+\;
        \tau_{\AIE}(\pi).\]
\end{theorem}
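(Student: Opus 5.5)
The plan is to apply, for each fixed $\pi_2$, the standard Bernoulli-design derivative identity (the same one underlying the result of \citet{hu2022average}) to the deterministic functions $g_j(\bw):=h_j^\ast(d_j(\bw);\pi_2)=\tilde y_j(\bw;\pi_2)$. The key point is that once $\pi_2$ is frozen, $g_j$ is a fixed function on $\{0,1\}^n$ that does not depend on $\pi_1$. Hence
\[
\mu(\pi_1,\pi_2)=\frac1n\sum_{j=1}^n \EE_{\bW\sim\RCT(\pi_1)}\bigl[g_j(\bW)\bigr]
=\frac1n\sum_{j=1}^n\sum_{\bw\in\{0,1\}^n} g_j(\bw)\prod_{k=1}^n \pi_1^{w_k}(1-\pi_1)^{1-w_k}.
\]
This is a polynomial in $\pi_1$, so differentiation is legitimate. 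This decoupling through the two-copy construction is exactly why $\tau_{\MPE}$ differentiates only the law of $d_i(\bW^{(1)})$ and not the conditional-mean function. I expect it to be the only conceptual step; everything else is bookkeeping.

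Next, differentiate with the product rule over the $n$ independent coordinates. For coordinate $i$, $\partial_{\pi_1}[\pi_1^{w_i}(1-\pi_1)^{1-w_i}]$ equals $+1$ if $w_i=1$ and $-1$ if $w_i=0$. Summing over $w_i$ with the other coordinates fixed gives
\[
\frac{\partial}{\partial\pi_1}\EE_{\RCT(\pi_1)}[g_j(\bW)]=\sum_{i=1}^n \EE_{\bW_{-i}\sim\RCT(\pi_1)}\bigl[g_j(w_i=1,\bW_{-i})-g_j(w_i=0,\bW_{-i})\bigr].
\]
Since the bracket does not involve $W_i$, it can equivalently be written as an expectation over the full vector $\bW\sim\RCT(\pi_1)$. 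Evaluating at $\pi_1=\pi_2=\pi$ replaces $g_j$ by $\tilde y_j(\cdot;\pi)$ and the sampling law by $\RCT(\pi)$.

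Finally, sum over $j$ and divide by $n$, then split the double sum $\sum_j\sum_i$ into the diagonal terms $i=j$ and the off-diagonal terms $i\neq j$. The diagonal part is literally \eqref{eq:ADE-def}, and the off-diagonal part is literally \eqref{eq:AIE-def}. This yields $\tau_{\MPE}(\pi)=\tau_{\ADE}(\pi)+\tau_{\AIE}(\pi)$.

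One technical caveat needs care. The function $h_j^\ast(d;\pi_2)$ must be well defined at every $d$ in the support of $d_j(\bW^{(1)})$. Because $\pi_1,\pi_2\in(0,1)$, both designs have full support on $\{0,1\}^n$, so the two copies have identical exposure supports and the conditional expectations are well defined. In particular, the counterfactual evaluations $\tilde y_j(w_i=1,\bW_{-i};\pi)$ are well defined as well.
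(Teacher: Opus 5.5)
Your proposal is correct and follows essentially the paper's route: freeze $\pi_2$ so that $\tilde y_j(\cdot;\pi_2)$ is a fixed function, differentiate the product-form Bernoulli pmf in $\pi_1$, and split the double sum into the diagonal (ADE) and off-diagonal (AIE) parts. The only cosmetic difference is that the paper writes the derivative as an expectation weighted by the score $\sum_i (W_i-\pi)/\{\pi(1-\pi)\}$ and then converts each term back to a switching contrast via the Bernoulli identity, whereas you obtain the contrasts directly from the product rule.
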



\citet{savje2024causal} also studies misspecified exposure mappings via conditioning, but defines causal effects as contrasts between exposure labels. Motivated by the subsequent discussions in \citet{auerbach2024discussion,leung2024discussion}, we take a different route. We begin from the oracle marginal policy effect in \eqref{eq:oracle estimand} and ask what object is induced when the analyst restricts attention to outcome models that depend on the assignment vector only through a chosen exposure mapping. The resulting estimands \(\tau_{\MPE}(\pi)\), \(\tau_{\ADE}(\pi)\), and \(\tau_{\AIE}(\pi)\) are therefore best viewed as the marginal-policy, direct, and spillover components of the induced pseudo-true outcome model, rather than as arbitrary contrasts between exposure labels. In this sense, our contribution is not to claim that an arbitrary misspecified exposure mapping is automatically policy-informative, but to characterize the canonical target of exposure-based procedures under misspecification. Section~\ref{sec:error to oracle} then quantifies the distance between these induced objects and their oracle counterparts, while Section~\ref{sec:local vs global example} shows that in a structured local-global environment this interpretation sharpens into channel-specific approximations to the corresponding oracle components.

A separate issue is whether these pseudo-true estimands preserve the sign of the underlying single-unit treatment contrasts. Because the pseudo-true outcomes are conditional averages rather than primitive unit-level potential outcomes, this property does not hold in full generality; \citet{leung2024identifying} gives explicit counterexamples. The next proposition shows that, inspired by \citet{leung2024identifying}, sign preservation is nevertheless recovered under a natural monotonicity condition on the exposure mappings. Under the Bernoulli RCT considered here, the relevant design-side positive-dependence condition is automatically satisfied, so componentwise monotonicity of the exposure mappings is sufficient.

\begin{proposition}\label{prop:sign preserving}
    Suppose that \(d_i(\bw)\) is componentwise non-decreasing for every \(i\in[n]\). Then, for each \(\star\in\{\MPE,\ADE,\AIE\}\), there exist nonnegative weights
    \(
        c_{\star,ij}(\bw_{-j};\pi)\ge 0,
        \qquad
        i,j\in[n],\ \bw_{-j}\in\{0,1\}^{n-1},
    \)
    depending only on the design \(\RCT(\pi)\) and the exposure mappings \(\{d_i\}_{i=1}^n\), such that
    \begin{equation}\label{eq:sign-preserving-representation}
        \tau_{\star}(\pi)
        \;=\;
        \sum_{i=1}^n\sum_{j=1}^n\sum_{\bw_{-j}\in\{0,1\}^{n-1}}
        c_{\star,ij}(\bw_{-j};\pi)
        \Bigl[
            y_i(w_j=1,\bw_{-j})-y_i(w_j=0,\bw_{-j})
        \Bigr].
    \end{equation}
    Consequently, if all switching contrasts
    \(
        y_i(w_j=1,\bw_{-j})-y_i(w_j=0,\bw_{-j})
    \)
    are weakly nonnegative (respectively, weakly nonpositive), then \(\tau_{\star}(\pi)\) is also weakly nonnegative (respectively, weakly nonpositive).
\end{proposition}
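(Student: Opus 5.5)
Every estimand in Proposition~\ref{prop:sign preserving} can be written as a design average of exposure-level differences \(h_i^\ast(d_i(1,\bw_{-j});\pi)-h_i^\ast(d_i(0,\bw_{-j});\pi)\). I would show that each such difference is a nonnegative combination of primitive switching contrasts. The weights are then collected by linearity.

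\textbf{Step 1 (reduction).} By \eqref{eq:ADE-def} and \eqref{eq:AIE-def}, \(\tau_{\ADE}\) and \(\tau_{\AIE}\) are already of the form
\[
\tfrac1n\sum_i\sum_j \EE_{\bW_{-j}}\bigl[\Delta_{ij}(\bW_{-j})\bigr],\qquad \Delta_{ij}(\bw_{-j}):=h_i^\ast(d_i(1,\bw_{-j});\pi)-h_i^\ast(d_i(0,\bw_{-j});\pi),
\]
with \(j=i\) for ADE and \(j\neq i\) for AIE.

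For \(\tau_{\MPE}\), the map \(\pi_1\mapsto\mu(\pi_1,\pi_2)\) is a multilinear polynomial in \(\pi_1\), with \(\pi_2\) held fixed inside \(h_i^\ast\). Applying the Russo--Margulis identity \(\partial_{\pi_1}\EE_{\pi_1}[f(\bW)]=\sum_j\EE_{\pi_1}[f(1,\bW_{-j})-f(0,\bW_{-j})]\) to \(f=h_i^\ast(d_i(\cdot);\pi_2)\) gives the same representation, summed over all \(j\). This also reproduces Theorem~\ref{thm:mpe equal direct plus indirect}. It therefore suffices to treat a single \(\Delta_{ij}(\bw_{-j})\), multiplied by the nonnegative weight \(\tfrac1n\PP(\bW_{-j}=\bw_{-j})\).

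\textbf{Step 2 (monotone coupling).} Fix \(i\), and set \(d'=d_i(1,\bw_{-j})\) and \(d=d_i(0,\bw_{-j})\). By componentwise monotonicity, \(d\preceq d'\). Let \(P_d\) and \(P_{d'}\) denote the laws of \(\bW^{(2)}\sim\RCT(\pi)\) conditional on \(d_i(\bW^{(2)})=d\) and \(d_i(\bW^{(2)})=d'\), respectively. I would show that \(P_{d'}\) stochastically dominates \(P_d\) in the componentwise order on \(\{0,1\}^n\). By Strassen's theorem there is then a coupling \((\mathbf U,\mathbf V)\) with \(\mathbf U\sim P_d\), \(\mathbf V\sim P_{d'}\) and \(\mathbf U\le\mathbf V\) almost surely.

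Given such a coupling, \(h_i^\ast(d')-h_i^\ast(d)=\EE[y_i(\mathbf V)-y_i(\mathbf U)]\). For each realized pair \(\mathbf u\le\mathbf v\), fix a deterministic increasing path from \(\mathbf u\) to \(\mathbf v\) that flips coordinates up one at a time, say in index order. Telescoping along this path writes \(y_i(\mathbf v)-y_i(\mathbf u)\) as a sum of contrasts \(y_i(w_k=1,\bw'_{-k})-y_i(w_k=0,\bw'_{-k})\), each with coefficient \(+1\). Averaging over the coupling probabilities and over \(\bw_{-j}\), and summing over \(i\) and \(j\), then relabelling indices to match \eqref{eq:sign-preserving-representation}, yields the weights \(c_{\star,ij}(\bw_{-j};\pi)\ge0\). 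These weights depend only on \(\pi\) and \(\{d_i\}\), because the coupling and the paths do. The sign conclusion of the proposition is then immediate.

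\textbf{Step 3 (main obstacle: stochastic dominance).} The hard part is establishing the dominance \(P_d\preceq_{\rm st}P_{d'}\). I would try Holley's inequality first. Since the Bernoulli product measure is log-modular, the Holley lattice condition reduces to an indicator condition on level sets: if \(d_i(\mathbf x)=d'\) and \(d_i(\mathbf y)=d\), then \(d_i(\mathbf x\vee\mathbf y)=d'\) and \(d_i(\mathbf x\wedge\mathbf y)=d\). This is the ``design-side positive dependence'' that the Bernoulli RCT supplies, and monotonicity of \(d_i\) gives one direction of each required comparison.

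I expect that closing the argument for general exposures will need more than Holley. If the level-set condition fails for a given partially ordered exposure space, my fallback is to prove dominance directly through the FKG inequality. For every increasing set \(A\), I would compare \(\PP(A\mid d_i\succeq d)\) across thresholds, using that \(\{d_i\succeq d\}\) is an up-set. I would then pass from these up-set conditionings to the equality conditionings \(\{d_i=d\}\), which holds cleanly for scalar or totally ordered exposures, following the argument in \citet{leung2024identifying}. This step is where I would concentrate effort, and where any additional regularity of \(d_i\) would have to be invoked.
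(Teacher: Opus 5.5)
Your Steps 1--2 are essentially the paper's argument: reduce to single switching contrasts, couple by Strassen, and telescope along a monotone path. The gap is Step 3, and it cannot be filled, because the per-pair dominance $P_d\preceq_{\mathrm{st}}P_{d'}$ is false in general, even for scalar, indeed linear, monotone exposures.

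Take $n=3$, $d_1(\bw)=2w_1+w_2+w_3$, $j=3$ and $\bw_{-3}=(0,1)$, so that $d=1$ and $d'=2$. Then $P_d$ is uniform on $\{010,001\}$, while $P_{d'}$ puts mass $1-\pi$ on $100$ and mass $\pi$ on $011$. The up-set $\{w_2=1\}$ has probability $1/2$ under $P_d$ and $\pi$ under $P_{d'}$, so for $\pi<1/2$ no monotone coupling exists. With the nondecreasing outcome $y_1(\bw)=w_2$, your quantity is $\Delta_{13}(\bw_{-3})=h_1^\ast(2)-h_1^\ast(1)=\pi-\tfrac12<0$. Holley's level-set condition also fails, since $100\vee 010=110$ has $d_1=3$. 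Your fallback claim, that passing from up-set conditioning to equality conditioning is clean for totally ordered exposures, is therefore false. In this example the average over $\bw_{-3}$ is still nonnegative, so reducing to a single $\Delta_{ij}(\bw_{-j})$ already demands more than the proposition asserts.

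The proposition itself also fails without further hypotheses, so no choice of coupling can close the gap. Let $n=3$, let $d_2,d_3$ be constant, and let $d_1$ take the values $0,1,2,3,3,4,5,6$ at $000,100,010,001,110,011,101,111$; this is nondecreasing along every edge. Set $y_1(\bw)=w_2$ and $y_2=y_3\equiv 0$. Every switching contrast is $0$ or $1$, but $h_1^\ast(3)=\PP(W_2=1\mid \bW\in\{001,110\})=\pi$, so
\[
\tau_{\ADE}(\pi)=\tfrac13\bigl[\pi(1-\pi)(\pi-1)+\pi(1-\pi)(0-\pi)\bigr]=-\tfrac13\,\pi(1-\pi)<0 .
\]
The identity $\Delta_{is}=\mathrm{Cov}\bigl(\EE[W_s\mid d_i(\bW)],\,y_i(\bW)\bigr)/\{\pi(1-\pi)\}$ shows the mechanism. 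A non-lattice level set on which $w_s$ and an increasing $y_i$ are positively correlated drives the contrast negative.

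The paper's proof has the same hole. It obtains dominance by asserting that $(\psi(\bW),d_i(\bW))$ is $\mathrm{MTP}_2$ as a monotone push-forward of a product measure. That fails for maps that are not coordinatewise: $\phi(0,0)=(0,0)$, $\phi(1,0)=(1,0)$, $\phi(0,1)=(0,1)$, $\phi(1,1)=(2,2)$ leaves no mass at the join $(1,1)$. Its conclusion that $a\mapsto\EE[\psi\mid d_i=a]$ is nondecreasing is refuted by the first example above.

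A correct version needs a hypothesis that delivers per-pair dominance. Options include your Holley level-set condition, binary up-set exposures (via Harris--FKG), or coordinate projections such as $d_i^{\mathrm{L}}$. Requiring $d_i$ to contain $w_i$ rescues $\tau_{\ADE}$. However, embedding the level set $\{001,110\}$ in the $w_i=0$ slice still makes individual $\Delta_{ij}$ negative, so a term-by-term certificate like yours cannot work even then.
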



\subsection{Approximation to oracle estimands}\label{sec:error to oracle}
Instead of directly adopting the pseudo-true outcome models~\eqref{eq:exposure-based outcome proxy}, one can more generally use any collection $f=\{f_i\}_{i\in[n]}$ with $f_i:\{0,1\}^n\to\RR$ as a candidate approximation to the oracle estimands in~\eqref{eq:oracle estimand}. Specifically, define the functionals $\tau_{\MPE}^{\functional}(f;\pi) = \tau_{\ADE}^{\functional}(f;\pi) + \tau_{\AIE}^{\functional}(f;\pi)$ by
\begin{equation}\label{eq:functional estimand}
\begin{aligned}
    \tau_{\ADE}^{\functional}(f;\pi) &= \frac{1}{n}\sum_{i=1}^n\EE_{\bW\sim\RCT(\pi)}\sbr{ f_i\rbr{w_i=1,\bW_{-i}} - f_i\rbr{w_i=0,\bW_{-i}} }, \\
    \tau_{\AIE}^{\functional}(f;\pi) &= \frac{1}{n}\sum_{j=1}^n\sum_{i\neq j}\EE_{\bW\sim\RCT(\pi)}\sbr{ f_j\rbr{w_i=1,\bW_{-i}} - f_j\rbr{w_i=0,\bW_{-i}} }.
\end{aligned}
\end{equation}
The following proposition shows that these functionals are Lipschitz in $f$ under the $L^2$ norm. The same decomposition also holds for the oracle estimand, $\tau_{\MPE}^{\oracle}=\tau_{\ADE}^{\oracle}+\tau_{\AIE}^{\oracle}$, with
\begin{equation}\label{eq:oracle estimand decomposition}
\begin{aligned}
    \tau_{\ADE}^{\oracle}(\pi) &= \frac{1}{n}\sum_{i=1}^n\EE_{\bW\sim\RCT(\pi)}\sbr{ y_i\rbr{w_i=1,\bW_{-i}} - y_i\rbr{w_i=0,\bW_{-i}} }, \\
    \tau_{\AIE}^{\oracle}(\pi) &= \frac{1}{n}\sum_{j=1}^n\sum_{i\neq j}\EE_{\bW\sim\RCT(\pi)}\sbr{ y_j\rbr{w_i=1,\bW_{-i}} - y_j\rbr{w_i=0,\bW_{-i}} }.
\end{aligned}
\end{equation}
\begin{proposition}\label{prop:functional estimand being Lipschitz}
    There exists a constant $C>0$ that depends only on $\pi$ such that, for any collection $f=\{f_i\}_{i\in[n]}$ with $f_i:\{0,1\}^n\to\RR$ and any $\star\in\cbr{\MPE,\ADE,\AIE}$,
    \begin{equation}\label{eq:Lipschitz functional estimand}
        \abr{ \tau_{\star}^{\functional}(f;\pi) - \tau_{\star}^{\oracle}(\pi) }^2 \le C \sum_{i=1}^n \EE_{\bW\sim\RCT(\pi)}\sbr{ f_i(\bW) - y_i(\bW) }^2.
    \end{equation}
    Moreover, taking $y_i(\bw)=\sum_{j=1}^n (w_j-\pi)$ and $f_i(\bw)=2y_i(\bw)$ shows that the bound is tight.
\end{proposition}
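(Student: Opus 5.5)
The plan is to exploit linearity and reduce everything to Cauchy--Schwarz through a covariance representation of switching contrasts under the Bernoulli design. Set $g_i:=f_i-y_i$. Each functional in \eqref{eq:functional estimand} is linear in $f$, and the oracle estimands in \eqref{eq:oracle estimand decomposition} are the same functionals evaluated at $y$. Hence $\tau_\star^{\functional}(f;\pi)-\tau_\star^{\oracle}(\pi)=\tau_\star^{\functional}(g;\pi)$, and it suffices to bound $|\tau_\star^{\functional}(g;\pi)|^2$ by $C\sum_i\EE[g_i(\bW)^2]$.

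\textbf{Step 1: covariance representation.} Under $\RCT(\pi)$, with $W_i$ independent of $\bW_{-i}$, conditioning on $W_i$ gives, for any $g:\{0,1\}^n\to\RR$,
\[
\EE\bigl[g(\bW)(W_i-\pi)\bigr]=\pi(1-\pi)\,\EE\bigl[g(w_i=1,\bW_{-i})-g(w_i=0,\bW_{-i})\bigr].
\]
Consequently
\[
\tau_{\ADE}^{\functional}(g;\pi)=\frac{1}{n\pi(1-\pi)}\sum_i\EE\bigl[g_i(\bW)(W_i-\pi)\bigr].
\]
Summing over $i$ with $S:=\sum_{i}(W_i-\pi)$ also gives
\[
\tau_{\MPE}^{\functional}(g;\pi)=\frac{1}{n\pi(1-\pi)}\sum_j\EE\bigl[g_j(\bW)S\bigr].
\]

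\textbf{Step 2: Cauchy--Schwarz.} For $\ADE$, use $\EE[(W_i-\pi)^2]=\pi(1-\pi)$ and then Cauchy--Schwarz over the sum in $i$:
\[
|\tau_{\ADE}^{\functional}(g;\pi)|^2\le \frac{1}{n\,\pi(1-\pi)}\sum_i\EE[g_i^2],
\]
which is at most $\sum_i\EE[g_i^2]/(\pi(1-\pi))$. For $\MPE$, use $\EE[S^2]=n\pi(1-\pi)$, so that
\[
|\tau_{\MPE}^{\functional}(g;\pi)|\le \frac{\sqrt{n\pi(1-\pi)}}{n\pi(1-\pi)}\sum_j\|g_j\|_{L^2}.
\]
Applying $(\sum_j\|g_j\|)^2\le n\sum_j\|g_j\|^2$ then gives $|\tau_{\MPE}^{\functional}|^2\le \sum_j\EE[g_j^2]/(\pi(1-\pi))$. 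Finally, $\tau_{\AIE}^{\functional}=\tau_{\MPE}^{\functional}-\tau_{\ADE}^{\functional}$ and $(a-b)^2\le 2a^2+2b^2$ yield the $\AIE$ bound with $C=4/(\pi(1-\pi))$, which then works for all three estimands.

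\textbf{Step 3: tightness.} With $y_i=S$ and $f_i=2y_i$, we have $g_i=S$. Then $\tau_{\MPE}^{\functional}(g;\pi)=\frac{1}{n\pi(1-\pi)}\cdot n\cdot n\pi(1-\pi)=n$, so the left side of \eqref{eq:Lipschitz functional estimand} equals $n^2$. The right side is $C\cdot n\cdot n\pi(1-\pi)=Cn^2\pi(1-\pi)$. Both sides scale as $n^2$, so the $\MPE$ inequality is attained up to a constant depending only on $\pi$; in fact it holds with equality for $C=1/(\pi(1-\pi))$, the $\MPE$ constant from Step 2. In particular, no bound of order $o(n)$ times the $L^2$ error can hold, and the factor $\sum_i$ (rather than an average) is necessary.

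The only delicate point is the $\MPE$ case. A naive approach that bounds each of the $n^2$ pairwise contrasts separately would lose a factor of $n$. Aggregating first into the single score $S$, whose variance is only linear in $n$, is what keeps $C$ free of $n$.
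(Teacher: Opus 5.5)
Your proof is correct and follows the paper's argument: the Bernoulli identity $f(1)-f(0)=\EE[(W-\pi)f(W)]/(\pi(1-\pi))$, Cauchy--Schwarz for $\ADE$ and for $\MPE$ against the aggregate score $\sum_k(W_k-\pi)$, then $\AIE=\MPE-\ADE$, with the same tightness example. The differences are cosmetic. You apply Cauchy--Schwarz to each $\EE[g_jS]$ before summing, whereas the paper bounds $\EE[S\sum_j g_j]$ and then uses $\EE(\sum_j g_j)^2\le n\sum_j\EE g_j^2$, losing the same factor $n$. You also make the $\AIE$ constant and the exact equality in the tightness example explicit, where the paper only asserts them.
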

The proof of this proposition is deferred to Appendix~\ref{sec:reflecting sq loss criterion}. Thus, the square-loss criterion~\eqref{eq:square loss criterion} searches for the optimal $\{f_i\}_{i\in[n]}$ subject to the compositional restriction $f_i=h_i \odot d_i$ by minimizing the right-hand side of~\eqref{eq:Lipschitz functional estimand}.
Plugging the conditional-expectation formula for $h_i^{\ast}$ in~\eqref{eq:hi-star-def} into~\eqref{eq:Lipschitz functional estimand}, we immediately obtain the following corollary.
\begin{corollary}\label{corollary:approximate sufficiency}
    There exists a constant $C>0$ that depends only on $\pi$ such that,
    \begin{equation}\label{eq:oracle-approx-bound}
        \max_{\star\in\{\MPE,\ADE,\AIE\}} \; \bigl|
          \tau_{\star}
          - \tau_{\star}^{\oracle}
        \bigr|
        \;\le\;
        C \; \bigg\{ \sum_{i\in[n]}
        \EE_{\bW}
        \bigl[
            \Var(y_i(\bW)|d_i(\bW))
        \bigr] \bigg\}^{1/2},
    \end{equation}
    where $\bW\sim\RCT(\pi)$. If the exposure mappings are sufficiently informative that
    \[
        \sum_{i\in[n]}
            \EE_{\bW}
            \bigl[
                \Var(y_i(\bW)|d_i(\bW))
            \bigr]
        \;=\; o\!\left(1\right),
    \]
    then the pseudo-true estimands are asymptotically close to their oracle counterparts.
\end{corollary}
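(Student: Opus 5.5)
The plan is to apply Proposition~\ref{prop:functional estimand being Lipschitz} with the specific choice $f_i(\bw)=\tilde y_i(\bw;\pi)=h_i^\ast(d_i(\bw);\pi)$ from \eqref{eq:exposure-based outcome proxy}, and then compute the right-hand side of \eqref{eq:Lipschitz functional estimand}.

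\textbf{Step 1: identify each pseudo-true estimand with a plug-in functional.} First I check that $\tau_\star=\tau_\star^{\functional}(\tilde y;\pi)$ for every $\star\in\{\MPE,\ADE,\AIE\}$.

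\begin{itemize}
\item For ADE and AIE this is immediate. Definitions \eqref{eq:ADE-def}--\eqref{eq:AIE-def} are exactly \eqref{eq:functional estimand} with $f_i=\tilde y_i(\cdot;\pi)$.
\item For MPE, I use Theorem~\ref{thm:mpe equal direct plus indirect}, which gives $\tau_{\MPE}=\tau_{\ADE}+\tau_{\AIE}$. The functional MPE is defined as $\tau^{\functional}_{\ADE}+\tau^{\functional}_{\AIE}$, so the two coincide.
\end{itemize}

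This is the only place where the two-copy definition of $\tau_{\MPE}$ matters. Because $\pi_2$ is frozen at $\pi$, $\tilde y_i(\cdot;\pi)$ is a fixed function of $\bw$ when we differentiate in $\pi_1$. That is exactly why Theorem~\ref{thm:mpe equal direct plus indirect} applies, and it is the one point I would double-check.

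\textbf{Step 2: evaluate the $L^2$ error.} Since $\bW\sim\RCT(\pi)$ is the same law used to define $h_i^\ast(\cdot;\pi)$, we have $\tilde y_i(\bW;\pi)=\EE[y_i(\bW)\mid d_i(\bW)]$ almost surely. The tower property then gives
\[
\EE\bigl[\tilde y_i(\bW;\pi)-y_i(\bW)\bigr]^2=\EE\bigl[\Var(y_i(\bW)\mid d_i(\bW))\bigr].
\]

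\textbf{Step 3: combine and conclude.} Summing the identity from Step 2 over $i$ and inserting it into \eqref{eq:Lipschitz functional estimand} gives, for each $\star$,
\[
|\tau_\star-\tau_\star^{\oracle}|^2\le C\sum_i\EE[\Var(y_i\mid d_i)].
\]
I then take square roots, rename $C^{1/2}$ as $C$ (it still depends only on $\pi$), and take the maximum over the three values of $\star$; this yields \eqref{eq:oracle-approx-bound}. The asymptotic statement follows at once: if $\sum_i\EE[\Var(y_i\mid d_i)]=o(1)$, the right-hand side tends to zero.

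The main obstacle is minimal. Only Step 1's MPE identification needs care, and Theorem~\ref{thm:mpe equal direct plus indirect} handles it.
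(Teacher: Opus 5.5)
Your proposal is correct and follows the paper's own argument. The paper likewise plugs $f_i=\tilde y_i(\cdot;\pi)=h_i^\ast(d_i(\cdot);\pi)$ into Proposition~\ref{prop:functional estimand being Lipschitz}, uses the tower property to identify the $L^2$ error with $\sum_i\EE[\Var(y_i(\bW)\mid d_i(\bW))]$, and takes square roots; your explicit check that $\tau_{\MPE}=\tau^{\functional}_{\MPE}(\tilde y;\pi)$ via Theorem~\ref{thm:mpe equal direct plus indirect} spells out a step the paper leaves implicit.
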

This corollary makes precise in what sense our estimands approximate the oracle targets.
When there is little residual variation in outcomes after conditioning on the exposure
mapping, the pseudo-true direct, indirect, and total effects are necessarily close to
their oracle counterparts. In other words, within the class of outcome models that depend
on treatment only through the chosen exposure mapping, any method that fits individual
outcomes well also delivers a good approximation to the marginal policy effect, and our
pseudo-true model is the optimal such approximation in that class.

\section{Examples of coexisting spillover channels}\label{sec:example expo}

This section provides examples of exposure mappings that fit naturally within the general framework of Section~\ref{sec:misspecified estimand}. Our goal is not to be exhaustive, but to highlight two broad classes of spillover mechanisms that frequently arise in practice: local network spillovers and global equilibrium spillovers. These examples motivate the structured local-global model studied in Section~\ref{sec:local vs global example}, where both channels coexist.

\subsection{Local and global spillovers}

\vspace{2mm}
\noindent\textbf{Local network spillovers.}
A large recent literature studies spillovers that operate through local network structure, including applications in epidemiology, peer effects, spatial externalities, and informal insurance \citep{halloran1991direct,HudgensHalloran2008,lin2024sir,ogburn2024causal,cai2015social,fafchamps2003risk}. In many such settings, units are embedded in a network encoding pairwise relationships such as social ties, geographic proximity, or technological links. Let $E \in \{0,1\}^{n\times n}$ be a symmetric adjacency matrix, where $E_{ij}=1$ indicates that units $i$ and $j$ are connected. A natural exposure mapping is
\(
d_i(\bw) := \{ w_j : j \neq i,\; E_{ij}=1 \},
\)
which records the treatment assignments of unit $i$'s neighbors. In practice, researchers often work with lower-dimensional summaries of this vector, such as the proportion of treated neighbors or an indicator for whether at least one neighbor is treated \citep{li2022random,cai2015social}.

\vspace{2mm}
\noindent\textbf{Global equilibrium spillovers.}
Other forms of interference operate through aggregate or equilibrium mechanisms that affect all units simultaneously, including herd immunity, market-clearing prices, and centralized allocation rules \citep{halloran1991direct,lin2024sir,egger2022ge,munro2025designed,arkhangelsky2025evaluating}. In such settings, an individual’s outcome may depend on the full treatment assignment only through a low-dimensional global state. This motivates exposure mappings of the form
\(
d_i(\bw) := P_n(\bw), \, i \in [n],
\)
where $P_n(\bw)$ is a scalar or low-dimensional summary induced by the full assignment $\bw$, such as an epidemic threshold or an equilibrium price.

\subsection{Multiple coexisting spillovers}\label{sec:local-global-examples}

Our main interest is in environments where these two channels coexist. In such cases, individual outcomes can be written as
\(
y_i(\bw)
=
y_i\bigl(w_i,\, S_i(\bw),\, P_n(\bw)\bigr),
\)
where $w_i$ is an individual treatment, $S_i(\bw)$ is a local network exposure, and $P_n(\bw)$ is a global state induced by the assignment $\bw$.

     \textbf{1. Vaccination on networks with herd immunity.} Consider a susceptible--infected--removed (SIR) model on a contact network. Vaccination of neighbors reduces unit $i$'s infection risk through local transmission channels, which can be summarized by a network exposure such as the fraction of vaccinated neighbors \citep{HudgensHalloran2008}. At the same time, aggregate vaccination levels determine whether the population crosses a herd-immunity threshold, altering infection risk for all individuals through a global channel \citep{halloran1991direct,lin2024sir}. Exposure mappings that focus only on local network structure therefore confound these two mechanisms whenever global epidemic conditions also matter, echoing recent work on misspecified exposure mappings and equilibrium causal estimands \citep{savje2024causal,menzel2025fixed}.
    
     \textbf{2. Market equilibrium with network externalities.} A similar structure arises in market environments with both equilibrium spillovers and local interactions. Individual treatments, such as subsidies or cash transfers, may affect outcomes globally through equilibrium prices determined by market clearing, but also locally through peer effects, information transmission, or technological complementarities.\footnote{\citet{munro2021treatment} write: ``One unit’s treatment impacts another’s outcomes only through the treatment’s impact on the equilibrium price, which rules out peer effects or other forms of network-type interference.'' \citet{munro2025designed} similarly note: ``There are two possible sources of interference from an information treatment; the first is spillovers through the mechanism due to capacity constraints, and the second is network spillovers. The estimates in Table 4 only account for the first type of spillover.''} Local network exposures capture peer interactions holding prices fixed, while global exposures summarize equilibrium adjustments operating at the economy-wide level. Large-scale cash-transfer experiments in Kenya illustrate general-equilibrium spillovers on non-recipients via changes in local demand and prices \citep{egger2022ge}, whereas \cite{AngelucciDeGiorgi2009} show the cash-transfer program in Mexico generates local network externalities through gifts, loans, and informal risk-sharing with little evidence of local price changes.  Related patterns appear in other domains: job-placement programs can raise employment for treated workers but displace untreated job seekers in the same labor markets \citep{crepon2013displacement}, with referrals through social networks mediating access to jobs \citep{beaman2012referral}; and school-choice reforms affect aggregate sorting and housing markets \citep{hsieh2006choice} while classroom peer composition generates local externalities \citep{carrell2010externalities}. These examples underscore that similar interventions can trigger either or both types of spillovers depending on scale and context.

These examples illustrate that interference often arises through multiple, conceptually distinct channels. In the next section, we formalize a structured local-global model, motivated by the market-equilibrium-with-network-externalities example, in which these channels can be analyzed jointly.

\section{Analysis of local and global interference}\label{sec:local vs global example}

We now specialize the general pseudo-true framework of Section~\ref{sec:misspecified estimand} to a structured environment in which two first-order spillover channels coexist: local network interference and global market interference. This setting is motivated by the examples in Section~\ref{sec:local-global-examples} and nests the two well-specified benchmark environments studied separately in \citet{li2022random} and \citet{munro2021treatment}. Our goal is to show that, in this structured model, the pseudo-true objects from Section~\ref{sec:estimands defined via conditioning} admit a sharper channel-specific interpretation, and that existing methods continue to estimate the corresponding local and global components.
\begin{enumerate}

    \item[\hypertarget{NET}{$\mathsf{(NET)}$}]  (Local \textbf{network} interference) With latent variables $Q_i\in\cQ$, the graph is generated by a graphon model $E_{ij}\sim\mathrm{Bernoulli}\rbr{G_n(Q_i,Q_j)}$ independently for $i<j$. In the following, we will let $N_i=\sum_{j\neq i}E_{ij}$ be each unit's degree, and $M_i = \sum_{j\neq i} E_{ij} W_j$ be the total number of a unit's treated neighbors. Then $S_i=M_i/N_i$ represents the proportion of treated neighbors.

    \item[\hypertarget{MAR}{$\mathsf{(MAR)}$}] (Global \textbf{market} interference) Suppose outcomes are also affected by the prices \(p\in\RR^J\) of \(J\) products. For each \(i\in[n]\), let \(z_i(w_i,p)\in\RR^J\) denote unit \(i\)'s excess demand at price \(p\) when assigned treatment \(w_i\in\{0,1\}\). The realized price \(P_n(\bW)\) is determined by approximately solving
\[
\frac{1}{n}\sum_{i=1}^n z_i(W_i,P_n(\bW)+U_i)\approx 0
\]
subject to individualized price perturbations \(U_i\). We additionally observe the realized excess demands \(Z_i=z_i(W_i,P_n(\bW))\) for all individuals.
\end{enumerate}
The environment also generates implicit functionals that output the observed outcomes
\begin{equation}\label{eq: outcome model}
    Y_i=y_i(W_i,S_i(\bW),P_n(\bW)),
\end{equation}

In this section, we show that the oracle marginal policy effect \(\tau_{\MPE}^{\oracle}\) admits an interpretable asymptotic decomposition into direct, local, and global components. These components are defined using the general pseudo-true framework from Section~\ref{sec:estimands defined via conditioning}, specialized here to local and global exposure mappings; see Section~\ref{sec:local vs global example estimand}. Moreover, the existing methods of \citet{li2022random} and \citet{munro2021treatment} continue to provide valid estimators for the corresponding local and global components; see Section~\ref{sec:local vs global example estimator}. Relative to those earlier analyses, our contribution is to show that these methods remain valid in a joint local-global environment, using in particular a higher-order expansion of Z-estimators for the empirical price variable \(P_n(\bW)\).

\subsection{Model setup}
\begin{assumption}\label{assump:super-population for units}
    We assume that each $(Q_i,z_i,y_i)$ is drawn independently from the same joint distribution for all $i\in[n]$.
\end{assumption}

\begin{remark}
    One could relax \eqref{eq: outcome model} and allow \(y_i\) to depend more generally on the full assignment vector \(\bW\). By Corollary~\ref{corollary:approximate sufficiency}, our results should continue to hold provided that: (i) the joint exposures \((S_i(\bW),P_n(\bW))\) are approximately sufficient in the sense that
    \begin{equation}\label{eq:approximate sufficiency}
        \sum_{i\in[n]}
            \EE_{\bW}
            \bigl[
                \Var(y_i(\bW)\mid W_i,S_i(\bW),P_n(\bW))
            \bigr]
        \;=\; o(1),
    \end{equation}
    and (ii) the corresponding pseudo-true outcome models
    \[
        \tilde y_i(\bw;\pi)
        =
        \E\!\left[
            y_i(\bW)
            \mid W_i,
            S_i(\bW)=S_i(\bw),\,
            P_n(\bW)=P_n(\bw)
        \right]
    \]
    satisfy the remaining assumptions of the section. We impose \eqref{eq: outcome model} in order to state the main results under cleaner conditions. The extension under \eqref{eq:approximate sufficiency} is conceptually straightforward and omitted for brevity.
\end{remark}
We now define several population quantities explicitly. Let $p_\pi^{\ast}$ be the unique population-clearing price, as in \citet[Assumption 3]{munro2021treatment}, which solves
\begin{equation}\label{eq:population-clearing price}
    \EE\sbr{\pi z_i(1,p_{\pi}^{\ast}) + (1-\pi) z_i(0,p_{\pi}^{\ast})} = 0.
\end{equation}
In addition, define the \emph{population} gradients which are evaluated at $p_{\pi}^{\ast}$,
\begin{align}
    \xi_z &:= \mathbb{E}\sbr{\pi \nabla_p z_i(1,p_{\pi}^{\ast}) + (1-\pi) \nabla_p z_i(0,p_{\pi}^{\ast})} \in\RR^{J\times J}, \\
    \xi_y &:= \mathbb{E}\sbr{\pi \nabla_p y_i(1,\pi,p_{\pi}^{\ast}) + (1-\pi) \nabla_p y_i(0,\pi,p_{\pi}^{\ast})} \in\RR^{J}.
\end{align}
We also impose the structural condition that the graphon model in \spillovertype{NET} is of low rank. The statistical network-analysis literature has studied the spectral decay of sparse graphon models \citep{gao2015rate,chen2025minimax}.
\begin{condition}[Sparse and low-rank graphon sequence]
\label{assump:graphon}
    Assume $G_n(u,v)=\min\cbr{1,\rho_nG(u,v)}$ for some fixed non-negative symmetric bi-variate function $G$. We further require $\rho_n=cn^{-\kappa}$ for some fixed $\frac{1}{3}<\kappa<\frac{1}{2}$. We also assume the graphon model to be low rank: for some $r\ge 1$, there exists
    \begin{equation}
        G(Q_i,Q_j)=\sum_{k=1}^r \lambda_k\psi_k(Q_i)\psi_k(Q_j),
    \end{equation}
    such that
    \begin{equation}
        |\lambda_1|\ge|\lambda_2|\ge\cdots\ge|\lambda_r|>0,\quad\EE\left[\psi_k(Q_i)^2\right]=1,
    \end{equation}
    and $\EE\left[\psi_k(Q_i)\psi_l(Q_i)\right]=0$ for any $k\neq l$. We write
    \(
        g(q):=\EE_{Q'}[G(q,Q')],
    \)
    where \(Q'\) is an independent draw from the same distribution as \(Q_i\). Finally, we assume that the expected connectivity is uniformly bounded away from zero and infinity: there exist constants \(0<c_g\le C_g<\infty\) such that \(c_g\le g(q)\le C_g\) for all \(q\in\cQ\). This ensures that unit degrees diverge uniformly with high probability, so that degree-normalized quantities such as \(S_i=M_i/N_i\) and \(1/g(Q_i)\) are well behaved.
\end{condition}
In addition to the graphon structure in \spillovertype{NET}, we assume access to the
same augmented randomized trial that provides instrumental variables-like variation in
\spillovertype{MAR}.
\begin{condition}[Augmented randomized trial]\label{assump: augmented trial}
    In addition to generating the treatments, the experimenter can generate
    individualized perturbations $U_i\in\RR^J$ to the global equilibrium factor
    $P_n(\bW)$. These perturbations satisfy
    $U_{ij}\,\indsim\,\mathrm{Unif}(\{\pm h_n\})$ for $h_n = c n^{-\alpha}$ with
    $\frac{1}{4}<\alpha<\frac{1}{2}$. 
    Then the actual global factor $P_n(\bW)$ is defined by solving
    \begin{equation}
    \frac{1}{n}\sum_{i=1}^n z_{ij}(W_i,P_n(\bW)+U_i) \approx 0,\quad \forall j\in[J].
    \end{equation}
    The notation ``$\approx 0$'' is formalized in Assumption~\ref{assump:empirical price}. 
\end{condition}

We treat $(\rho_n,h_n)$ jointly as parameters of the environment.

\subsection{Treatment effect estimands}\label{sec:local vs global example estimand}

Consistent with Section~\ref{sec:misspecified estimand}, our primary object remains the oracle marginal policy effect
\begin{equation}\label{eq: define total effect}
    \tau_{\MPE}^{\oracle} = \frac{1}{n} \sum_{i=1}^n \sum_{j=1}^n \EE_{\bW\sim\RCT(\pi)} \sbr{y_j(w_i=1;\bW_{-i})-y_j(w_i=0;\bW_{-i})}.
\end{equation}
This notion is well-defined regardless of the interference structure. In addition, we consider three related components.
\begin{itemize}
    \item[(i)] \textbf{(Oracle) Direct effect:} This estimand characterizes how each $w_i$ affects its own outcome,
    \begin{equation}\label{eq: define direct effect}
        \tau_{\ADE}^{\oracle} = \frac{1}{n}\sum_{i=1}^n \EE_{\pi}\sbr{y_i(w_i=1;\bW_{-i})-y_i(w_i=0;\bW_{-i})}.
    \end{equation}

    \item[(ii)] \textbf{Local spillover effect:} This estimand characterizes how each $w_i$ affects the outcomes of that unit's neighbors. We will be choosing $d_i^{\rmL}(\bw) = \{ w_{j}: j\in[n], j\neq i, E_{ij}=1 \}$, which induces
    \begin{align}
        \tilde{y}_i^{\rmL}(\bw;\pi) &= \EE_{\bW^{(2)}\sim\mathrm{RCT}(\pi)}\cbr{ y_i\rbr{ \bW^{(2)} } \Big| W^{(2)}_{j}=w_{j},\quad\forall j\text{ with }E_{ij}=1 }, \\
        &= \EE_{\bW^{(2)}}\cbr{ y_i\rbr{ w_i, \frac{\sum_{j\neq i}E_{ij}w_j}{\sum_{j\neq i}E_{ij}}, P_n\rbr{\bw_{\cN_i},\bW^{(2)}_{-\cN_i}} }  }, \\
        \tau_{\AIE}^{\rmL}(\pi) &= \frac{1}{n}\sum_{j=1}^n\sum_{i\neq j}\EE_{\bW\sim\RCT(\pi)}\sbr{ \tilde{y}_j^{\rmL}\rbr{w_i=1,\bW_{-i};\pi} - \tilde{y}_j^{\rmL}\rbr{w_i=0,\bW_{-i};\pi} } \\
        &= \frac{1}{n}\sum_{j=1}^n\sum_{i\in\cN_ j}\EE_{\bW\sim\RCT(\pi)}\sbr{ \tilde{y}_j^{\rmL}\rbr{w_i=1,\bW_{-i};\pi} - \tilde{y}_j^{\rmL}\rbr{w_i=0,\bW_{-i};\pi} }
        \label{eq:local spillover estimand}
    \end{align}

    \item[(iii)] \textbf{Global spillover effect:} This estimand characterizes how each $w_i$ affects the outcomes of all units through the equilibrium mechanism. The exposure mapping is simply the equilibrium variable $d_i^{\rmG}(\bw) = P_n(\bw)$ for every unit $i\in[n]$. This choice induces the following objects:
    \begin{align}\label{eq:global spillover estimand}
        \tilde{y}_i^{\rmG}(\bw;\pi) &= \EE_{\bW^{(2)}\sim\mathrm{RCT}(\pi)}\cbr{ y_i\rbr{ \bW^{(2)} } \Big| P_n\rbr{\bW^{(2)}} \approx P_n(\bw) }, \\
        &= \EE_{\bW^{(2)}}\cbr{ y_i\rbr{ w_i,\frac{\sum_{j\neq i}E_{ij}W_j^{(2)}}{\sum_{j\neq i}E_{ij}}, P_n\rbr{\bw} } \Big| P_n\rbr{\bW^{(2)}} \approx P_n(\bw) },\\
        \tau_{\AIE}^{\rmG}(\pi) &= \frac{1}{n}\sum_{j=1}^n\sum_{i\neq j}\EE_{\bW\sim\RCT(\pi)}\sbr{ \tilde{ y}_j^{\rmG}\rbr{w_i=1,\bW_{-i};\pi} - \tilde{y}_j^{\rmG}\rbr{w_i=0,\bW_{-i};\pi} }.
    \end{align}
    We will formalize the meaning of ``$\approx$" later in Section~\ref{sec: global spillover estimand proof}.
\end{itemize}
In finite samples, there is no reason to expect \(\tau_{\MPE}^{\oracle}\) to decompose exactly into these three components. Our next result shows that such a decomposition turns out to emerge asymptotically.
\begin{theorem}\label{thm: asymptotic limit of estimands}
    As $n\to\infty$, all finite-sample estimands converge in probability to population limits:
    \begin{align}
        \tau_{\ADE}^{\oracle} &\pto \tau_{\ADE}^{\oracle,\ast} := \EE\sbr{y_i(1,\pi,p_{\pi}^{\ast})-y_i(0,\pi,p_{\pi}^{\ast})}, \\
        \tau_{\AIE}^{\rmL} &\pto \tau_{\AIE}^{\rmL,\ast} := \EE\sbr{\pi \nabla_s y_i(1,\pi,p_{\pi}^{\ast})+(1-\pi)\nabla_s y_i(0,\pi,p_{\pi}^{\ast})}, \\
        \tau_{\AIE}^{\rmG} &\pto \tau_{\AIE}^{\rmG,\ast} := - \xi_y^{\top}\xi_z^{-1} \EE\sbr{z_i(1,p_{\pi}^{\ast})-z_i(0,p_{\pi}^{\ast})}.
    \end{align}
    Moreover, the total treatment effect also has a finite asymptotic limit, which is the sum of the three components above:
    \begin{equation}\label{eq:decomposition of MPE into 3 parts}
        \tau_{\MPE}^{\oracle} \pto \tau_{\MPE}^{\oracle,\ast} := \tau_{\ADE}^{\oracle,\ast} + \tau_{\AIE}^{\rmL,\ast} + \tau_{\AIE}^{\rmG,\ast}.
    \end{equation}
    The convergence in probability here is with respect to all sources of randomness, including the draws of the latent functions $\{(y_i,z_i):i\in[n]\}$ and the random network $\bE$.
\end{theorem}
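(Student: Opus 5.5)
The plan is to treat each of the four estimands separately: linearize $y_i$ in its two exposure arguments around the population limits $(S_i,P_n)\approx(\pi,p_\pi^\ast)$, then handle the total effect by reducing it to the oracle direct effect plus two spillover pieces. Throughout I assume the smoothness and moment conditions on $(y_i,z_i)$ that the paper imposes later (Lipschitz gradients in $(s,p)$, an invertible $\xi_z$, and the empirical-price approximation in Assumption~\ref{assump:empirical price}). The two basic concentration facts come first.

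\textbf{Step 1: the local exposure concentrates.} Under Condition~\ref{assump:graphon}, $N_i\asymp n\rho_n\to\infty$ uniformly with high probability, so $S_i-\pi=O_p((n\rho_n)^{-1/2})$.

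\textbf{Step 2: the price admits a higher-order Z-estimator expansion.} Taking $U_i$ mean zero and symmetric, I would show
\[
P_n(\bW)-p_\pi^\ast=-\xi_z^{-1}\frac1n\sum_k z_k(W_k,p_\pi^\ast)+O_p(n^{-1}+h_n^2).
\]
Toggling a single $w_i$ then moves the price by
\[
P_n(w_i=1)-P_n(w_i=0)=-\tfrac1n\xi_z^{-1}\bigl[z_i(1,p^\ast)-z_i(0,p^\ast)\bigr]+o_p(n^{-1}).
\]

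\textbf{Step 3: the direct effect.} Toggling $w_i$ changes $S_i$ not at all and changes $P_n$ by $O(1/n)$. So $y_i(1,S_i,P_n)-y_i(0,S_i,P_n)$ averages, by the law of large numbers under Assumption~\ref{assump:super-population for units} and continuity, to $\tau_{\ADE}^{\oracle,\ast}$.

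\textbf{Step 4: the local spillover.} For $j\in\cN_i$, toggling $w_i$ shifts $S_j$ by $1/N_j$. In $\tilde y^{\rmL}_j$ the price is averaged over the non-neighbor coordinates, but it still depends on $w_i$ through an $O(1/n)$ shift. A Taylor expansion in $s$ gives contribution
\[
N_j^{-1}\,\EE\bigl[\nabla_s y_j(w_j,\pi,p^\ast)\bigr]+O_p\bigl(N_j^{-1}(n\rho_n)^{-1/2}\bigr)+O(n^{-1}).
\]
Summing over the $N_j$ neighbors cancels the $1/N_j$ and yields the $\nabla_s$ term. The price shifts contribute at most $N_j\cdot O(1/n)=O(\rho_n)\to0$ per unit $j$.

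\textbf{Step 5: the global spillover.} Conditioning on $P_n(\bW^{(2)})\approx P_n(\bw)$ leaves $S_j$ asymptotically independent of the price. Hence $\tilde y^{\rmG}_j(\bw)\approx\bar y_j(w_j,P_n(\bw))$, where $\bar y_j$ averages over $S_j$, which is concentrated at $\pi$. Toggling $w_i$ for $i\neq j$ changes $P_n$ as in Step 2. Summing over $i$ and averaging over $j$ produces $-\xi_y^\top\xi_z^{-1}\EE[z_i(1,p^\ast)-z_i(0,p^\ast)]$.

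\textbf{Step 6: the total effect.} Decompose $\tau_{\MPE}^{\oracle}-\tau_{\ADE}^{\oracle}$ as $\sum_{j}\sum_{i\neq j}$ of the one-switch contrasts of $y_j(w_j,S_j,P_n)$. A first-order expansion splits each contrast into an $S$-change term, nonzero only for $i\in\cN_j$, plus a $P$-change term. The two mixed-channel interactions are of size $N_j^{-1}\cdot n^{-1}$ per pair, and summed over all pairs they vanish. Comparing with Steps 4 and 5 gives~\eqref{eq:decomposition of MPE into 3 parts}.

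\textbf{Main obstacle.} The hard part is controlling the accumulated second-order remainders, which are summed over $n$ switches $i$ and averaged over $j$. Each switch's remainder must be $o(1/n)$ uniformly in $i$ after expectation, not just in probability. For the price this requires the higher-order expansion of $P_n$ to hold with a remainder that is uniformly small after differencing. I would get this by writing $P_n(w_i=1)-P_n(w_i=0)$ through the mean-value theorem applied to the clearing equation, using the smoothing from $U_i$ (with $h_n\gg n^{-1/2}$) to make the empirical excess demand differentiable in $p$.

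The rate windows $\kappa\in(1/3,1/2)$ and $\alpha\in(1/4,1/2)$ are what make the cross terms negligible. For example, $n\cdot N^{-1}\cdot(n\rho_n)^{-1/2}\to0$ after normalizing, and $h_n^2=o(n^{-1/2})$. I expect the bookkeeping there, rather than any conceptual step, to take most of the work.
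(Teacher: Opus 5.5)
Your Step~5 has a genuine gap. You justify $\tilde y^{\rmG}_j(\bw)\approx\bar y_j(w_j,P_n(\bw))$ by saying that conditioning on the price ``leaves $S_j$ asymptotically independent of the price.'' But this approximation error is then differenced in $w_i$ and summed over all $n-1$ switches $i\neq j$. What you actually need is that its switch-differences be $o(1/n)$ in $L^1$, uniformly in $i$, and asymptotic independence does not give this.

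Although $\mathrm{corr}(S_j,P_n)=O(\sqrt{\rho_n})\to0$, the regression slope $\mathrm{Cov}(S_j,P_n)/\mathrm{Var}(P_n)$ is $O(1/n)/O(1/n)=O(1)$, because $\mathrm{Var}(S_j)\asymp 1/(n\rho_n)\gg\mathrm{Var}(P_n)\asymp 1/n$. So under a conditioning that genuinely localizes the price, $\EE[S_j\mid P_n(\bW^{(2)})=p]$ moves by $O(1/n)$ with each switch. The $n$ switches then feed an $O(1)$ extra term through $\nabla_s y_j$.

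The paper's fixed-index simulation model makes this concrete. There $P_n$ is affine in $\sum_k w_k$, so exact conditioning on the price is conditioning on the number treated, giving $\EE[S_j\mid\cdot\,]\approx n^{-1}\sum_k w_k$. The ``global'' estimand would then converge to $\tau_{\AIE}^{\rmL,\ast}+\tau_{\AIE}^{\rmG,\ast}$, absorbing the entire local channel. Since the claimed limit is true or false depending on how ``$\approx$'' is formalized, an argument that never uses the formalization cannot be right.

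The missing ingredient is the width of the paper's window. ``$\approx$'' means $\|P_n(\bW^{(2)})-P_n(\bw)\|\le a_n$ with $a_n=n^{-\mu}$ and $0<\mu<1/2$, which is far wider than the $n^{-1/2}$ price fluctuations. Apply the tail bound $\PP(\|P_n-p_\pi^\ast\|\ge a_n/2)=o(1/n)$ to $P_n(\bW^{(2)})$ and, uniformly in $i$ and $w_i$, to $P_n(w_i,\bW_{-i})$. Then the conditioning event $\mathsf A$ fails with probability $o(1/n)$, so $|\tilde y^{\rmG}_j-\check y_j|\le 2B\,\PP(\mathsf A^c)$, where $\check y_j(\bw)=\EE_{\bW^{(2)}}[y_j(w_j,S_j(\bW^{(2)}),P_n(\bw)+U_j)]$ is the unconditioned average. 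Applying $n^{-1}\sum_j\sum_{i\neq j}$ to this bound gives $o(1)$. Only after this replacement does your per-switch price computation (essentially Munro et al.'s argument) apply.

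Outside Step~5, your switch-by-switch expansion is a valid alternative to the paper's routes. The paper treats the local estimand by applying Li--Wager's Proposition~1 to an intermediate quantity with decoupled price $P_n(\bW^{(2)})$. It treats the total effect via the score form $\frac{1}{\pi(1-\pi)}\EE[\frac1n\sum_iY_i\sum_k(W_k-\pi)]$ plus a second-order price expansion.

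Your Step~4 bounds the price channel directly by $N_j/n=O(\rho_n)$ and does not seem to need $\kappa>1/3$. Note that your sample accounting $n\,N^{-1}(n\rho_n)^{-1/2}$ in fact tends to zero only for $\kappa<1/3$. The price of your route is the uniform $L^1$ switch bound
$P_n(1,\bW_{-i})-P_n(0,\bW_{-i})=-n^{-1}\xi_z^{-1}[z_i(1,p_\pi^\ast)-z_i(0,p_\pi^\ast)]+o(1/n)$.
This needs empirical-Jacobian control near $p_\pi^\ast$ and exceptional events of probability $o(1/n)$. It needs no smoothing from $U_i$, since $z_i$ is already $C^3$ in $p$ by Assumption~\ref{assump:regularity}.
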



Within the model \eqref{eq: outcome model}, the local and global spillover channels are asymptotically decoupled. Intuitively, this holds because the global and local channels correspond to fluctuations of the assignment vector in very different directions. Specifically, the global spillover channel operates through a low-dimensional, ``consensus'' statistic of the assignment, while the local channel operates through high-dimensional ego exposures; under the Bernoulli design these directions fluctuate at order $n^{-1/2}$ and are asymptotically uncorrelated, so only the separate local and global components contribute to the welfare derivative at first order, and their interaction is second order.
Appendix~\ref{sec:app-th4.5} presents the proof in several steps.

It is also worth noting that \(\tau_{\AIE}^{\rmL,\ast}\) coincides with the estimand in \citet{li2022random} when the equilibrium price is fixed at \(p_\pi^\ast\), whereas \(\tau_{\AIE}^{\rmG,\ast}\) coincides with the estimand in \citet{munro2021treatment} when local interference is fixed at its benchmark level \(\pi\).

We also specialize Proposition~\ref{prop:sign preserving} to the present local-global setting.
\begin{corollary}\label{corollary:sign_preserving_local_global}
    (a) The local spillover effect \(\tau_{\AIE}^{\rmL}\) is sign preserving with respect to treatment.

    (b) Assume that \(z_i(w,p)\) is almost surely nondecreasing in \(w\in\{0,1\}\) and nonincreasing in \(p\in\RR\), and that for every \(\bw\in\{0,1\}^n\), the empirical price \(P_n(\bw)\) is the unique solution to
    \[
        \frac{1}{n}\sum_{i=1}^n z_i\!\left(w_i,P_n(\bw)\right)=0.\footnote{For simplicity, the monotone-equilibrium argument is stated for the scalar-price case \(J=1\). Extending this step to \(J>1\) requires additional structure on the equilibrium system and on equilibrium selection.}
    \]
    Then the global exposure mapping \(d_i^{\rmG}(\bw)=P_n(\bw)\) is componentwise nondecreasing in \(\bw\). Consequently, the global spillover effect \(\tau_{\AIE}^{\rmG}\) is sign preserving with respect to treatment.
\end{corollary}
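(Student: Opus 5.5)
The plan is to reduce both parts to Proposition~\ref{prop:sign preserving}, which only asks that the exposure mapping be componentwise non-decreasing. The weights it produces depend only on $\RCT(\pi)$ and $\{d_i\}$. Its conclusion then transfers directly to $\tau_{\AIE}^{\rmL}$ and $\tau_{\AIE}^{\rmG}$, since these are exactly the $\AIE$ objects of Section~\ref{sec:estimands defined via conditioning} for the particular choices $d_i^{\rmL}$ and $d_i^{\rmG}$. Throughout, "sign preserving with respect to treatment" means representation \eqref{eq:sign-preserving-representation} with nonnegative weights. As in the paper's convention, own treatment is appended to the exposure, i.e.\ we use $\tilde d_i(\bw)=(w_i,d_i(\bw))$; this preserves monotonicity because $w\mapsto w_i$ is itself non-decreasing.

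For part (a), I would argue that the local mapping $d_i^{\rmL}(\bw)=\{w_j: E_{ij}=1\}$ is a coordinate projection of $\bw$ onto the neighborhood $\cN_i$. Under the product order on $\{0,1\}^{|\cN_i|}$, a projection is trivially componentwise non-decreasing. One point needs checking: the proof of Proposition~\ref{prop:sign preserving} may use a totally or partially ordered exposure space, for instance through conditioning events of the form $\{d_i(\bW)=d\}$ together with an FKG/Harris-type positive-association argument. Because the Bernoulli design is a product measure, Harris's inequality holds on the full lattice $\{0,1\}^n$, and so vector-valued exposures under the product order are fine. Conditioning is done on a fixed network $E$, and $E$ is independent of $\bW$. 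The weights therefore remain nonnegative realization by realization, and integrating over $E$ keeps them nonnegative.

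For part (b), the substantive step is to show that $P_n(\bw)$ is non-decreasing in each $w_k$. Define $F_{\bw}(p)=\frac1n\sum_i z_i(w_i,p)$. This function is nonincreasing in $p$ and, by assumption, nondecreasing in $\bw$ pointwise in $p$. Take $\bw\le\bw'$ and write $p=P_n(\bw)$. Then $F_{\bw'}(p)\ge F_{\bw}(p)=0$. Suppose $P_n(\bw')<p$. By monotonicity in $p$, $F_{\bw'}(q)\ge F_{\bw'}(p)\ge 0$ for all $q\le p$. Combined with the assumed uniqueness of the root, this does not immediately rule the case out when $F$ is only weakly monotone. I would handle it by a sandwich argument: if $P_n(\bw')=p'<p$, then $F_{\bw'}$ is nonincreasing and vanishes at $p'$, so $F_{\bw'}\le 0$ on $[p',\infty)$, while $F_{\bw'}(p)\ge 0$. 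Hence $F_{\bw'}\equiv 0$ on $[p',p]$, which contradicts uniqueness. So $P_n(\bw')\ge P_n(\bw)$. With $J=1$, the exposure $d_i^{\rmG}$ is then a real-valued non-decreasing function, and Proposition~\ref{prop:sign preserving} applies.

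I expect the main obstacle to be the mismatch between this exact-equilibrium definition and the perturbed, approximate price of Condition~\ref{assump: augmented trial}, together with the "$\approx$" conditioning in \eqref{eq:global spillover estimand}. I would state the corollary for the exact unperturbed solution, as the hypothesis does. If the perturbations $U_i$ are included, monotonicity should survive pointwise in $U$: the same sandwich argument works for $z_i(w_i,p+U_i)$, which is still nonincreasing in $p$. Conditioning on $U$ then gives nonnegative weights, and averaging over $U$ keeps them nonnegative.
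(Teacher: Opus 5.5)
Your proposal is correct and follows the same route as the paper. Part (a) comes straight from Proposition~\ref{prop:sign preserving} because $d_i^{\rmL}$ is a coordinate projection, and part (b) uses the same comparison $F_{\bw'}(P_n(\bw))\ge F_{\bw}(P_n(\bw))=0$ plus uniqueness of the zero of the nonincreasing map $F_{\bw'}$ to get $P_n(\bw')\ge P_n(\bw)$, then invokes the proposition again; your sandwich argument just spells out the step the paper compresses into one line, and your extra remarks (appending $w_i$, averaging over $\bE$ or $\bU$, the Harris aside) are harmless but unnecessary, since the proposition is already stated for arbitrary componentwise nondecreasing $d_i$.
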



In the market-interference setting, monotonicity of \(z_i\) with respect to the price variable \(p\) is often natural, whereas monotonicity with respect to the treatment assignment \(w\) is application specific. In settings such as cash-transfer interventions in Section \ref{sec:filmer-sim}, the latter condition can be plausible when treatment weakly increases recipients' excess demand at any given price, so that a higher treatment intensity exerts upward pressure on the equilibrium price.

\subsection{Estimators}\label{sec:local vs global example estimator}
After defining several notions of treatment effects, this section presents corresponding estimators that are consistent for the asymptotic estimands. We also derive sharp convergence rates to assess the statistical efficiency of the proposed methods.

To recap the basic setup, we observe a network $\bE\in\{0,1\}^{n \times n}$, a randomized assignment $\bW\in\{0,1\}^n$ generated according to Assumption~\ref{assump:RCT}, and individualized perturbations $\bU\in\RR^{n \times J}$ generated according to Assumption~\ref{assump: augmented trial}. We then observe realized outcomes $\bY\in\RR^{n}$ and excess demands $\bZ\in\RR^{n \times J}$. Any valid estimator must be constructed only from these observables.

\medskip\noindent\textbf{Direct effect.} 
Consistent with the practice in \cite{li2022random,munro2021treatment}, we employ the usual Horvitz--Thompson estimator for $\tau_{\mathrm{ADE}}^{\oracle}$, which is automatically unbiased under the RCT design (Assumption~\ref{assump:RCT}):
\begin{equation}\label{eq:HT_estimator}
    \hat{\tau}_{\mathrm{ADE}}^{\oracle}
      =\frac{1}{n}\sum_{i=1}^n
         \rbr{
            \frac{W_i }{\pi}
            -\frac{1-W_i}{1-\pi}
         }Y_i .
\end{equation}
Because our model contains two distinct spillover mechanisms, the asymptotic variance of this estimator differs from the standard benchmark. We derive it explicitly in the following theorem.
\begin{theorem}\label{thm:direct effect estimator}
    Under assumptions detailed in Section~\ref{sec:assump}, the Horvitz-Thompson estimator $\hat{\tau}_{\ADE}$ has a limiting Gaussian distribution around the asymptotic average direct effect estimand $\tau_{\ADE}^{\oracle,\ast}$,
\begin{equation}
    \sqrt{n}\rbr{ \hat{\tau}_{\mathrm{ADE}}^{\oracle}-\tau_{\ADE}^{\oracle,\ast} }
    \Rightarrow
    \cN\!\left(
        0,\,
        \Var\sbr{\mathcal{V}^{(1)}+\mathcal{V}^{(2)}} + \pi(1-\pi)\EE\sbr{\rbr{V^{(1)}+V^{(2)}+V^{(3)}}^2}
    \right)
\end{equation}
where we can represent the limiting variance through
\begin{align}
    \mathcal{V}^{(1)} &= y_1(1,\pi,p_\pi^\ast) - y_1(0,\pi,p_\pi^\ast), \\
    \mathcal{V}^{(2)} &= -\nabla_p\sbr{\EE y(1,\pi,p_\pi^\ast)-\EE y(0,\pi,p_\pi^\ast)}^\top \xi_z^{-1} \sbr{\pi z_1(1,p_\pi^\ast)+(1-\pi)z_1(0,p_\pi^\ast)}, \\
    V^{(1)} &= \frac{y_1(1,\pi,p_\pi^\ast)}{\pi}+\frac{y_1(0,\pi,p_\pi^\ast)}{1-\pi}, \\
    V^{(2)} &= \EE_{Q_2,y_2}\sbr{\frac{G(Q_1,Q_2)\sbr{\nabla_s y_2(1,\pi,p_\pi^\ast) - \nabla_s y_2(0,\pi,p_\pi^\ast)}}{g(Q_2)}\bigg|Q_1}, \\
    V^{(3)} &= -\nabla_p\sbr{\EE y(1,\pi,p_\pi^\ast)-\EE y(0,\pi,p_\pi^\ast)}^\top \xi_z^{-1} \sbr{z_1(1,p_\pi^\ast)-z_1(0,p_\pi^\ast)}.
\end{align}
\end{theorem}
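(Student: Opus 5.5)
We expand the Horvitz--Thompson estimator around the population limit by linearizing the two spillover channels separately. This reduces $\hat\tau^{\oracle}_{\ADE}-\tau^{\oracle,\ast}_{\ADE}$ to an i.i.d.\ average plus a degree-weighted network sum plus a price fluctuation term. We then prove a joint CLT for these pieces.

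Write $Y_i=y_i(W_i,S_i,P_n)$ and Taylor expand in $(S_i,P_n)$ around $(\pi,p^\ast_\pi)$:
\[
Y_i=y_i(W_i,\pi,p^\ast_\pi)+\nabla_s y_i(W_i,\pi,p^\ast_\pi)(S_i-\pi)+\nabla_p y_i(W_i,\pi,p^\ast_\pi)^\top(P_n-p^\ast_\pi)+R_i .
\]
Substituting into $\hat\tau=\frac1n\sum_i(\frac{W_i}{\pi}-\frac{1-W_i}{1-\pi})Y_i$ gives three leading terms and a remainder.

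\textbf{Term 1.} The zeroth-order piece $\frac1n\sum_i(\frac{W_i}{\pi}-\frac{1-W_i}{1-\pi})y_i(W_i,\pi,p^\ast_\pi)$ splits into
\[
\frac1n\sum_i\mathcal V^{(1)}_i+\frac1n\sum_i(W_i-\pi)\Bigl[\frac{y_i(1)}{\pi}+\frac{y_i(0)}{1-\pi}\Bigr]
\]
up to centering. This produces the $\mathcal V^{(1)}$ and $V^{(1)}$ contributions.

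\textbf{Term 2 (network).} The network piece is $\frac1n\sum_i(\cdots)\nabla_s y_i\,(S_i-\pi)$. Since $S_i-\pi=N_i^{-1}\sum_{j}E_{ij}(W_j-\pi)$ and $W_i$ is independent of $S_i$, the leading part is $\frac1n\sum_i[\nabla_s y_i(1)-\nabla_s y_i(0)](S_i-\pi)$. Exchanging sums gives
\[
\frac1n\sum_j(W_j-\pi)\sum_i\frac{E_{ij}}{N_i}\bigl[\nabla_s y_i(1)-\nabla_s y_i(0)\bigr].
\]
Under Condition~\ref{assump:graphon} we have $N_i\approx n\rho_n g(Q_i)$ and $E_{ij}\approx\rho_n G(Q_i,Q_j)$, so the inner sum concentrates on $V^{(2)}_j$. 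The cross term $\frac1n\sum_i(W_i-\pi)(\cdots)(S_i-\pi)$ is $O_p(n^{-1/2}(n\rho_n)^{-1/2})=o_p(n^{-1/2})$.

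\textbf{Term 3 (price).} Use the Z-estimator expansion of $P_n$ from the perturbed clearing equation in Condition~\ref{assump: augmented trial}:
\[
P_n-p^\ast_\pi=-\xi_z^{-1}\frac1n\sum_j z_j(W_j,p^\ast_\pi)+o_p(n^{-1/2}).
\]
This requires $h_n^2=o(n^{-1/2})$, which holds since $\alpha>1/4$. The perturbations $U_i$ enter only at second order because they are symmetric and mean zero. The coefficient $\frac1n\sum_i(\cdots)\nabla_p y_i$ converges to $\nabla_p[\EE y(1)-\EE y(0)]$. Splitting $z_j(W_j)=\pi z_j(1)+(1-\pi)z_j(0)+(W_j-\pi)(z_j(1)-z_j(0))$ yields the $\mathcal V^{(2)}$ and $V^{(3)}$ terms.

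\textbf{Remainders.} The second-order terms in $R_i$ are bounded by $(S_i-\pi)^2$ and $|P_n-p^\ast|^2$. Their contributions are $O_p((n\rho_n)^{-1})$ and $O_p(n^{-1})$ respectively. Both are $o(n^{-1/2})$ because $\kappa<1/2$.

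\textbf{Collecting terms.} This gives
\[
\sqrt n(\hat\tau-\tau^{\oracle,\ast}_{\ADE})=\frac{1}{\sqrt n}\sum_j\Bigl\{(\mathcal V^{(1)}_j+\mathcal V^{(2)}_j-\EE[\cdot])+(W_j-\pi)\bigl(V^{(1)}_j+V^{(2)}_j+V^{(3)}_j\bigr)\Bigr\}+o_p(1).
\]
The summands are i.i.d.\ given the coupling, and the two brackets are uncorrelated because $\EE[W_j-\pi\mid Q,y,z]=0$. The Lindeberg CLT then gives the stated variance $\Var[\mathcal V^{(1)}+\mathcal V^{(2)}]+\pi(1-\pi)\EE[(V^{(1)}+V^{(2)}+V^{(3)})^2]$.

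\textbf{Main obstacle.} The hardest step is the network term. We must show that replacing $\sum_i E_{ij}N_i^{-1}[\cdots]$ by $V^{(2)}_j$ costs only $o_p(n^{-1/2})$ after summing against $(W_j-\pi)$. This needs degree concentration uniformly in $i$, since $c_g\le g\le C_g$ and $n\rho_n\to\infty$ polynomially. It also needs a second-moment bound on the error, which is a U-statistic in $(Q_i,Q_j,E_{ij})$ with variance of order $(n\rho_n)^{-1}$. The low-rank structure allows the conditional expectation to be written through the $\psi_k$'s and lets us control the degenerate part. A secondary difficulty is that $P_n$ depends on all of $\bW$, so the price expansion must be handled jointly with the $W_j$-weighted sums. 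We keep only linear terms and show that the product terms such as $\frac1n\sum(W_i-\pi)\nabla_p y_i\cdot(P_n-p^\ast)$ are $O_p(n^{-1})$.
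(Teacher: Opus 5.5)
Your proposal is correct and is essentially the paper's argument. It uses the same decomposition into a zeroth-order Horvitz--Thompson piece ($\mathcal{V}^{(1)},V^{(1)}$), a network piece ($V^{(2)}$), and a price piece linearized through $P_n-p_\pi^\ast=-\xi_z^{-1}\frac{1}{n}\sum_j z_j(W_j,p_\pi^\ast)+o_p(n^{-1/2})$ and split into its mean-zero and $(W_j-\pi)$ parts ($\mathcal{V}^{(2)},V^{(3)}$), with quadratic remainders negligible since $\kappa<1/2$ and $\alpha>1/4$.

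The differences are only organizational. The paper expands in price alone, invokes Theorem~4 of Li--Wager for the network part, and explicitly removes the $U_i$ in the outcome's price argument; you omit these, but they cost only $O_p(h_n/\sqrt{n})+O(h_n^2)=o_p(n^{-1/2})$. You instead expand jointly in $(S_i,P_n)$ and handle the network term directly. Your ``main obstacle'' then reduces to a conditional-variance bound, because $\sum_i E_{ij}N_i^{-1}[\nabla_s y_i(1)-\nabla_s y_i(0)]$ is independent of $\mathbf{W}$ and the neighborhood-averaging lemma gives mean-square error $O((n\rho_n)^{-1})$.
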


\medskip\noindent\textbf{Local spillover effect.} 
Because we adopt the same setup as \citet{li2022random} for local interference, it is natural to use the same estimator. Start by forming a vector $\bnu\in\RR^n$ of raw weights:
\begin{equation}
    \nu_i = \frac{M_i}{\pi}-\frac{N_i-M_i}{1-\pi} = \sum_{j\in\cN_i} \rbr{ \frac{W_j}{\pi}-\frac{1-W_j}{1-\pi} }.
\end{equation}
Compute $\hat{\bPsi}\in\RR^{n\times r}$ as the normalized top-$r$ eigenvectors of the observed adjacency matrix $\bE=(E_{ij})$ with $\hat{\bPsi}^\top\hat{\bPsi}=\Ib_r$. The PC-balancing estimator is then defined by
\begin{equation}
    \hat{\tau}_{\AIE}^{\rmL} = \frac{1}{n} \bnu^{\top} \rbr{\Ib_n-\hat{\bPsi}\hat{\bPsi}^\top}\bY \in\RR.
\end{equation}
The following remark explains the intuition behind this estimator.
\begin{remark}
    The empirical average $\bnu^\top\bY/n$ is already a natural estimator, since its expectation
    \begin{align}
        \EE_{\bW}\sbr{\frac{1}{n}\sum_{i=1}^n \nu_iY_i} = \frac{1}{n}\sum_{j=1}^n\sum_{i\in\cN_ j}\EE_{\bW}\sbr{ y_j\rbr{w_i=1,\bW_{-i}} - y_j\rbr{w_i=0,\bW_{-i}} },
    \end{align}
    is already quite close to our estimand $\tau_{\AIE}^{\rmL}$ in~\eqref{eq:local spillover estimand}. It differs from $\tau_{\AIE}^{\rmL}$ only because it replaces the conditioned $\tilde{y}_j^{\rmL}$ with the unconditioned $y_j$. In \citet{li2022random}, where only local interference is present, this estimator is indeed unbiased.
    
    However, projecting both $\bY$ and $\bnu$ onto the graphon principal components $\bPsi=\cbr{\psi_k(Q_i)}_{i,k}\in\RR^{n\times r}$ yields a pathological term $\bnu^{\top} \bPsi\bPsi^{\top} \bY/n$, because $\bnu^{\top} \bPsi$ has nonzero mean whereas $\bPsi^{\top} \bY/n$ has exploding variance. Section 4.2 of \citet{li2022random} illustrates this weakness using a stochastic block model. The authors therefore propose the methodology above, which mitigates this issue by projecting onto the subspace orthogonal to $\hat{\bPsi}$ (as a proxy for $\bPsi$).
\end{remark}


Departing from the existing theory in \citet{li2022random}, the next theorem deepens our understanding of the PC-balancing estimator $\hat{\tau}_{\AIE}^{\rmL}$. It shows that the estimator is robust to additional unspecified market interference \spillovertype{MAR}. In particular, it still targets $\tau_{\AIE}^{\rmL,\ast}$ with the same convergence rate. The limiting variance is also similar to that in \citet{li2022random} and is therefore omitted from the main text.

\begin{theorem}\label{thm:local spillover effect estimator informal}
    Under assumptions detailed in Section~\ref{sec:assump}, the PC-balancing estimator $\hat{\tau}_{\AIE}^{\rmL}$ has a limiting Gaussian distribution around the asymptotic local spillover estimand $\tau_{\AIE}^{\rmL,\ast}$,
    \begin{equation}
        \frac{1}{\sqrt{\rho_n}} \rbr{ \hat{\tau}_{\AIE}^{\rmL} - \tau_{\AIE}^{\rmL,\ast} } \Rightarrow \cN\rbr{0, \mathsf{V}_{\rmL}},
    \end{equation}
    where the variance $\mathsf{V}_{\rmL}$ is given in Section~\ref{sec: estimator proof local}.
\end{theorem}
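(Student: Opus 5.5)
Our approach is to decompose the PC-balancing estimator into the pure-network piece studied by \citet{li2022random}, obtained by freezing the price at its population value, plus a remainder driven by the fluctuation of the empirical price. We show the remainder is $o_p(\sqrt{\rho_n})$.

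Concretely, write $P_n(\bW)=p_\pi^\ast+\Delta_n$. Using the higher-order Z-estimator expansion for $P_n$ under Assumption~\ref{assump: augmented trial} and Assumption~\ref{assump:empirical price}, we have
\[
\Delta_n=-\xi_z^{-1}\frac1n\sum_i\{z_i(W_i,p_\pi^\ast+U_i)\}+O_p(n^{-1}+h_n^2),
\]
so that $\Delta_n=O_p(n^{-1/2}+h_n^2)$. Taylor-expanding the outcome in $p$ gives
\[
Y_i=y_i(W_i,S_i,p_\pi^\ast)+\nabla_p y_i(W_i,S_i,p_\pi^\ast)^\top\Delta_n+R_i,\qquad |R_i|\lesssim\|\Delta_n\|^2.
\]
Substituting this into $\hat\tau^{\rmL}_{\AIE}=n^{-1}\bnu^\top(\Ib_n-\hat\bPsi\hat\bPsi^\top)\bY$ produces three terms.
\begin{enumerate}
\item[(i)] The frozen-price term $n^{-1}\bnu^\top(\Ib_n-\hat\bPsi\hat\bPsi^\top)\by(p_\pi^\ast)$. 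This is exactly the setting of \citet{li2022random}: treatments are Bernoulli, the graphon satisfies Condition~\ref{assump:graphon}, and the outcomes are i.i.d.\ functions of $(W_i,S_i)$ by Assumption~\ref{assump:super-population for units}. Their CLT therefore gives $\rho_n^{-1/2}(\,\cdot\,-\tau^{\rmL,\ast}_{\AIE})\Rightarrow\cN(0,\mathsf V_\rmL)$, with the bias of order $o(\sqrt{\rho_n})$ because $\kappa>1/3$.
\item[(ii)] The linear price term $\Delta_n^\top\, n^{-1}\bnu^\top(\Ib_n-\hat\bPsi\hat\bPsi^\top)\nabla_p\by$.
\item[(iii)] The remainder $n^{-1}\bnu^\top(\Ib_n-\hat\bPsi\hat\bPsi^\top)\bR$.
\end{enumerate}

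For (iii), Cauchy--Schwarz and $\|\bnu\|_2^2=O_p(n^2\rho_n)$ give a bound of $n^{-1}\|\bnu\|\sqrt n\,\|\Delta_n\|^2=O_p(\sqrt{n\rho_n}(n^{-1}+h_n^4))$. Since $\alpha>1/4$ and $\kappa<1/2$, this is $o_p(\sqrt{\rho_n})$. That is the role of these exponent restrictions, and we would verify the arithmetic carefully.

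For (ii), the matrix factor $n^{-1}\bnu^\top(\Ib_n-\hat\bPsi\hat\bPsi^\top)\nabla_p\by$ is itself a PC-balanced statistic of the same form as (i), with outcome $\nabla_p y$. Applying \citet{li2022random} to it, its mean is of order $O(1)$ (a local derivative of $\nabla_p y$) and its fluctuation is of order $O_p(\sqrt{\rho_n})$, so the product with $\Delta_n$ is $O_p(n^{-1/2}+h_n^2)$. This is $o_p(\sqrt{\rho_n})$ because $n^{-1/2}\ll n^{-\kappa/2}$, and $h_n^2=n^{-2\alpha}\ll n^{-\kappa/2}$ holds since $2\alpha>1/2>\kappa/2$.

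The main obstacle is that $\Delta_n$ and $\bnu$ are functions of the same $\bW$, and (for the $z$-part) of $\bU$ as well. So the heuristic ``fixed factor times $O_p$'' argument must be justified, and the product cannot simply be treated as independent. We will first try the crude bound $|\Delta_n|\cdot|\text{matrix factor}|$. This is legitimate since both factors are bounded in probability separately, and it avoids any covariance calculation. If (ii) instead requires the centered part at the finer scale, we will compute the covariance between $\sum_i z_i$ and $\bnu^\top(\Ib_n-\hat\bPsi\hat\bPsi^\top)\nabla_p\by$. This is a degree-weighted sum of $W_j$ whose correlation with the unweighted sum is absorbed by the projection away from the top eigenvectors, which include the near-constant direction.

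Finally, we will check that the estimation error $\hat\bPsi-\bPsi$ only enters through the already-controlled term (i). Combining with Slutsky's lemma yields the stated limit with the same $\mathsf V_\rmL$ as in the pure-network case.
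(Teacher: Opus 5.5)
Your proposal is correct and follows essentially the paper's own proof. The paper likewise Taylor-expands $Y_i$ in the price, applies Li--Wager's PC-balancing CLT to the frozen-price term, and bounds the linear term crudely by $O_p(1)\cdot O_p(n^{-1/2})=o_p(\sqrt{\rho_n})$; as you note, dependence between the two factors is irrelevant for that bound. It then controls the quadratic remainder by Cauchy--Schwarz with $\EE\|\bnu\|^2=O(n^2\rho_n)$.

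The one difference is that the paper freezes the price at $p_\pi^\ast+U_i$, because its appendix lets outcomes depend on $P_n(\bW)+U_i$. Under that model, your expansion at $p_\pi^\ast$ picks up an extra conditionally mean-zero term $n^{-1}\sum_i\nu_i^{\PC}\nabla_p y_i^\top U_i=O_p(h_n\sqrt{\rho_n})$ and an extra $O_p(\sqrt{n\rho_n}\,h_n^2)$ remainder. Both are $o_p(\sqrt{\rho_n})$ since $\alpha>1/4$. Separately, it is $\kappa<1/2$, not $\kappa>1/3$, that makes an $O(1/\sqrt{n\rho_n})$ bias negligible at scale $\sqrt{\rho_n}$.
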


\medskip\noindent\textbf{Global spillover effect.}
As shown in Theorem~\ref{thm: asymptotic limit of estimands}, the global spillover estimand $\tau_{\AIE}^{\rmG}$ depends asymptotically on the price-elasticity vector $\gamma:=\xi_z^{-\top}\xi_y$ and the direct effect $\tau_z:=\EE\sbr{z_i(1,p_{\pi}^{\ast})-z_i(0,p_{\pi}^{\ast})}$ on excess demand. We can therefore estimate these two objects separately and then combine them to obtain a valid estimator of $\tau_{\AIE}^{\rmG,\ast}=-\gamma^{\top}\tau_z$.

Price elasticities have long been a central topic in econometrics \citep{houthakker1969income,chetty2009sufficient}. They can be estimated using instrumental variables \citep{angrist1996identification,berry2021foundations}. For conciseness, we follow the approach of \citet{munro2021treatment}, in which the experimenter creates instrumental variables by augmenting the experimental design with individualized price perturbations. The construction is stated formally in Condition~\ref{assump: augmented trial}.

Equipped with $\bU\in\{\pm h_n\}^{n \times J}$, we estimate price elasticities by
\(
    \hat{\gamma} = \rbr{\bU^\top\bZ}^{-1}\rbr{\bU^\top\bY}.
\)
After constructing a Horvitz–Thompson estimator for the treatment effect of excess demands
\(    \hat{\tau}_z = \frac{1}{n}\sum_{i=1}^n \rbr{\frac{W_i}{\pi}-\frac{1-W_i}{1-\pi}}Z_i,
\)
the final estimator is $\hat{\tau}_{\AIE}^{\rmG}=-\hat{\gamma}^{\top} \hat{\tau}_z$. Our next theorem studies the theoretical performance of this estimator. It consistently targets the asymptotic limit $\tau_{\AIE}^{\rmG,\ast}$, with the same convergence rate, even under unspecified local network interference \spillovertype{NET}.

\begin{theorem}\label{thm:global spillover effect estimator informal}
    Under assumptions detailed in Section~\ref{sec:assump}, the estimator $\hat{\tau}_{\AIE}^{\rmG}$ has a limiting Gaussian distribution around the asymptotic global spillover estimand $\tau_{\AIE}^{\rmG,\ast}$,
    \begin{equation}
        h_n\sqrt{n} \rbr{ \hat{\tau}_{\AIE}^{\rmG} - \tau_{\AIE}^{\rmG,\ast} } \Rightarrow \cN\rbr{0, \mathsf{V}_{\rmG}},
    \end{equation}
    where the variance $\mathsf{V}_{\rmG}$ is given in Section~\ref{sec: estimator proof global}.
\end{theorem}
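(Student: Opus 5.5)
The estimator is the product $-\hat\gamma^\top\hat\tau_z$, so the plan is a delta-method argument that combines an IV-type central limit theorem for $\hat\gamma$ with the $\sqrt n$-rate analysis of $\hat\tau_z$. Since $\hat\tau_z$ is a Horvitz--Thompson average of the same form as in Theorem~\ref{thm:direct effect estimator}, with $Z_i$ in place of $Y_i$, I will reuse that argument to get $\hat\tau_z-\tau_z=O_p(n^{-1/2})$. The only change is that $Z_i=z_i(W_i,P_n(\bW))$ carries no local exposure, so the $V^{(2)}$-type network term is absent. Because $h_n\sqrt n\to 0$ faster than $\sqrt n$, this gives $h_n\sqrt n\,\gamma^\top(\hat\tau_z-\tau_z)=o_p(1)$. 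The whole limit distribution is therefore driven by $h_n\sqrt n(\hat\gamma-\gamma)$. It then suffices to show
\[
h_n\sqrt n(\hat\gamma-\gamma)\Rightarrow\cN(0,\Sigma_\gamma),
\]
with $\mathsf V_{\rmG}=\tau_z^\top\Sigma_\gamma\tau_z$.

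For $\hat\gamma$, I will expand each observed quantity around the realized clearing price, in the spirit of \citet{munro2021treatment}. Here $Y_i=y_i(W_i,S_i,P_n+U_i)$ and $Z_i=z_i(W_i,P_n+U_i)$; I read the perturbation as entering both. A Taylor expansion in $U_i$ gives
\[
\tfrac1n\bU^\top\bZ=\tfrac1n\sum_i U_iz_i(W_i,P_n)^\top+\tfrac1n\sum_i U_iU_i^\top\nabla_pz_i(W_i,P_n)^\top+O_p(h_n^3).
\]
The second term is $h_n^2\xi_z^\top+o_p(h_n^2)$, using $U_iU_i^\top\approx h_n^2\Ib_J$ on average and $P_n\to p_\pi^\ast$. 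The first term is mean zero given everything except $\bU$ and is of order $h_n/\sqrt n$. The same expansion applies to $\tfrac1n\bU^\top\bY$, with gradient average $\xi_y$. One caveat is that $\xi_y$ is evaluated at $s=\pi$, while the realized $S_i$ differ from $\pi$ by $O_p((n\rho_n)^{-1/2})$. This discrepancy enters only through the second-order term, multiplied by $h_n^2$, so it is negligible. Dividing, I get $\hat\gamma-\gamma$ equal to $h_n^{-1}n^{-1/2}$ times a martingale-type sum
\[
\xi_z^{-\top}\tfrac1{\sqrt n}\sum_i (U_i/h_n)\bigl(y_i-\gamma^\top z_i\bigr)(W_i,S_i,P_n),
\]
plus remainders. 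A Lindeberg CLT conditional on $(\bW,\bE,\{y_i,z_i\})$ then applies, because the signs $U_i/h_n$ are i.i.d.\ Rademacher.

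The main obstacle is controlling the remainders, and doing so jointly with the feedback of $\bU$ on $P_n$. The perturbations also move the equilibrium price itself, so $P_n$ depends on $\bU$ and the signs are not independent of the terms they multiply. To handle this I will take a higher-order Z-estimator expansion of $P_n(\bW)$ around $p_\pi^\ast$ that includes the $\bU$-dependence. The $\bU$-induced shift in $P_n$ is $O_p(h_n^2)+O_p(h_n n^{-1/2})$, since $\tfrac1n\sum_iU_i$ and the curvature terms are small. After multiplication by $U_i$ and averaging, this contributes $o_p(h_n/\sqrt n)$ relative to the $h_n^2$ denominator. The condition $\alpha>1/4$ is exactly what makes the cubic Taylor remainder negligible: $h_n^3\ll h_n/\sqrt n$ fails, but $h_n^3/h_n^2=h_n\ll (h_n\sqrt n)^{-1}$ requires $h_n^2\sqrt n\to0$, which holds iff $\alpha>1/4$. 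Meanwhile $\alpha<1/2$ ensures that $h_n\sqrt n\to\infty$, which gives consistency. Finally, the network terms $S_i$ are $\bU$-independent, so they enter only through the conditional variance. That variance converges by the law of large numbers over the graphon, using Condition~\ref{assump:graphon} to replace $S_i$ by $\pi$. This yields $\Sigma_\gamma=\xi_z^{-\top}\EE[\mathrm{Var}\text{-type terms of }(y_i-\gamma^\top z_i)]\,\xi_z^{-1}$, and by Slutsky together with consistency of $\hat\tau_z$ the stated limit follows.
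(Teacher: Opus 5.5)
Your proposal is correct. It arrives at the same representation as the paper, $\hat\gamma-\gamma=\frac{1}{nh_n}\sum_i\bigl[y_i(W_i,\pi,p_\pi^\ast)-z_i(W_i,p_\pi^\ast)^\top\gamma\bigr]\xi_z^{-\top}\tilde U_i+o_p\bigl((h_n\sqrt n)^{-1}\bigr)$ with $\tilde U_i=U_i/h_n$. It also uses the same delta-method finish: $h_n\sqrt n\,\gamma^\top(\hat\tau_z-\tau_z)=O_p(h_n)$, then Slutsky. Your $\tau_z^\top\Sigma_\gamma\tau_z$ is therefore exactly $\mathsf V_{\rmG}$, once your ``Var-type'' term is read as the uncentered second moment $\EE[(y(W,\pi,p_\pi^\ast)-\gamma^\top z(W,p_\pi^\ast))^2]\,I_J$.

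Where you differ is in how $\hat\gamma$ is analyzed. The paper first expands $Y_i$ in $S_i$ around $\pi$. The leading piece $\mathcal A$ (and $\mathcal D$ for $\mathbf U^\top\mathbf Z$) is then a pure-market object, and the perturbation and price-feedback analysis of \citet{munro2021treatment} is applied to it wholesale. The network remainders are bounded separately:
\begin{itemize}
\item the first-order remainder $\mathcal B$ via the neighborhood-averaging lemma plus a Cauchy--Schwarz step requiring $h_n/\sqrt{\rho_n}\to0$;
\item the second-order remainder $\mathcal C$ via $\frac1n\sum_i\EE(S_i-\pi)^2=O(1/(n\rho_n))=o(n^{-1/2})$, which is where $\kappa<1/2$ enters.
\end{itemize}
You never expand in $S_i$. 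Conditional on $(\mathbf W,\mathbf E,\{y_i,z_i\})$, the realized $S_i$ are fixed weights in a Rademacher sum. Local interference then affects only the conditional variance, which converges because $S_i\to\pi$, and the gradient average $\frac1n\sum_i\nabla_p y_i(W_i,S_i,\cdot)$. Your route is more elementary and makes it transparent why the network channel cannot bias $\hat\gamma$. The cost is that you must redo the expansion in $U_i$ and the feedback of $\mathbf U$ on $P_n$ yourself, rather than citing them.

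Three points need tightening, though none breaks the argument.
\begin{itemize}
\item The $S_i$-discrepancy in the gradient average is not negligible merely because it carries a factor $h_n^2$, since the signal $h_n^2\xi_y$ carries the same factor. After dividing by $h_n^2$ it must be $o_p((h_n\sqrt n)^{-1})$. With $\frac1n\sum_i|S_i-\pi|=O_p((n\rho_n)^{-1/2})$, this is exactly the condition $h_n/\sqrt{\rho_n}\to0$, which holds because $\alpha>1/4>\kappa/2$. This is essentially the paper's $\zeta_1-\zeta_2$ term.
\item For the same reason, the error in $\frac1n\sum_iU_iU_i^\top\nabla_p z_i$ must be $o_p(h_n/\sqrt n)$, not merely $o_p(h_n^2)$. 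It is in fact $O_p(h_n^2/\sqrt n)$, so this is fine.
\item The feedback step is simpler than you suggest. The price lemma already gives $P_n-p_\pi^\ast=O_p(n^{-1/2})$ including the $\mathbf U$-induced shift, because $h_n^2=o(n^{-1/2})$. Replacing $P_n$ by $p_\pi^\ast$ inside $\frac1n\sum_iU_i(\cdot)$ therefore costs only $O_p(h_n/n)=o_p(h_n/\sqrt n)$, and it restores exact conditional independence for your Lindeberg step.
\end{itemize}
Your sentence on the cubic remainder is also muddled: $h_n^3\ll h_n/\sqrt n$ and $h_n\ll(h_n\sqrt n)^{-1}$ are the same condition, $h_n^2\sqrt n\to0$, i.e.\ $\alpha>1/4$.
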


Taken together, Theorems~\ref{thm:local spillover effect estimator informal} and~\ref{thm:global spillover effect estimator informal} show that the PC-balancing estimator from \cite{li2022random} and the augmented-trial estimator in \cite{munro2021treatment} remain asymptotically valid in the full local-global environment. Each consistently recovers the corresponding local or global component of $\tau_{\MPE}^{\oracle,\ast}$ singled out by our decomposition, even though the underlying exposure mapping omits the other first-order channel.

Moreover, by Theorem~\ref{thm: asymptotic limit of estimands}, we can consistently estimate the oracle marginal policy effect by summing the corresponding estimators,
\begin{equation}
    \hat{\tau}_{\MPE}^{\oracle}
    =
    \hat{\tau}_{\ADE}^{\oracle}
    +
    \hat{\tau}_{\AIE}^{\rmL}
    +
    \hat{\tau}_{\AIE}^{\rmG}.
\end{equation}
Since each component estimator is consistent, it follows that $\hat{\tau}_{\MPE}^{\oracle}\pto\tau_{\MPE}^{\oracle,\ast}$ as $n\to\infty$. Its convergence rate, however, is governed by the slowest component estimator.

\begin{corollary}
    The convergence rate of $\hat{\tau}_{\MPE}^{\oracle}$ depends on whether $\kappa+2\alpha$ is greater or less than $1$. If $\kappa+2\alpha<1$, then the local AIE estimator dominates the error and $\hat{\tau}_{\MPE}^{\oracle}$ converges to $\tau_{\MPE}^{\oracle,\ast}$ at rate $n^{-\kappa/2}$; if $\kappa+2\alpha>1$, then the global AIE estimator dominates the error and $\hat{\tau}_{\MPE}^{\oracle}$ converges to $\tau_{\MPE}^{\oracle,\ast}$ at rate $n^{-1/2+\alpha}$.
\end{corollary}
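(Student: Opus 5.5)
The plan is to write $\hat{\tau}_{\MPE}^{\oracle}-\tau_{\MPE}^{\oracle,\ast}$ as the sum of the three component errors,
\[
\bigl(\hat{\tau}_{\ADE}^{\oracle}-\tau_{\ADE}^{\oracle,\ast}\bigr)+\bigl(\hat{\tau}_{\AIE}^{\rmL}-\tau_{\AIE}^{\rmL,\ast}\bigr)+\bigl(\hat{\tau}_{\AIE}^{\rmG}-\tau_{\AIE}^{\rmG,\ast}\bigr).
\]
This identity holds exactly because $\tau_{\MPE}^{\oracle,\ast}$ is defined as the sum of the three limits in \eqref{eq:decomposition of MPE into 3 parts}. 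It then suffices to read off the stochastic order of each term from the distributional results already established. Theorem~\ref{thm:direct effect estimator} gives $O_p(n^{-1/2})$ for the direct term. Theorem~\ref{thm:local spillover effect estimator informal} gives $O_p(\sqrt{\rho_n})=O_p(n^{-\kappa/2})$ for the local term. Theorem~\ref{thm:global spillover effect estimator informal} gives $O_p\bigl((h_n\sqrt n)^{-1}\bigr)=O_p(n^{-1/2+\alpha})$ for the global term.

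Next I compare exponents. Since $\kappa<1/2<1$ and $\alpha>0$, the direct term is always of strictly smaller order, $n^{-1/2}=o(n^{-\kappa/2})$ and $n^{-1/2}=o(n^{-1/2+\alpha})$, so it never governs the rate. Between the other two, $n^{-\kappa/2}$ dominates $n^{-1/2+\alpha}$ exactly when $-\kappa/2>-1/2+\alpha$, that is, when $\kappa+2\alpha<1$, and the reverse holds when $\kappa+2\alpha>1$. In the first case the sum is $O_p(n^{-\kappa/2})$, and in the second it is $O_p(n^{-1/2+\alpha})$.

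To state that the rate is exactly (and not just at most) the dominating one, I use the nondegenerate Gaussian limit of the dominating term. Multiply the total error by the inverse of the dominating rate. The dominating component then converges in distribution to $\cN(0,\mathsf V_\rmL)$ or $\cN(0,\mathsf V_\rmG)$, while the other two components vanish in probability. By Slutsky's lemma, the rescaled total has the same Gaussian limit, so it is tight but not $o_p(1)$ provided the relevant limiting variance is positive. This step needs only the marginal limits, not the joint law of the three estimators, because the subdominant terms are $o_p$ after rescaling.

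The only delicate point is whether $\mathsf V_\rmL>0$ and $\mathsf V_\rmG>0$. If the statement is read as ``converges at rate at most $r_n$,'' the $O_p$ argument alone suffices and the nondegeneracy question does not arise. I would state the corollary in that sense and add nondegeneracy as a remark under the variance expressions of Sections~\ref{sec: estimator proof local} and~\ref{sec: estimator proof global}. The boundary case $\kappa+2\alpha=1$ is excluded by the statement; there both terms are of the same order, and identifying the exact limit would require their joint law.
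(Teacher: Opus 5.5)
Your proposal is correct and follows the same reasoning the paper relies on. The error of $\hat{\tau}_{\MPE}^{\oracle}$ is exactly the sum of the three component errors. Their orders $n^{-1/2}$, $n^{-\kappa/2}$ and $n^{-1/2+\alpha}$ come from Theorems~\ref{thm:direct effect estimator}, \ref{thm:local spillover effect estimator informal} and~\ref{thm:global spillover effect estimator informal}, and the comparison of exponents reduces to the sign of $\kappa+2\alpha-1$. The paper only asserts that the rate is governed by the slowest component. Your Slutsky argument, together with your caveat that an exact rate requires $\mathsf V_\rmL>0$ (respectively $\mathsf V_\rmG>0$), sharpens that claim rather than departing from it.
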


\section{Numerical study}\label{sec:gen_simu}

\subsection{Simulation example: A fixed-index model}\label{sec:simulation}
Our first simulation setup is a fixed-index model, where the outcome \(Y_i\) of each unit ultimately depends on one aggregated index \(\eta_i\). With \(\bw\in\{0,1\}^n\) being a potential assignment, the detailed model generating process is given as below.

\begin{description}
    \item[(a)] \textbf{Local network:} For any \(i\neq j\), their connection \(E_{ij}\) is drawn independently from \(\mathrm{Bern}(\rho)\). For each unit \(i\), write \(N_i(\bw)=\sum_{j\neq i}E_{ij}\) and \(M_i(\bw)=\sum_{j\neq i}E_{ij}w_j\), and define the proportion of treated neighbors as \(S_i(\bw)=M_i(\bw)/\max\{1,N_i(\bw)\}\).

    \item[(b)] \textbf{Excess demand functional:} Suppose that there is only \(J=1\) product under consideration. The excess demand functional is \(z_i(w_i,p)=(1-\theta_p w_i)-p\). Solving \(\sum_{i=1}^n z_i(w_i,p)=0\) yields the equilibrium price \(P_n(\bw)=1-\theta_p n^{-1}\sum_{i=1}^n w_i\).

    \item[(c)] \textbf{Linear implicit index:} Define \(\eta_i(\bw)=\theta_w w_i+(1-u)\theta_\ell S_i(\bw)+u\theta_g P_n(\bw)\), where \(u\in[0,1]\) is a mixing parameter that interpolates between local and global spillovers. Lastly, through a (possibly non-linear) link function \(g\), the environment outputs \(y_i(\bw)=g(\eta_i(\bw))\).
\end{description}

Henceforth, this model is described by linear coefficients $(\theta_p,\theta_\ell,\theta_g,\theta_w)=(0.5,0.5,0.8,1)$, parameters \((\rho,u)\) and a link function \(g(\cdot)\). We consider five canonical choices of \(g\), including the linear link \(g(x)=x\), a quadratic link \(g(x)=x+x^2\), a cosine link \(g(x)=\cos(x)\), a logarithmic link \(g(x)=\log(1+x^2)\), and a higher-order polynomial link \(g(x)=x+x^2+x^3\). These are denoted as \(\{\mathtt{linear},\,\mathtt{quad},\,\mathtt{cos},\,\mathtt{log},\,\mathtt{poly}\}\) later. To carry out experiments, we additionally choose the treatment assigning rate \(\pi\), individualized price perturbation size \(h\), and the rank \(r\) in the PC-balancing step.

\medskip\noindent\textbf{Monte Carlo experiments with fixed sample size.} Throughout this part, we set \(n=1000\). When computing the estimators, we always use individualized perturbation size \(h=0.1\) and correctly specified rank \(r=1\). Figure~\ref{fig:linear-index finite-sample} depicts the performance of our estimators with the \(\mathtt{linear}\) link function, and varying \((u,\pi,\rho)\). In each panel we report the average direct effect (ADE) together with the local and global components of the average indirect effect (AIE): solid curves show Monte Carlo averages, and dashed curves show the corresponding oracle quantities.

\begin{figure}
  \centering

  \begin{subfigure}{.32\textwidth}
    \centering
    \includegraphics[width=\linewidth]{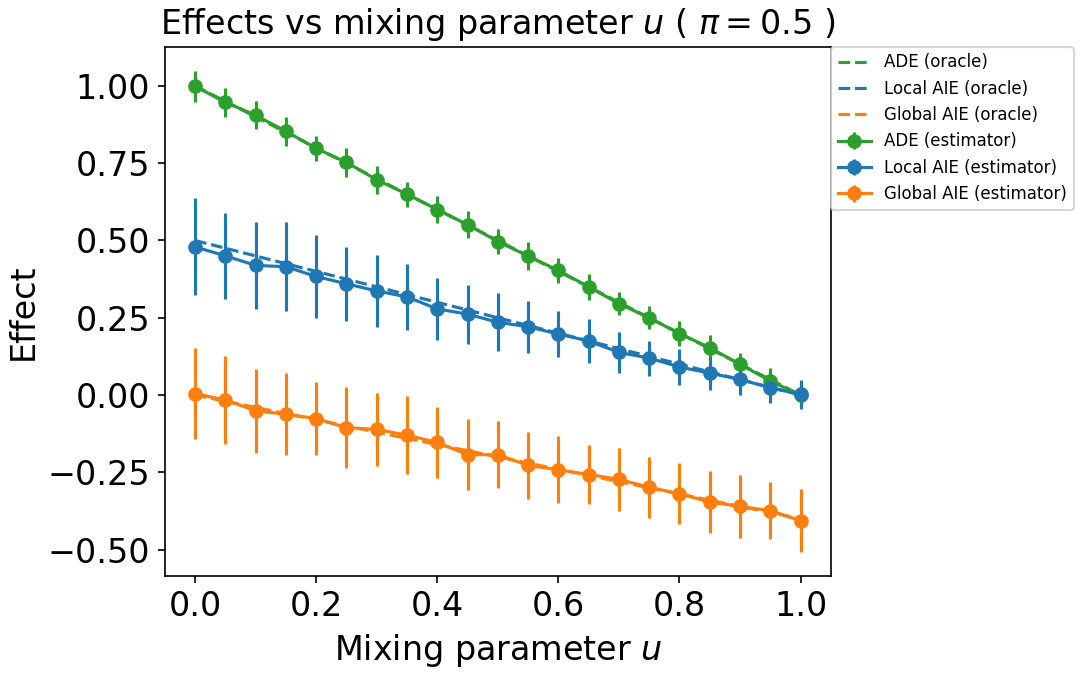}
    \caption{varying mixing parameter \(u\) (\(\pi=0.5,\,\rho=0.01\)).}
    \label{fig:lin-u}
  \end{subfigure}
  \hfill
  \begin{subfigure}{.32\textwidth}
    \centering
    \includegraphics[width=\linewidth]{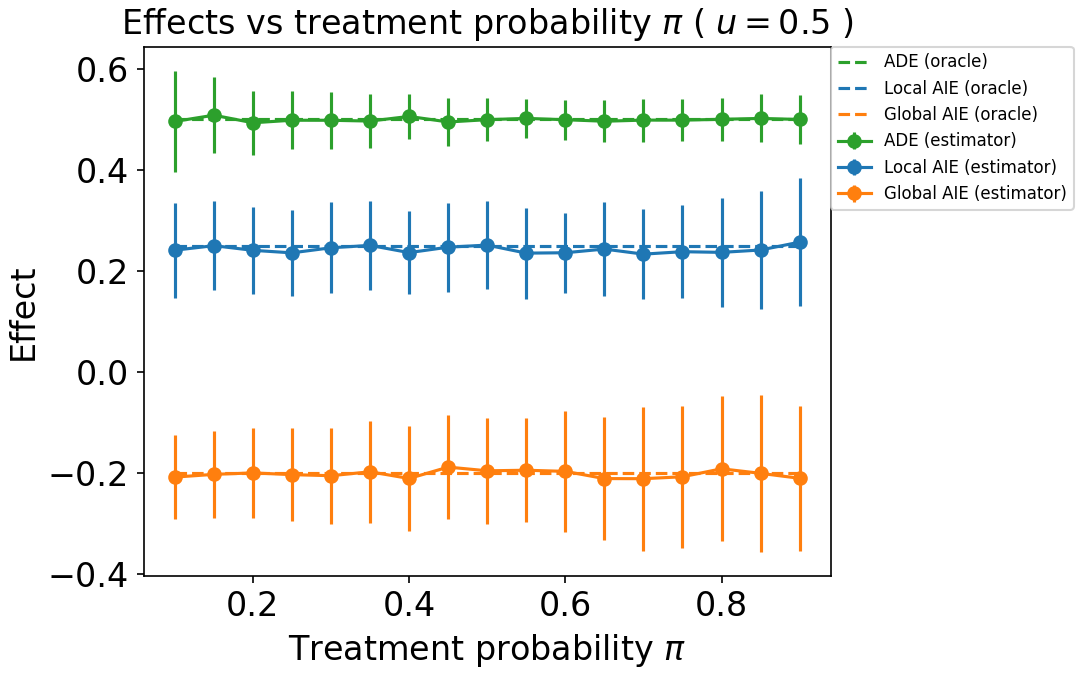}
    \caption{varying assignment rate \(\pi\) (\(u=0.5,\,\rho=0.01\)).}
    \label{fig:lin-pi}
  \end{subfigure}
  \hfill
  \begin{subfigure}{.32\textwidth}
    \centering
    \includegraphics[width=\linewidth]{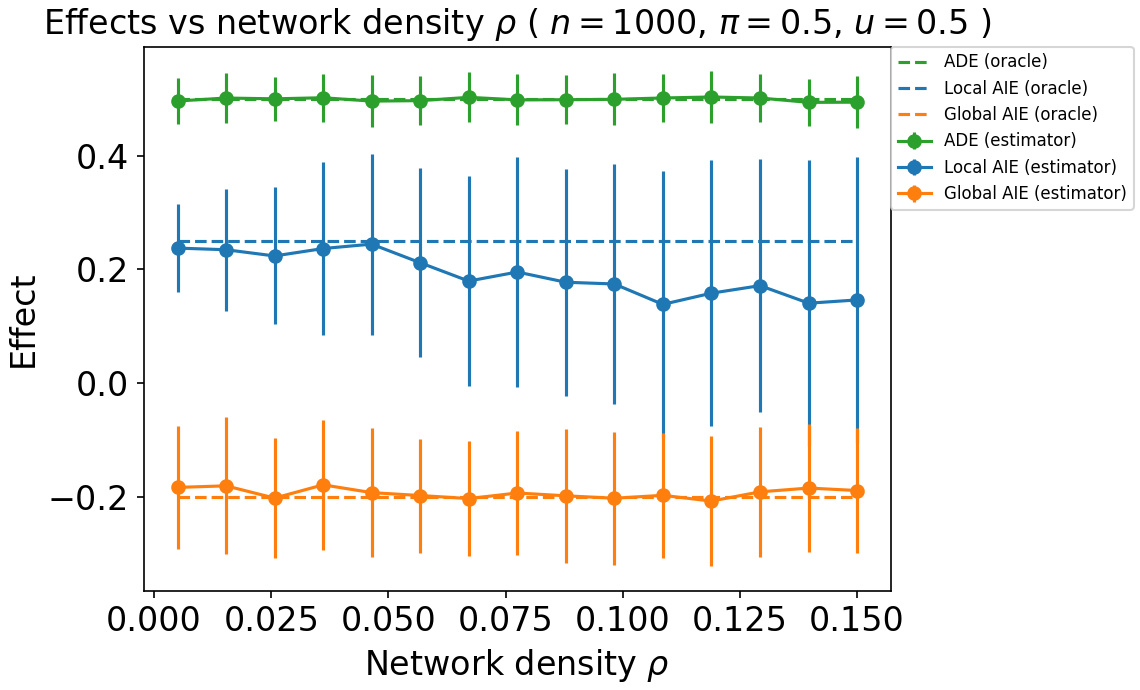}
    \caption{varying ER density \(\rho\) (\(\pi=u=0.5\)).}
    \label{fig:lin-density}
  \end{subfigure}

  \caption{Index outcome model with \(\mathtt{linear}\) link function and fixed sample size \(n=1000\).
  Dashed curves indicate oracle ADE / local AIE / global AIE, while solid curves are Monte Carlo averages.}
  \label{fig:linear-index finite-sample}
\end{figure}

Apart from the case of \(\mathtt{linear}\) link function, Figure~\ref{fig:nonlinear-index finite-sample} shows the performance of our estimators in finite samples with several \emph{non-linear} link functions. This time we set \(\pi=0.5\) and \(\rho=0.01\) throughout, and only vary the mixing parameter \(u\). For each choice of link in \(\{\mathtt{quad},\mathtt{cos},\mathtt{log},\mathtt{poly}\}\), the figure displays ADE, local AIE, and global AIE separately as functions of \(u\); dashed curves indicate oracle values, and solid curves indicate Monte Carlo averages.

\begin{figure}[p]
  \centering

  \begin{subfigure}{.40\linewidth}
    \centering
    \includegraphics[width=\linewidth]{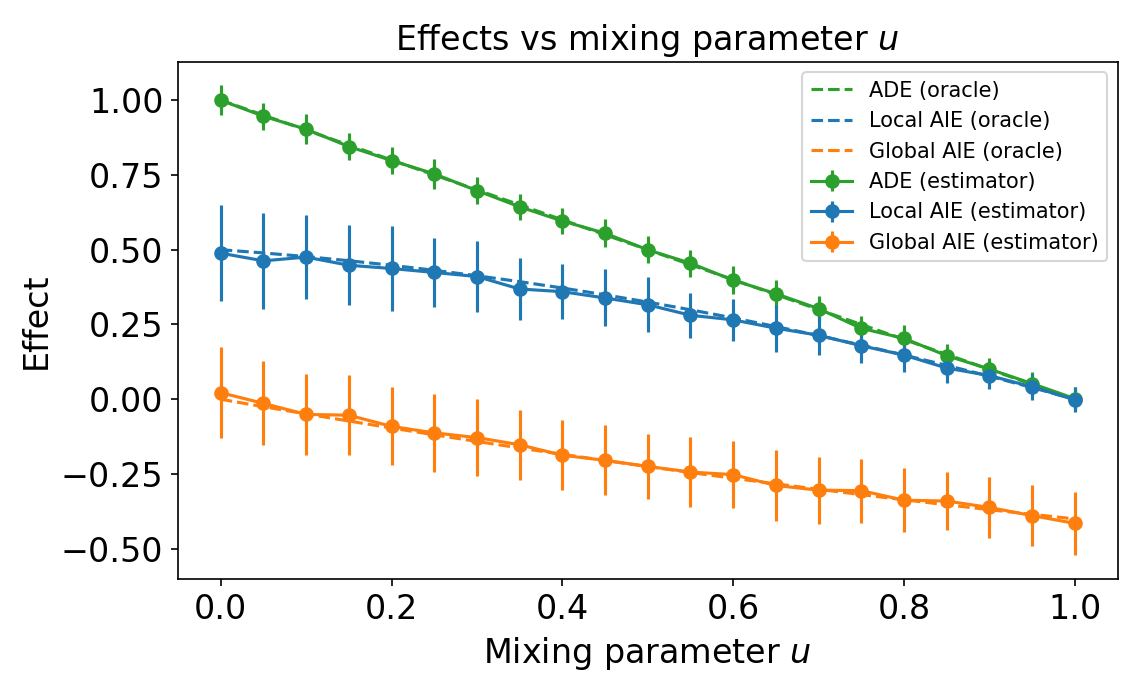}
    \caption{\(\mathtt{quad}\) link.}
    \label{fig:mpe-u-quadratic}
  \end{subfigure}
  \hfill
  \begin{subfigure}{.40\linewidth}
    \centering
    \includegraphics[width=\linewidth]{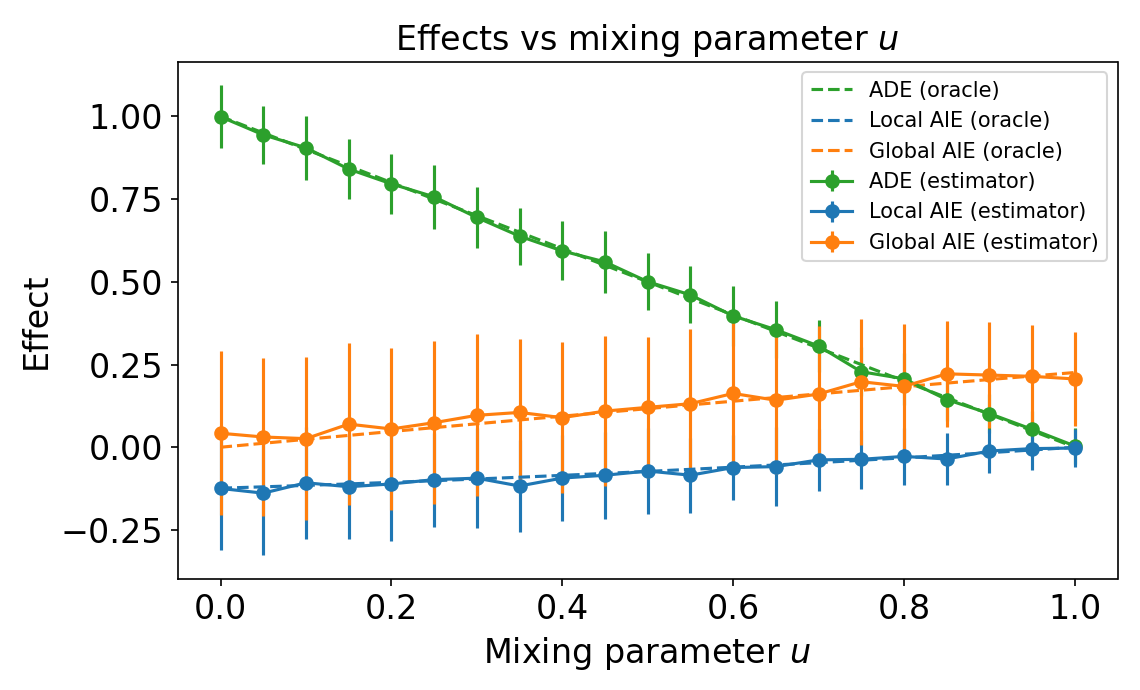}
    \caption{\(\mathtt{cos}\) link.}
    \label{fig:mpe-u-cos}
  \end{subfigure}

  \vspace{0.2em}

  \begin{subfigure}{.40\linewidth}
    \centering
    \includegraphics[width=\linewidth]{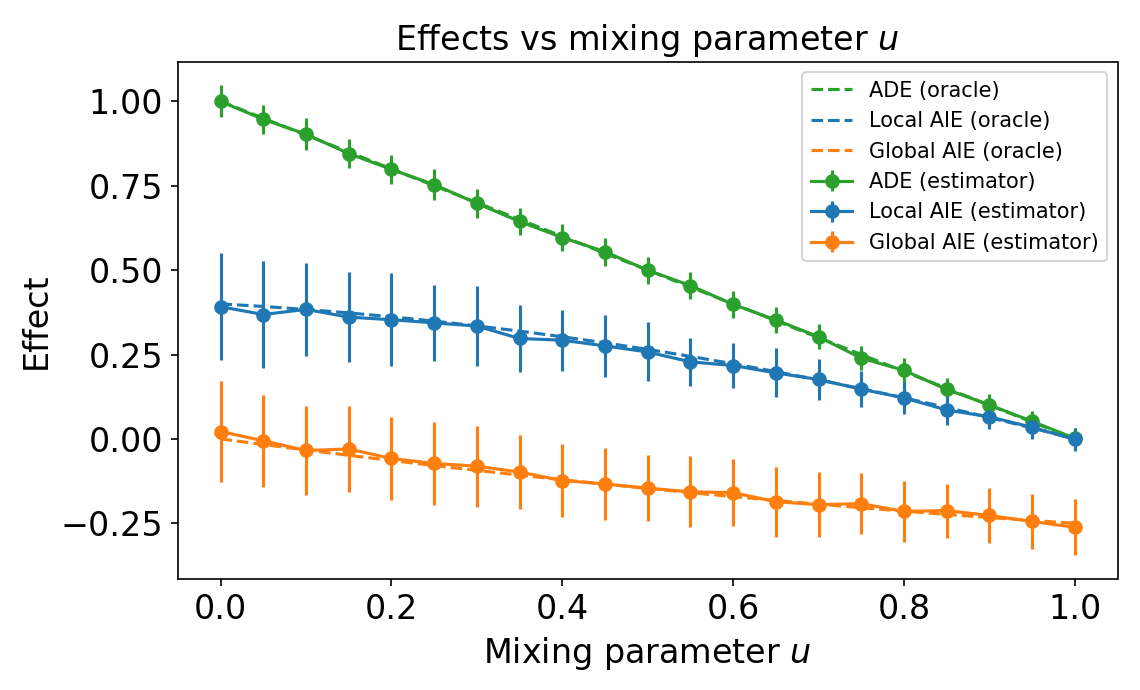}
    \caption{\(\mathtt{log}\) link.}
    \label{fig:mpe-u-log}
  \end{subfigure}
  \hfill
  \begin{subfigure}{.40\linewidth}
    \centering
    \includegraphics[width=\linewidth]{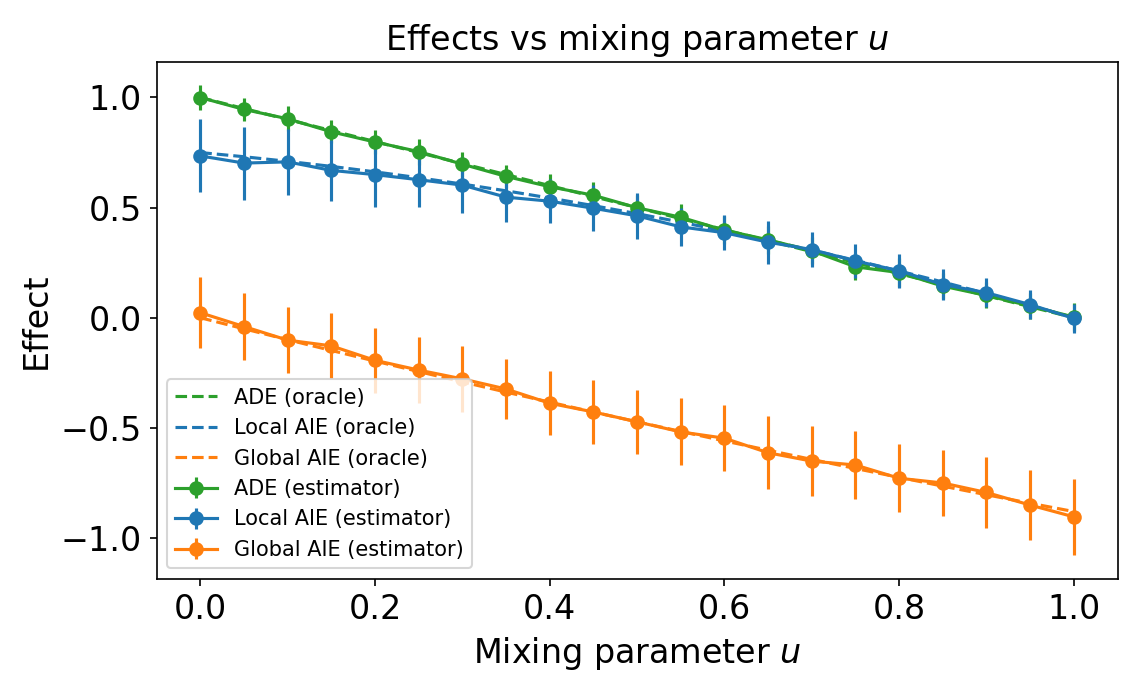}
    \caption{\(\mathtt{poly}\) link.}
    \label{fig:mpe-u-poly}
  \end{subfigure}

  \caption{Index outcome model with non-linear link functions and fixed sample size \(n=1000\).
  ADE, local AIE, and global AIE are plotted as functions of the mixing parameter \(u\). Dashed curves indicate oracle values, while solid curves are Monte Carlo averages.}
  \label{fig:nonlinear-index finite-sample}
\end{figure}

All the experiments so far suggest that our estimators can approximate the limiting estimands well enough in finite samples.

\medskip\noindent\textbf{Monte Carlo experiments of growing sample size.} Now we increase the magnitude of \(n\) to numerically check the asymptotic convergence rates shown in Section~\ref{sec:local vs global example estimator}. In Figure~\ref{fig:linear-index growing n}, we take \(n\in\{100,\ldots,10000\}\) with \(h_n=0.75\,n^{-\alpha}\) and \(\rho_n=0.75\,n^{-\kappa}\). The pair \((\kappa,\alpha)\) takes values in \(\{(0.34,0.26),\,(0.49,0.40)\}\). We set \(u=\pi=0.5\) and plot the MSE of our ADE estimator and the local and global AIE estimators against \(n\) in log-log scale. We consider the linear DGP and the cosine DGP.

\begin{figure}[p]
  \centering
  \begin{subfigure}{.40\linewidth}
    \centering
    \includegraphics[width=\linewidth]{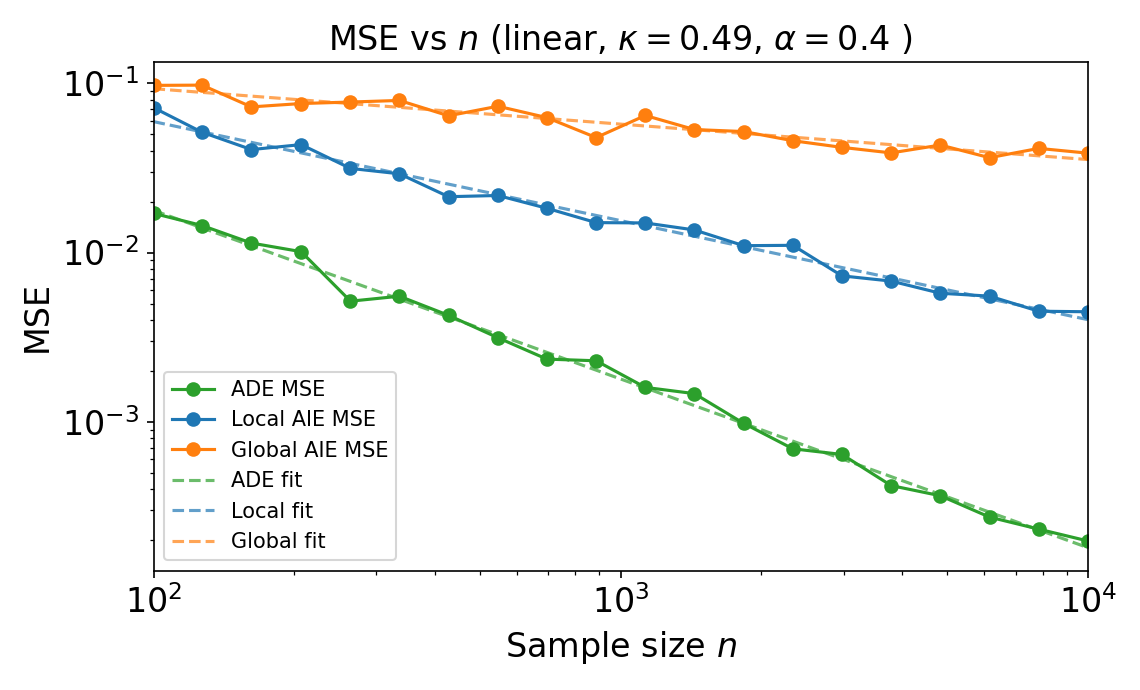}
    \caption{\((\kappa,\alpha)=(0.49,0.40)\).}
  \end{subfigure}
  \hfill
  \begin{subfigure}{.40\linewidth}
    \centering
    \includegraphics[width=\linewidth]{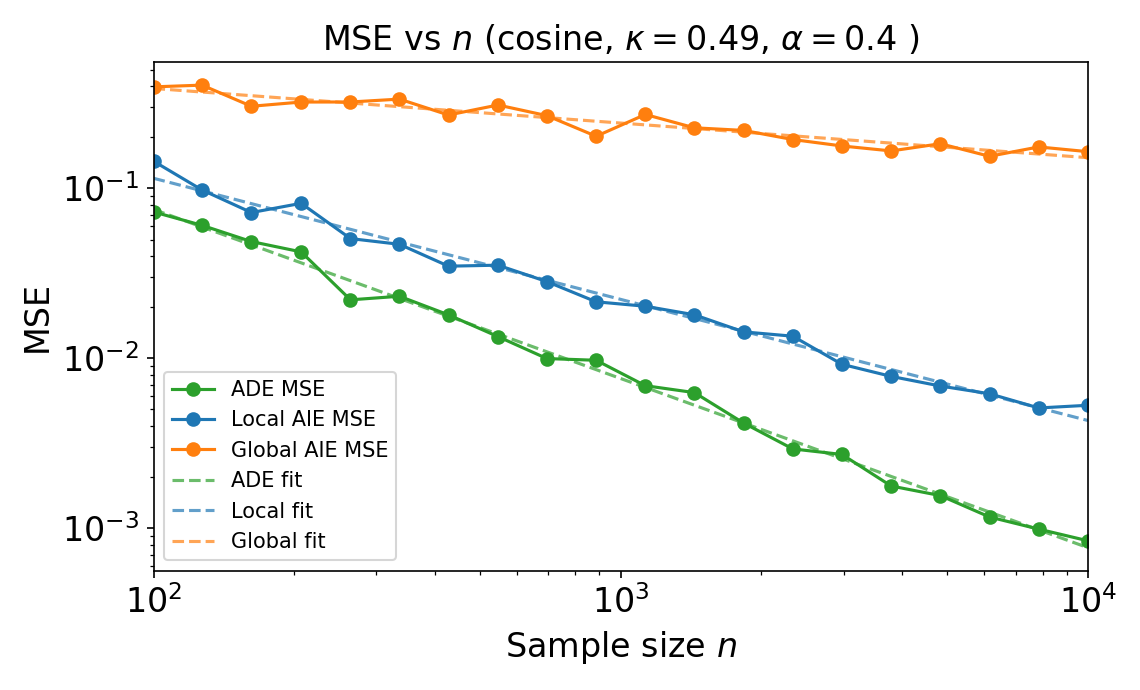}
    \caption{\((\kappa,\alpha)=(0.49,0.40)\).}
  \end{subfigure}

  \vspace{0.2em}

  \begin{subfigure}{.40\linewidth}
    \centering
    \includegraphics[width=\linewidth]{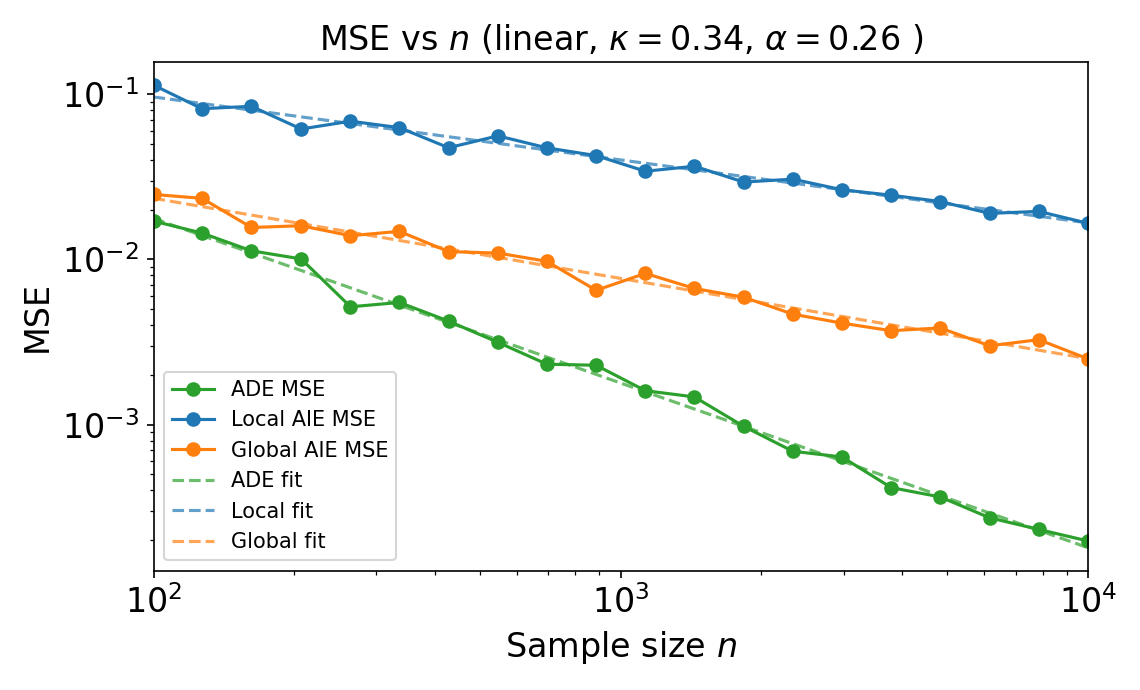}
    \caption{\((\kappa,\alpha)=(0.34,0.26)\).}
  \end{subfigure}
  \hfill
  \begin{subfigure}{.40\linewidth}
    \centering
    \includegraphics[width=\linewidth]{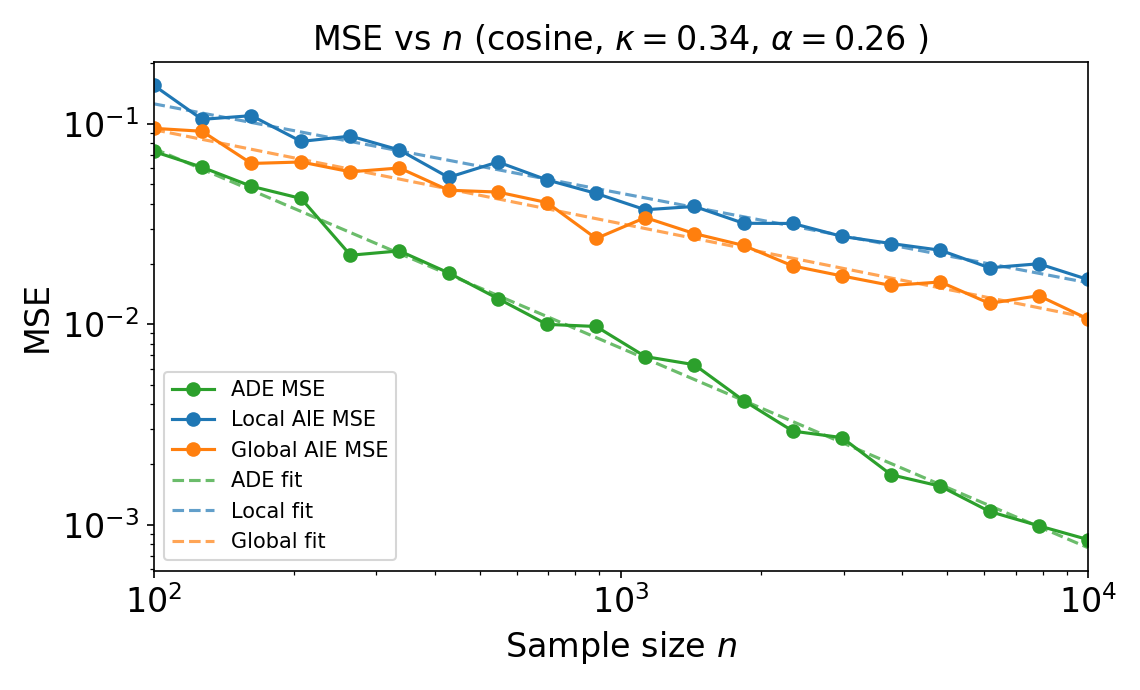}
    \caption{\((\kappa,\alpha)=(0.34,0.26)\).}
  \end{subfigure}

  \caption{MSE of the ADE and the local and global AIE estimators against \(n\) in log-log scale, across two \((\kappa,\alpha)\) regimes for linear and nonlinear (cosine) outcome DGP.}
  \label{fig:linear-index growing n}
\end{figure}




\subsection{Semi-synthetic application: cash transfers in a village economy}\label{sec:filmer-sim}

We next consider a semi-synthetic design calibrated to the cash-transfer experiment in Philippine villages studied by \cite{filmer2023cash}. A previous work by \citet{munro2021treatment} studies the spillover effect caused by a \emph{global} market-equilibrium channel operating through village-level egg prices.  We extend their calibration by further adding a \emph{local} network channel operated through household neighbors, so that the marginal policy effect can be decomposed into direct,
local, and global components within a realistic structural environment.

In the experiments of \cite{filmer2023cash}, each household (of several individuals) is randomized to receive a cash transfer with probability $\pi$ at the
household level, which exactly satisfies our Assumption~\ref{assump:RCT}. The target outcome is the height-for-age Z-score for children between 0 and 5 years of age.

To conduct our synthetic experiments, a concrete parametric model is specified for the outcomes, demands/supplies and network. Then we use the real dataset from \cite{filmer2023cash} to estimate the involved parameters. Subsequently, fixing $\pi$ at the level of the empirical experiments, we are able to simulate multiple experiments and thus assess the ability of our estimators. \cite{munro2021treatment} adopts exactly the same semi-synthetic paradigm for their calibrated evaluation.

\medskip\noindent\textbf{Parametric model.} Besides the following brief description, Appendix~\ref{app:filmer-details} provides additional details.

\medskip\noindent\emph{(a) Market equilibrium model.} We use the same setup as in \cite{munro2021treatment} which incorporates the market of eggs into consideration. Conditional on treatment, each individual has linear demand and supply schedules.  The equilibrium price is then determined by market clearing.

\medskip\noindent\emph{(b) Block network model.}
To introduce local spillovers, we build a network between households whose link probabilities combine a geography-based block component (barangay and municipality codes), homophily in housing and socioeconomic characteristics (roof and wall quality, education, assets, water/sanitation, school-age presence, income, and household size), and a triadic-closure adjustment.  The probability matrix is rescaled to match a target density $\rho$ before drawing edges. Related network constructions appear in \cite{fafchamps2003risk} as well.

\medskip\noindent\emph{(c) Final outcome model.}
The outcome of each child depends linearly on eligibility of the household for cash transfer, own treatment, a scalar
equilibrium price of eggs, and the share of treated neighbor
households. Then the outcome of a house is the average of its children.

\medskip\noindent\textbf{Fitting approach.}
Following \cite{munro2021treatment}, all the coefficients relating to the market are set equal to the structural estimates from the remote-village subsamples in
\cite{filmer2023cash}. To do so, $10$ moments are computed from the real dataset. Additionally, the coefficient of local exposure is calibrated from a
partial-linear regression of child height-for-age on the share of treated neighbors,
controlling for own treatment and village prices.  A detailed description can be found in Appendix~\ref{app:filmer-details}.

\medskip\noindent\textbf{Simulation procedure.}
For each Monte Carlo
replication, we: (i) subsample $2{,}000$ households and draw household treatments from the Bernoulli design, (ii) resample the
household network from the categorical features of the subsampled households, (iii) solve for the
market-clearing price and apply a small scalar perturbation, (iv) simulate individual demand,
supply, and child outcomes under the calibrated structural system, and (v) aggregate to
household-level excess demand and child outcomes.  We then compute the Horvitz--Thompson
estimator of the average direct effect, the PC-balanced estimator of the local component,
and the augmented-trial IV estimator of the global component.
This process is repeated for $2{,}000$ times.
Closed-form expressions
for the corresponding oracle targets
$\tau_{\ADE}^\ast,\tau_{\AIE}^{\rmL,\ast},\tau_{\AIE}^{\rmG,\ast}$ are
derived in Appendix~\ref{app:filmer-details}.

\medskip\noindent\textbf{Results.}
Table~\ref{tab:decomp-filmer} summarizes our findings.  The “Truth” column
reports the analytical limits from the structural model; the remaining columns report
Monte Carlo means, biases, and standard deviations of the estimators.  Appendix
Figure~\ref{fig:filmer-hist} provides histograms of the Monte Carlo sampling
distributions for the ADE, local AIE, and global AIE estimators.

\begin{table}[H]
\centering
\begin{tabular}{lrrrr}
\toprule
\textbf{Estimator} & \textbf{Truth} & \textbf{Mean} & \textbf{Bias} & \textbf{SD} \\
\midrule
ADE           & 0.3514   & 0.3151   & $-0.0363$ & 0.1522 \\
AIE (Local)   & $-1.0871$ & $-1.0732$ & $0.0139$ & 0.5731 \\
AIE (Global)  & $-0.1333$ & $-0.1320$ & $0.0013$ & 0.0973 \\
\bottomrule
\end{tabular}
\caption{Decomposition of the marginal policy effect in the Filmer-calibrated
semi-synthetic design.  ``Truth'' reports the analytical population targets; the other
columns report Monte Carlo means, biases, and standard deviations across
$2{,}000$ replications. Appendix~\ref{app:filmer-details} has a histogram showing the distribution of these Monte Carlo repetitions.}
\label{tab:decomp-filmer}
\end{table}

The three components are well recovered by their corresponding estimators: Monte Carlo
means lie close to the oracle targets.  In this calibration, the local component is sizeable and
negative, reflecting that treated neighbors reduce the marginal gains from one’s own
transfer, while the global component captures the additional negative effect of
equilibrium price changes.  The decomposition shows that looking only at the average
direct effect would obscure these indirect channels, even in a setting closely matched to
a real cash-transfer experiment.

\medskip\noindent\textbf{Econometric insights.}
Ex ante, the sign
of $\tau_{\mathrm{TOT}}^{\ast}$ is ambiguous. On the one hand, transfers can relax liquidity constraints and
increase gifts and informal loans within the network, raising non-labor income and consumption
for both treated and untreated households \citep[e.g.][]{fafchamps2003risk}.
On the other hand, higher transfer intensity can generate congestion in non-priced local amenities
and adverse price or equilibrium effects in thin markets, so that the indirect components
$\tau_{\AIE}^{\rmL,\ast} $ and $\tau_{\AIE}^{\rmG,\ast} $ may be negative and potentially dominate the direct
gain. Recent work on large-scale public programs documents that such general-equilibrium
spillovers can be first-order and even larger than the direct effect
\citep{muralidharan2023general}, while market-based models of status and consumption in
networks show that equilibrium adjustments can overturn the sign of aggregate welfare effects
\citep{ghiglino2010keeping}. Our calibration should therefore be viewed as one
plausible configuration in which $\tau_{\mathrm{ADE}}^{\ast} > 0$ but $\tau_{\mathrm{TOT}}^{\ast} < 0$ because negative
local and global spillovers dominate the positive direct effect.
\section{Conclusion and future directions}\label{sec:conclusion}
This paper develops a framework for interpreting exposure-based estimands under interference when exposure mappings may be misspecified, and for relating those estimands to underlying policy objects. For any analyst-chosen exposure mapping, we define a pseudo-true outcome model as the best mean-squared approximation to the true potential-outcome model within the class of models that depend on treatment only through that mapping. This induces corresponding pseudo-true direct, spillover, and marginal policy effects, and the usual decomposition of the marginal policy effect into direct and spillover components continues to hold exactly. When the chosen exposure mapping is sufficiently informative so that residual outcome variation after conditioning on exposure is asymptotically negligible, these pseudo-true estimands are asymptotically close to their oracle counterparts; under an additional monotonicity condition, they also retain a sign-preserving causal interpretation.

We then specialize this framework to a structured environment with both local network and global equilibrium spillovers. In this setting, the oracle marginal policy effect admits an asymptotic decomposition into direct, local spillover, and global spillover components. This sharpens the interpretation of misspecified exposure mappings: procedures based only on local exposures can still be viewed as targeting the local component, while procedures based only on global exposures target the global component, even when each omits the other first-order channel. More generally, even when additional channels are omitted, the induced estimands remain the best exposure-based \(L^2\) approximations to the corresponding oracle policy effects, and are close to those oracle targets whenever the remaining omitted variation is asymptotically negligible.

Taken together, our results suggest a disciplined way to think about decomposition of spillover effects under misspecification. They should not be understood only as a source of misspecification, but also as a disciplined way to isolate interpretable channels of policy transmission and to define well-defined approximations when the full interference structure is too complex to model directly.

Interesting directions for future work include broadening the analysis from local marginal policy changes to larger design shifts and other policy objects \citep[e.g.,][]{munro2025designed} and developing experimental designs that better allocate power across direct, local, and global spillover channels \citep{UganderEtAl2013}.

\bibliographystyle{apalike}
\bibliography{reference}

\newpage

\appendix
\tableofcontents

\section{Additional details for the cash-transfer calibration}
\label{app:filmer-details}

\subsection{Environment and treatment assignment}
\label{subsec:filmer-setup}

There are $n$ individuals indexed by $i\in[n]$, grouped into $n_h$ households
$h\in[n_h]$.  Let $h:[n]\to[n_h]$ map each individual to its household $h(i)$.  Treatment
is assigned at the \emph{household} level: for each $h$, the indicator
$W_h\in\{0,1\}$ denotes whether household $h$ receives the cash transfer.  We collect the
assignments into
\[
  \bw = (w_1,\ldots,w_{n_h}) \in \{0,1\}^{n_h},
  \qquad
  \bW = (W_1,\ldots,W_{n_h}) \sim \RCT(\pi),
\]
where $\RCT(\pi)$ is the Bernoulli randomized design with
$\PP(W_h=1)=\pi$ independently across households.

Households differ in program eligibility.  Let $E_h\in\{0,1\}$ indicate whether household
$h$ is eligible according to the original study’s targeting rule, and let
$\mu_{\mathrm{eli}} := \EE[E_h]$ denote the population share of eligible households.

\subsection{Demand, supply, excess demand, outcomes, and aggregation}

\paragraph{Individual demand, supply, and excess demand.}
For an individual $i$ in household $h(i)$, demand and supply for eggs are specified as
linear functions of the household treatment $w_{h(i)}$ and a scalar price $p$:
\begin{align}
  \mathsf{demand}_i(w_{h(i)}, p)
  &=
  \theta_{d01} E_{h(i)}
  +
  \theta_{d00} (1-E_{h(i)})
  +
  \theta_{dw} w_{h(i)} E_{h(i)}
  +
  \theta_{dp} p
  +
  \epsilon_{d,h(i)}
  +
  \nu_{d,i}(w_{h(i)}),
  \\
  \mathsf{supply}_i(w_{h(i)}, p)
  &=
  \theta_{s0}
  +
  \theta_{sp} p,
  \\
  z_i(w_{h(i)}, p)
  &=
  \mathsf{demand}_i(w_{h(i)}, p)
  -
  \mathsf{supply}_i(w_{h(i)}, p),
\end{align}
where $\epsilon_{d,h}$ is a household-level demand shock and $\nu_{d,i}(w_{h(i)})$ is an
idiosyncratic disturbance.  The quantity $z_i(w_{h(i)},p)$ is the individual excess
demand, matching the notation of Section~\ref{sec:local vs global example} with a
one-dimensional global state ($J=1$).

\paragraph{Household network and local exposure.}
In addition to the trading market on eggs, spillovers also operate through a network on households.  Let
\[
  \bE
  =
  \bigl(E_{hh'}\bigr)_{h,h'\in[n_h]}
  \in \{0,1\}^{n_h\times n_h}
\]
be an undirected adjacency matrix, so $E_{hh'}=E_{h'h}=1$ if households $h$
and $h'$ are neighbors.  Two individuals $i,j$ are considered network neighbors whenever
their households are linked, i.e.\ $E_{h(i)h(j)}=1$. For each individual $i$, we define the local exposure as the treated share among
neighboring households:
\begin{equation}
    S_i(\bw)
    =
    \frac{
    \sum_{j\neq i} E_{h(i)h(j)}\,w_{h(j)}
    }{
    \sum_{j\neq i} E_{h(i)h(j)}
    },
    \qquad
    i\in[n].
    \label{eq:filmer-local-exposure}
\end{equation}
The next section discusses in detail how the network is generated.

\paragraph{Individual outcomes.}
Child outcomes depend linearly on eligibility, demand, own treatment, the local exposure, and
household-level outcome shocks:
\begin{align}
    y_i\bigl(w_{h(i)}, s_i(\bw), p\bigr)
    &=
    \theta_{y01} E_{h(i)}
    +
    \theta_{y00} (1-E_{h(i)})
    +
    \theta_{yd}\,\mathsf{demand}_i\bigl(w_{h(i)}, p\bigr)
    +
    \theta_{yw} w_{h(i)}
    \nonumber\\
    &\quad
    +
    \theta_{ys} S_i(\bw)
    +
    \epsilon_{y,h(i)}
    +
    \nu_{y,i}(w_{h(i)}), \label{eq:real data outcome model}
\end{align}
where $\epsilon_{y,h}$ and $\nu_{y,i}(w_{h(i)})$ are random household- and individual-level
outcome noise.  In the simulation we draw all noise terms independently as mean-zero
Gaussians with standard deviations
$(\sigma_{d1},\sigma_{d0},\sigma_{y1},\sigma_{y0},\sigma_{dh},\sigma_{yh})
=(1/3,\,1/3,\,1,\,1,\,1/3,\,1)$ for
$(\epsilon_{d,i}(1),\epsilon_{d,i}(0),\epsilon_{y,i}(1),\epsilon_{y,i}(0),\epsilon_{d,h},\epsilon_{y,h})$.

\paragraph{Aggregation to households.}
Let $A^h\subset[n]$ be the set of all members of household $h$ and $C^h\subset[n]$ the
subset of children used in the outcome analysis.  We define household-level excess demand
and child outcomes as
\begin{align}
  \underline{z}_h(w_h, p)
  &=
  \frac{n_h}{n}
  \sum_{i\in A^h}
  z_i(w_h, p),
  \\
  \underline{y}_h(w_h, p)
  &=
  \frac{n_h}{n_c}
  \sum_{i\in C^h}
  y_i\bigl(w_h, s_i(\bw), p\bigr),
\end{align}
where $n_c=\sum_{h=1}^{n_h}|C^h|$ is the total number of children.  In the simulation we
treat $\underline{z}_h$ and $\underline{y}_h$ as household-level variables and apply the
estimators of Section~\ref{sec:local vs global example estimator} at the household level.

In total, our model is parametrized by $11$ parameters to be fitted from the real dataset,
\begin{equation}\label{eq:params in real-data model}
    (\theta_{d01},\theta_{d00},\theta_{dw},\theta_{dp},\theta_{s0},\theta_{sp},\theta_{y01},\theta_{y00},\theta_{yd},\theta_{yw},\theta_{ys}).
\end{equation}
In comparison to \citet{munro2021treatment}, our model differs in having $\theta_{ys}$ as an extra parameter, that appears as a coefficient in~\eqref{eq:real data outcome model} before the local exposure $S_i(\bw)$. 

Table~\ref{tab:filmer-params} reports the calibrated parameter values used in the
Monte Carlo experiments, computed from the remote-village moments and the
network-based estimate of $\theta_{ys}$ with the baseline network specification.
\begin{table}[H]
\centering
\begin{tabular}{lr}
\toprule
Parameter & Value \\
\midrule
$\theta_{d01}$ & 3.8870 \\
$\theta_{d00}$ & 4.4875 \\
$\theta_{dw}$  & 0.1896 \\
$\theta_{dp}$  & $-0.3764$ \\
$\theta_{s0}$  & $-0.2053$ \\
$\theta_{sp}$  & 0.3263 \\
$\theta_{y01}$ & $-5.4339$ \\
$\theta_{y00}$ & $-5.5117$ \\
$\theta_{yd}$  & 1.8496 \\
$\theta_{yw}$  & 0.1025 \\
$\theta_{ys}$  & $-1.0871$ \\
\bottomrule
\end{tabular}
\caption{Calibrated parameter values used in the Filmer-based simulations.}
\label{tab:filmer-params}
\end{table}

\subsection{Household network construction from housing characteristics}
\label{subsec:filmer-network}

The original dataset also reports housing characteristics for each household, including roof and
wall quality.  Each household $h$ has a roof status in
\texttt{\{`no-roof', `lightroof', `strongroof'\}} and a wall status in
\texttt{\{`no-wall', `lightwall', `strongwall'\}}.  We encode these as ordinal variables
\[
  \mathsf{roof}_h \in \{0,1,2\},
  \qquad
  \mathsf{wall}_h \in \{0,1,2\},
  \qquad h\in[n_h].
\]

We generate the household network $\bE$ by combining a block component based on
barangay/municipality codes with homophily layers based on housing and socioeconomic
covariates.  Specifically, we build a nonnegative score matrix
\[
  S_{hh'}
  =
  S^{\text{block}}_{hh'}
  +
  \sum_{k} w_k \,\widetilde S^{(k)}_{hh'},
\]
where $S^{\text{block}}_{hh'}=w_{\text{cross}}+w_{\text{mun}}\mathbf{1}\{m_h=m_{h'}\}
w_{\text{bgy}}\mathbf{1}\{b_h=b_{h'}\}$ and each similarity layer $\widetilde S^{(k)}$
(roof, wall, education, assets, water/sanitation, school-age, income, and household size)
is normalized to have off-diagonal mean one.  We optionally apply a triadic-closure
mixing $S\leftarrow (1-\lambda_{\text{tc}})S+\lambda_{\text{tc}}\widetilde S_{\text{tc}}$
with $\widetilde S_{\text{tc}}$ the normalized product $SS$.  Finally we rescale to a
target density $\rho$ and draw edges independently with
\[
  \PP(E_{hh'}=1)=\min\{1,\ \rho\,S_{hh'}/\overline S\},
\]
where $\overline S$ is the off-diagonal mean of $S$.
In the baseline calibration we set
$w_{\text{bgy}}=5.0$, $w_{\text{mun}}=2.0$, $w_{\text{cross}}=0.2$,
$w_{\text{roof}}=1.0$, $w_{\text{wall}}=1.0$, $w_{\text{edu}}=1.0$,
$w_{\text{assets}}=0.8$, $w_{\text{watersan}}=0.7$, $w_{\text{school}}=0.6$,
$w_{\text{income}}=1.0$, $w_{\text{hhsize}}=0.6$,
$(\sigma_{\text{income}},\sigma_{\text{hhsize}},\sigma_{\text{edu}})=(0.7,1.2,1.0)$,
and $\lambda_{\text{tc}}=0.5$, and we target density $\rho=0.02$.

To assess sensitivity to $\rho$, we reran the Monte Carlo experiment with
$\rho\in\{0.01,0.02,0.03\}$ while holding all other calibration choices fixed.
Table~\ref{tab:filmer-rho-sens} reports Monte Carlo means and standard deviations
(200 replications, $n_h=2{,}000$).  The ADE and global AIE are stable across these
densities, while the local AIE varies in magnitude, reflecting the dependence of
local exposure on network density.
\begin{table}[H]
\centering
\scalebox{0.8}{
\begin{tabular}{lrrrrrr}
\toprule
$\rho$ & ADE (mean) & Local AIE (mean) & Global AIE (mean) & ADE (SD) & Local AIE (SD) & Global AIE (SD) \\
\midrule
0.01 & 0.3313 & $-0.9235$ & $-0.1348$ & 0.1496 & 0.4037 & 0.0949 \\
0.02 & 0.3319 & $-1.0111$ & $-0.1347$ & 0.1527 & 0.5644 & 0.0970 \\
0.03 & 0.3308 & $-0.4990$ & $-0.1351$ & 0.1398 & 0.6280 & 0.0884 \\
\bottomrule
\end{tabular}
}
\caption{Sensitivity of Monte Carlo estimates to the target network density $\rho$.}
\label{tab:filmer-rho-sens}
\end{table}

\paragraph{Network construction and covariates.}
A large empirical literature on informal risk sharing in village economies finds that network links
exhibit strong homophily along geography and socio-economic status: most gifts, transfers, and
informal loans occur among neighbors and relatives in the same (or adjacent) villages, and among
households with similar wealth and occupations \citep{fafchamps2003risk}.
Motivated by this evidence, we construct a parsimonious index model for network formation in
which the probability of a tie depends on (i) fine geographic location (the village identifier), (ii)
housing quality (roof and wall materials) as a proxy for wealth and social status, and (iii), in
robustness checks, the education of the household head. In the code, these variables enter a
low-rank stochastic block model that induces homophily in location and socio-economic status,
which is standard in empirical work where the true social network is unobserved but the covariates
governing homophily can be proxied from survey data.

Because the roof and wall covariates span a $3\times3$ grid in each dimension, the
resulting network model is approximately rank~6.  We therefore set the number of principal
components in the PC-balancing estimator to $r=6$ in this calibration.

\subsection{Structural calibration of parameters}

Following \cite{filmer2023cash} and \cite{munro2021treatment}, we estimate the first $10$ parameters in~\eqref{eq:params in real-data model},
\[
  \bigl(
    \theta_{d00}, \theta_{d01}, \theta_{dw}, \theta_{dp},
    \theta_{s0}, \theta_{sp},
    \theta_{y00}, \theta_{y01}, \theta_{yd}, \theta_{yw}
  \bigr)
\]
by matching model-implied moments to sample moments from the subsample of remote villages.
The moments include: mean demand and child outcomes by eligibility status, equilibrium
prices in control and treatment villages, and average treatment effects on demand and
outcomes.  Because the demand, supply, and outcome equations are linear, the resulting
estimators have closed forms that mirror those reported in \cite{munro2021treatment}; we
use these closed-form expressions in the code.  The additional local-exposure term enters
only the outcome equation and does not alter the demand or supply moments, so the closed
forms for the first ten parameters are unchanged.

To incorporate local network spillovers, we augment the structural system with
$\theta_{ys}$, the coefficient on $s_i(\bw)$, and estimate it from the regression
\[
  Y_i = \alpha_0 + \alpha_1 D_i + \theta_{ys} S_i + \alpha_3 P_i + \varepsilon_i,
\]
where $Y_i$ is child height-for-age, $D_i$ is the individual treatment indicator, $S_i$ is
the share of treated neighbors in $\bE$, and $P_i=\log p_{b(i)}$ is the
barangay-level log egg price.  Imposing the moment conditions
\[
  \EE[D_i\varepsilon_i]
  =
  \EE[P_i\varepsilon_i]
  =
  \EE[S_i\varepsilon_i]
  =
  0
\]
yields a method-of-moments estimator for $\theta_{ys}$ in terms of sample covariances
$\hat\sigma_{AB}=\widehat{\Cov}(A_i,B_i)$:
\[
  \hat\theta_{ys}
  =
  \frac{
    \hat\sigma_{YS}
    -
    \hat\sigma_{ZS}^{\top}\hat\Sigma_{ZZ}^{-1}\hat\sigma_{ZY}
  }{
    \hat\sigma_{SS}
    -
    \hat\sigma_{ZS}^{\top}\hat\Sigma_{ZZ}^{-1}\hat\sigma_{ZS}
  },
\]
where $Z_i=(D_i,P_i)^\top$, $\hat\Sigma_{ZZ}$ is the covariance matrix of $Z_i$, and
$\hat\sigma_{ZS},\hat\sigma_{ZY}$ are the corresponding cross-covariances.  In the
simulation we treat $\theta_{ys}=\hat\theta_{ys}$ as fixed at its estimated value, using
a single network draw from the housing-covariate model and holding it fixed across
Monte Carlo replications.

\subsection{Ground truth under the parametric model}
\label{subsec:filmer-ground-truth}

Given a household-randomized policy $\bW\sim\RCT(\pi)$, the equilibrium price $p$ solves
the market-clearing condition
\[
  \EE_{\bW\sim\RCT(\pi)}\bigl[z_i(W_{h(i)}, p)\bigr] = 0.
\]
Using the linear structure of $z_i(w_{h(i)},p)$ and the fact that
$W_{h(i)}\sim\mathrm{Bernoulli}(\pi)$ independently of $E_{h(i)}$, we obtain
\begin{align*}
  0
  &=
  \theta_{d01}\,\mu_{\mathrm{eli}}
  +
  \theta_{d00}\,(1-\mu_{\mathrm{eli}})
  +
  \theta_{dw}\,\pi\,\mu_{\mathrm{eli}}
  -
  \theta_{s0}
  +
  (\theta_{dp}-\theta_{sp})\,p.
\end{align*}
Solving for $p$ yields the population equilibrium price under $\RCT(\pi)$:
\begin{equation}
  p_\pi^\ast
  =
  -\frac{
    \theta_{d01}\,\mu_{\mathrm{eli}}
    +
    \theta_{d00}\,(1-\mu_{\mathrm{eli}})
    +
    \theta_{dw}\,\pi\,\mu_{\mathrm{eli}}
    -
    \theta_{s0}
  }{
    \theta_{dp}-\theta_{sp}
  }.
  \label{eq:filmer-p-star}
\end{equation}

The price elasticities of excess demand and outcomes are
\begin{align}
  \xi_z
  &=
  \nabla_p\EE\bigl[z_i(W_{h(i)}, p_\pi^\ast)\bigr]
  =
  \theta_{dp}-\theta_{sp},
  \\
  \xi_y
  &=
  \nabla_p\EE\bigl[y_i(W_{h(i)}, s_i(\bw), p_\pi^\ast)\bigr]
  =
  \theta_{yd}\,\theta_{dp},
\end{align}
consistent with the definitions in the general equilibrium model.

Using these expressions and the linear structure of $y_i(\cdot)$, the population targets
of interest are
\begin{align}
  \tau_{\ADE}^\ast(\pi)
  &=
  \EE\bigl[y_i(1,\pi,p_\pi^\ast) - y_i(0,\pi,p_\pi^\ast)\bigr]
  =
  \theta_{yd}\,\theta_{dw}\,\mu_{\mathrm{eli}}
  +
  \theta_{yw},
  \\
  \tau_{\AIE}^{\rmL,\ast}(\pi)
  &=
  \EE\bigl[
    \pi\,\nabla_s y_i(1,\pi,p_\pi^\ast)
    +
    (1-\pi)\,\nabla_s y_i(0,\pi,p_\pi^\ast)
  \bigr]
  =
  \theta_{ys},
  \\
  \tau_{\AIE}^{\rmG,\ast}(\pi)
  &=
  -\,\xi_y^\top \xi_z^{-1}
  \EE\bigl[z_i(1,p_\pi^\ast)-z_i(0,p_\pi^\ast)\bigr]
  =
  -\frac{\theta_{yd}\,\theta_{dp}}{\theta_{dp}-\theta_{sp}}
  \,\theta_{dw}\,\mu_{\mathrm{eli}}.
\end{align}
The total marginal policy effect is
\[
  \tau_{\TOT}^\ast(\pi)
  =
  \tau_{\ADE}^\ast(\pi)
  +
  \tau_{\AIE}^{\rmL,\ast}(\pi)
  +
  \tau_{\AIE}^{\rmG,\ast}(\pi).
\]

\subsection{Monte Carlo implementation}

In the Monte Carlo experiments reported in the main text we use $n_h=2{,}000$ households
and $2{,}000$ replications.  For each replication we:

\begin{enumerate}
  \item draw household treatments $\bW\sim\RCT(\pi)$;
  \item generate the household network $\bE$ via the construction in
        Section~\ref{subsec:filmer-network};
  \item simulate the augmented-trial perturbations $\bU$ and solve for the equilibrium
        price $P_n(\bW)+U$;
  \item generate individual-level demand, supply, and outcomes using the calibrated
        structural parameters $(\theta,\theta_{ys})$;
  \item aggregate to household-level variables $(\underline{y}_h,\underline{z}_h)$;
  \item compute
        $\hat\tau_{\ADE}$,
        $\hat\tau_{\AIE}^{\rmL}$ (with $r=6$ principal components), and
        $\hat\tau_{\AIE}^{\rmG}=-\hat\gamma^\top\hat\tau_z$, together with their
        estimated standard errors.
\end{enumerate}

Across replications we summarize finite-sample bias, standard deviation, and mean squared
error for each component and compare them to the population targets derived above; the
summary for the baseline calibration is reported in Table~\ref{tab:decomp-filmer} in the
main text.

\begin{figure}[H]
  \centering
  \begin{subfigure}{.32\textwidth}
    \centering
    \includegraphics[width=\linewidth]{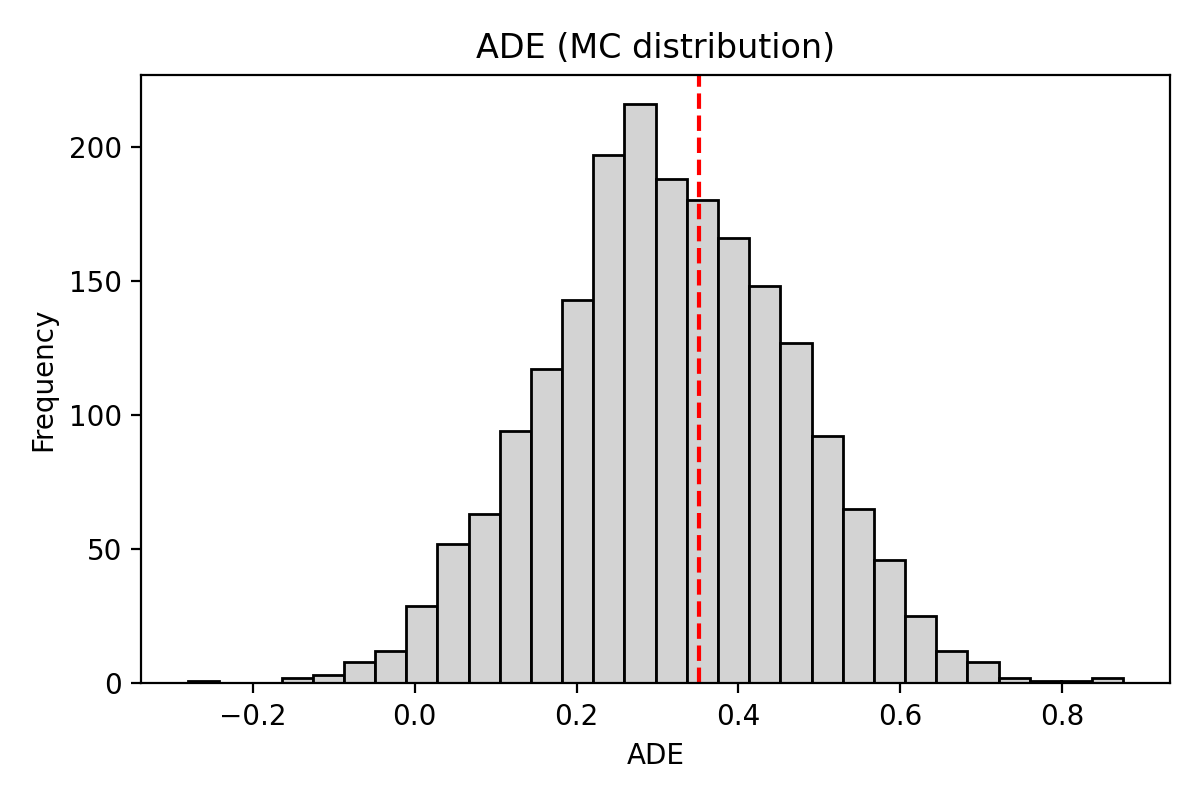}
    \caption{ADE}
  \end{subfigure}
  \hfill
  \begin{subfigure}{.32\textwidth}
    \centering
    \includegraphics[width=\linewidth]{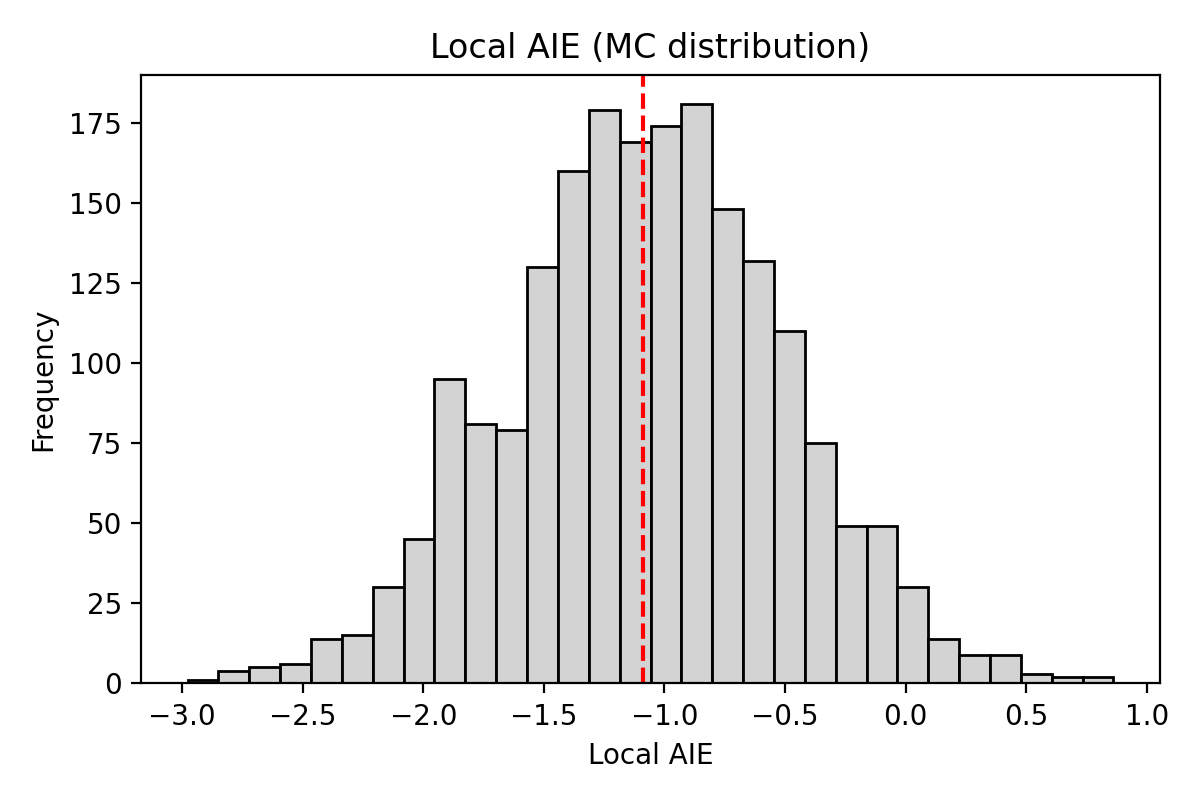}
    \caption{Local AIE}
  \end{subfigure}
  \hfill
  \begin{subfigure}{.32\textwidth}
    \centering
    \includegraphics[width=\linewidth]{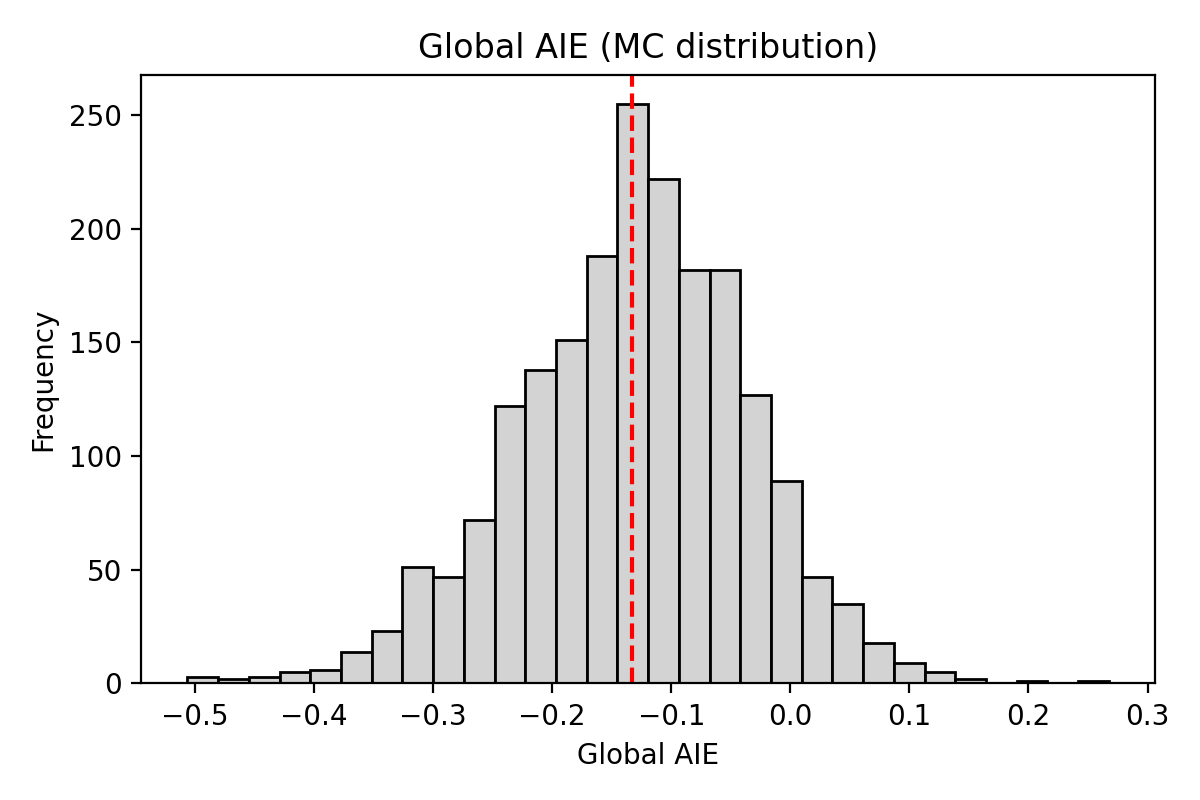}
    \caption{Global AIE}
  \end{subfigure}
  \caption{Monte Carlo sampling distributions for the ADE, local AIE, and global AIE
  estimators in the Filmer-calibrated design.}
  \label{fig:filmer-hist}
\end{figure}

\section{Proofs for Section~\ref{sec:misspecified estimand}}

\subsection{Proof of Theorem~\ref{thm:mpe equal direct plus indirect}}

\begin{proof}[Proof of Theorem~\ref{thm:mpe equal direct plus indirect}]
For convenience, we introduce the following notation.
\begin{align*}
    \tilde{y}_i(\bw;\pi) &= \EE_{\bW^{(2)}\sim\mathrm{RCT}(\pi)}\{ y_i(\bW^{(2)}) |d_i(\bW^{(2)})=d_i(w)\}, \\
    \bar{V}(\bw;\pi) &= \frac{1}{n}\sum_{i=1}^n \tilde{y}_i(\bw;\pi).
\end{align*}
Later, we will use $p(\bw;\pi)$ as the p.m.f. of $\bw$ under $\mathrm{RCT}(\pi)$.
Henceforth, when breaking the expectation over $w^{(1)}$, the empirical sum becomes
\begin{align}
    \frac{1}{n}\sum_{i=1}^n \Bar{y}_i(\pi_1,\pi_2) = \sum_{\bw\in\{0,1\}^n} \bar{V}(\bw;\pi_2) p(\bw;\pi_1).
\end{align}
Since $\frac{\mathrm{d}}{\mathrm{d}\pi} p(\bw;\pi) = p(\bw;\pi)\sum_{i=1}^n \frac{w_i-\pi}{\pi(1-\pi)}$, by taking the partial derivative with respect to $\pi_1$, we find that
\begin{align}
    \tau(\pi) &= \frac{\partial}{\partial\pi_1}\left[\frac{1}{n}\sum_{i=1}^n \Bar{y}_i(\pi_1,\pi_2)\right] \Bigg|_{\pi_1=\pi_2=\pi} \\
    &= \left[ \sum_{\bw\in\{0,1\}^n} \bar{V}(\bw;\pi_2) \frac{\mathrm{d}}{\mathrm{d}\pi_1} p(\bw;\pi_1)\right]\Bigg|_{\pi_1=\pi_2=\pi} \\
    &= \left[ \sum_{\bw\in\{0,1\}^n} \bar{V}(\bw;\pi_2) p(\bw;\pi_1) \sum_{i=1}^n \frac{w_i-\pi_1}{\pi_1(1-\pi_1)} \right]\Bigg|_{\pi_1=\pi_2=\pi} \\
    &= \EE_{\bW\sim\mathrm{RCT}(\pi)}\left[ \bar{V}(\bW;\pi) \sum_{i=1}^n \frac{W_i-\pi}{\pi(1-\pi)} \right].
\end{align}
Furthermore, we should plug in the definition of $\bar{V}(\bw;\pi)$ to find
\begin{align}
    \tau(\pi) &= \frac{1}{n}\EE_{\bW\sim\mathrm{RCT}(\pi)}\left[  \sum_{i=1}^n \sum_{j=1}^n \frac{W_i-\pi}{\pi(1-\pi)} \tilde{y}_j(\bW;\pi)\right] \\
    &= \frac{1}{n}\sum_{i=1}^n\EE_{\bW}\left[\frac{W_i-\pi}{\pi(1-\pi)} \tilde{y}_i(\bW;\pi)\right] + \frac{1}{n}\sum_{j=1}^n\sum_{i\neq j} \EE_{\bW}\left[\frac{W_i-\pi}{\pi(1-\pi)} \tilde{y}_j(\bW;\pi)\right].
\end{align}
Lastly note that for any pair $(i,j)$, there holds that
\begin{equation}
    \EE_{\bW}\left[\frac{W_i-\pi}{\pi(1-\pi)} \tilde{y}_j(\bW;\pi)\right] = \EE_{\bW}\left[ \tilde{y}_j(W_i=1,W_{-i};\pi) - \tilde{y}_j(W_i=0,W_{-i};\pi) \right].
\end{equation}
Consequently, we end up with
\begin{align}
    \tau(\pi) &= \frac{1}{n}\sum_{i=1}^n\EE_{\bW}\left[ \tilde{y}_i(W_i=1,W_{-i};\pi) - \tilde{y}_i(W_i=0,W_{-i};\pi) \right] \\
    &\qquad + \frac{1}{n}\sum_{j=1}^n\sum_{i\neq j}\EE_{\bW}\left[ \tilde{y}_j(W_i=1,W_{-i};\pi) - \tilde{y}_j(W_i=0,W_{-i};\pi) \right].
\end{align}
Call the first term as a definition of direct effect, and the second term as a definition of indirect effect. Note that the definition of $\tilde{y}_j$ depends on the exposure $d_j$ we choose.
\end{proof}

\subsection{Proof of Proposition~\ref{prop:functional estimand being Lipschitz}}\label{sec:reflecting sq loss criterion}

\begin{proof}[Proof of Proposition~\ref{prop:functional estimand being Lipschitz}]
    For a Bernoulli variable $W\in\cbr{0,1}$ with $\pi=\PP(W=1)=1-\PP(W=0)$ and any function $f$ on $\cbr{0,1}$, such an identity holds,
    \begin{equation}
        f(1) - f(0) = \frac{1}{\pi(1-\pi)}\EE[(W-\pi)f(W)].
    \end{equation}
    Exploiting this identity and Assumption~\ref{assump:RCT}, we can transform the oracle and functional definitions in~\eqref{eq:oracle estimand} and~\eqref{eq:functional estimand}, like
    \begin{align}
        \tau_{\ADE}^{\oracle}(\pi) &= \frac{1}{n\pi(1-\pi)}\sum_{i=1}^n\EE_{\bW\sim\RCT(\pi)}\sbr{ (W_i-\pi) y_i\rbr{\bW}  }, \\
        \tau_{\ADE}^{\functional}(f;\pi) &= \frac{1}{n\pi(1-\pi)}\sum_{i=1}^n\EE_{\bW\sim\RCT(\pi)}\sbr{ (W_i-\pi) f_i\rbr{\bW}  }.
    \end{align}
    Therefore, via Cauchy-Schwarz inequality,
    \begin{align}
        &\quad \abr{ \tau_{\ADE}^{\functional}(f;\pi) - \tau_{\ADE}^{\oracle}(\pi) } \\
        &\le \frac{1}{n\pi(1-\pi)}\sum_{i=1}^n\EE_{\bW} \big| (W_i-\pi) \sbr{ y_i\rbr{\bW}-f_i\rbr{\bW}}  \big| \\
        &\le \frac{1}{n\pi(1-\pi)}\sum_{i=1}^n \sqrt{\EE_{\bW}  (W_i-\pi)^2} \sqrt{ \EE_{\bW} \sbr{ y_i\rbr{\bW}-f_i\rbr{\bW}}^2 } \\
        &= \frac{1}{n\sqrt{\pi(1-\pi)}}\sum_{i=1}^n \sqrt{ \EE_{\bW} \sbr{ y_i\rbr{\bW}-f_i\rbr{\bW}}^2 } \\
        &\le \frac{1}{\sqrt{n\pi(1-\pi)}} \sqrt{ \sum_{i=1}^n  \EE_{\bW} \sbr{ y_i\rbr{\bW}-f_i\rbr{\bW}}^2 }.
    \end{align}
    Compared to $\AIE$, it would be easier to directly prove the case for $\MPE$. Just like before, we have that
    \begin{align}
        \tau_{\MPE}^{\oracle}(\pi) &= \frac{1}{n\pi(1-\pi)}\sum_{i=1}^n\sum_{j=1}^n\EE_{\bW\sim\RCT(\pi)}\sbr{ (W_i-\pi) y_j\rbr{\bW}  }, \\
        \tau_{\MPE}^{\functional}(f;\pi) &= \frac{1}{n\pi(1-\pi)}\sum_{i=1}^n\sum_{j=1}^n\EE_{\bW\sim\RCT(\pi)}\sbr{ (W_i-\pi) f_j\rbr{\bW}  }.
    \end{align}
    Then it follows that
    \begin{align}
        &\quad \abr{ \tau_{\MPE}^{\functional}(f;\pi) - \tau_{\MPE}^{\oracle}(\pi) } \\
        &\le \frac{1}{n\pi(1-\pi)} \EE_{\bW}\abr{ \sum_{i=1}^n\sum_{j=1}^n (W_i-\pi) (y_j\rbr{\bW}-f_j\rbr{\bW})  }\\
        &\le \frac{1}{n\pi(1-\pi)} \sqrt{\EE_{\bW}\sbr{\sum_{i}(W_i-\pi)}^2} \, \sqrt{\EE_{\bW}\sbr{\sum_{j}(y_j\rbr{\bW}-f_j\rbr{\bW})}^2} \\
        &= \frac{1}{\sqrt{n\pi(1-\pi)}} \sqrt{\EE_{\bW}\sbr{\sum_{j}(y_j\rbr{\bW}-f_j\rbr{\bW})}^2} \\
        &\le \frac{1}{\sqrt{\pi(1-\pi)}} \sqrt{ \sum_{i=1}^n  \EE_{\bW} \sbr{ y_i\rbr{\bW}-f_i\rbr{\bW}}^2 }. \label{eq:Lipschitz of AIE}
    \end{align}
    Since $\tau_{\AIE}^{\oracle}(\pi)=\tau_{\MPE}^{\oracle}(\pi)-\tau_{\ADE}^{\oracle}(\pi)$ and $\tau_{\AIE}^{\functional}(f;\pi)=\tau_{\MPE}^{\functional}(f;\pi)-\tau_{\ADE}^{\functional}(f;\pi)$, the Lipschitz result holds for $\AIE$ as well.
    Taking $y_i(\bw)=\sum_{j=1}^n (w_j-\pi)$ and $f_i(\bw)=2y_i(\bw)$ suffices to illustrate that the bound in~\eqref{eq:Lipschitz of AIE} is tight.
\end{proof}

\subsection{Proof of the sign-preserving results}
We adopt the same tool as \cite{leung2024identifying} to establish our result. Specifically, we give such a notion of positive dependency.

\begin{definition}
    Let \(p(\cdot)\) denote the probability mass function (PMF) of \(\bW\). We say that the distribution of \(\bW\) is {\em multivariate totally positive of order 2 (\(\mathrm{MTP}_2\))} if, for all \(\bw,\bw' \in \{0,1\}^n\),
    \begin{equation}
        p(\bw \wedge \bw' ) \, p(\bw \vee \bw' ) \geq p(\bw ) \, p(\bw' ),
    \end{equation}
    where ``\(\wedge\)'' and ``\(\vee\)'' denote the componentwise minimum and maximum, respectively.
\end{definition}
We will be using these lemmas later.

\begin{lemma}[Proposition 3.2 of \cite{fallat2017total}]\label{lemma:MTP push forward}
  Let $\phi\colon \{0,1\}^m \rightarrow \RR^n$ be componentwise nondecreasing. If the distribution of a random vector $X$ supported on $\{0,1\}^m$ is $\mathrm{MTP}_2$, then so is the distribution of $\phi(X)$.
\end{lemma}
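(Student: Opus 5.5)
The plan is to work directly with the pushforward mass function and reduce the $\mathrm{MTP}_2$ inequality for $\phi(X)$ to a four-functions inequality on the Boolean lattice $\{0,1\}^m$. Write $p$ for the mass function of $X$ and $q(\by)=\sum_{\bx\in A_{\by}} p(\bx)$ for that of $Y=\phi(X)$, where $A_{\by}=\phi^{-1}(\by)\subseteq\{0,1\}^m$. For $\by,\by'$ in the support we must show
\[
q(\by\wedge\by')\,q(\by\vee\by')\;\ge\;q(\by)\,q(\by'),
\]
and the natural tool is the Ahlswede--Daykin four-functions theorem. It says that if nonnegative $f_1,f_2,f_3,f_4$ on a distributive lattice satisfy $f_1(\bx)f_2(\bx')\le f_3(\bx\vee\bx')f_4(\bx\wedge\bx')$ pointwise, then $f_1(A)f_2(B)\le f_3(A\vee B)f_4(A\wedge B)$ for all sets $A,B$. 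Here $A\vee B=\{\bx\vee\bx':\bx\in A,\bx'\in B\}$ and $A\wedge B$ is defined analogously.

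The steps would run in this order.

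\emph{First}, apply the theorem with $f_1=f_2=f_3=f_4=p$. The pointwise hypothesis is exactly the $\mathrm{MTP}_2$ property of $X$. This yields $p(A_{\by})\,p(A_{\by'})\le p(A_{\by}\vee A_{\by'})\,p(A_{\by}\wedge A_{\by'})$.

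\emph{Second}, compare these join and meet sets with the fibres of $\by\vee\by'$ and $\by\wedge\by'$. By componentwise monotonicity, $\phi(\bx\vee\bx')\ge\phi(\bx)\vee\phi(\bx')=\by\vee\by'$ and $\phi(\bx\wedge\bx')\le\by\wedge\by'$. To conclude we need $p(A_{\by}\vee A_{\by'})\le q(\by\vee\by')$, that is, $A_{\by}\vee A_{\by'}\subseteq A_{\by\vee\by'}$ up to $p$-null sets, and similarly for meets. This holds exactly when $\phi$ carries joins and meets of points in the support to the join and meet of their images. The clean case is a lattice homomorphism such as a coordinatewise map $\phi(\bx)=(\phi_1(x_1),\dots,\phi_m(x_m))$. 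It also covers duplicating or selecting coordinates, and compositions of these.

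\emph{Third}, once the inclusion is in hand, combine it with the first step to get $q(\by)q(\by')\le q(\by\vee\by')q(\by\wedge\by')$, which is the claim.

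The main obstacle is the second step. Mere componentwise monotonicity only gives the inequalities $\phi(\bx\vee\bx')\ge\by\vee\by'$ and $\phi(\bx\wedge\bx')\le\by\wedge\by'$, not equalities. The inclusion is genuinely needed: for $X$ uniform on $\{0,1\}^2$ and $\phi(\bx)=(x_1,x_2,\max\{x_1,x_2\})$, the points $\by=(1,0,1)$ and $\by'=(0,1,1)$ give $q(\by\wedge\by')=0$ while $q(\by)q(\by')=1/16$. So the key step cannot rest on componentwise monotonicity alone. What I would do first is make precise the structural condition that the cited Proposition 3.2 of \cite{fallat2017total} implicitly uses, namely that $\phi$ preserves $\vee$ and $\wedge$ on the support, as the coordinatewise maps do. Then I would verify that the exposure maps to which the lemma is applied in the proof of Proposition~\ref{prop:sign preserving} satisfy it. If a given exposure map fails it, that application needs a separate argument, for instance by reducing sign preservation to association of $\phi(X)$. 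Association is inherited under arbitrary nondecreasing maps via Harris--FKG, because an $\mathrm{MTP}_2$ law is associated.
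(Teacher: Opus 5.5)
Your analysis is correct: the failure you identify lies in the statement, not in your argument. Read ``componentwise nondecreasing'' as each output coordinate being nondecreasing in $x\in\{0,1\}^m$; this is the reading the paper needs. Your example is then a valid counterexample: $X$ uniform on $\{0,1\}^2$ is $\mathrm{MTP}_2$, $\phi(x)=(x_1,x_2,\max\{x_1,x_2\})$ is monotone, and $q(0,0,1)\,q(1,1,1)=0<1/16=q(1,0,1)\,q(0,1,1)$. So the lemma as written cannot be proved, and the paper offers no proof beyond the citation. The standard preservation result for $\mathrm{MTP}_2$ laws, and presumably the intended content of the cited proposition, concerns coordinatewise monotone maps $\phi(x)=(\phi_1(x_1),\dots,\phi_m(x_m))$, together with marginalization. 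Your four-functions argument proves this, in fact under the weaker requirement that $\phi$ preserve $\vee$ and $\wedge$ on the support. One precision: apply Ahlswede--Daykin to $A_y\cap\mathrm{supp}(p)$ and $A_{y'}\cap\mathrm{supp}(p)$ rather than to the full fibres. The support of an $\mathrm{MTP}_2$ law is a sublattice, so the resulting joins and meets stay in the support. There your homomorphism condition gives the inclusions into $A_{y\vee y'}$ and $A_{y\wedge y'}$.

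The check you planned for the downstream use comes out negative. In the proof of Proposition~\ref{prop:sign preserving}, the lemma is applied to $\phi(w)=(\psi(w),d_i(w))$ with $\psi$ an arbitrary nondecreasing function. This is not a lattice homomorphism, and your counterexample has exactly this form: take $\psi=\max\{w_1,w_2\}$ and $d_i=d_i^{\rmL}$ with $\cN_i=\{1,2\}$. So the asserted $\mathrm{MTP}_2$ property of $(\psi(\bW^{(2)}),d_i(\bW^{(2)}))$ fails there.

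For selection-type exposures such as $d_i^{\rmL}$, the monotonicity actually needed follows directly from independence under $\RCT(\pi)$, since $\EE[\psi(\bW)\mid \bW_{\cN_i}=a]=\EE[\psi(a,\bW_{-\cN_i})]$. For a general monotone $d_i$, even a scalar one, the map $a\mapsto\EE[\psi(\bW)\mid d_i(\bW)=a]$ need not be monotone. On $\{0,1\}^2$, take $d(0,0)=0$, $d(1,0)=1$, $d(0,1)=2$, $d(1,1)=3$ and $\psi(w)=w_1$.

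The same example shows that your association fallback is not enough by itself. $(W_1,d(\bW))$ is associated, yet $\EE[W_1\mid d=1]=1>0=\EE[W_1\mid d=2]$. So association does not supply the stochastic ordering of the conditional laws given $d_i=a_1$ versus $d_i=a_0$ that the Strassen coupling step uses. One would have to exploit the particular pairs $(a_0,a_1)$ arising from flipping $w_s$, or impose further structure on $d_i$.
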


\begin{lemma}[Proposition 5.2 of \cite{fallat2017total}]\label{lemma:MTP conditional expectation}
    If the distribution of an $m$-dimensional random vector $X$ is $\mathrm{MTP}_2$, then for any $A \subseteq [m]$ and nondecreasing $\phi\colon \RR^{\abr{A}} \rightarrow \RR$ for which $\E[\abr{\phi(X_A)}] < \infty$, we have that $\E[\phi(X_A) \mid X_{[m]\backslash A} = x]$ is nondecreasing in $x$.
\end{lemma}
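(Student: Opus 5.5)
The plan is to reduce the lemma to Holley's inequality on a finite distributive lattice. The key object is the pair of conditional laws of \(X_A\) at two ordered values of the conditioning variables. I will treat the case relevant for this paper, where \(X\) is supported on a finite product lattice such as \(\{0,1\}^m\). I write \(p\) for its PMF and \(B=[m]\backslash A\). I only consider conditioning values \(x\) with \(\PP(X_B=x)>0\), so that the conditional expectation is well defined.

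\emph{Step 1 (conditional laws are MTP\(_2\)-ordered).} For such \(x\), let \(\mu_x(a)=p(a,x)/\sum_{a'}p(a',x)\) be the conditional PMF of \(X_A\) given \(X_B=x\). Fix \(x\le x'\) componentwise. Apply the MTP\(_2\) inequality of the joint law to the two points \((a,x)\) and \((b,x')\). Since \(x\le x'\), their meet is \((a\wedge b,x)\) and their join is \((a\vee b,x')\). This gives
\[
p(a\wedge b,x)\,p(a\vee b,x')\ \ge\ p(a,x)\,p(b,x').
\]
Dividing by the two normalizing constants yields
\[
\mu_x(a\wedge b)\,\mu_{x'}(a\vee b)\ \ge\ \mu_x(a)\,\mu_{x'}(b)\qquad\text{for all }a,b.
\]
This is exactly the hypothesis of Holley's inequality for the pair \((\mu_x,\mu_{x'})\).

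\emph{Step 2 (Holley's inequality).} Holley's inequality states the following. If two probability measures \(\mu,\nu\) on a finite distributive lattice satisfy \(\mu(a\wedge b)\nu(a\vee b)\ge\mu(a)\nu(b)\), then \(\nu\) stochastically dominates \(\mu\). Equivalently, \(\int\phi\,d\nu\ge\int\phi\,d\mu\) for every nondecreasing \(\phi\). I would derive this from the Ahlswede--Daykin four functions theorem. Take \(f_1=\mu\), \(f_2=\nu\phi^{+}\)-type choices, applied to up-sets \(U\): set \(\alpha=\mu\), \(\beta=\nu\mathbf 1_U\), \(\gamma=\mu\mathbf 1_U\), \(\delta=\nu\). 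Then the four functions theorem gives \(\mu(U)\nu(U)\le \mu(U)\) and \(\nu(U)\ge\mu(U)\)-type comparisons after normalization. Since every nondecreasing \(\phi\) is a nonnegative combination of indicators of up-sets plus a constant, the inequality \(\nu(U)\ge\mu(U)\) for all up-sets suffices.

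\emph{Step 3 (conclusion).} Combining Steps 1 and 2 gives, for \(x\le x'\),
\[
\E[\phi(X_A)\mid X_B=x']\ \ge\ \E[\phi(X_A)\mid X_B=x].
\]
This is the claimed monotonicity. Integrability of \(\phi(X_A)\) is automatic in the finite case. For general \(\RR^m\)-valued \(X\) with a density one would instead invoke Preston's continuous analogue of Holley's inequality, or discretize and pass to the limit. That extension is not needed for the applications in this paper.

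\emph{Main obstacle.} The substantive step is Holley's inequality. In particular, checking the up-set reduction from the four functions theorem requires careful handling of the supports. Points where \(\mu_x\) or \(\mu_{x'}\) vanish must not break the inequality, and the MTP\(_2\) lattice condition must hold on the whole product lattice rather than only on the support. I would first try the standard induction on the number of coordinates used to prove Ahlswede--Daykin. If that proves unwieldy, I would simply cite Holley (1974), as \citet{fallat2017total} does.
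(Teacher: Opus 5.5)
The paper gives no proof of this lemma. It imports the result from \citet{fallat2017total} and uses it only for the finitely supported law of $(\psi(\bW^{(2)}),d_i(\bW^{(2)}))$.

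\textbf{How your route compares.} Your reduction to Holley's inequality is a self-contained alternative, and it covers exactly the case the paper needs. A finitely supported law on $\RR^m$ sits inside the product of its coordinate ranges, which is a finite distributive lattice. Extending the PMF by zero keeps the $\mathrm{MTP}_2$ inequality valid there, because the definition quantifies over all pairs of points, not just support points. Step~1 is correct and is the heart of the argument: it shows the two conditional laws satisfy Holley's hypothesis with $\mu_{x'}$ as the upper measure. Step~3 and the layer-cake reduction to up-sets are also fine. You gain an elementary proof for the discrete setting. You give up the density case, which you correctly leave to Preston's continuous analogue.

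\textbf{The step that fails.} Your derivation of Holley from the four functions theorem does not work as written. You take $\alpha=\mu$, $\beta=\nu\mathbf 1_U$, $\gamma=\mu\mathbf 1_U$, $\delta=\nu$. The Ahlswede--Daykin hypothesis $\alpha(a)\beta(b)\le\gamma(a\vee b)\delta(a\wedge b)$ then reads, for $b\in U$, $\mu(a)\nu(b)\le\mu(a\vee b)\nu(a\wedge b)$. This puts $\mu$ at the join and $\nu$ at the meet, which is Holley's condition with the roles reversed. Step~1 does not supply it. The conclusion you extract, $\mu(U)\nu(U)\le\mu(U)$, is trivially true and carries no information.

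\textbf{The fix.} The correct choice needs no indicators. Take $(\alpha,\beta,\gamma,\delta)=(\mu,\nu,\nu,\mu)$, which is literally the inequality from Step~1, and apply the theorem to the sets $U$ and $U^c$ for an up-set $U$. Since $U\vee U^c\subseteq U$ and $U\wedge U^c\subseteq U^c$, you get $\mu(U)\nu(U^c)\le\nu(U)\mu(U^c)$. Using $\mu(U^c)=1-\mu(U)$ and $\nu(U^c)=1-\nu(U)$, this rearranges to $\mu(U)\le\nu(U)$. The four functions theorem only requires nonnegative functions, so this also removes what you called the main obstacle: zeros of $\mu_x$ or $\mu_{x'}$ cause no trouble.

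\textbf{On citing Holley instead.} Falling back on Holley (1974) is less clean than you suggest. Many standard statements of Holley's inequality, including those proved by Markov-chain coupling, assume strictly positive measures. The conditional laws arising from the pushforward in the paper typically vanish at many lattice points. The corrected four-functions argument avoids this, so it is the version you should use.
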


\begin{lemma}[Strassen's Theorem, Theorem 6.B.1 of \cite{shaked2007stochastic}]\label{lstrass}
    Let $X,Y$ be two random vectors. Then $Y$ stochastically dominates $X$ if and only if there exist $X',Y'$ defined on the same probability space such that $X' \stackrel{d}= X$, $Y' \stackrel{d}= Y$, and $\PP(X' \leq Y') = 1$.
\end{lemma}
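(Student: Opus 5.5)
The easy direction is immediate. If $X'\stackrel{d}=X$, $Y'\stackrel{d}=Y$ and $\PP(X'\le Y')=1$, then for every componentwise nondecreasing bounded $f$ we have $f(X')\le f(Y')$ almost surely. Taking expectations gives $\EE f(X)=\EE f(X')\le \EE f(Y')=\EE f(Y)$, which is stochastic dominance. Equivalently, $\PP(X\in U)\le \PP(Y\in U)$ for every upper set $U$.

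For the converse, I would first handle the case that matters in this paper: $X$ and $Y$ are supported on a finite partially ordered set, such as $\{0,1\}^m$ or the finite image of $\{0,1\}^n$ under the exposure mappings. Let $p$ and $q$ be the pmfs of $X$ and $Y$. I would build a bipartite flow network with the following parts:
\begin{itemize}
\item a source $s$ with an edge of capacity $p(x)$ to each left copy of a support point $x$;
\item an edge of infinite capacity from left $x$ to right $y$ whenever $x\le y$;
\item an edge of capacity $q(y)$ from each right $y$ to a sink $t$.
\end{itemize}
A flow of value $1$ gives flows $\gamma(x,y)\ge 0$ on the middle edges. These have marginals $p$ and $q$ and are supported on $\{x\le y\}$, so $\gamma$ is the joint law of the desired pair $(X',Y')$.

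By max-flow/min-cut, it therefore suffices to show that every $s$–$t$ cut has capacity at least $1$. A finite cut cannot contain infinite edges. So if $A$ is the set of left vertices on the source side, then every $y\in{\uparrow}A:=\{y:\ y\ge x \text{ for some } x\in A\}$ must be on the source side too, and the cut pays $q(y)$ for each such $y$. The capacity is therefore at least
\[
p(A^c)+q({\uparrow}A).
\]
Since ${\uparrow}A$ is an upper set containing $A$, dominance gives
\[
q({\uparrow}A)\ \ge\ p({\uparrow}A)\ \ge\ p(A),
\]
so the capacity is at least $p(A^c)+p(A)=1$. This is just Hall's condition in flow form, and it completes the finite case.

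The main obstacle is extending to general random vectors in $\RR^n$. I would discretize: let $\phi_k$ round each coordinate up to a grid of mesh $2^{-k}$ and $\psi_k$ round down, truncating far tails into boundary cells. Dominance is not preserved under same-direction rounding, so I would pair the roundings so that dominance holds between the discretized laws. Rounding $X$ down and $Y$ up works, because both maps are monotone: $\phi_k(Y)\ge Y$ and $\psi_k(X)\le X$, hence $\psi_k(X)$ is dominated by $\phi_k(Y)$. The finite case then yields couplings $\gamma_k$ supported on $\{x\le y\}$ with marginals converging weakly to the laws of $X$ and $Y$. These marginals are tight, so $(\gamma_k)$ is tight. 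By Prokhorov, a subsequence converges to some $\gamma$ with the correct marginals. Since $\{(x,y):x\le y\}$ is closed, the Portmanteau theorem gives $\gamma(x\le y)\ge\limsup_k\gamma_k(x\le y)=1$. For the use in Proposition~\ref{prop:sign preserving}, only the finite case is needed.
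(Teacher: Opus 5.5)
The paper does not prove this lemma. It cites Theorem 6.B.1 of Shaked and Shanthikumar and uses the result only once, in the proof of Proposition~\ref{prop:sign preserving}, to couple the conditional laws $\mu_1,\mu_0$ of $\bW^{(2)}$ given $d_i(\bW^{(2)})=a_1$ and $d_i(\bW^{(2)})=a_0$. Your proposal is correct and replaces the citation with a self-contained argument.

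Your finite case is a complete proof. A value-$1$ flow must saturate every source and sink edge, so it has marginals $p$ and $q$. Your cut bound $p(A^c)+q({\uparrow}A)\ge p(A^c)+p({\uparrow}A)\ge 1$ uses exactly the fact that ${\uparrow}A$ is an upper set, whose indicator is nondecreasing. Since $\mu_1,\mu_0$ live on $\{0,1\}^n$, this finite case is all the paper actually needs. Your route therefore makes the sign-preservation result elementary; the coupling is even the solution of a finite linear program. The citation, by contrast, gives full generality for arbitrary random vectors at no cost.

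In the extension to $\RR^n$, one justification needs adjusting. Once tails are truncated into finitely many cells, a finitely-valued round-up map cannot satisfy $\phi_k(Y)\ge Y$ on the upper tail. So "rounding $X$ down and $Y$ up" cannot by itself carry the dominance.

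The clean fix is to first apply the common coordinatewise clamp $c_M(x)=\min\{\max\{x,-M\},M\}$ to both vectors. This map is monotone, so $c_M(X)$ is still dominated by $c_M(Y)$. Then round $c_M(X)$ down and $c_M(Y)$ up on a grid in $[-M,M]^n$ that contains $\pm M$. Alternatively, adjoin $\pm\infty$ and work in the compact space $[-\infty,\infty]^n$. With $M=M_k\to\infty$, the discretized laws still converge weakly to those of $X$ and $Y$. Your Prokhorov/Portmanteau step then goes through unchanged, since $\{(x,y):x\le y\}$ is closed.
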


\begin{lemma}[Section 6.B.4 of \cite{shaked2007stochastic}]\label{lsd}
  Let $\mu_1,\mu_2$ be two distributions and $X^{(t)}$ is drawn from $\mu_t$ for $t \in \{1,2\}$. Then $\mu_1$ stochastically dominates $\mu_2$ if and only if $\E[\phi(X^{(1)})] \geq \E[\phi(X^{(2)})]$ for all increasing functions $\phi$ for which the expectations exist.
\end{lemma}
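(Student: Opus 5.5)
The plan is to take as the working definition of stochastic dominance the upper-set formulation: \(\mu_1\) stochastically dominates \(\mu_2\) if \(\mu_1(U)\ge\mu_2(U)\) for every upper set \(U\), meaning every measurable \(U\) such that \(x\in U\) and \(x\le x'\) imply \(x'\in U\). With this definition the two directions of the lemma are handled separately. The forward direction goes through Strassen's theorem (Lemma~\ref{lstrass}); the reverse direction goes through indicator functions. In all of our applications the vectors live on finite sets such as \(\{0,1\}^n\) or their images under monotone maps, so measurability and integrability issues are vacuous there. I would nevertheless carry along the proviso ``for which the expectations exist.''

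\emph{Only if.} Suppose \(\mu_1\) dominates \(\mu_2\). I would apply Lemma~\ref{lstrass} with \(X=X^{(2)}\) and \(Y=X^{(1)}\). This gives \(X',Y'\) on a common probability space with \(X'\stackrel{d}{=}X^{(2)}\), \(Y'\stackrel{d}{=}X^{(1)}\), and \(X'\le Y'\) almost surely. For any increasing \(\phi\), we then have \(\phi(X')\le\phi(Y')\) almost surely. Taking expectations, which is legitimate because both expectations exist by assumption and depend only on the marginal laws, gives
\[
\E[\phi(X^{(2)})]=\E[\phi(X')]\le\E[\phi(Y')]=\E[\phi(X^{(1)})].
\]

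\emph{If.} Suppose \(\E[\phi(X^{(1)})]\ge\E[\phi(X^{(2)})]\) for all increasing \(\phi\) with finite expectations. Fix an upper set \(U\) and take \(\phi=\mathbf 1_U\). This function is bounded, so its expectations exist, and it is increasing precisely because \(U\) is an upper set. The hypothesis then yields \(\mu_1(U)\ge\mu_2(U)\). Since \(U\) was arbitrary, \(\mu_1\) dominates \(\mu_2\).

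The only step with real content is the coupling in the forward direction, and it is entirely delegated to Strassen's theorem. That theorem is the genuinely nontrivial ingredient (a max-flow/min-cut or Hahn--Banach argument), and I take it as given from Lemma~\ref{lstrass}. The remaining point to check is bookkeeping: the orientation of the coupling has to match the orientation of the dominance. I will be careful that ``\(Y\) dominates \(X\)'' in Lemma~\ref{lstrass} corresponds to ``\(\mu_1\) dominates \(\mu_2\)'' here, with \(Y\leftrightarrow X^{(1)}\) and \(X\leftrightarrow X^{(2)}\).
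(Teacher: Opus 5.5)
Your argument is correct. The paper gives no proof of this lemma; it only cites Shaked and Shanthikumar, where the upper-set inequality is the definition of the multivariate usual stochastic order, so your working definition matches the source.

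Your ``if'' direction, taking $\phi=\mathbf 1_U$, is the standard one-line argument. It is also the only direction the paper uses: in the proof of Proposition~\ref{prop:sign preserving}, the lemma turns \eqref{eq:monotone-cond-exp} into dominance before Lemma~\ref{lstrass} is applied.

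For the ``only if'' direction you invoke the hard half of Strassen's theorem. This is legitimate inside the paper, because Lemma~\ref{lstrass} is stated as an independent cited result. It is, however, much heavier than needed. It could also become circular, depending on how Strassen's theorem is proved, since some proofs pass through this very functional characterization.

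A direct route avoids this. On a finite set such as $\{0,1\}^n$, let $t_1<\cdots<t_m$ be the distinct values of an increasing $\phi$. Then $\phi=t_1+\sum_{k=2}^m (t_k-t_{k-1})\mathbf 1_{\{\phi\ge t_k\}}$, which is a constant plus a nonnegative combination of indicators of upper sets, so the inequality follows by linearity. In general, apply the layer-cake formula to $\phi^+$ and $\phi^-$: the set $\{\phi>t\}$ is an upper set and $\{\phi<-t\}$ is the complement of one. This gives $\EE[\phi^+(X^{(1)})]\ge\EE[\phi^+(X^{(2)})]$ and $\EE[\phi^-(X^{(1)})]\le\EE[\phi^-(X^{(2)})]$, hence the claim whenever the expectations exist. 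This keeps the lemma self-contained and leaves Strassen's theorem to do only what the paper needs it for, namely producing the monotone coupling.
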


\begin{proof}[Proof of Proposition~\ref{prop:sign preserving}]
    For \(i,s\in[n]\), define the basic pseudo-true switching contrast
    \begin{equation}\label{eq:basic-switching-contrast}
        \Delta_{is}(\pi)
        :=
        \EE_{\bW\sim\RCT(\pi)}
        \bigl[
            \tilde y_i(w_s=1,\bW_{-s};\pi)
            -
            \tilde y_i(w_s=0,\bW_{-s};\pi)
        \bigr].
    \end{equation}
    By \eqref{eq:ADE-def}, \eqref{eq:AIE-def}, and Theorem~\ref{thm:mpe equal direct plus indirect}, it suffices to show that each \(\Delta_{is}(\pi)\) can be written as
    \begin{equation}\label{eq:Delta-sign-representation}
        \Delta_{is}(\pi)
        =
        \sum_{\ell=1}^n\sum_{\bu\in\{0,1\}^{n-1}}
        \lambda_{is\ell}(\bu;\pi)
        \Bigl[
            y_i(w_\ell=1,\bu)-y_i(w_\ell=0,\bu)
        \Bigr],
    \end{equation}
    where \(\lambda_{is\ell}(\bu;\pi)\ge 0\) depends only on the design \(\RCT(\pi)\) and the exposure mappings \(\{d_i\}_{i=1}^n\).

    Fix \(i,s\in[n]\). Let \(\bW^{(1)},\bW^{(2)}\stackrel{\mathrm{ind}}{\sim}\RCT(\pi)\). For \(t\in\{0,1\}\), define
    \[
        a_t := d_i(w_s=t,\bW^{(1)}_{-s}).
    \]
    Since \(d_i(\cdot)\) is componentwise nondecreasing, we have \(a_1\ge a_0\) componentwise.

    Moreover, under the Bernoulli design \(\RCT(\pi)\), every assignment vector in \(\{0,1\}^n\) has strictly positive probability. Therefore, whenever \(a_t\) is realized as above, the event \(\{d_i(\bW^{(2)})=a_t\}\) has positive probability, so the relevant conditional expectations are well defined. Using the definition of \(\tilde y_i\),
    \begin{align}
        \Delta_{is}(\pi)
        &=
        \EE_{\bW^{(1)}}\Bigl[
            \EE_{\bW^{(2)}}\bigl[
                y_i(\bW^{(2)})
                \,\big|\,
                d_i(\bW^{(2)})=a_1
            \bigr]
            -
            \EE_{\bW^{(2)}}\bigl[
                y_i(\bW^{(2)})
                \,\big|\,
                d_i(\bW^{(2)})=a_0
            \bigr]
        \Bigr].
    \end{align}

    We next compare the two conditional laws. Let \(\psi:\{0,1\}^n\to\RR\) be any componentwise nondecreasing function. Since \(\RCT(\pi)\) is a product measure, the law of \(\bW^{(2)}\) is \(\mathrm{MTP}_2\). The map
    \[
        \phi(\bw):=\bigl(\psi(\bw),\,d_i(\bw)\bigr)
    \]
    is componentwise nondecreasing, so Lemma~\ref{lemma:MTP push forward} implies that the joint law of \((\psi(\bW^{(2)}),d_i(\bW^{(2)}))\) is also \(\mathrm{MTP}_2\). Lemma~\ref{lemma:MTP conditional expectation} therefore yields that
    \[
        a\mapsto
        \EE_{\bW^{(2)}}\bigl[
            \psi(\bW^{(2)})
            \,\big|\,
            d_i(\bW^{(2)})=a
        \bigr]
    \]
    is nondecreasing on the support of \(d_i(\bW^{(2)})\). Since \(a_1\ge a_0\), it follows that
    \begin{equation}\label{eq:monotone-cond-exp}
        \EE_{\bW^{(2)}}\bigl[
            \psi(\bW^{(2)})
            \,\big|\,
            d_i(\bW^{(2)})=a_1
        \bigr]
        \ge
        \EE_{\bW^{(2)}}\bigl[
            \psi(\bW^{(2)})
            \,\big|\,
            d_i(\bW^{(2)})=a_0
        \bigr].
    \end{equation}

    Let \(\mu_1\) and \(\mu_0\) denote the conditional distributions of \(\bW^{(2)}\) given \(d_i(\bW^{(2)})=a_1\) and \(d_i(\bW^{(2)})=a_0\), respectively. Since \eqref{eq:monotone-cond-exp} holds for every increasing \(\psi\), Lemma~\ref{lsd} implies that \(\mu_1\) stochastically dominates \(\mu_0\). By Strassen's theorem (Lemma~\ref{lstrass}), there exists a coupling \(\nu_{is,\bW^{(1)}}\) of \((\mu_1,\mu_0)\) such that, if
    \[
        (X,\bar X)\sim \nu_{is,\bW^{(1)}},
    \]
    then \(X\ge \bar X\) componentwise almost surely, and the marginals of \(X\) and \(\bar X\) are \(\mu_1\) and \(\mu_0\), respectively.

    For such a coupled pair \((X,\bar X)\), define the intermediate vectors
    \[
        X^{(0)} := \bar X,
        \qquad
        X^{(\ell)} := (X_1,\ldots,X_\ell,\bar X_{\ell+1},\ldots,\bar X_n),
        \qquad \ell=1,\ldots,n.
    \]
    Then \(X^{(n)}=X\), so a telescoping sum gives
    \begin{equation}\label{eq:telescope-sign}
        y_i(X)-y_i(\bar X)
        =
        \sum_{\ell=1}^n
        \bigl[
            y_i(X^{(\ell)})-y_i(X^{(\ell-1)})
        \bigr].
    \end{equation}
    Now define
    \[
        \eta_\ell(X,\bar X)
        :=
        (X_1,\ldots,X_{\ell-1},\bar X_{\ell+1},\ldots,\bar X_n)
        \in\{0,1\}^{n-1}.
    \]
    Since \(X\ge \bar X\) coordinatewise, each summand in \eqref{eq:telescope-sign} can be written as
    \[
        y_i(X^{(\ell)})-y_i(X^{(\ell-1)})
        =
        1\{X_\ell>\bar X_\ell\}
        \Bigl[
            y_i\bigl(w_\ell=1,\eta_\ell(X,\bar X)\bigr)
            -
            y_i\bigl(w_\ell=0,\eta_\ell(X,\bar X)\bigr)
        \Bigr].
    \]
    Therefore,
    \begin{equation}\label{eq:coupling-telescope}
        y_i(X)-y_i(\bar X)
        =
        \sum_{\ell=1}^n
        1\{X_\ell>\bar X_\ell\}
        \Bigl[
            y_i\bigl(w_\ell=1,\eta_\ell(X,\bar X)\bigr)
            -
            y_i\bigl(w_\ell=0,\eta_\ell(X,\bar X)\bigr)
        \Bigr].
    \end{equation}

    Taking expectation under the coupling \(\nu_{is,\bW^{(1)}}\) and then averaging over \(\bW^{(1)}\), we obtain
    \[
        \Delta_{is}(\pi)
        =
        \sum_{\ell=1}^n\sum_{\bu\in\{0,1\}^{n-1}}
        \lambda_{is\ell}(\bu;\pi)
        \Bigl[
            y_i(w_\ell=1,\bu)-y_i(w_\ell=0,\bu)
        \Bigr],
    \]
    where
    \begin{equation}\label{eq:lambda-sign}
        \lambda_{is\ell}(\bu;\pi)
        :=
        \EE_{\bW^{(1)}}\Bigl[
            \nu_{is,\bW^{(1)}}\bigl(
                X_\ell>\bar X_\ell,\,
                \eta_\ell(X,\bar X)=\bu
            \bigr)
        \Bigr]
        \ge 0.
    \end{equation}
    This proves \eqref{eq:Delta-sign-representation}.

    Finally, the three estimands are nonnegative linear combinations of the \(\Delta_{is}(\pi)\):
    \[
        \tau_{\ADE}(\pi)=\frac{1}{n}\sum_{i=1}^n \Delta_{ii}(\pi),
        \qquad
        \tau_{\AIE}(\pi)=\frac{1}{n}\sum_{i=1}^n\sum_{s\neq i}\Delta_{is}(\pi),
    \]
    and, by Theorem~\ref{thm:mpe equal direct plus indirect},
    \[
        \tau_{\MPE}(\pi)=\tau_{\ADE}(\pi)+\tau_{\AIE}(\pi)
        =\frac{1}{n}\sum_{i=1}^n\sum_{s=1}^n \Delta_{is}(\pi).
    \]
    Collecting the corresponding weights proves the proposition.
\end{proof}

\section{Preliminaries for Section~\ref{sec:local vs global example}}

\subsection{Detailed assumptions}\label{sec:assump}

\begin{assumption}[Regularity of potential outcomes]
\label{assump:regularity}
    Assume that for any $y(\cdot)\in\cY$ and $z(\cdot)\in\cZ$, it holds that for any $w\in\{0,1\},s\in[0,1],p\in\cS$
    \begin{align}
        \abr{y(w,s,p)},\, \abr{\nabla_s y(w,s,p)},\, \abr{\nabla_s^2 y(w,s,p)} &\le B,\\
        \abr{\nabla_p y(w,s,p)},\, \abr{\nabla_s\nabla_p y(w,s,p)},\, \abr{\nabla_s^2\nabla_p y(w,s,p)} &\le B,\\
        \abr{\nabla_p^2 y(w,s,p)},\, \abr{\nabla_s^3\nabla_p y(w,s,p)}, \,  \abr{\nabla_s^3 y(w,s,p)} &\le B,\\
        \abr{z(w,p)},\,\abr{\nabla_p z(w,p)},\, \abr{\nabla_p^2 z(w,p)}, \, \abr{\nabla_p^3 z(w,p)} &\le B.
    \end{align}
\end{assumption}

\begin{assumption}
    We assume Condition~\ref{assump:graphon} and Condition~\ref{assump: augmented trial} to be true.
\end{assumption}

Random graph asymptotics such as Condition~\ref{assump:graphon} have yielded prominent results in statistical network analysis \citep{jin2024mixed,deng2024network,yang2025fundamental} to model pairwise interactions.

\begin{assumption}\label{assump:empirical price}
    The market prices satisfy $P_n=P_n(\bW)$, where $P_n(\bw)$ sets the excess demand to approximately $0$ with high probability in the following sense. There exists a sequence $a_n=o(1/n)$ and constants $b,c_1>0$ such that, for every $\bw\in\{0,1\}^n$ and for $U_i$ drawn iid through the augmented randomization design,
    \begin{equation}
        \cS_{\bw} = \cbr{p\in\RR^J:\nbr{\frac{1}{n}\sum_{i=1}^n z_i\rbr{w_i,p+U_i}} \le a_n }
    \end{equation}
    is non-empty with probability at least $1-e^{-c_1n}$ for all $n$. On the event where this set is non-empty, the market price is in this set, $P_n(\bw)\in\cS_{\bw}$.
\end{assumption}

\citet{munro2021treatment} only require $a_n=o(1/\sqrt{n})$ in their Assumption 2. We slightly strengthen it to $a_n=o(1/n)$. The stronger condition facilitates higher-order expansion for $P_n(\bW)-p_{\pi}^{\ast}$ as in Lemma~\ref{lemma: convergence of price}, and finally enable us to establish the convergence of $\tau_{\AIE}^{\rmL}$ and $\tau_{\MPE}^{\oracle}$. This practice is common in the literature of Z-estimators, like \cite[Sections 5.4 and 6.6]{van2000asymptotic}. Empirically, the stronger condition can also be easily satisfied by decreasing the tolerance parameter for numerical optimization.

\begin{assumption}
    Given any randomization policy $\pi$, there exists a unique population-clearing price $p_{\pi}^{\ast}\in\cS$ such that $\EE\sbr{z_i(W_i,p_{\pi}^\ast)}=0$. Moreover, we let the Jacobian $\xi_z=\nabla_p \EE\sbr{z_i\rbr{W_i,p_{\pi}^\ast}}\in\RR^{J\times J}$ to be full-rank, specifically with $\lambda_{\min}\rbr{\xi_z}>0$.
\end{assumption}


\subsection{Notable lemmas}

This section collects several key lemmas from \cite{li2022random,munro2021treatment}, laying foundations for detailed theoretical analysis of our outcome model~\eqref{eq: outcome model}.

    



\begin{lemma}[Part of the proof of Theorem 4 in \cite{li2022random}]\label{lemma:averaging over neighborhood}
    Suppose $\cbr{(\phi_j,Q_j)\in\RR\times[0,1]}_{j\in[n]}$ is drawn i.i.d from a certain distribution such that the marginal of $Q_j$ is uniform in $[0,1]$ and $\phi_j$ is uniformly bounded. And then an adjacency matrix $\bE=(E_{ij})_{1\le i<j\le n}$ is generated from the graphon model. Then for any $i\in[n]$,
    \begin{equation}
        \EE\rbr{\sum_{j\neq i}\frac{E_{ij}\phi_j}{N_j} - \EE\sbr{\frac{E_{ij}\phi_j}{g_n(Q_j)}\Big|Q_i} }^2 = O\rbr{\frac{1}{n\rho_n}}.
    \end{equation}
\end{lemma}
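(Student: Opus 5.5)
The plan is to compare the random sum with its idealized version in which each degree $N_j$ is replaced by its conditional mean. I read $g_n(q)$ as the expected degree, $g_n(q)=(n-1)\EE_{Q'}[G_n(q,Q')]$, which is of order $n\rho_n$ uniformly in $q$ by Condition~\ref{assump:graphon}. I read the centering term as the total conditional mean $\sum_{j\neq i}\EE[E_{ij}\phi_j/g_n(Q_j)\mid Q_i]$. Write
\[
\sum_{j\neq i}\frac{E_{ij}\phi_j}{N_j}-\sum_{j\neq i}\EE\Big[\frac{E_{ij}\phi_j}{g_n(Q_j)}\Big|Q_i\Big]
= A_1 + A_2,
\]
where
\[
A_1=\sum_{j\neq i}E_{ij}\phi_j\Big(\frac{1}{N_j}-\frac{1}{g_n(Q_j)}\Big),
\qquad
A_2=\sum_{j\neq i}\Big(\frac{E_{ij}\phi_j}{g_n(Q_j)}-\EE\Big[\frac{E_{ij}\phi_j}{g_n(Q_j)}\Big|Q_i\Big]\Big).
\]
I will bound $\EE A_1^2$ and $\EE A_2^2$ separately by $O(1/(n\rho_n))$.

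The term $A_2$ is easy. Conditionally on $Q_i$, the summands are i.i.d. and mean zero, since $(E_{ij},\phi_j,Q_j)$ are independent across $j$ given $Q_i$. Hence
\[
\EE[A_2^2\mid Q_i]\le (n-1)\,\EE\big[E_{ij}\phi_j^2/g_n(Q_j)^2\mid Q_i\big]\lesssim n\cdot\rho_n\cdot (n\rho_n)^{-2}=(n\rho_n)^{-1}.
\]
This uses boundedness of $\phi_j$, $\PP(E_{ij}=1\mid Q_i,Q_j)\le C\rho_n$, and $g_n\gtrsim n\rho_n$.

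For $A_1$, I first remove the dependence between $E_{ij}$ and $N_j$ by setting $N_j^{(-i)}=N_j-E_{ij}$. Conditionally on $Q_j$, $N_j^{(-i)}$ is a sum of independent Bernoullis with mean $g_n(Q_j)+O(1)$. The swap costs at most $\sum_j E_{ij}/N_j^2$, which is $O(1/(n\rho_n))$ in $L^2$ on the good event below. By a Chernoff bound and $\rho_n n\gg \log n$ (from $\kappa<1/2$), the event $\mathcal G=\{N_j^{(-i)}\ge \tfrac12 g_n(Q_j)\ \forall j\}$ has probability $1-e^{-cn\rho_n}$. On $\mathcal G^c$ the whole sum is crudely bounded by a polynomial in $n$ (e.g. by $N_i$, using $N_j\ge 1$ whenever $E_{ij}=1$), so its contribution is $O(n^2e^{-cn\rho_n})=o(1/(n\rho_n))$. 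On $\mathcal G$ I write
\[
\frac1{N_j}-\frac1{g_n}=\frac{g_n-N_j}{N_jg_n},
\]
so $|A_1|\lesssim (n\rho_n)^{-2}\sum_j E_{ij}|N_j^{(-i)}-g_n(Q_j)|$. By Cauchy--Schwarz,
\[
\Big(\sum_j E_{ij}|N_j^{(-i)}-g_n(Q_j)|\Big)^2\le N_i\sum_j E_{ij}\big(N_j^{(-i)}-g_n(Q_j)\big)^2 .
\]
Taking expectations, the key point is that $E_{ij}$ is conditionally independent of $N_j^{(-i)}$ given $(Q_i,Q_j)$, apart from the overlapping edges $E_{jk}$, $E_{ik}$, whose dependence is lower order. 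This gives $\EE[\cdot]\lesssim (n\rho_n)\cdot(n\rho_n)\cdot(n\rho_n)$, and therefore $\EE A_1^2\lesssim (n\rho_n)^{-4}(n\rho_n)^3=(n\rho_n)^{-1}$.

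The main obstacle is making the $A_1$ computation rigorous: the factor $N_i$ and the squared degree deviations are correlated through shared edges, and $1/N_j$ has a heavy lower tail. I would handle this by conditioning on all latent $Q$'s, using fourth-moment bounds for centered Bernoulli sums to control the cross terms, and relying on the exponential truncation on $\mathcal G$ for the tail.
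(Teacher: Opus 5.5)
The paper gives no proof of this lemma; it imports it from the proof of Theorem~4 in \citet{li2022random}. Your argument is a correct, self-contained proof. The idealized-degree term $A_2$ is handled exactly by a conditional variance computation. For $A_1$, your crude bound discards all cancellation across $j$, but it is enough: each $1/N_j-1/g_n(Q_j)$ is of size $(n\rho_n)^{-3/2}$, and there are about $n\rho_n$ nonzero summands. Your reading of the statement ($g_n$ the expected degree, centering summed over $j$) is equivalent to the other natural reading, $g_n(q)=\int G_n(q,y)\,dy$ with no sum. The two differ only by a factor $(n-1)/n$, which shifts the centering by $O(1/n)$. What your route buys over the citation is that it needs nothing beyond a Chernoff bound and conditional binomial moments.

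The step you flag as the main obstacle is in fact routine. Put $D_j:=N_j^{(-i)}-g_n(Q_j)$ and expand $N_i=\sum_{k\neq i}E_{ik}$. For $k\neq j$, the variables $E_{ij}$, $E_{ik}$, $D_j$ are exactly independent given $(Q_i,Q_j,Q_k)$. They are built from disjoint edge uniforms, and $D_j$ sees $Q_k$ only through $E_{jk}$ and does not involve $Q_i$. Since $\EE[D_j^2\mid Q_j,Q_k]\le Cn\rho_n$ and $\EE[G_n(Q_i,Q_j)\mid Q_i]\le C_g\rho_n$,
\[
\EE[E_{ij}E_{ik}D_j^2]\le Cn\rho_n\,\EE\bigl[\EE[G_n(Q_i,Q_j)\mid Q_i]^2\bigr]\le Cn\rho_n^3 .
\]
Summing over $j,k$ gives $\EE[N_i\sum_j E_{ij}D_j^2]=O((n\rho_n)^3)$; the diagonal $k=j$ contributes only $O(n^2\rho_n^2)$. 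This uses only second moments, so no fourth-moment bounds are needed.

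Two bookkeeping remarks:
\begin{enumerate}
\item The union bound over $j$ makes the tail contribution $O(n^3e^{-cn\rho_n})$ rather than $O(n^2e^{-cn\rho_n})$. This is still negligible, since $n\rho_n\gg\log n$ for any $\kappa<1$.
\item The uniform lower bound $g_n\gtrsim n\rho_n$, used in the Chernoff step and for $1/g_n$, needs $G$ bounded (or at least $\EE[\min\{1,\rho_nG(q,Q')\}]\ge c\rho_n$ uniformly in $q$). This is implicit in the setting but not literally stated in Condition~\ref{assump:graphon}, which only bounds $g$. Your upper bounds, by contrast, only need $\EE[E_{ij}\mid Q_i]\le C_g\rho_n$, which follows from $g\le C_g$. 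They do not need the pointwise bound $\PP(E_{ij}=1\mid Q_i,Q_j)\le C\rho_n$ that you state.
\end{enumerate}
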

To present the following lemma, we also let
\begin{equation}
    \tilde{\Xi}_z = \EE\sbr{ \pi \nabla^2 z(1,p_{\ast}^{\ast}) + (1-\pi) \nabla^2 z(0,p_{\ast}^{\ast}) } \in \RR^{J\times J\times J}
\end{equation}
be the Hessian of $z$ at the limiting price $p_{\pi}^{\ast}$. It is a $3$-order tensor since $z$ takes values in $\RR^J$. However, it would be more convenient to identify it as a linear map $\Xi_z:\RR^{J\times J}\to\RR^{J}$, so that it facilitates writing Taylor expansion as below
\begin{equation}\label{eq:Taylor for z up to second order}
    \EE\sbr{z(W,p)} = \xi_z(p-p_{\pi}^{\ast}) + \frac{1}{2} \Xi_z\sbr{\rbr{p-p_{\pi}^{\ast}}\rbr{p-p_{\pi}^{\ast}}^\top} + o\rbr{\nbr{p-p_{\pi}^{\ast}}^2}.
\end{equation}
\begin{lemma}[Convergence of the finite-sample price variables]\label{lemma: convergence of price}
    Under all the assumptions mentioned before, the equilibrium market prices $P_n(\bW)$ satisfies the following: as $n\to\infty$,
    \begin{align}\label{eq:second-order convergence of price}
        & P_n(\bW)-p_{\pi}^{\ast} = -\xi_z^{-1}\bar{Z}_n  -  \cbr{\frac{1}{2}\xi_z^{-1}\Xi_z\sbr{ \rbr{\xi_z^{-1} \bar{Z}_n}\rbr{\xi_z^{-1} \bar{Z}_n}^\top } - \xi_z^{-1} \bar{\epsilon}_n\xi_z^{-1}\bar{Z}_n} + o_p(1/n),
    \end{align}
    where we write $\bar{Z}_n := \frac{1}{n}\sum_{i=1}^n z_i(W_i,p_{\pi}^{\ast}+U_i)$ and $\bar{\epsilon}_n=\frac{1}{n} \sum_{i=1}^n \nabla z_i(W_i,p_{\pi}^{\ast}+U_i)-\xi_z$. Inside the bracket $\cbr{\cdot}$ is the term of order $1/n$.
\end{lemma}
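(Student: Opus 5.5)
The plan is the standard Z-estimator argument: prove consistency first, then expand the empirical clearing equation to second order around $p_\pi^\ast$. Write $\hat\Psi_n(p):=\frac1n\sum_i z_i(W_i,p+U_i)$ and $\Psi(p):=\EE[z_i(W_i,p)]$, so that $\Psi(p_\pi^\ast)=0$ and $\nabla\Psi(p_\pi^\ast)=\xi_z$.

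\textbf{Step 1: consistency.} The summands $z_i(W_i,p+U_i)$ are i.i.d.\ across $i$; by Assumption~\ref{assump:super-population for units}, $\bW\sim\RCT(\pi)$, and the $U_i$ are independent. They are uniformly bounded, with bounded first three $p$-derivatives by Assumption~\ref{assump:regularity}. A uniform law of large numbers over a compact neighborhood (Lipschitz in $p$ plus a finite net) therefore gives $\sup_p\|\hat\Psi_n(p)-\Psi(p)\|\to0$ in probability.

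The perturbation only contributes $\EE[z(W,p+U)]-\Psi(p)=O(h_n^2)$, because $U$ is symmetric and the first-order term cancels. Since $h_n^2=o(n^{-1/2})$ when $\alpha>1/4$, this is negligible at first order.

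By Assumption~\ref{assump:empirical price}, $\|\hat\Psi_n(P_n)\|\le a_n$ with probability $1-e^{-c_1n}$. Uniqueness of $p_\pi^\ast$ and $\lambda_{\min}(\xi_z)>0$ then give $P_n\to p_\pi^\ast$ in the usual way. The CLT gives $\bar Z_n=\hat\Psi_n(p_\pi^\ast)=O_p(n^{-1/2})$.

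\textbf{Step 2: first-order rate.} Taylor-expand $\hat\Psi_n(P_n)$ around $p_\pi^\ast$:
\[
\hat\Psi_n(P_n)=\bar Z_n+(\xi_z+\bar\epsilon_n)\delta+\tfrac12\hat\Xi_n[\delta\delta^\top]+O_p(\|\delta\|^3),\qquad \delta:=P_n-p_\pi^\ast.
\]
Here $\bar\epsilon_n=O_p(n^{-1/2})$ by the CLT applied to the derivatives, and $\hat\Xi_n\to\Xi_z$ by the LLN. Invertibility of $\xi_z$ together with $a_n=o(1/n)$ gives $\delta=-\xi_z^{-1}\bar Z_n+O_p(1/n)$, and in particular $\delta=O_p(n^{-1/2})$.

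\textbf{Step 3: second-order term.} Substitute the first-order expression into the quadratic and cross terms, and solve
\[
\delta=-\xi_z^{-1}\Bigl(\bar Z_n+\bar\epsilon_n\delta+\tfrac12\hat\Xi_n[\delta\delta^\top]\Bigr)+o_p(1/n).
\]
Put $\delta\approx-\xi_z^{-1}\bar Z_n$ inside, and note that $\bar\epsilon_n\delta=-\bar\epsilon_n\xi_z^{-1}\bar Z_n+o_p(1/n)$. This produces exactly the bracketed $O_p(1/n)$ correction in \eqref{eq:second-order convergence of price}.

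The remaining errors are all $o_p(1/n)$:
\begin{itemize}
\item the cubic remainder is $O_p(n^{-3/2})$;
\item $(\hat\Xi_n-\Xi_z)[\delta\delta^\top]=o_p(1)\cdot O_p(1/n)$;
\item $\bar\epsilon_n\cdot O_p(1/n)=O_p(n^{-3/2})$;
\item the tolerance $a_n=o(1/n)$, which is why the strengthened Assumption~\ref{assump:empirical price} is needed.
\end{itemize}
The failure event has probability $e^{-c_1n}$, which does not affect $o_p$ statements.

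\textbf{Main obstacle.} The hardest part is bookkeeping of the perturbation $U_i$, specifically its interaction with the definition of $\Xi_z$. The Hessian should be evaluated at the unperturbed law, while $\bar Z_n$ and $\bar\epsilon_n$ keep $U_i$ inside them, as in the statement.

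A second difficulty is that $\hat\Xi_n$ must converge uniformly along the random path between $p_\pi^\ast$ and $P_n$, not just at $p_\pi^\ast$. I would handle this with a mean-value form using the bounded third derivative $\nabla_p^3z$, giving a $B\|\delta\|^3$ bound. This avoids any uniform convergence of Hessians beyond the pointwise LLN at $p_\pi^\ast$.
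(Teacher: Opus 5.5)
Your proposal is correct and follows essentially the same route as the paper: a second-order Taylor expansion of the averaged clearing condition with the cubic remainder controlled by the bounded $\nabla_p^3 z$, the tolerance $a_n=o(1/n)$ absorbed into the $o_p(1/n)$ error, and the first-order approximation substituted into the quadratic and cross terms. The only differences are that the paper does not re-derive consistency and the first-order rate but imports $P_n(\bW)-p_\pi^\ast=-\xi_z^{-1}\bar Z_n+o_p(n^{-1/2})$ from Lemma 14 of \citet{munro2021treatment}, and that it solves by inverting $\xi_z+\bar\epsilon_n$ with a Neumann expansion, which is algebraically equivalent to your fixed-point substitution.
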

\citet{munro2021treatment} also obtained a very similar result. While assuming a slightly weaker condition, they prove that
\begin{equation}\label{eq:first-order convergence of price}
    P_n(\bW)-p_{\pi}^{\ast} = -\xi_z^{-1}\sbr{\frac{1}{n}\sum_{i=1}^n z_i(W_i,p_{\pi}^{\ast}+U_i)} + o_p(1/\sqrt{n}).
\end{equation}
In comparison, our result~\eqref{eq:second-order convergence of price} is just a higher-order expansion. In proving Theorem~\ref{thm:asymptotics of estimand local AIE}, we need the stronger convergence to control $P_n(\bW^{(2)})-P_n(\bW_{\cN_i}^{(1)},\bW_{-\cN_i}^{(2)})$ sharply. In proving Theorem~\ref{thm:asymptotics of estimand total}, the stronger convergence is also needed to control an error term resulting from the quadratic terms in Taylor expansion.

\begin{proof}
    Firstly, we emphasize that our assumptions are only stronger than those in \cite{munro2021treatment}, so we can directly use their Lemma 14 to establish~\eqref{eq:first-order convergence of price}. To strengthen it to~\eqref{eq:second-order convergence of price}, apply Taylor expansion to every $z_i$ so that
    \begin{align}
        z_i(W_i,p+U_i) &= z_i(W_i,p_{\pi}^{\ast}+U_i) + \nabla z_i(W_i,p_{\pi}^{\ast}+U_i) (p-p_{\pi}^{\ast}) \\
        &\qquad + \frac{1}{2} \nabla^2 z_i(W_i,p_{\pi}^{\ast}+U_i) \sbr{(p-p_{\pi}^{\ast})(p-p_{\pi}^{\ast})^\top} + O\rbr{\nbr{p-p_{\pi}^{\ast}}^3}.
    \end{align}
    This Taylor expansion is possible because we have assumed $\nabla^3 z_i$ to exist and uniformly bounded in Assumption~\ref{assump:regularity}. 
    Specify it to $p=P_n(\bW)$, and average over $i\in[n]$, we learn that
    \begin{align}
        0 &= \frac{1}{n}\sum_{i=1}^n z_i(W_i,p_{\pi}^{\ast}+U_i) + \sbr{\frac{1}{n}\sum_{i=1}^n\nabla z_i(W_i,p_{\pi}^{\ast}+U_i)} \rbr{P_n(\bW)-p_{\pi}^{\ast}} \\
        &\qquad + \sbr{\frac{1}{2n}\sum_{i=1}^n\nabla^2 z_i(W_i,p_{\pi}^{\ast}+U_i)} \sbr{(P_n(\bW)-p_{\pi}^{\ast})(P_n(\bW)-p_{\pi}^{\ast})^\top} + o_p(1/n).\label{eq:averaged Taylor expansion}
    \end{align}
    Here the error term $o_p(1/n)$ comes from both $a_n=o(1/n)$ in Assumption~\ref{assump:empirical price} and the known convergence bound that $P_n(\bW)-p_{\pi}^{\ast}=O_p(1/\sqrt{n})$ as in~\eqref{eq:first-order convergence of price}. In the following, let $\epsilon_i=\nabla z_i(W_i,p_{\pi}^{\ast}+U_i)-\xi_z$.
    Lastly, by law of large numbers and central limit theorems,
    \begin{align}
        \frac{1}{n}\sum_{i=1}^n\nabla z_i(W_i,p_{\pi}^{\ast}+U_i) &= \xi_z +\frac{1}{n} \sum_{i=1}^n \epsilon_i + o_p(1/\sqrt{n}), \\
        \frac{1}{n}\sum_{i=1}^n\nabla^2 z_i(W_i,p_{\pi}^{\ast}+U_i) &= \Xi_z + O_p(1/\sqrt{n}).
    \end{align}
    In combination with $P_n(\bW)-p_{\pi}^{\ast}=-\xi_z^{-1} \bar{Z}_n + o_p(1/\sqrt{n})$, equation~\eqref{eq:averaged Taylor expansion} then becomes
    \begin{align}
        0 = \bar{Z}_n + \sbr{\xi_z +\frac{1}{n} \sum_{i=1}^n \epsilon_i} \rbr{P_n(\bW)-p_{\pi}^{\ast}} + \frac{1}{2}\Xi_z\sbr{ \rbr{\xi_z^{-1} \bar{Z}_n}\rbr{\xi_z^{-1} \bar{Z}_n}^\top } + o_p(1/n). 
    \end{align}
    With $\bar{\epsilon}_n=\frac{1}{n} \sum_{i=1}^n \epsilon_i$, it follows that
    \begin{align}
        P_n(\bW)-p_{\pi}^{\ast} &= -\rbr{\xi_z +\bar{\epsilon}_n}^{-1} \cbr{\bar{Z}_n+\frac{1}{2}\Xi_z\sbr{ \rbr{\xi_z^{-1} \bar{Z}_n}\rbr{\xi_z^{-1} \bar{Z}_n}^\top }} + o_p(1/n) \\
        &=- \xi_z^{-1}\bar{Z}_n + \xi_z^{-1} \bar{\epsilon}_n\xi_z^{-1}\bar{Z}_n -\frac{1}{2}  \xi_z^{-1}\Xi_z\sbr{ \rbr{\xi_z^{-1} \bar{Z}_n}\rbr{\xi_z^{-1} \bar{Z}_n}^\top } + o_p(1/n),
    \end{align}
    which concludes this lemma.
\end{proof}


\section{Proofs about estimands in Section~\ref{sec:local vs global example estimand}}
\label{sec:app-th4.5}
\subsection{Remarks before proofs}
In Section~\ref{sec:local vs global example estimator} of the main text, we have introduced an augmented design (Assumption~\ref{assump: augmented trial}) to facilitate effective estimation of price elasticities. In this augmented design, individualized price perturbation $U_i$ is generated onto every unit.

As long as these perturbations are well-conditioned, i.e.
\begin{equation}\label{eq:well-conditioned price pertubation}
    \EE\sbr{U_i}=0 \text{ and } \nbr{U_i}\le h_n =o(n^{-1/4}) \text{ almost surely},
\end{equation}
they will not change the asymptotics of the estimands at all. For conciseness, we will establish Theorem~\ref{thm: asymptotic limit of estimands} under the assumption that random individualized price perturbations exist.
For better presentation, Theorem~\ref{thm: asymptotic limit of estimands} is split into Theorems~\ref{thm:asymptotics of estimand local AIE}, \ref{thm:asymptotics of estimand global AIE}, \ref{thm:asymptotics of estimand total} and \ref{thm:asymptotics of estimand direct}.

\begin{proof}[Proof of Corollary~\ref{corollary:sign_preserving_local_global}]
    Since
    \[
        d_i^{\rmL}(\bw)=\{w_j:j\in\cN_i\},
    \]
    the local exposure mapping is componentwise nondecreasing in \(\bw\). Therefore, part (a) follows immediately from Proposition~\ref{prop:sign preserving}.

    For part (b), we work under the scalar-price restriction \(J=1\). For any assignment vector \(\bw\in\{0,1\}^n\), define the aggregate excess-demand function
    \[
        F_{\bw}(p):=\frac{1}{n}\sum_{i=1}^n z_i(w_i,p),
        \qquad p\in\RR.
    \]
    By assumption, each \(z_i(w,p)\) is nondecreasing in \(w\) and nonincreasing in \(p\). Hence, for any two assignment vectors \(\bw,\bw'\) with \(\bw'\ge \bw\) componentwise,
    \begin{equation}\label{eq:monotone-Fw}
        F_{\bw'}(p)\ge F_{\bw}(p),
        \qquad \forall p\in\RR.
    \end{equation}
    Moreover, for each fixed \(\bw\), the function \(p\mapsto F_{\bw}(p)\) is nonincreasing, and by assumption it has a unique zero at \(P_n(\bw)\).

    Now fix \(\bw,\bw'\) with \(\bw'\ge \bw\). Evaluating \eqref{eq:monotone-Fw} at \(p=P_n(\bw)\) gives
    \[
        F_{\bw'}\bigl(P_n(\bw)\bigr)
        \ge
        F_{\bw}\bigl(P_n(\bw)\bigr)
        =0.
    \]
    Since \(F_{\bw'}\) is nonincreasing and has a unique zero at \(P_n(\bw')\), the inequality above implies
    \[
        P_n(\bw')\ge P_n(\bw).
    \]
    Therefore the global exposure mapping
    \[
        d_i^{\rmG}(\bw)=P_n(\bw)
    \]
    is componentwise nondecreasing in \(\bw\). Proposition~\ref{prop:sign preserving} then yields sign preservation of \(\tau_{\AIE}^{\rmG}\).
\end{proof}

\subsection{Proof for the local spillover estimand}\label{sec: local spillover estimand proof}

\begin{theorem}\label{thm:asymptotics of estimand local AIE}
    Suppose that Assumptions~\ref{assump:RCT}, \ref{assump:super-population for units} and the assumptions in Section~\ref{sec:assump} all hold. \textcolor{black}{If $1/3<\kappa<1/2$,} then the estimand $\tau^{\rmL}_{\AIE}$ defined in \eqref{eq:local spillover estimand} converges as follows
    \begin{equation}
        \tau_{\AIE}^{\rmL} \pto \tau_{\AIE}^{\rmL,\ast} := \EE\sbr{\pi \nabla_s y_i(1,\pi,p_{\pi}^{\ast})+(1-\pi)\nabla_s y_i(0,\pi,p_{\pi}^{\ast})}.
    \end{equation}
\end{theorem}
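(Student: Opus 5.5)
We rewrite each switching contrast in $\tau_{\AIE}^{\rmL}$ through the local pseudo-true outcome $\tilde y_j^{\rmL}$. Unit $i$ affects $\tilde y_j^{\rmL}$ only if $i\in\cN_j$. Switching $w_i$ then moves $S_j$ by exactly $1/N_j$, and it moves the conditionally averaged price through $P_n(\bw_{\cN_j},\bW^{(2)}_{-\cN_j})$. For $i\in\cN_j$ we therefore decompose
\[
\tilde y_j^{\rmL}(w_i=1,\bW_{-i})-\tilde y_j^{\rmL}(w_i=0,\bW_{-i}) = A_{ij}+B_{ij},
\]
where $A_{ij}$ is the change of $y_j$ in its $s$-argument with the price held at $P_n(\bW_{\cN_j}^{(1)},\bW^{(2)}_{-\cN_j})$, and $B_{ij}$ is the change coming from the price argument. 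Taylor expansion in $s$, using the bounds on $\nabla_s y,\nabla_s^2 y$ in Assumption~\ref{assump:regularity}, gives
\[
A_{ij}=\frac{1}{N_j}\,\EE_{\bW^{(2)}}\nabla_s y_j\bigl(W_j,S_j,P_n(\cdot)\bigr)+O(N_j^{-2}).
\]
Summing over the $N_j$ neighbours $i\in\cN_j$ then yields
\[
\frac1n\sum_j \EE\,\nabla_s y_j(W_j,S_j,P_n)+O\bigl(1/(n\rho_n)\bigr),
\]
where the error uses Condition~\ref{assump:graphon} to ensure $N_j\asymp n\rho_n$ uniformly with high probability.

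\textbf{Local term.} Next we replace the arguments by their limits. Since $S_j-\pi=O_p((n\rho_n)^{-1/2})$ by a Bernoulli variance computation, $P_n-p_\pi^\ast=O_p(n^{-1/2})$ by \eqref{eq:first-order convergence of price} (the $\bW^{(2)}$-substitution of $|\cN_j|$ coordinates changes the price only by $O_p(\sqrt{n\rho_n}/n)$), and $\nabla_s y$ is Lipschitz in $(s,p)$, each summand equals
\[
\EE\,\nabla_s y_j(W_j,\pi,p_\pi^\ast)+o_p(1).
\]
Averaging over $W_j\sim\mathrm{Bern}(\pi)$ and then applying the law of large numbers across the i.i.d.\ units of Assumption~\ref{assump:super-population for units} gives $\tau_{\AIE}^{\rmL,\ast}$. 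Concentration of the empirical average holds because distinct units share only weakly dependent $N_j$ and $P_n$, so the variance is $O(1/n+1/(n\rho_n))$.

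\textbf{Price term (main obstacle).} The hard part is showing $\frac1n\sum_j\sum_{i\in\cN_j}B_{ij}\to0$. Naively, flipping $w_i$ moves the price by $O(1/n)$, so $B_{ij}=O(1/n)$; summing over roughly $n^2\rho_n$ pairs and dividing by $n$ gives $O(\rho_n)\to0$. This is already fine in expectation, but the leading price shift must be controlled uniformly and must not be correlated with the $1/N_j$ structure. The plan is as follows.
\begin{enumerate}
\item Apply the second-order expansion of Lemma~\ref{lemma: convergence of price} to $P_n(\bw_{\cN_j},\bW^{(2)}_{-\cN_j})$ at $w_i=1$ and at $w_i=0$. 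This gives a price difference
\[
-\xi_z^{-1}\,\frac{z_i(1,\cdot)-z_i(0,\cdot)}{n}+o_p(1/n).
\]
The explicit $1/n$-order bracket is needed here because the first-order remainder $o_p(n^{-1/2})$ is too coarse to difference out.
\item Multiply by $\nabla_p y_j$ (bounded by Assumption~\ref{assump:regularity}) and sum. This gives
\[
\frac{1}{n}\sum_j\frac{N_j}{n}\,O(1)=O(\rho_n),
\]
with an $o_p(\rho_n)$ remainder.
\end{enumerate}
Since $\rho_n=cn^{-\kappa}\to0$, the price term vanishes. The restriction $1/3<\kappa<1/2$ enters in two places: the $O(1/(n\rho_n))$ Taylor remainder from the local term, and the requirement that the $o_p(1/n)$ remainder of Lemma~\ref{lemma: convergence of price}, once aggregated over $n^2\rho_n$ pairs, stays negligible together with the quadratic terms in $|\cN_j|$ substituted coordinates. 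I would verify this bookkeeping by bounding second moments, using Lemma~\ref{lemma:averaging over neighborhood} for the neighbourhood averages.
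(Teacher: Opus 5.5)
Your proposal is correct, and it takes a different route from the paper.

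\textbf{The paper's route.} The paper first swaps the conditioned (hybrid) price $P_n(\bW^{(1)}_{\cN_j},\bW^{(2)}_{-\cN_j})$ for the independent-copy price $P_n(\bW^{(2)})$. This makes $f_j(w,s)=\EE_{\bW^{(2)}}y_j(w,s,P_n(\bW^{(2)})+U_j)$ depend on $\bW^{(1)}$ only through $(W_j,S_j)$, so Lemma~\ref{lemma: prop1 li and wager} can be quoted verbatim. It then pays for the swap with a first-order expansion in $p$, using that the two prices differ by $O_p(\sqrt{\rho_n/n})$. That yields an $O_p(\sqrt n\,\rho_n^{3/2})$ error, which is the sole source of the restriction $\kappa>1/3$.

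\textbf{Your route.} You split each contrast exactly into $A_{ij}$ (hybrid price frozen) and $B_{ij}$, and Taylor expand in $s$ pointwise. No independence is needed, and you are effectively re-deriving Li--Wager's Proposition~1 in place. The hybrid price enters only through the Lipschitz replacement $P\to p_\pi^\ast$ inside $\nabla_s y_j$. Since the hybrid assignment is itself $\RCT(\pi)$-distributed, you do not even need your $O_p(\sqrt{n\rho_n}/n)$ substitution bound.

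\textbf{What your route buys.}
\begin{enumerate}
\item Your $B_{ij}$ accounts for the fact that flipping $w_i$ with $i\in\cN_j$ moves the conditioned price. The paper's rewriting \eqref{eq:local spillover estimand another} uses the same price in both terms of the contrast, and so drops this $O(\rho_n)$ contribution without comment.
\item Your bookkeeping needs only $n\rho_n\to\infty$ and $\rho_n\to 0$. Both an $O(1/(n\rho_n))$ Taylor remainder and an $o(\rho_n)$ aggregated price remainder vanish for every $0<\kappa<1$. Your final paragraph therefore misattributes the role of $\kappa>1/3$: your argument never uses it.
\end{enumerate}

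\textbf{What you still owe.} The remaining point is uniformity. Since $|B_{ij}|\le 2B$, you need the event on which a single flip moves the price by more than $C/n$ to have probability $o(1/(n\rho_n))$, uniformly over pairs. One way is to combine the price tail bound of \citet{munro2021treatment}, invoked in the paper's global-spillover proof, with local invertibility of the empirical excess demand near $p_\pi^\ast$. A direct stability argument for $a_n$-approximate zeros of two systems that differ by $O(1/n)$ uniformly in $p$ then gives an $O(1/n+a_n)$ price change, without differencing second-order expansions.
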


To rigorously prove this result, let's recall the following lemma.
\begin{lemma}[Proposition 1 from \cite{li2022random}]\label{lemma: prop1 li and wager}
    For any $f_i:\cbr{0,1}\times[0,1]\to\RR$ such that
    \begin{equation}
        \abr{f_i(w,s)}, \abr{\nabla_s f_i(w,x)}, \abr{\nabla_s^2 f_i(w,x)}, \abr{\nabla_s^3 f_i(w,x)} \le B, \quad \forall i\in[n],
    \end{equation}
    it holds that
    \begin{align}
        &\qquad \frac{1}{n}\sum_{j=1}^n\sum_{i\neq j}\EE_{\bW\sim\RCT(\pi)}\sbr{ f_i\rbr{ W_i, \frac{E_{ij}+\sum_{k\neq i,j} E_{ik}W_k}{N_i}} - f_i\rbr{ W_i, \frac{\sum_{k\neq i,j} E_{ik}W_k}{N_i}} } \\
        &= \frac{1}{n} \sum_{i=1}^n \sbr{\pi \nabla_s f_i(1,\pi) + (1-\pi)\nabla_s f_i(0,\pi) } + O\rbr{\frac{B}{\sqrt{\min_i N_i}}}.
    \end{align}
\end{lemma}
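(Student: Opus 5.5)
The plan is to fix a unit $i$ and note that only neighbors of $i$ contribute to the inner sum. If $E_{ij}=0$, the two arguments of $f_i$ coincide and the difference vanishes. The double sum therefore equals $\frac1n\sum_i \sum_{j\in\cN_i}\EE[\Delta_{ij}]$, where
\[
\Delta_{ij}= f_i\bigl(W_i,(1+M_{i,-j})/N_i\bigr)-f_i\bigl(W_i,M_{i,-j}/N_i\bigr), \qquad M_{i,-j}:=\sum_{k\in\cN_i\setminus\{j\}}W_k .
\]
The network is treated as fixed (we condition on $\bE$), and all randomness comes from $\bW\sim\RCT(\pi)$. Since $E_{ii}=0$, $W_i$ is independent of $M_{i,-j}$ under the Bernoulli design. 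This independence is what will eventually produce the mixture weights $\pi$ and $1-\pi$.

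The first step is a Taylor expansion in $s$ with step size $1/N_i$. Writing $S_{i,-j}=M_{i,-j}/N_i$, we have
\[
\Delta_{ij}=\frac{1}{N_i}\nabla_s f_i(W_i,S_{i,-j}) + R_{ij}, \qquad |R_{ij}|\le \frac{B}{2N_i^2},
\]
using the bound on $\nabla_s^2 f_i$. Summing over the $N_i$ neighbors, the remainders total at most $B/(2N_i)$. What remains is the neighbor-average $\frac{1}{N_i}\sum_{j\in\cN_i}\EE[\nabla_s f_i(W_i,S_{i,-j})]$.

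The second step expands $\nabla_s f_i(W_i,\cdot)$ around $\pi$:
\[
\nabla_s f_i(W_i,S_{i,-j})=\nabla_s f_i(W_i,\pi)+\nabla_s^2 f_i(W_i,\pi)(S_{i,-j}-\pi)+O\bigl(B(S_{i,-j}-\pi)^2\bigr),
\]
with the remainder controlled by the bound on $\nabla_s^3 f_i$. Because $W_i$ is independent of $S_{i,-j}$, the expectation factorizes. The linear term contributes at most $B\,|\EE S_{i,-j}-\pi| = B\pi/N_i$, since $\EE M_{i,-j}=\pi(N_i-1)$. The quadratic term contributes $O(B\,\EE(S_{i,-j}-\pi)^2)=O(B/N_i)$ by the binomial variance. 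For the leading term, independence gives $\EE\,\nabla_s f_i(W_i,\pi)=\pi\nabla_s f_i(1,\pi)+(1-\pi)\nabla_s f_i(0,\pi)$.

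Collecting the pieces, each unit's error is $O(B/N_i)\le O(B/\sqrt{N_i})$ once $N_i\ge1$, and averaging over $i$ gives the stated $O(B/\sqrt{\min_i N_i})$. The convention for isolated units has to be fixed so that they contribute nothing. The claimed rate is then slightly conservative: one could even use the cruder Cauchy–Schwarz bound $\EE|S_{i,-j}-\pi|=O(N_i^{-1/2})$ in place of the bias computation and still obtain it.

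The only delicate point is bookkeeping rather than analysis. The expression in the statement uses $N_i$, not $N_i-1$, as the denominator in both arguments, so the shifted exposure $S_{i,-j}$ is centered at $\pi(N_i-1)/N_i$ rather than exactly at $\pi$. I will check that this discrepancy is absorbed into the $O(1/N_i)$ term, which it is because the relevant derivatives are bounded by $B$.
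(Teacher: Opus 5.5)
Your argument is correct. The paper does not prove this lemma itself; it imports the statement directly as Proposition~1 of \cite{li2022random}. Your two-step Taylor expansion, first with step $1/N_i$ and then of $\nabla_s f_i(W_i,\cdot)$ around $\pi$ using the independence of $W_i$ and $S_{i,-j}$, is therefore a self-contained replacement for that citation, and it is the natural argument behind it.

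Two remarks on what your version buys:
\begin{itemize}
\item Your treatment of the linear term uses the exact bias $\EE S_{i,-j}-\pi=-\pi/N_i$. Together with the bound on $\nabla_s^3 f_i$, this gives the sharper per-unit error $O(B/N_i)$, so the stated $O(B/\sqrt{\min_i N_i})$ follows a fortiori.
\item The cruder route you mention uses $|\nabla_s f_i(W_i,S_{i,-j})-\nabla_s f_i(W_i,\pi)|\le B|S_{i,-j}-\pi|$ with $\EE|S_{i,-j}-\pi|=O(N_i^{-1/2})$. It reproduces exactly the stated rate and needs only the bound on $\nabla_s^2 f_i$.
\end{itemize}

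Finally, you do not need a convention for isolated units. If some $N_i=0$, then $\min_i N_i=0$ and the error bound is vacuous.
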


\begin{proof}[Proof of Theorem~\ref{thm:asymptotics of estimand local AIE}]
In order to effectively use Lemma~\ref{lemma: prop1 li and wager} in our own setup, write
\begin{equation}\label{eq:intermediate f_i local}
    f_i(w,s) := \EE_{\bW^{(2)}}\cbr{ y_i\rbr{ w,s,P_n(\bW^{(2)}) + U_i } }.
\end{equation}
Based on this, we define an intermediate quantity
\begin{equation}\label{eq:intermediate estimand local}
    \tilde{\tau} = \frac{1}{n}\sum_{j=1}^n\sum_{i\neq j}\EE_{\bW\sim\RCT(\pi)}\sbr{ f_i\rbr{W_i,\frac{E_{ij}+\sum_{k\neq i,j} E_{ik}W_k}{N_i}} - f_i\rbr{W_i,\frac{\sum_{k\neq i,j} E_{ik}W_k}{N_i}} }.
\end{equation}
The rest of the proof consists of two steps: (i) showing $\tilde{\tau} = \tau_{\AIE}^{\rmL,\ast}+o_p(1)$; (ii) establishing that $\tau_{\AIE}^{\rmL}$ is asymptotically very close to $\tilde{\tau}$.

\medskip
\textit{Step 1.}
This set of $f_i$'s naturally satisfy the regularity conditions in Lemma~\ref{lemma: prop1 li and wager} under Assumption~\ref{assump:regularity}, so we have
\begin{align}
    \tilde{\tau} &= \frac{1}{n} \sum_{i=1}^n \sbr{\pi \nabla_s f_i(1,\pi) + (1-\pi)\nabla_s f_i(0,\pi) } + O\rbr{\frac{B}{\sqrt{\min_i N_i}}} \\
    &= \frac{1}{n} \sum_{i=1}^n \Bigg[ \pi \EE_{\bW^{(2)}}\cbr{ \nabla_s y_i\rbr{ 1,\pi,P_n(\bW^{(2)})+U_i } } \\
    &\qquad\qquad +(1-\pi)\EE_{\bW^{(2)}}\cbr{ \nabla_s y_i\rbr{ 0,\pi,P_n(\bW^{(2)})+U_i } } \Bigg] + O\rbr{\frac{B}{\sqrt{\min_i N_i}}} \\
    &= \frac{1}{n} \sum_{i=1}^n \sbr{ \pi\nabla_s y_i\rbr{ 1,\pi,p_{\pi}^\ast+U_i} + (1-\pi)\nabla_s y_i\rbr{ 0,\pi,p_{\pi}^\ast+U_i} } \\
    &\qquad\qquad + O\rbr{\nbr{ P_n\rbr{\bW^{(2)}} - p_{\pi}^{\ast} }} + O\rbr{\frac{B}{\sqrt{\min_i N_i}}}.
\end{align}
Since every $\nbr{U_i}\le o(n^{-1/4})$ and $\nabla_s y_i(w,s,p)$ is continuously differentiable in $p$, the effect of $U_i$ is negligible. By Lemma 15 of \cite{li2022random}, $\min_i N_i\to\infty$; by Lemma~\ref{lemma: convergence of price}, $P_n\rbr{\bW^{(2)}}\pto p_{\pi}^\ast$. So we have
\begin{align}
    \tilde{\tau} &= \frac{1}{n} \sum_{i=1}^n \sbr{ \pi\nabla_s y_i\rbr{ 1,\pi,p_{\pi}^\ast} + (1-\pi)\nabla_s y_i\rbr{ 0,\pi,p_{\pi}^\ast} } + o_p(1) \\
    &= \EE\sbr{\pi \nabla_s y(1,\pi,p_{\pi}^{\ast})+(1-\pi)\nabla_s y(0,\pi,p_{\pi}^{\ast})} + o_p(1),
\end{align}
where the last approximation uses law of large numbers driven by independently drawing units from the superpopulation (Assumption~\ref{assump:super-population for units}).

\medskip
\textit{Step 2.}
Plugging \eqref{eq:intermediate f_i local} into the definition of $\tilde{\tau}$ in \eqref{eq:intermediate estimand local}, we have
\begin{align}
    \tilde{\tau} &= \frac{1}{n}\sum_{j=1}^n\sum_{i\neq j} \EE_{\bW^{(1)},\bW^{(2)}}\Bigg\{  y_i\rbr{ W_i^{(1)}, \frac{E_{ij}+\sum_{k\neq i,j}E_{ik}W^{(1)}_k }{N_i}, P_n\rbr{\bW^{(2)}}+U_i }  \\
    & \qquad\qquad\qquad\qquad\qquad\qquad - y_i\rbr{ W_i^{(1)}, \frac{\sum_{k\neq i,j}E_{ik}W^{(1)}_k }{N_i}, P_n\rbr{\bW^{(2)}}+U_i } \Bigg\}. \label{eq:intermediate estimand local another}
\end{align}
Compared to our original definition of the local spillover estimand $\tau^{\rmL}_{\AIE}$ defined from~\eqref{eq:local spillover estimand},
\begin{align}
    \tau^{\rmL}_{\AIE} = \frac{1}{n}\sum_{j=1}^n\sum_{i\neq j} \EE_{\bW^{(1)},\bW^{(2)}} & \Bigg\{  y_i\rbr{ W_i^{(1)}, \frac{E_{ij}+\sum_{k\neq i,j}E_{ik}W^{(1)}_k }{N_i}, P_n\rbr{\bW^{(1)}_{\cN_i},\bW^{(2)}_{-\cN_i}}+U_i }  \\
    & - y_i\rbr{ W_i^{(1)}, \frac{\sum_{k\neq i,j}E_{ik}W^{(1)}_k }{N_i}, P_n\rbr{\bW^{(1)}_{\cN_i},\bW^{(2)}_{-\cN_i}}+U_i } \Bigg\}, \label{eq:local spillover estimand another}
\end{align}
the intermediate quantity $\tilde{\tau}$ mainly differs in that the price $P_n(\bW^{(2)})$ is determined solely by the second set of treatments. 
Since the proportion $\abr{\cN_i}/n\to0$, the price $P_n\rbr{\bW^{(1)}_{\cN_i},\bW^{(2)}_{-\cN_i}}$ should not deviate too much from $P_n(\bW^{(2)})$.

Recall that every $y_i(w,s,p)$ is continuously differentiable in $p$. Note in particular that the summation $\sum_{j=1}^n\sum_{i\neq j}$ appearing in \eqref{eq:intermediate estimand local another} and \eqref{eq:local spillover estimand another} actually only involve around $n^2\rho_n$ effective terms. This is significantly less than the seemingly number $n(n-1)$ because the summand becomes zero as long as $E_{ij}=0$. For any $w=0,1$, let
\begin{align}
    \Delta(w) := \frac{1}{n}\sum_{j=1}^n\sum_{i\in\cN_j} \EE_{\bW^{(1)},\bW^{(2)}} & \Bigg\{  y_i\rbr{ W_i^{(1)}, \frac{E_{ij}w+\sum_{k\neq i,j}E_{ik}W^{(1)}_k }{N_i}, P_n\rbr{\bW^{(1)}_{\cN_i},\bW^{(2)}_{-\cN_i}}+U_i }  \\
    & - y_i\rbr{ W_i^{(1)}, \frac{E_{ij}w+\sum_{k\neq i,j}E_{ik}W^{(1)}_k }{N_i}, P_n\rbr{\bW^{(2)}}+U_i } \Bigg\},
\end{align}
so that $\tau^{\rmL}_{\AIE} - \tilde{\tau} = \Delta(1)-\Delta(0)$. Due to Lemma~\ref{lemma: convergence of price}, we have that
\begin{align}
    P_n\rbr{\bW^{(1)}_{\cN_i},\bW^{(2)}_{-\cN_i}} - P_n\rbr{\bW^{(2)}} &= -\frac{1}{n}\xi_z^{-1}  \sum_{j\in\cN_i} \sbr{ z_{j}\rbr{W_j^{(1)},p_{\pi}^{\ast}} -   z_{j}\rbr{W_j^{(2)},p_{\pi}^{\ast}} }+ O_p(1/n) \\
    &= O_p(\sqrt{\rho_n/n}).
\end{align}
Since every $y_i$ is at least twice continuous differentiable in $p$, we have that
\begin{align}
    &\quad y_i\rbr{ W_i^{(1)}, \frac{E_{ij}w+\sum_{k\neq i,j}E_{ik}W^{(1)}_k }{N_i}, P_n\rbr{\bW^{(1)}_{\cN_i},\bW^{(2)}_{-\cN_i}}+U_i }  \\
    &\quad - y_i\rbr{ W_i^{(1)}, \frac{E_{ij}w+\sum_{k\neq i,j}E_{ik}W^{(1)}_k }{N_i}, P_n\rbr{\bW^{(2)}}+U_i } \\
    &= \zeta(i,j,w) \sbr{ P_n\rbr{\bW^{(1)}_{\cN_i},\bW^{(2)}_{-\cN_i}} - P_n\rbr{\bW^{(2)}} } + O_p(\rho_n/n) \\
    &= -\zeta(i,j,w) \frac{1}{n}\xi_z^{-1}  \sum_{l\in\cN_i} \sbr{ z_{l}\rbr{W_l^{(1)},p_{\pi}^{\ast}} -   z_{l}\rbr{W_l^{(2)},p_{\pi}^{\ast}} } + O_p(1/n),
\end{align}
where we write $\zeta(i,j,w):=\nabla_p y_i\rbr{ W_i^{(1)}, \frac{E_{ij}w+\sum_{k\neq i,j}E_{ik}W^{(1)}_k }{N_i}, P_n\rbr{\bW^{(2)}}+U_i }$ for simplicity. In this way, we would have
\begin{align}
    \Delta(w) &= -\frac{1}{n}\sum_{j=1}^n\sum_{i\in\cN_j} \EE_{\bW}\cbr{  \zeta(i,j,w) \frac{1}{n}\xi_z^{-1}  \sum_{l\in\cN_i} \sbr{ z_{l}\rbr{W_l^{(1)},p_{\pi}^{\ast}} -   z_{l}\rbr{W_l^{(2)},p_{\pi}^{\ast}} } } + O_p(\rho_n).
\end{align}
Using central limit theorem, we have that
\begin{equation}
    \sum_{l\in\cN_i} \sbr{ z_{l}\rbr{W_l^{(1)},p_{\pi}^{\ast}} -   z_{l}\rbr{W_l^{(2)},p_{\pi}^{\ast}} } = O_p(\sqrt{n\rho_n}).
\end{equation}
Therefore, $\abr{\tau^{\rmL}_{\AIE} - \tilde{\tau}}=\abr{\Delta(1)-\Delta(0)}=O_p(\sqrt{n}\rho_n^{3/2})=o_p(1)$ as long as $1/3<\kappa<1/2$.

\medskip
Combined with the previous Step 1, this proof is complete.
\end{proof}

\subsection{Proof for the global spillover estimand}\label{sec: global spillover estimand proof}

To start with, let's formalize the definition in \eqref{eq:global spillover estimand} a bit more. Let $a_n=n^{-\mu}$ be a series of small positive numbers with $0<\mu<1/2$, so that the condition ``$P_n\rbr{\bW^{(2)}} \approx P_n(\bw)$" is quantified to $\nbr{P_n\rbr{\bW^{(2)}} - P_n(\bw)} \le a_n$. Then the definitions become
\begin{align}
    \tilde{y}_i^{\rmG}(\bw;\pi) &= \EE_{\bW^{(2)}\sim\mathrm{RCT}(\pi)}\cbr{ y_i\rbr{ \bW^{(2)} } \Big| \nbr{P_n\rbr{\bW^{(2)}} - P_n(\bw)} \le a_n } \\
    &= \EE_{\bW^{(2)}}\cbr{ y_i\rbr{ w_i,\frac{\sum_{j\neq i}E_{ij}W_j^{(2)}}{\sum_{j\neq i}E_{ij}}, P_n\rbr{\bw}+U_i } \Big| \nbr{P_n\rbr{\bW^{(2)}} - P_n(\bw)} \le a_n },\\
    \tau_{\AIE}^{\rmG} &= \frac{1}{n}\sum_{j=1}^n\sum_{i\neq j}\EE_{\bW\sim\RCT(\pi)}\sbr{ \tilde{ y}_j^{\rmG}\rbr{w_i=1,\bW_{-i};\pi} - \tilde{y}_j^{\rmG}\rbr{w_i=0,\bW_{-i};\pi} }.
\end{align}
Now we proceed to a concise theorem with proof.

\begin{theorem}\label{thm:asymptotics of estimand global AIE}
    Suppose that Assumptions~\ref{assump:RCT}, \ref{assump:super-population for units} and the assumptions in Section~\ref{sec:assump} all hold. 
    The estimand $\tau^{\rmG}_{\AIE}$ defined in \eqref{eq:global spillover estimand} converges as follows
    \begin{equation}
        \tau_{\AIE}^{\rmG} \pto \tau_{\AIE}^{\rmG,\ast} := - \xi_y^{\top}\xi_z^{-1} \EE\sbr{z_i(1,p_{\pi}^{\ast})-z_i(0,p_{\pi}^{\ast})}.
    \end{equation}
\end{theorem}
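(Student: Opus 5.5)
We will prove the convergence of $\tau_{\AIE}^{\rmG}$ in the same two-step way as Theorem~\ref{thm:asymptotics of estimand local AIE}. First we compute the pseudo-true outcome $\tilde y_j^{\rmG}$ to first order. Then we apply the Bernoulli score identity $f(1)-f(0)=\EE[(W-\pi)f(W)]/\{\pi(1-\pi)\}$ from the proof of Theorem~\ref{thm:mpe equal direct plus indirect} to turn the sum of switching contrasts into a covariance with $\sum_{i\neq j}(W_i-\pi)$.

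\textbf{Step 1: first-order form of $\tilde y_j^{\rmG}$.} Conditioning on $\|P_n(\bW^{(2)})-P_n(\bw)\|\le a_n$ constrains only the aggregate $\bar Z_n(\bW^{(2)})$, through Lemma~\ref{lemma: convergence of price}. The local exposure $S_j(\bW^{(2)})=M_j/N_j$ is an average over $N_j\asymp n\rho_n$ coordinates. These coordinates contribute only $O(\rho_n)$ to $\bar Z_n$, so conditioning should shift the law of $S_j(\bW^{(2)})$ by $o(1)$ around $\pi$. We will make this precise with a local CLT or Bayes-ratio argument for the pair $(S_j,\bar Z_n)$. Combining this with the smoothness in Assumption~\ref{assump:regularity}, we aim for
\[
\tilde y_j^{\rmG}(\bw;\pi)=\bar y_j\bigl(w_j,P_n(\bw)+U_j\bigr)+r_j(\bw),\qquad \bar y_j(w,p):=\EE_{S}\,y_j(w,S,p),
\]
where $\bar y_j(w,p)=y_j(w,\pi,p)+O(1/(n\rho_n))$. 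The remainder $r_j$ must be controlled in a sense compatible with Step 2.

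\textbf{Step 2: Taylor expansion in price.} Only the own-treatment argument $w_j$ depends on $w_i$ for $i\neq j$, and it is fixed. So the contrast in $w_i$ acts only through $P_n$:
\[
\tilde y_j^{\rmG}(w_i=1,\cdot)-\tilde y_j^{\rmG}(w_i=0,\cdot)=\nabla_p\bar y_j(W_j,P_n)^\top\bigl[P_n(w_i=1,\cdot)-P_n(w_i=0,\cdot)\bigr]+O(n^{-2})+\text{(diff.\ of }r_j).
\]
By Lemma~\ref{lemma: convergence of price}, the price increment is
\[
-\tfrac1n\xi_z^{-1}\bigl[z_i(1,p^\ast_\pi+U_i)-z_i(0,p^\ast_\pi+U_i)\bigr]+o_p(1/n).
\]
The second-order terms in that lemma contribute $O_p(n^{-3/2})$ per pair, which is why $a_n=o(1/n)$ was imposed. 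Summing over $i\neq j$, dividing by $n$, and replacing $\nabla_p\bar y_j(W_j,P_n)$ by $\nabla_p y_j(W_j,\pi,p_\pi^\ast)$ (justified by consistency of $P_n$) gives
\[
\tau_{\AIE}^{\rmG}\approx-\Bigl[\tfrac1n\textstyle\sum_j \EE\,\nabla_p y_j(W_j,\pi,p^\ast_\pi)\Bigr]^\top\xi_z^{-1}\Bigl[\tfrac1n\textstyle\sum_i\{z_i(1,p_\pi^\ast)-z_i(0,p_\pi^\ast)\}\Bigr].
\]
The $U_i$ perturbations are negligible by \eqref{eq:well-conditioned price pertubation}. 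The law of large numbers over the superpopulation (Assumption~\ref{assump:super-population for units}) then yields $-\xi_y^\top\xi_z^{-1}\EE[z_i(1,p^\ast_\pi)-z_i(0,p^\ast_\pi)]$.

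\textbf{Main obstacle.} The hard part is the remainder $r_j$ from the conditioning in Step 1. It is summed over $n(n-1)$ pairs and multiplied by $1/n$, so each single-coordinate contrast of $r_j$ must be $o(1/n)$. A pointwise bound of $o(1)$ is not enough.

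To handle this, we will route the remainder through the score form rather than bounding it directly. We write
\[
\tfrac1n\sum_j\sum_{i\neq j}\EE[\text{contrast of }r_j]=\tfrac{1}{n\pi(1-\pi)}\sum_j\EE\Bigl[r_j(\bW)\textstyle\sum_{i\neq j}(W_i-\pi)\Bigr].
\]
The key observation is that $r_j$ depends on $\bW$ only through $P_n(\bW)$ and $W_j$. Its dependence on $P_n(\bW)$ is smooth, with derivative $o(1)$: the conditional law of $S_j$ given the price window varies slowly in the price. Since $\sum_i(W_i-\pi)$ is $O_p(\sqrt n)$ and $P_n-p^\ast_\pi$ is $O_p(n^{-1/2})$, the covariance is $o(1)\cdot O(1)$ per $j$. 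After the $1/n$ normalization, the total is therefore $o(1)$.

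The window $a_n=n^{-\mu}$ with $\mu<1/2$ ensures the conditioning event has nonvanishing probability, which keeps these conditional expectations stable.
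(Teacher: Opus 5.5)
\textbf{There is a gap in your control of the remainder.} Your decomposition is the right one: $\bar y_j(w_j,P_n(\bw)+U_j)$ is exactly the paper's intermediate $\check y_j(\bw)$. Your Step~2 essentially re-derives Theorem~4 of \citet{munro2021treatment}, which the paper simply cites for the $\check y$-based estimand $\tilde\tau$; you would still need uniform-integrability bookkeeping to sum the per-pair $o_p(1/n)$ errors over $n^2$ pairs. The gap is in $r_j=\tilde y_j^{\rmG}-\check y_j$. You correctly see that each contrast must be $o(1/n)$. However, your resolution rests on two unproven claims: a local CLT or Bayes-ratio bound for $(S_j,\bar Z_n)$, and smoothness of $p\mapsto r_j(w,p)$ with $o(1)$ derivative. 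The reason you give for them is also not what makes the remainder small.

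First, $P_n(\bW^{(2)})$ takes finitely many values. So $r_j(w,\cdot)$ jumps whenever the window boundary crosses one of them, and it is not differentiable. More importantly, ``neighbors contribute only $O(\rho_n)$ to $\bar Z_n$'' does not imply a flat conditional law. $S_j$ and $\bar Z_n$ are averages of the same i.i.d.\ treatments, with $\mathrm{Cov}(S_j,\bar Z_n)\asymp 1/n\asymp\Var(\bar Z_n)$. A local CLT therefore gives $\EE[S_j\mid\bar Z_n=\zeta]-\pi\approx\beta\zeta$ with $\beta$ of order one. Under genuinely informative conditioning (a window below the $n^{-1/2}$ scale), $\nabla_p r_j$ would then be of order one. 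Your score-form covariance would pick up an order-one contribution from the local channel.

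\textbf{The missing idea is the width of the window}, which you use only to get ``nonvanishing probability.'' Because $a_n=n^{-\mu}$ with $\mu<1/2$ is much wider than the $O_p(n^{-1/2})$ fluctuations of $P_n$, the conditioning event $\mathsf A$ holds, on average over $\bW^{(1)}$, with probability $1-o(1/n)$. Nothing finer is needed. With $\mathsf Y$ the bounded integrand ($|\mathsf Y|\le B$),
\[
\EE[\mathsf Y]-\EE[\mathsf Y\mid\mathsf A]=\PP(\mathsf A^c)\bigl\{\EE[\mathsf Y\mid\mathsf A^c]-\EE[\mathsf Y\mid\mathsf A]\bigr\},
\]
so $|r_j|\le 2B\,\PP_{\bW^{(2)}}(\mathsf A^c)$. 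A triangle inequality at radius $a_n/2$ around $p_\pi^\ast$, together with the tail bound (44) of \citet{munro2021treatment}, gives $\EE_{\bW^{(1)}}\PP_{\bW^{(2)}}(\mathsf A^c)=o(1/n)$ uniformly in $i$ and $w_i$. Summing over the $n(n-1)$ pairs with prefactor $1/n$ yields $\EE|\tau_{\AIE}^{\rmG}-\tilde\tau|=o(1)$, and Markov's inequality finishes. This is the paper's proof; it needs no score form, local CLT, or smoothness.

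Once you have this bound, your score-form route also closes. Since $\EE r_j^2\le 2B\,\EE|r_j|=o(1/n)$, Cauchy--Schwarz against $\sum_{i\neq j}(W_i-\pi)$ gives $o(1)$ per $j$. But at that point the detour adds nothing.
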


\begin{proof}[Proof of Theorem~\ref{thm:asymptotics of estimand global AIE}]
As an intermediate quantity, we define
\begin{align}
    \check{y}_i(\bw)
    &= \EE_{\bW^{(2)}}\cbr{
        y_i\!\left(
            w_i,
            \frac{\sum_{j\neq i}E_{ij}W_j^{(2)}}{\sum_{j\neq i}E_{ij}},
            P_n(\bw)+U_i
        \right)
    }, \\
    \tilde{\tau}
    &= \frac{1}{n}\sum_{j=1}^n\sum_{i\neq j}
    \EE_{\bW\sim\RCT(\pi)}\sbr{
        \check{y}_j(w_i=1,\bW_{-i};\pi)
        -
        \check{y}_j(w_i=0,\bW_{-i};\pi)
    }.
\end{align}
The results of \cite{munro2021treatment} are directly applicable to $\tilde{\tau}$, since $\check{y}_i$ satisfies all the required regularity conditions under Assumption~\ref{assump:regularity}.

Therefore, using Theorem~4 of \cite{munro2021treatment}, we obtain
\begin{align}
    \tilde{\tau}
    \pto
    - \xi_y^{\top}\xi_z^{-1}
    \EE\sbr{z_i(1,p_{\pi}^{\ast})-z_i(0,p_{\pi}^{\ast})},
\end{align}
where this limit uses the definition of $\xi_y$ and the continuous differentiability of $y_i$ with respect to the price argument.

It remains to control the difference between $\tau_{\AIE}^{\rmG}$ and $\tilde{\tau}$. Their only difference is whether we condition on the event that the realized equilibrium price under $\bW^{(2)}$ is close to the equilibrium price induced by $(w_i,\bW^{(1)}_{-i})$. 
Suppose that we fix $i\neq j\in[n]$ and $w_i\in\{0,1\}$ for now. Denote
\begin{align}
    \mathsf{Y}
    &=
    y_j\!\left(
        W_j^{(1)},
        \frac{\sum_{k\neq j}E_{jk}W_k^{(2)}}{\sum_{k\neq j}E_{jk}},
        P_n(w_i;\bW^{(1)}_{-i})+U_j
    \right), \\
    \mathsf{A}
    &=
    \cbr{
        \nbr{
            P_n(\bW^{(2)}) - P_n(w_i;\bW^{(1)}_{-i})
        }
        \le a_n
    },
\end{align}
respectively as the variable to be taken expectation and the event that will be conditioned on. Then
\begin{align}
    &\quad
    \check{y}_j(w_i,\bW^{(1)}_{-i})
    -
    \tilde{y}_j^{\rmG}(w_i,\bW^{(1)}_{-i}) \\
    &=
    \EE_{\bW^{(2)}}[\mathsf{Y}]
    -
    \EE_{\bW^{(2)}}[\mathsf{Y}\mid \mathsf{A}] \\
    &=
    \EE_{\bW^{(2)}}[\mathsf{Y}\mid \mathsf{A}]\,\PP_{\bW^{(2)}}(\mathsf{A})
    +
    \EE_{\bW^{(2)}}[\mathsf{Y}\mid \mathsf{A}^c]\,\PP_{\bW^{(2)}}(\mathsf{A}^c)
    -
    \EE_{\bW^{(2)}}[\mathsf{Y}\mid \mathsf{A}] \\
    &=
    \PP_{\bW^{(2)}}(\mathsf{A}^c)
    \sbr{
        \EE_{\bW^{(2)}}[\mathsf{Y}\mid \mathsf{A}^c]
        -
        \EE_{\bW^{(2)}}[\mathsf{Y}\mid \mathsf{A}]
    }.
\end{align}
Since $|\mathsf{Y}|\le B$ almost surely, it follows that
\begin{equation}
    \abr{
        \check{y}_j(w_i,\bW^{(1)}_{-i})
        -
        \tilde{y}_j^{\rmG}(w_i,\bW^{(1)}_{-i})
    }
    \le
    2B\,
    \PP_{\bW^{(2)}}\rbr{
        \nbr{
            P_n(\bW^{(2)}) - P_n(w_i;\bW^{(1)}_{-i})
        }
        \ge a_n
    }.
\end{equation}
by plugging in the definitions of $\mathsf{Y}$ and $\mathsf{A}$. This inequality holds for any $i\neq j\in[n]$ and $w_i\in\{0,1\}$.
Now let $a_n=n^{-\mu}$ with $0<\mu<1/2$. By the triangle inequality, for every realization of $\bW^{(1)}$,
\begin{align}
    &\PP_{\bW^{(2)}}\rbr{
        \nbr{
            P_n(\bW^{(2)}) - P_n(w_i;\bW^{(1)}_{-i})
        }
        \ge a_n
    } \\
    &\le
    \PP_{\bW^{(2)}}\rbr{
        \nbr{
            P_n(\bW^{(2)}) - p_{\pi}^\ast
        }
        \ge a_n/2
    }
    +
    \one\cbr{
        \nbr{
            P_n(w_i;\bW^{(1)}_{-i}) - p_{\pi}^\ast
        }
        \ge a_n/2
    }.
\end{align}
Taking expectation with respect to $\bW^{(1)}$, we obtain
\begin{align}
    &\EE_{\bW^{(1)}}
    \PP_{\bW^{(2)}}\rbr{
        \nbr{
            P_n(\bW^{(2)}) - P_n(w_i;\bW^{(1)}_{-i})
        }
        \ge a_n
    } \\
    &\le
    \PP\rbr{
        \nbr{
            P_n(\bW^{(2)}) - p_{\pi}^\ast
        }
        \ge a_n/2
    }
    +
    \PP\rbr{
        \nbr{
            P_n(w_i;\bW^{(1)}_{-i}) - p_{\pi}^\ast
        }
        \ge a_n/2
    }.
\end{align}
Since $a_n/2=(1/2)n^{-\mu}$ is of the same order as $n^{-\mu}$, the asymptotic bound~(44) in \cite{munro2021treatment} yields, uniformly in $i$ and $w_i\in\{0,1\}$,
\begin{align}
    \PP\rbr{
        \nbr{
            P_n(\bW^{(2)}) - p_{\pi}^\ast
        }
        \ge a_n/2
    }
    &= o(1/n), \\
    \PP\rbr{
        \nbr{
            P_n(w_i;\bW^{(1)}_{-i}) - p_{\pi}^\ast
        }
        \ge a_n/2
    }
    &= o(1/n).
\end{align}
Therefore,
\begin{align}
    &\quad
    \EE\abr{\tau_{\AIE}^{\rmG}-\tilde{\tau}} \\
    &\le
    \frac{2B}{n}\sum_{j=1}^n\sum_{i\neq j}
    \Bigg[
        \EE_{\bW^{(1)}}
        \PP_{\bW^{(2)}}\rbr{
            \nbr{
                P_n(\bW^{(2)}) - P_n(w_i=1;\bW^{(1)}_{-i})
            }
            \ge a_n
        } \\
    &\qquad\qquad\qquad\qquad\qquad\qquad
        +
        \EE_{\bW^{(1)}}
        \PP_{\bW^{(2)}}\rbr{
            \nbr{
                P_n(\bW^{(2)}) - P_n(w_i=0;\bW^{(1)}_{-i})
            }
            \ge a_n
        }
    \Bigg] \\
    &= o(1).
\end{align}
Hence $\tau_{\AIE}^{\rmG}-\tilde{\tau}=o_p(1)$ by Markov's inequality. Combining this with the limit for $\tilde{\tau}$, we conclude that
\begin{equation}
    \tau_{\AIE}^{\rmG}
    =
    - \xi_y^{\top}\xi_z^{-1}
    \EE\sbr{z_i(1,p_{\pi}^{\ast})-z_i(0,p_{\pi}^{\ast})}
    + o_p(1),
\end{equation}
which finishes the proof.
\end{proof}

\subsection{Proof for the direct/total effect estimand}\label{sec: direct/total estimand proof}

\begin{theorem}\label{thm:asymptotics of estimand total}
    Suppose that Assumptions~\ref{assump:RCT}, \ref{assump:super-population for units} and the assumptions in Section~\ref{sec:assump} all hold.
    The estimand $\tau_{\MPE}^{\oracle}$ defined in \eqref{eq: define total effect}, converges as follows
    \begin{align}
        \tau_{\MPE}^{\oracle} &\pto \tau_{\MPE}^{\oracle,\ast} := \tau_{\ADE}^{\oracle,\ast} + \tau_{\AIE}^{\rmL,\ast} + \tau_{\AIE}^{\rmG,\ast}.
    \end{align}
\end{theorem}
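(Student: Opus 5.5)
Since $\tau_{\MPE}^{\oracle}=\frac{1}{n}\frac{\partial}{\partial\pi}\sum_i\EE_{\bW\sim\RCT(\pi)}[y_i(\bW)]$ is a derivative of a total, I would not attack the double sum over $(i,j)$ directly. Instead I would use the score representation from the proof of Proposition~\ref{prop:functional estimand being Lipschitz}:
\[
\tau_{\MPE}^{\oracle}=\frac{1}{n\pi(1-\pi)}\EE_{\bW}\Bigl[\Bigl(\sum_{k=1}^n(W_k-\pi)\Bigr)\sum_{j=1}^n y_j\bigl(W_j,S_j(\bW),P_n(\bW)+U_j\bigr)\Bigr].
\]
The plan is to Taylor-expand each $y_j$ jointly in $(s,p)$ around the benchmark point $(\pi,p_\pi^\ast)$:
\[
y_j=y_j(W_j,\pi,p^\ast_\pi)+\nabla_s y_j\,(S_j-\pi)+\nabla_p y_j^\top(P_n-p^\ast_\pi)+R_j .
\]
Here $R_j$ collects the terms $\nabla_s^2$, $\nabla_p^2$ and the cross term $\nabla_s\nabla_p$. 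Assumption~\ref{assump:regularity} makes these bounded. I would then compute the covariance of each piece with the score $T_n:=\sum_k(W_k-\pi)$, which is of order $\sqrt n$.

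\textbf{The three main terms.} The zeroth-order term gives $\frac{1}{n\pi(1-\pi)}\sum_j\EE[(W_j-\pi)y_j(W_j,\pi,p^\ast_\pi)]$. This is the average of $y_j(1,\pi,p_\pi^\ast)-y_j(0,\pi,p_\pi^\ast)$, which by the LLN (Assumption~\ref{assump:super-population for units}) tends to $\tau_{\ADE}^{\oracle,\ast}$.

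The $s$-term is $\frac{1}{n\pi(1-\pi)}\sum_j\EE[T_n\nabla_s y_j(W_j,\pi,p^\ast_\pi)(S_j-\pi)]$. Only the indices $k\in\cN_j$ correlate with $S_j-\pi$, and each contributes $\pi(1-\pi)/N_j$. This produces the average of $\EE[\nabla_s y_j(W_j,\pi,p^\ast)]$, i.e.\ $\tau_{\AIE}^{\rmL,\ast}$. This matches Lemma~\ref{lemma: prop1 li and wager}, whose error is $O(N^{-1/2})$.

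The $p$-term uses Lemma~\ref{lemma: convergence of price} to replace $P_n-p^\ast_\pi$ by $-\xi_z^{-1}\bar Z_n$. Since $\EE[T_n\bar Z_n]\approx\pi(1-\pi)\EE[z(1,p^\ast)-z(0,p^\ast)]$ and $\frac1n\sum_j\nabla_p y_j\to\xi_y$, this yields $\tau_{\AIE}^{\rmG,\ast}$. This step reuses the argument behind Theorem~\ref{thm:asymptotics of estimand global AIE} and Theorem~4 of \citet{munro2021treatment}.

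\textbf{Negligibility of the remainders.} The obstacle is that $T_n$ is of order $\sqrt n$ and there are $n$ summands, so each remainder must satisfy $\frac{1}{n}\EE[T_n\sum_jR_j]=o(1)$. A crude Cauchy–Schwarz bound is not enough.

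The pure second-order price term is handled first. One writes $(P_n-p^\ast)^2=O_p(1/n)$, so that $\sum_j\nabla_p^2y_j\,(P_n-p^\ast)^2/n$ is $O_p(1/n)$, and its covariance with $T_n$ is $O(n^{-1/2})$. To make this rigorous one needs the $o_p(1/n)$ expansion in Lemma~\ref{lemma: convergence of price}, which is exactly why $a_n=o(1/n)$ was imposed.

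The quadratic $s$-term behaves like $(S_j-\pi)^2\approx\pi(1-\pi)/N_j$ plus a fluctuation. Its covariance with $T_n$ is of order $\sum_{k\in\cN_j}\EE[(W_k-\pi)(S_j-\pi)^2]/N_j$, which scales as $1/N_j\to0$.

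\textbf{The main obstacle.} I expect the cross term
\[
\nabla_s\nabla_p y_j\,(S_j-\pi)(P_n-p^\ast)
\]
to be hardest, together with the fact that the first-order coefficients are random ($\nabla_p y_j$ evaluated at $W_j$ is correlated with $T_n$). The approach I would try first is to substitute $P_n-p^\ast\approx-\xi_z^{-1}\bar Z_n$. I would then compute $\EE[T_n(S_j-\pi)\bar Z_n]$ by a cumulant expansion over the independent $W_k$.

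The leading pairing matches $T_n$ with $S_j$ and $\bar Z_n$ with its own mean. This gives $O(1)\cdot\EE[\bar Z_n]=o(1)$, because $\EE\bar Z_n$ vanishes to first order at $p^\ast_\pi$. The mixed third cumulants over $k\in\cN_j$ contribute $O(1/n)$ per $j$. The remaining centered product is controlled by the asymptotic uncorrelatedness of the local and global directions, which is the intuition stated after Theorem~\ref{thm: asymptotic limit of estimands}.

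Finally, the third-order remainders are bounded by $\EE|T_n|\cdot\max_j\EE|S_j-\pi|^3=O(\sqrt n\,N^{-3/2})$. Under Condition~\ref{assump:graphon}, $N\asymp n^{1-\kappa}$, and this bound is $o(1)$ for $\kappa<1/2$ after dividing by $n$ and multiplying by the $n$ summands.
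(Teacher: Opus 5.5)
Your proposal is correct and follows essentially the paper's route. The shared steps are the score representation, an expansion about $(\pi,p_\pi^\ast)$ whose three linear pieces give $\tau_{\ADE}^{\oracle,\ast}$, $\tau_{\AIE}^{\rmL,\ast}$ and $\tau_{\AIE}^{\rmG,\ast}$, the pure price term killed via $\EE[\|P_n(\bW)-p_\pi^\ast\|^2\,|T_n|]=O(n^{-1/2})$, and the cross term handled by substituting $-\xi_z^{-1}\bar Z_n$ and pairing indices, which is exactly the paper's split of $\bar Z_n$ into neighbor and non-neighbor parts and of $T_n$ into own, neighbor and remaining indices.

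The ``remaining centered product'' needs no appeal to asymptotic uncorrelatedness, since those pieces vanish exactly because $S_j-\pi$ is mean-zero and independent of the other factors. More simply, a crude H\"older bound using $\EE T_n^2=n\pi(1-\pi)$, $\EE(S_j-\pi)^4=O(N_j^{-2})$ and $\EE\|P_n(\bW)-p_\pi^\ast\|^4=O(n^{-2})$ already makes the cross, $s^2$ and price-quadratic remainders $O((n\rho_n)^{-1/2})$, $O(n^{\kappa-1/2})$ and $O(n^{-1/2})$. So your claim that Cauchy--Schwarz is insufficient is too pessimistic, and neither the cumulant expansion nor a third-order expansion is needed.
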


\begin{proof}
    Since every $W_k$ would be independent under RCT (Assumption~\ref{assump:RCT}), it follows that
    \begin{align}
        \tau_{\MPE}^{\oracle} &= \frac{1}{\pi(1-\pi)} \EE_{\pi}\sbr{\frac{1}{n}\sum_{i=1}^nY_i(\bW)\times\sum_{k=1}^n\rbr{W_k-\pi}} \\
        &= \frac{1}{\pi(1-\pi)} \EE_{\pi}\sbr{\frac{1}{n}\sum_{i=1}^ny_i(W_i,S_i,P_n(\bW)+U_i)\times\sum_{k=1}^n\rbr{W_k-\pi}}, \label{eq:tau_TOT decomp}
    \end{align}
    where we have plugged in our model specification~\eqref{eq: outcome model}.

    \vspace*{2mm}
    \noindent\textit{Step 1. Expansion with respect to the market prices.} In this step, we will (i) expand $y_i$ with respect to the market prices $P_n(\bW)$ and only keep the linear terms; 
    (ii) replace $P_n(\bW)-p_{\pi}^{\ast}$ by its major term $-\xi_z^{-1}\bar{Z}_n$ as suggested in Lemma~\ref{lemma: convergence of price}.

    To do (i), we use the regularity of each $y_i$ in Assumption~\ref{assump:regularity}. With some $\tilde{P}_i$ being an interpolation between $p_{\pi}^{\ast}$ and $P_n(\bW)$ for each $i\in[n]$, the expansion takes the form of
    \begin{align}
        &\quad y_i(W_i,S_i,P_n(\bW)+U_i) \\
        &= y_i(W_i,S_i,p_{\pi}^{\ast}+U_i) + (P_n(\bW)-p_{\pi}^{\ast})^{\top}\nabla_p y_i(W_i,S_i,p_{\pi}^{\ast}+U_i) \\
        &\qquad +\frac{1}{2} (P_n(\bW)-p_{\pi}^{\ast})^{\top}\nabla_p^2 y_i(W_i,S_i,\tilde{P}_i+U_i)(P_n(\bW)-p_{\pi}^{\ast}). \label{eq:expansion w.r.t. price}
    \end{align}
    Now going back to~\eqref{eq:tau_TOT decomp}, the last quadratic term in~\eqref{eq:expansion w.r.t. price} appears in $\tau_{\MPE}^{\oracle}$ as an error term as follows
    \begin{align}
        &\quad \Delta\\ 
        &= \EE_{\pi}\sbr{ \frac{1}{n}\sum_{i=1}^n (P_n(\bW)-p_{\pi}^{\ast})^{\top}\nabla_p^2 y_i(W_i,S_i,\tilde{P}_i+U_i)(P_n(\bW)-p_{\pi}^{\ast}) \times \sum_{k=1}^n\rbr{W_k-\pi} } \\
        &= \EE_{\pi}\cbr{ (P_n(\bW)-p_{\pi}^{\ast})^{\top} \sbr{ \frac{1}{n}\sum_{i=1}^n \nabla_p^2 y_i(W_i,S_i,\tilde{P}_i+U_i) } (P_n(\bW)-p_{\pi}^{\ast}) \times \sum_{k=1}^n\rbr{W_k-\pi} } \label{eq:quadratic error term in MPE for price}
    \end{align}
    To proceed, we can use such a trivial bound
    \begin{equation}
        \abr{ \frac{1}{n}\sum_{i=1}^n \nabla_p^2 y_i(W_i,S_i,\tilde{P}_i+U_i) } \le B
    \end{equation}
    for the Hessians, and a joint central limit theorem that
\begin{align}
    \begin{bmatrix}
        \sqrt{n}(P_n(\bW)-p_{\pi}^{\ast}) \\
        \sum_{k=1}^n\rbr{W_k-\pi} / \sqrt{n}
    \end{bmatrix}
    &= \frac{1}{\sqrt{n}} \sum_{k=1}^n \begin{bmatrix}
        -\xi_{z}^{-1} z_k(W_k,p_{\pi}^{\ast}+U_k) \\
        W_k - \pi
    \end{bmatrix}
    + O_p(1/\sqrt{n})
    \Rightarrow
    \begin{bmatrix}
        \mathsf{Z} \\
        \mathsf{W}
    \end{bmatrix}.
\end{align}
    where $(\mathsf{Z},\mathsf{W})$ denotes a mean-zero joint normal distribution with a non-zero covariance. In the first equality, we have used the approximation in Lemma~\ref{lemma: convergence of price}. In this way, we can derive a valid upper bound on $\Delta$ in~\eqref{eq:quadratic error term in MPE for price} by
    \begin{align}
        \EE\abr{\Delta} &\le B \EE\sbr{ \nbr{\sqrt{n}(P_n(\bW)-p_{\pi}^{\ast}) }^2 \abr{\frac{1}{n}\sum_{k=1}^n\rbr{W_k-\pi}}} \\
        &\le \frac{B}{\sqrt{n}} \EE\sbr{\mathsf{Z}^2|\mathsf{W}|} + o(1) = o(1).
    \end{align}
    Therefore, we can asymptotically ignore the last term in~\eqref{eq:expansion w.r.t. price}, when we use~\eqref{eq:tau_TOT decomp}. To ease notations, we write
    \begin{equation}
        \bar{y}_i(W_i,S_i,U_i) := y_i(W_i,S_i,p_{\pi}^{\ast}+U_i) + (P_n(\bW)-p_{\pi}^{\ast})^{\top}\nabla_p y_i(W_i,S_i,p_{\pi}^{\ast}+U_i),
    \end{equation}
    which is the linear component in \eqref{eq:expansion w.r.t. price}, so that
    \begin{equation}
        \tau_{\MPE}^{\oracle} = \frac{1}{\pi(1-\pi)} \EE_{\pi}\sbr{\frac{1}{n}\sum_{i=1}^n\bar{y}_i(W_i,S_i,U_i)\times\sum_{k=1}^n\rbr{W_k-\pi}} + o_p(1).
    \end{equation}
    \begin{remark}
        We have noticed a small gap in \cite{munro2021treatment}. In the last part of proving their Lemma 16, the authors mistakenly used $\nbr{P_n(\bW)-p_{\pi}^{\ast}}^2=o_p(1/n)$ to control the same error term $\Delta$ in \eqref{eq:quadratic error term in MPE for price}.

        To overcome this issue, we (a) make stronger regularity assumptions so that a higher-order expansion of $P_n(\bW)-p_{\pi}^{\ast}$ can be established in Lemma~\ref{lemma: convergence of price}; (b) effectively use the fact that $\sum_{k=1}^n\rbr{W_k-\pi} / \sqrt{n}$ converges to a normal distribution so that $\sum_{k=1}^n\rbr{W_k-\pi}$ is of order $\sqrt{n}$ asymptotically.
    \end{remark}

    Then we continue to conduct procedure (ii), which further simplifies $\bar{y}_i$ with the help of Lemma~\ref{lemma: convergence of price}. Let
    \begin{align}
        \check{y}_i(W_i,S_i,U_i) :&= y_i(W_i,S_i,p_{\pi}^{\ast}+U_i) \\
        &\quad-\sbr{\frac{1}{n}\sum_{i=1}^n z_i(W_i,p_{\pi}^{\ast}+U_i)}^{\top}\xi_z^{-\top}\nabla_p y_i(W_i,S_i,p_{\pi}^{\ast}+U_i). \label{eq:y first order expansion to price}
    \end{align}
Since $\abr{\nabla_p y_i}\le B$ and Lemma~\ref{lemma: convergence of price} yields a second-order expansion remainder of order $n^{-1}$, there exists a universal constant $C>0$ such that
\[
\EE \abr{\bar{y}_i(W_i,S_i,U_i)-\check{y}_i(W_i,S_i,U_i)}^2 \le \frac{C}{n^2}.
\]
In particular,
\[
\bar{y}_i(W_i,S_i,U_i)-\check{y}_i(W_i,S_i,U_i)=O_p(1/n).
\]
    Then by Cauchy-Schwarz inequality,
    \begin{align}
        &\quad \EE\abr{ \frac{1}{n}\EE_{\pi}\sbr{\sum_{i=1}^n\rbr{\bar{y}_i(W_i,S_i,U_i)-\check{y}_i(W_i,S_i,U_i)}\times\sum_{k=1}^n\rbr{W_k-\pi}} }\\
        &\le \frac{1}{n} \sum_{i=1}^n \EE\abr{ \rbr{\bar{y}_i(W_i,S_i,U_i)-\check{y}_i(W_i,S_i,U_i)}\times\sum_{k=1}^n\rbr{W_k-\pi} } \\
        &\le \frac{1}{n} \sum_{i=1}^n \sqrt{\EE \abr{\bar{y}_i(W_i,S_i,U_i)-\check{y}_i(W_i,S_i,U_i)}^2 } \times \sqrt{\EE \sbr{\sum_{k=1}^n\rbr{W_k-\pi}}^2} \le \frac{C}{\sqrt{n}}.
    \end{align}
    Therefore,
    \begin{equation}\label{eq: MPE after expansion w.r.t. price}
        \tau_{\MPE}^{\oracle} = \frac{1}{\pi(1-\pi)} \EE_{\pi}\sbr{\frac{1}{n}\sum_{i=1}^n\check{y}_i(W_i,S_i,U_i)\times\sum_{k=1}^n\rbr{W_k-\pi}} + o_p(1).
    \end{equation}

\vspace*{2mm}
\noindent\textit{Step 2. Expansion with respect to local interactions.}
Now we expand $\check{y}_i$ with respect to the local interaction argument $S_i$ around $s=\pi$:
\begin{align}
    &\quad \check{y}_i(W_i,S_i,U_i)
    \\
    &= \check{y}_i(W_i,\pi,U_i)
    + \rbr{S_i-\pi}\nabla_s \check{y}_i(W_i,\pi,U_i)
    + \frac{1}{2}\rbr{S_i-\pi}^2\nabla_s^2 \check{y}_i(W_i,\tilde{S}_i,U_i), \\
    &= y_i(W_i,\pi,p_{\pi}^{\ast}+U_i)
    + \rbr{S_i-\pi}\nabla_s y_i(W_i,\pi,p_{\pi}^{\ast}+U_i) \\
    &\quad -\bar{Z}_n^{\top} \xi_z^{-\top}\nabla_p y_i(W_i,\pi,p_{\pi}^{\ast}+U_i) - (S_i-\pi) \bar{Z}_n^{\top} \xi_z^{-\top}\nabla_s \nabla_p y_i(W_i,\pi,p_{\pi}^{\ast}+U_i) \\
    &\quad + \frac{1}{2}\rbr{S_i-\pi}^2\nabla_s^2 y_i(W_i,\tilde{S}_i,p_{\pi}^{\ast}+U_i) \\
    &\quad - \frac{1}{2} \rbr{S_i-\pi}^2\bar{Z}_n^{\top}\xi_z^{-\top}\nabla_s^2 \nabla_p y_i(W_i,\tilde{S}_i,p_{\pi}^{\ast}+U_i) := \sum_{\iota=1}^6 \mathcal{T}_i^{(\iota)}
    \label{eq:expansion w.r.t. proportion}
\end{align}
where $\tilde{S}_i$ lies between $S_i$ and $\pi$. There are $6$ terms in the last expression~\eqref{eq:expansion w.r.t. proportion}; and we will refer to them by $\mathcal{T}^{(1)}_i,\ldots,\mathcal{T}^{(6)}_i$ for simplicity. Our estimand is thus denoted as
\begin{equation}
    \tau_{\MPE}^{\oracle} = \frac{1}{\pi(1-\pi)} \EE_{\pi}\sbr{\frac{1}{n}\sum_{i=1}^n \sum_{\iota=1}^6  \mathcal{T}^{(\iota)}_i \times\sum_{k=1}^n\rbr{W_k-\pi}} + o_p(1).
\end{equation}

\vspace*{2mm}
\noindent\textit{Step 3. Cancellation of the last three terms.} Via discussing each term case by case, we find that the higher-order derivatives $\mathcal{T}^{(4)}_i$ to $\mathcal{T}^{(6)}_i$ are only bringing vanishing minor terms. We slightly adjust the ordering of dealing the last three terms, based on the complicacy.
\begin{description}
    \item[(v)] Since $\mathcal{T}^{(5)}_i$ actually only depends on the its neighboring treatment $\cbr{W_j:j\in \cN_i}$, the expectation $\EE_{\pi}$ greatly simplifies the summation:
    \begin{align}
        &\quad \EE_{\pi}\sbr{\frac{1}{n}\sum_{i=1}^n  \mathcal{T}^{(5)}_i \times\sum_{k=1}^n\rbr{W_k-\pi}} \\
        &= \frac{1}{2n} \sum_{i=1}^n \EE_{\pi}\sbr{ \rbr{S_i-\pi}^2\nabla_s^2 y_i(W_i,\tilde{S}_i,p_{\pi}^{\ast}+U_i) \times (M_i-\pi N_i)} \label{eq:MPE estimand term 5}
    \end{align}
    Specifically, the expectation of the second-order term can be controlled as follows:
    \begin{align}
        &\quad \abr{\EE_{\pi}\sbr{\rbr{S_i-\pi}^2\nabla_s^2 y_i(W_i,\tilde{S}_i,p_{\pi}^{\ast}+U_i) \rbr{M_i-\pi N_i}}} \\
        &\le \frac{B}{N_i^2}\EE_{\pi}\abr{M_i-\pi N_i}^3
        \le \frac{C}{\sqrt{N_i}},
    \end{align}
    for some constant $C>0$, where we used $S_i-\pi=(M_i-\pi N_i)/N_i$ and the standard third-moment bound 
    $
    \EE_\pi |M_i-\pi N_i|^3 = O(N_i^{3/2})
    $ for sums of centered independent Bernoulli random variables. Consequently, averaging over $i\in[n]$ yields
    \begin{align}
        &\quad \frac{1}{n}\sum_{i=1}^n
        \abr{\EE_{\pi}\sbr{\rbr{S_i-\pi}^2\nabla_s^2 y_i(W_i,\tilde{S}_i,U_i)\rbr{M_i-\pi N_i}}}
        \\
        &\le \cO\rbr{\frac{1}{\sqrt{\min_i N_i}}}
        = \cO\rbr{\frac{1}{\sqrt{n\rho_n}}}
        = o(1).
    \end{align}
    Consequently, the expression in~\eqref{eq:MPE estimand term 5} is vanishing.
    
    \item[(iv)] To deal with the other two terms, it is essential to utilize the following decomposition
    \begin{equation}\label{eq:Zbar neighbor split}
        \bar{Z}_n=\frac{1}{n}\sum_{\ell\in\cN_i\cup\{i\}} z_\ell(W_\ell,p_{\pi}^{\ast}+U_\ell)+\frac{1}{n}\sum_{\ell\notin\cN_i\cup\{i\}} z_\ell(W_\ell,p_{\pi}^{\ast}+U_\ell) := \bar{Z}_{n,\cN_i} + \bar{Z}_{n,\cN_i^c}.
    \end{equation}
    Therefore, we proceed to further split $\mathcal{T}^{(4)}_i$ into
    \begin{align}
        \mathcal{T}^{(4)}_i &= -(S_i-\pi) \; \bar{Z}_{n,\cN_i}^{\top} \; \xi_z^{-\top}\nabla_s \nabla_p y_i(W_i,\pi,p_{\pi}^{\ast}+U_i) \\
        &\quad -(S_i-\pi) \; \bar{Z}_{n,\cN_i^c}^{\top} \; \xi_z^{-\top}\nabla_s \nabla_p y_i(W_i,\pi,p_{\pi}^{\ast}+U_i) := \mathcal{T}^{(41)}_i + \mathcal{T}^{(42)}_i.
    \end{align}
    Note that $\mathcal{T}^{(41)}_i$ now only depends on the neighboring treatments as well; so it holds that
    \begin{align}
        &\quad \EE_{\pi}\sbr{\frac{1}{n}\sum_{i=1}^n \mathcal{T}_i^{(41)} \times\sum_{k=1}^n\rbr{W_k-\pi}} \\
        &= -\EE_{\pi}\sbr{\frac{1}{n}\sum_{i=1}^n \rbr{W_i-\pi}\rbr{S_i-\pi}\bar{Z}_{n,\cN_i}^{\top}\xi_z^{-\top}\nabla_s\nabla_p y_i(W_i,\pi,p_{\pi}^{\ast}+U_i)} \\
        &\quad -\EE_{\pi}\sbr{\frac{1}{n}\sum_{i=1}^n \frac{\rbr{M_i-N_i\pi}^2}{N_i}\bar{Z}_{n,\cN_i}^{\top}\xi_z^{-\top}\nabla_s\nabla_p y_i(W_i,\pi,p_{\pi}^{\ast}+U_i)}.
    \end{align}
    Furthermore, using the fact that $\nbr{\bar{Z}_{n,\cN_i}}\le C\frac{\abr{\cN_i}}{n}$, it follows by a straightforward bound that
    \begin{equation}
        \abr{\EE_{\pi}\sbr{\frac{1}{n}\sum_{i=1}^n \mathcal{T}_i^{(41)} \times\sum_{k=1}^n\rbr{W_k-\pi}}} \le \frac{C}{n}\sum_{i=1}^n \frac{\abr{\cN_i}}{n} = O_p(\rho_n) =o_p(1),
    \end{equation}
    where we have used a trivial fact that $\EE_{\pi}(M_i-N_i\pi)^2=\pi(1-\pi)N_i$.
    We shall focus primarily on $\mathcal{T}^{(42)}_i$ next. As a starting point, using the decomposition that
    \begin{equation}
        \sum_{k=1}^n (W_k-\pi) = (W_i-\pi) + (M_i-N_i\pi) + \sum_{k\notin \cN_i\cup\{i\}}(W_k-\pi),
    \end{equation}
    we can reorganize the summation involving $\mathcal{T}_i^{(42)}$ as follows
        \begin{align}
            &\quad \EE_{\pi}\sbr{\frac{1}{n}\sum_{i=1}^n \mathcal{T}_i^{(42)} \times\sum_{k=1}^n\rbr{W_k-\pi}} \\
            &= -\EE_{\pi}\sbr{\frac{1}{n}\sum_{i=1}^n \rbr{W_i-\pi}\rbr{S_i-\pi}\bar{Z}_{n,\cN_i^c}^{\top}\xi_z^{-\top}\nabla_s\nabla_p y_i(W_i,\pi,p_{\pi}^{\ast}+U_i)} \\
            &\quad -\EE_{\pi}\sbr{\frac{1}{n}\sum_{i=1}^n \frac{\rbr{M_i-N_i\pi}^2}{N_i}\bar{Z}_{n,\cN_i^c}^{\top}\xi_z^{-\top}\nabla_s\nabla_p y_i(W_i,\pi,p_{\pi}^{\ast}+U_i)} \\
            &\quad -\EE_{\pi}\sbr{\frac{1}{n}\sum_{i=1}^n \sum_{k\notin \cN_i\cup\{i\}}(W_k-\pi)(S_i-\pi)\bar{Z}_{n,\cN_i^c}^{\top}\xi_z^{-\top}\nabla_s\nabla_p y_i(W_i,\pi,p_{\pi}^{\ast}+U_i)} \label{eq:MPE estimand term 4}
        \end{align}
        After careful inspection, every term here is actually (asymptotically) vanishing.
        \begin{itemize}
            \item The first term in~\eqref{eq:MPE estimand term 4} is exactly zero since the factor $S_i-\pi$ is independent of every other factor, so
            \begin{equation}
                \EE_{\pi}\sbr{ \rbr{W_i-\pi}\rbr{S_i-\pi}\bar{Z}_{n,\cN_i^c}^{\top}\xi_z^{-\top}\nabla_s\nabla_p y_i(W_i,\pi,p_{\pi}^{\ast}+U_i)} = 0.
            \end{equation}

            \item For the second term, we used again that $\frac{(M_i-N_i\pi)^2}{N_i}$ is independent of everything else, so it holds that
            \begin{align}
                &\quad -\EE_{\pi}\sbr{\frac{1}{n}\sum_{i=1}^n \frac{\rbr{M_i-N_i\pi}^2}{N_i}\bar{Z}_{n,\cN_i^c}^{\top}\xi_z^{-\top}\nabla_s\nabla_p y_i(W_i,\pi,p_{\pi}^{\ast}+U_i)} \\
                &= -\frac{1}{n} \sum_{i=1}^n \EE_{\pi}\rbr{ \frac{\rbr{M_i-N_i\pi}^2}{N_i}} \EE_{\pi}\cbr{\bar{Z}_{n,\cN_i^c}^{\top}\xi_z^{-\top}\nabla_s\nabla_p y_i(W_i,\pi,p_{\pi}^{\ast}+U_i)} \\
                &= -\frac{\pi(1-\pi)}{n} \sum_{i=1}^n \EE_{\pi}\cbr{\bar{Z}_{n,\cN_i^c}^{\top}\xi_z^{-\top}\nabla_s\nabla_p y_i(W_i,\pi,p_{\pi}^{\ast}+U_i)}.
            \end{align}
            Since $\abr{\cN_i}/n = O_p(\rho_n)\to0$, each complement $\cN_i^c$ retains all but a vanishing fraction of the units. The summands $z_\ell(W_\ell,p_{\pi}^{\ast}+U_\ell)$ are uniformly bounded by Assumption~\ref{assump:regularity}, and they are centered up to a negligible perturbation bias: since $\EE\sbr{z_\ell(W_\ell,p_{\pi}^{\ast})}=0$ and $\EE[U_\ell]=0$, a second-order expansion gives $\EE\sbr{z_\ell(W_\ell,p_{\pi}^{\ast}+U_\ell)}=O(h_n^2)=o(n^{-1/2})$ because $\alpha>1/4$. Hoeffding's inequality together with a union bound over $i\in[n]$ therefore yields
            \begin{equation}
                \max_{i\in[n]}\nbr{\bar{Z}_{n,\cN_i^c}} = O_p\rbr{\sqrt{\frac{\log n}{n}}} \pto 0,
            \end{equation}
            so the expression above is $o_p(1)$.

            \item The third term is exactly zero as well, again due to the fact that $S_i-\pi$ is independent of everything else in every summand, and of mean zero.
        \end{itemize}
        By combining these discussions, we would have
        \begin{equation}
            \EE_{\pi}\sbr{\frac{1}{n}\sum_{i=1}^n \mathcal{T}_i^{(4)} \times\sum_{k=1}^n\rbr{W_k-\pi}} = o_p(1).
        \end{equation}

        \item[(vi)] For the last term $\cT_i^{(6)}$, we need to reuse the split of $\bar{Z}_n$ from~\eqref{eq:Zbar neighbor split} to obtain
        \begin{align}
            \mathcal{T}^{(6)}_i &= -\frac{1}{2}(S_i-\pi)^2 \; \bar{Z}_{n,\cN_i}^{\top} \; \xi_z^{-\top}\nabla_s^2 \nabla_p y_i(W_i,\pi,p_{\pi}^{\ast}+U_i) \\
            &\quad -\frac{1}{2}(S_i-\pi)^2 \; \bar{Z}_{n,\cN_i^c}^{\top} \; \xi_z^{-\top}\nabla_s^2 \nabla_p y_i(W_i,\pi,p_{\pi}^{\ast}+U_i) := \mathcal{T}^{(61)}_i + \mathcal{T}^{(62)}_i.
        \end{align}
        We can use the same argument in \textbf{(v)} for $\mathcal{T}^{(61)}_i$; and the same argument of $\mathcal{T}^{(42)}_i$ for $\mathcal{T}^{(62)}_i$. So we omit the detailed derivation here. We will also have $\EE_{\pi}\sbr{\frac{1}{n}\sum_{i=1}^n \mathcal{T}_i^{(6)} \times\sum_{k=1}^n\rbr{W_k-\pi}} = o_p(1)$.

\end{description}


\vspace*{2mm}
\noindent\textit{Step 4. Non-vanishing terms from~\eqref{eq:expansion w.r.t. proportion}}
Departing from~\eqref{eq: MPE after expansion w.r.t. price}, it now suffices to work on $\cT_i^{(1)}$-$\cT_i^{(3)}$ to conclude the whole lemma. Each of them actually leads to a non-vanishing interpretable estimand.
    \begin{description}
        \item[(i)] Using standard law of large numbers, we have that
        \begin{equation}
            \EE_{\pi}\sbr{\frac{1}{n\pi(1-\pi)}\sum_{i=1}^n \cT_i^{(1)}\times\sum_{k=1}^n\rbr{W_k-\pi}} \pto \EE\sbr{y_i(1,\pi,p_{\pi}^{\ast})-y_i(0,\pi,p_{\pi}^{\ast})}.
        \end{equation}

        \item[(ii)] The sum involving $\cT_i^{(2)}$ can be rephrased as
        \begin{align}
            &\quad \EE_{\pi}\sbr{\frac{1}{n\pi(1-\pi)}\sum_{i=1}^n \cT_i^{(2)}\times\sum_{k=1}^n\rbr{W_k-\pi}}
        \end{align}
        is in fact the source of local spillover effect. The effect of global market interference \spillovertype{MAR} is ruled out already. So we can use \citet[Proposition 1]{li2022random} to deduce that this term converges in probability to $\EE\sbr{\pi \nabla_s y_i(1,\pi,p_{\pi}^{\ast})+(1-\pi)\nabla_s y_i(0,\pi,p_{\pi}^{\ast})}$.

        \item[(iii)] The sum involving $\cT_i^{(3)}$ takes the form as
        \begin{align}
            &\quad \EE_{\pi}\sbr{\frac{1}{n\pi(1-\pi)}\sum_{i=1}^n \cT_i^{(3)}\times\sum_{k=1}^n\rbr{W_k-\pi}} \\
            &= -\frac{1}{n\pi(1-\pi)} \EE_{\pi}\cbr{ \sbr{\sum_{k=1}^n\rbr{W_k-\pi}} \sbr{ \sum_{i=1}^n z_i(W_i,p_{\pi}^{\ast}+U_i) }^{\top}\xi_z^{-\top} \sbr{\frac{1}{n}\sum_{i=1}^n \nabla_p y_i(W_i,\pi,p_{\pi}^{\ast}+U_i) } }.
        \end{align}
        Using law of large numbers, the empirical average
        \begin{equation}
            \frac{1}{n}\sum_{i=1}^n \nabla_p y_i(W_i,\pi,p_{\pi}^{\ast}+U_i) \to \xi_y \text{ a.s.}
        \end{equation}
        Then using continuous mapping theorem, the limit of this quantity is identical to that of
        \begin{align}
            &\quad -\frac{1}{n\pi(1-\pi)} \EE_{\pi}\cbr{ \sbr{\sum_{k=1}^n\rbr{W_k-\pi}} \sbr{ \sum_{i=1}^n z_i(W_i,p_{\pi}^{\ast}+U_i) }^{\top}\xi_z^{-\top} \xi_y } \\
            &= -\frac{1}{n} \EE_{\pi}\cbr{ \sbr{ \sum_{i=1}^n z_i(1,p_{\pi}^{\ast}+U_i) - z_i(0,p_{\pi}^{\ast}+U_i) }^{\top}\xi_z^{-\top} \xi_y }.
        \end{align}
        The final limit is then $-\xi_y^{\top}\xi_z^{-1} \EE\sbr{z_i(1,p_{\pi}^{\ast})-z_i(0,p_{\pi}^{\ast})}$.
        
    \end{description}
    After all these derivations, we can finally conclude the asymptotic limit for $\tau_{\MPE}^{\oracle}$.
\end{proof}

\begin{theorem}\label{thm:asymptotics of estimand direct}
    Suppose that Assumptions~\ref{assump:RCT}, \ref{assump:super-population for units} and the assumptions in Section~\ref{sec:assump} all hold.
    The estimand  $\tau_{\ADE}^{\oracle}$ defined in \eqref{eq: define direct effect} converges as follows
    \begin{align}
        \tau_{\ADE}^{\oracle} &\pto \tau_{\ADE}^{\oracle,\ast} := \EE\sbr{y_i(1,\pi,p_{\pi}^{\ast})-y_i(0,\pi,p_{\pi}^{\ast})}.
    \end{align}
\end{theorem}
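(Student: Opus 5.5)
I would prove this along the lines of the proof of Theorem~\ref{thm:asymptotics of estimand total}, keeping only the own-unit score $W_i-\pi$ instead of $\sum_k(W_k-\pi)$. By the Bernoulli identity from the proof of Proposition~\ref{prop:functional estimand being Lipschitz},
\[
\tau_{\ADE}^{\oracle}=\frac{1}{n\pi(1-\pi)}\sum_{i=1}^n\EE_\pi\bigl[(W_i-\pi)\,y_i(W_i,S_i,P_n(\bW)+U_i)\bigr].
\]
Because the score is $O(1)$ rather than $O(\sqrt n)$, every price-related remainder is much easier to control than in the total-effect proof.

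\textbf{Step 1 (removing the price).} I would Taylor expand $y_i$ in $p$ around $p_\pi^\ast$, as in \eqref{eq:expansion w.r.t. price}. The quadratic remainder is bounded by $B\|P_n(\bW)-p_\pi^\ast\|^2=O_p(1/n)$ by Lemma~\ref{lemma: convergence of price}. After multiplying by the bounded factor $(W_i-\pi)$ and averaging, it vanishes.

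The linear term is $(P_n(\bW)-p_\pi^\ast)^\top\nabla_p y_i(W_i,S_i,p_\pi^\ast+U_i)$. I would not bound it by its crude $O_p(n^{-1/2})$ size. Instead, I would use $P_n-p_\pi^\ast=-\xi_z^{-1}\bar Z_n+O_p(1/n)$ and split $\bar Z_n=\bar Z_{n,\cN_i}+\bar Z_{n,\cN_i^c}$ as in \eqref{eq:Zbar neighbor split}. The neighbor part, including the own unit $i$, has norm $O(|\cN_i|/n)=O_p(\rho_n)$, so its contribution is $o_p(1)$. This is where unit $i$'s own effect on the price, of size $O(1/n)$, also disappears.

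For the complement part, $\bar Z_{n,\cN_i^c}$ is independent of $(W_i,S_i)$. Its expectation is $O(h_n^2)$, and the Hoeffding bound $\max_i\|\bar Z_{n,\cN_i^c}\|=O_p(\sqrt{\log n/n})$ from Step~3(iv) of the previous proof shows that this term is $o_p(1)$ after averaging. In fact it is exactly mean zero up to the $O(h_n^2)$ bias, after conditioning on $(W_i,\bW_{\cN_i})$. The perturbations $U_i$ are removed by continuity, since $\|U_i\|\le h_n\to0$.

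\textbf{Step 2 (removing local exposure).} We are left with
\[
\frac{1}{n\pi(1-\pi)}\sum_i\EE_\pi[(W_i-\pi)\,y_i(W_i,S_i,p_\pi^\ast)].
\]
Since $S_i$ depends only on $\{W_j:j\in\cN_i\}$ and $i\notin\cN_i$, $S_i$ is independent of $W_i$. The summand therefore equals $\EE_\pi[y_i(1,S_i,p_\pi^\ast)-y_i(0,S_i,p_\pi^\ast)]$. Expanding in $s$ around $\pi$, the linear term has mean zero and the quadratic term is $O(B\,\EE(S_i-\pi)^2)=O(1/N_i)$. The average over $i$ is thus $\frac1n\sum_i[y_i(1,\pi,p_\pi^\ast)-y_i(0,\pi,p_\pi^\ast)]+O(1/\min_iN_i)$, and $\min_iN_i\to\infty$ by Lemma~15 of \citet{li2022random}.

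\textbf{Step 3 (law of large numbers).} The law of large numbers over the superpopulation draws (Assumption~\ref{assump:super-population for units}) then gives the limit $\EE[y_i(1,\pi,p_\pi^\ast)-y_i(0,\pi,p_\pi^\ast)]$.

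The main obstacle is the linear price term in Step~1. The naive bound $\|P_n-p_\pi^\ast\|=O_p(n^{-1/2})$ already suffices there: it is multiplied by the bounded factor $(W_i-\pi)\nabla_p y_i$ and averaged, so the term is $O_p(n^{-1/2})$. The care needed is only in passing from convergence in probability to convergence of expectations. For that I would use uniform integrability, justified by boundedness of $y_i$ (Assumption~\ref{assump:regularity}) together with the exponential tail in Assumption~\ref{assump:empirical price}.
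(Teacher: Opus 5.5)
Your proposal is correct and follows essentially the paper's route: expand in the price, dispose of the linear term with the crude bound $\EE_\pi\|P_n(\bW)-p_\pi^\ast\|\to 0$, then remove the local exposure and the perturbations $U_i$, and conclude by the law of large numbers. As you note yourself at the end, the crude bound suffices because the score $W_i-\pi$ is bounded, so the neighbor/complement split of $\bar Z_n$ in your Step~1 is unnecessary; the only substantive difference is that the paper cites Proposition~1 of \citet{li2022random} for the local-exposure step, whereas you verify it directly from the independence of $W_i$ and $S_i$ plus a second-order expansion in $s$, a valid self-contained replacement (and your boundedness-based uniform-integrability remark fills a step the paper glosses over).
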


\begin{proof}
    Based on the last proof, this one only gets easier. Recall that this estimand can be transformed into
    \begin{align}
        \tau_{\ADE}^{\oracle} &= \frac{1}{\pi(1-\pi)} \EE_{\pi}\sbr{\frac{1}{n}\sum_{i=1}^nY_i\rbr{W_i-\pi}} \\
        &= \frac{1}{\pi(1-\pi)} \EE_{\pi}\sbr{\frac{1}{n}\sum_{i=1}^ny_i(W_i,S_i,P_n(\bW)+U_i)\rbr{W_i-\pi}},
    \end{align}
    Expanding $y_i$ linearly on the argument $p$ to find
    \begin{align}
        &\quad y_i(W_i,S_i,P_n(\bW)+U_i) \\
        &= y_i(W_i,S_i,p_{\pi}^{\ast}+U_i) + (P_n(\bW)-p_{\pi}^{\ast})^{\top}\nabla_p y_i(W_i,S_i,\tilde{P}_i+U_i).
    \end{align}
    Via Cauchy-Schwarz inequality,
    \begin{align}
        &\quad \EE_{\pi}\abr{ \frac{1}{n}\sum_{i=1}^n (P_n(\bW)-p_{\pi}^{\ast})^{\top}\nabla_p y_i(W_i,S_i,\tilde{P}_i+U_i) \rbr{W_i-\pi} } \\
        &\le 2\EE_{\pi} \nbr{P_n(\bW)-p_{\pi}^{\ast}} = o_p(1). 
    \end{align}
    Therefore,
    \begin{equation}
        \tau_{\ADE}^{\oracle} = \frac{1}{\pi(1-\pi)} \EE_{\pi}\sbr{\frac{1}{n}\sum_{i=1}^ny_i(W_i,S_i,p_{\pi}^{\ast}+U_i)\rbr{W_i-\pi}} + o_p(1).
    \end{equation}
    Now since the market interference \spillovertype{MAR} has been taken out, we can use Proposition 1 of \cite{li2022random} to deduce that
    \begin{equation}
        \tau_{\ADE}^{\oracle} = \frac{1}{n} \sum_{i=1}^n \sbr{y_i(1,\pi,p_{\pi}^{\ast}+U_i)-y_i(0,\pi,p_{\pi}^{\ast}+U_i)} + o_p(1).
    \end{equation}
    A first-order Taylor expansion around $p_\pi^\ast$ gives
    \begin{align}
        &\quad
        y_i(1,\pi,p_\pi^\ast+U_i)-y_i(0,\pi,p_\pi^\ast+U_i)
        - y_i(1,\pi,p_\pi^\ast)+y_i(0,\pi,p_\pi^\ast) \\
        &=
        \sbr{\nabla_p y_i(1,\pi,p_\pi^\ast)-\nabla_p y_i(0,\pi,p_\pi^\ast)}^\top U_i
        + R_i,
    \end{align}
    where $\abr{R_i}\le B\|U_i\|^2\le B h_n^2$. Since $\nabla_p y_i(1,\pi,p_\pi^\ast)-\nabla_p y_i(0,\pi,p_\pi^\ast)$ is uniformly bounded across $i$, and $U_i$ is independent with zero mean,
    \begin{align}
        &\quad \abr{ \tau_{\ADE}^{\oracle}-\frac{1}{n}\sum_{i=1}^n \sbr{y_i(1,\pi,p_\pi^\ast)-y_i(0,\pi,p_\pi^\ast)} } \\
        &\le \abr{ 
        \frac{1}{n}\sum_{i=1}^n
        \sbr{\nabla_p y_i(1,\pi,p_\pi^\ast)-\nabla_p y_i(0,\pi,p_\pi^\ast)}^\top U_i
        } + \frac{1}{n}\sum_{i=1}^n \abr{R_i} \\
        &= O_p(h_n/\sqrt n) + O_p(h_n^2)
        = o_p(1/\sqrt n).
    \end{align}
    Lastly, by law of large numbers, we have $\frac{1}{n}\sum_{i=1}^n \sbr{y_i(1,\pi,p_\pi^\ast)-y_i(0,\pi,p_\pi^\ast)} = \tau_{\ADE}^{\oracle,\ast}+o_p(1)$ thus concluding the proof.
\end{proof}

\section{Proofs about estimators in Section~\ref{sec:local vs global example estimator}}

\subsection{Estimation of direct effects}
\begin{proof}[Proof of Theorem~\ref{thm:direct effect estimator}]
    By plugging our generative model assumption into our Horvitz Thompson estimator, it becomes
    \begin{align}
         \hat{\tau}_{\ADE}^{\oracle} = \frac{1}{n}\sum_{i=1}^n \rbr{\frac{W_i}{\pi}-\frac{1-W_i}{1-\pi}} y_i\rbr{W_i, S_i, P_n(\bW)+U_i}.
    \end{align}
    To start, we expand $y_i$'s over the price argument,
    \begin{align}
        &\quad y_i(W_i,S_i,P_n(\bW)+U_i) \\
            &= y_i(W_i,S_i,p_{\pi}^{\ast}+U_i) + \rbr{ P_n(\bW)-p_{\pi}^{\ast} }^{\top}\nabla_p y_i(W_i,S_i,p_{\pi}^{\ast}+U_i) \\
            &\qquad +\frac{1}{2} (P_n(\bW)-p_{\pi}^{\ast})^{\top}\nabla_p^2 y_i(W_i,S_i,\tilde{P}_i+U_i)(P_n(\bW)-p_{\pi}^{\ast}). 
    \end{align}
    Then our estimator can be decomposed into several separate terms,
    \begin{align}
        \hat{\tau}_{\ADE}^{\oracle} &=\sbr{ \frac{1}{n}\sum_{i=1}^n \rbr{\frac{W_i}{\pi}-\frac{1-W_i}{1-\pi}} y_i\rbr{W_i, S_i, p_\pi^\ast+U_i} } \\
        &\quad +\rbr{ P_n(\bW)-p_{\pi}^{\ast} }^{\top} \sbr{\frac{1}{n}\sum_{i=1}^n \rbr{\frac{W_i}{\pi}-\frac{1-W_i}{1-\pi}} \nabla_p y_i(W_i,S_i,p_{\pi}^{\ast}+U_i) } \\
        &\quad + \frac{1}{2} \rbr{ P_n(\bW)-p_{\pi}^{\ast} }^{\top} \sbr{\frac{1}{n}\sum_{i=1}^n \rbr{\frac{W_i}{\pi}-\frac{1-W_i}{1-\pi}} \nabla_p^2 y_i(W_i,S_i,\tilde{p}_i+U_i) } \rbr{ P_n(\bW)-p_{\pi}^{\ast} } \\
        &= \mathcal{A}+\mathcal{B}+\mathcal{C}. \label{eq: direct eff estimator expansion over price}
    \end{align}
    As consistent to our theorem statement, we define two set of random variables:
    \begin{align}
        \mathcal{V}_{i}^{(1)} &= y_i(1,\pi,p_\pi^\ast)-y_i(0,\pi,p_\pi^\ast), \\
        \mathcal{V}_{i}^{(2)} &= -\nabla_p\sbr{\EE y(1,\pi,p_\pi^\ast)-\EE y(0,\pi,p_\pi^\ast)}^\top \xi_z^{-1} \sbr{\pi z_i(1,p_\pi^\ast)+(1-\pi)z_i(0,p_\pi^\ast)}
    \end{align}
    and also
    \begin{align}
        V^{(1)}_i &= \frac{y_i(1,\pi,p_\pi^\ast)}{\pi}+\frac{y_i(0,\pi,p_\pi^\ast)}{1-\pi}, \\
        V^{(2)}_i &= \EE_{Q_j,y_j}\sbr{\frac{G(Q_i,Q_j)\sbr{\nabla_s y_j(1,\pi,p_\pi^\ast) - \nabla_s y_j(0,\pi,p_\pi^\ast)}}{g(Q_j)}\bigg|Q_i}, \\
        V^{(3)}_i &= -\nabla_p\sbr{\EE y(1,\pi,p_\pi^\ast)-\EE y(0,\pi,p_\pi^\ast)}^\top \xi_z^{-1} \sbr{z_i(1,p_\pi^\ast)-z_i(0,p_\pi^\ast)}.
    \end{align}
    Subsequently, we will take separate steps to show that
    \begin{align}
        \mathcal{A} &= \frac{1}{n}\sum_{i=1}^n \mathcal{V}_{i}^{(1)} + \frac{1}{n} \sum_{i=1}^n \sbr{ (W_i-\pi)\rbr{ V^{(1)}_i + V^{(2)}_i } } + o_p(1/\sqrt{n}), \label{eq:ADE estimator term A}\\
        \mathcal{B} &= \frac{1}{n}\sum_{i=1}^n \mathcal{V}_{i}^{(2)} + \frac{1}{n} \sum_{i=1}^n \sbr{ (W_i-\pi) V^{(3)}_i } + o_p(1/\sqrt{n}), \label{eq:ADE estimator term B}\\
        \mathcal{C} &= o_p(1/\sqrt{n}). \label{eq:ADE estimator term C}
    \end{align}

    \vspace{2mm}
    \noindent\textit{Step 1.} Directly apply Theorem 4 in \cite{li2022random} to the term $\mathcal{A}$ to obtain that
    \begin{align}
        \mathcal{A} &= \frac{1}{n}\sum_{i=1}^n \mathcal{\bar{V}}_{i}^{(1)} + \frac{1}{n} \sum_{i=1}^n \sbr{ (W_i-\pi)\rbr{ \bar{V}^{(1)}_i + \bar{V}^{(2)}_i } } + o_p(1/\sqrt{n}),
    \end{align}
    where we can formally write $\mathcal{\bar{V}}^{(1)}_i,\bar{V}^{(1)}_i,\bar{V}^{(2)}_i$ as follows:
    \begin{align}
        \mathcal{\bar{V}}_{i}^{(1)} &= y_i(1,\pi,p_\pi^\ast+U_i)-y_i(0,\pi,p_\pi^\ast+U_i), \\
        \bar{V}^{(1)}_i &= \frac{y_i(1,\pi,p_\pi^\ast+U_i)}{\pi}+\frac{y_i(0,\pi,p_\pi^\ast+U_i)}{1-\pi}, \\
        \bar{V}^{(2)}_i
            &= \EE_{Q_j,y_j}\sbr{
            \frac{G(Q_i,Q_j)\sbr{\nabla_s y_j(1,\pi,p_\pi^\ast+U_j)-\nabla_s y_j(0,\pi,p_\pi^\ast+U_j)}}{g(Q_j)}
            \bigg| Q_i }.
    \end{align}
    These new notions are meant to differ from the ones used before, by including those individualized price perturbations $U_i$.
    On one hand, a first-order Taylor expansion around $p_\pi^\ast$ gives
    \begin{align}
        &\quad
        y_i(1,\pi,p_\pi^\ast+U_i)-y_i(0,\pi,p_\pi^\ast+U_i)
        - y_i(1,\pi,p_\pi^\ast)+y_i(0,\pi,p_\pi^\ast) \\
        &=
        \sbr{\nabla_p y_i(1,\pi,p_\pi^\ast)-\nabla_p y_i(0,\pi,p_\pi^\ast)}^\top U_i
        + R_i,
    \end{align}
    where $\abr{R_i}\le B\|U_i\|^2\le B h_n^2$. Since $\nabla_p y_i(1,\pi,p_\pi^\ast)-\nabla_p y_i(0,\pi,p_\pi^\ast)$ is uniformly bounded across $i$, and $U_i$ is independent with zero mean,
    \begin{align}
        &\quad \abr{ \bar\tau_{\ADE}-\frac{1}{n}\sum_{i=1}^n \sbr{y_i(1,\pi,p_\pi^\ast)-y_i(0,\pi,p_\pi^\ast)} } \\
        &\le \abr{ 
        \frac{1}{n}\sum_{i=1}^n
        \sbr{\nabla_p y_i(1,\pi,p_\pi^\ast)-\nabla_p y_i(0,\pi,p_\pi^\ast)}^\top U_i
        } + \frac{1}{n}\sum_{i=1}^n \abr{R_i} \\
        &= O_p(h_n/\sqrt n) + O_p(h_n^2)
        = o_p(1/\sqrt n).
    \end{align}
    On the other hand, there exists a universal constant $C>0$ such that for $k=1,2$,
    \begin{equation}
        \abr{\bar V_i^{(k)}-V_i^{(k)}} \le C\|U_i\|
        \quad \text{almost surely},
    \end{equation}
    by simply using the regularity in Assumption~\ref{assump:regularity}.
    Hence, using $\EE[W_i-\pi]=0$ and independence across $i$,
    \begin{align}
        \EE\abr{
            \frac{1}{n}\sum_{i=1}^n (W_i-\pi)\rbr{\bar V_i^{(k)}-V_i^{(k)}}
        }^2
        &\le \frac{1}{n^2}\sum_{i=1}^n \EE\sbr{(W_i-\pi)^2\abr{\bar V_i^{(k)}-V_i^{(k)}}^2} \\
        &\le \frac{C h_n^2}{n}
        = o(1/n).
    \end{align}
    Therefore, for each $k=1,2$,
    \begin{equation}
        \frac{1}{n}\sum_{i=1}^n (W_i-\pi)\rbr{\bar V_i^{(k)}-V_i^{(k)}}
        = o_p(1/\sqrt n).
    \end{equation}
    In conclusion, we have managed to establish~\eqref{eq:ADE estimator term A}.

    \vspace{2mm}
    \noindent\textit{Step 2.} To deal with the first-order term $\mathcal{B}$, we introduce two intermediate quantities
    \begin{align}
        \tilde{\mathcal{B}} &= \rbr{ P_n(\bW)-p_{\pi}^{\ast} }^{\top} \sbr{\frac{1}{n}\sum_{i=1}^n \rbr{\frac{W_i}{\pi}-\frac{1-W_i}{1-\pi}} \nabla_p y_i(W_i,\pi,p_{\pi}^{\ast}+U_i) }, 
    \end{align}
    The intermediate quantity $\tilde{\mathcal{B}}$ differs from $\mathcal{B}$ by replacing $S_i$ with $\pi$ in the second argument.
    Since $\nbr{\nabla_s\nabla_p y_i}\le B$, the difference is upper bounded directly,
    \begin{align}
        \EE\abr{\mathcal{B}-\tilde{\mathcal{B}}}
        &\le \frac{B}{\pi(1-\pi)}
        \sqrt{\EE \nbr{P_n(\bW)-p_\pi^\ast}^2}
        \cdot
        \sqrt{\frac{1}{n}\sum_{i=1}^n \EE\abr{S_i-\pi}^2} \\
        &\le \frac{B}{\pi(1-\pi)}
        \cdot\sqrt{\frac{C}{n}}
        \cdot\sqrt{\frac{C}{n\rho_n}}
        = o\rbr{1/\sqrt{n}},
    \end{align}
    where we have used Lemma~\ref{lemma: convergence of price} and Lemma 15 (c) in \cite{li2022random}. By Markov's inequality, $\mathcal{B}-\tilde{\mathcal{B}}=o_p(1/\sqrt{n})$. Now focusing on $\tilde{\mathcal{B}}$ itself, we note that
    \begin{align}
        &\quad \frac{1}{n}\sum_{i=1}^n \rbr{\frac{W_i}{\pi}-\frac{1-W_i}{1-\pi}} \nabla_p y_i(W_i,\pi,p_{\pi}^{\ast}+U_i) \\
        &= \EE\sbr{\nabla_p y_i(1,\pi,p_{\pi}^{\ast}) - \nabla_p y_i(0,\pi,p_{\pi}^{\ast})} + O_p(1/\sqrt{n}),
    \end{align}
    which is deduced from law of large numbers and the same argument for getting rid of $U_i$ as in \textit{Step 1}. From Lemma~\ref{lemma: convergence of price}, we learn that $P_n(\bW)-p_{\pi}^{\ast}$ alone admits a decomposition in the form of
    \begin{align}
        P_n(\bW)-p_\pi^\ast = -\xi_z^{-1}\sbr{\frac{1}{n}\sum_{i=1}^n z_i(W_i,p_{\pi}^{\ast}+U_i)} + o_p(1/\sqrt{n}).
    \end{align}
    In addition, since $z_i$ is continuously differentiable in the price argument,
    \begin{equation}
        z_i(W_i,p_{\pi}^{\ast}+U_i) = z_i(W_i,p_{\pi}^{\ast}) + \nabla z_i(W_i,p_{\pi}^{\ast})^\top U_i + R_i,
    \end{equation}
    where $\abr{R_i}\le B\nbr{U_i}^2\le Bh_n^2$. Therefore, it holds that
    \begin{align}
        \frac{1}{n}\sum_{i=1}^n z_i(W_i,p_{\pi}^{\ast}+U_i) &= \frac{1}{n}\sum_{i=1}^n z_i(W_i,p_{\pi}^{\ast}) +O_p(h_n/\sqrt{n})+O(h_n^2) \\
        &= \frac{1}{n}\sum_{i=1}^n z_i(W_i,p_{\pi}^{\ast}) +o_p(1/\sqrt{n}) \\
        &= \frac{1}{n}\sum_{i=1}^n (W_i-\pi)\sbr{z_i(1,p_{\pi}^{\ast})-z_i(0,p_{\pi}^{\ast})} + \frac{1}{n}\sum_{i=1}^n m_i +o_p(1/\sqrt{n}), \\
        \text{for } m_i &:= \pi z_i(1,p_{\pi}^{\ast}) + (1-\pi)z_i(0,p_{\pi}^{\ast}).
    \end{align}
    Our second step is due to $h_n< n^{-1/4}$ and the last step uses the exact identity $z_i(W_i,p_{\pi}^{\ast}) = (W_i-\pi)\sbr{z_i(1,p_{\pi}^{\ast})-z_i(0,p_{\pi}^{\ast})} + m_i$.
    The definition of $p_{\pi}^{\ast}$ is exactly the population-clearing condition 
    \begin{equation}
        \EE\sbr{ \pi z_i(1,p_{\pi}^{\ast}) + (1-\pi)z_i(0,p_{\pi}^{\ast}) } = \E\sbr{m_i} = 0.
    \end{equation}
    We can finally conclude that
    \begin{align}
        \tilde{\mathcal{B}} &= \frac{1}{n} \sum_{i=1}^n \mathcal{V}_{i}^{(2)} + \frac{1}{n} \sum_{i=1}^n (W_i-\pi) V_{i}^{(3)} + o_p(1/\sqrt{n}).
    \end{align}
    So~\eqref{eq:ADE estimator term B} also holds. The fact that $\EE[m_i]=0$ also leads to $\EE[\mathcal{V}_{i}^{(2)}]=0$ which we will use later.

    \vspace{2mm}
    \noindent\textit{Step 3.} Our last step is to show that $\mathcal{C}$ is negligible as $n\to\infty$. Since $\nbr{\nabla_p^2 y(w,s,p)}\le B$ is uniformly bounded for any $(w,s,p)$, we find that
    \begin{align}
        &\quad \EE\abr{\mathcal{C}} \\
        &= \frac{1}{2} \EE\abr{ \rbr{ P_n(\bW)-p_{\pi}^{\ast} }^{\top}
        \sbr{\frac{1}{n}\sum_{i=1}^n \rbr{\frac{W_i}{\pi}-\frac{1-W_i}{1-\pi}} \nabla_p^2 y_i(W_i,S_i,\tilde{p}_i+U_i) }
        \rbr{ P_n(\bW)-p_{\pi}^{\ast} } }\\
        &\le C_0\,\EE\nbr{P_n(\bW)-p_{\pi}^{\ast}}^2
        = O\rbr{1/n},
    \end{align}
    where we have used Lemma~\ref{lemma: convergence of price}. By Markov's inequality, $\mathcal{C}=O_p(1/n)=o_p(1/\sqrt{n})$, which is~\eqref{eq:ADE estimator term C}.

    \vspace*{2mm}\noindent\textbf{Concluding this theorem.} By plugging \eqref{eq:ADE estimator term A}-\eqref{eq:ADE estimator term C} into~\eqref{eq: direct eff estimator expansion over price}, it holds that
    \begin{equation}
        \hat{\tau}_{\ADE}^{\oracle} = \frac{1}{n} \sum_{i=1}^n \sbr{ \mathcal{V}^{(1)}_i+\mathcal{V}^{(2)}_i+ (W_i-\pi)\rbr{ V^{(1)}_i + V^{(2)}_i + V^{(3)}_i } } + o_p(1/\sqrt{n}).
    \end{equation}
    The following statements are immediately correct:
    \begin{enumerate}
        \item Since $(y_i,z_i)$ is jointly drawn i.i.d from a fixed population, and $p_{\pi}^{\ast}$ is a market-clearing price, it holds that
        \begin{align}
            \EE \sbr{ \mathcal{V}^{(1)}_i } &= \EE \sbr{y(1,\pi,p_\pi^\ast)-y(0,\pi,p_\pi^\ast)} = \tau^{\oracle,\ast}_{\ADE}, \\
            \EE \sbr{\mathcal{V}^{(2)}_i} &= 0.
        \end{align}
        And each $\mathcal{V}^{(1)}_i+\mathcal{V}^{(2)}_i$ has bounded variance.
        \item The other sum $(W_i-\pi)\rbr{ V^{(1)}_i + V^{(2)}_i + V^{(3)}_i }$ is of mean zero and uncorrelated to $\mathcal{V}^{(1)}_i+\mathcal{V}^{(2)}_i$ because it has a $W_i-\pi$ factor; and $W_i$ is drawn independently from the other random functionals/latent-variables $(y_i,z_i,Q_i)$. Therefore, this sum only adds additional variance to our estimator.
    \end{enumerate}
    Consequently, we can conclude this theorem statement.
\end{proof}

\subsection{Estimation of local spillover effects}\label{sec: estimator proof local}

The following is a formal presentation of Theorem~\ref{thm:local spillover effect estimator informal} in the main text.
\begin{theorem}\label{thm:local spillover effect estimator formal}
    Under assumptions detailed in Section~\ref{sec:assump}, the PC-balancing estimator $\hat{\tau}_{\AIE}^{\rmL}$ has a limiting Gaussian distribution around the asymptotic local spillover estimand $\tau_{\AIE}^{\rmL,\ast}$,
    \begin{equation}
        \frac{1}{\sqrt{\rho_n}} \rbr{ \hat{\tau}_{\AIE}^{\rmL} - \tau_{\AIE}^{\rmL,\ast} } \Rightarrow \cN\rbr{0, \mathsf{V}_{\rmL}},
    \end{equation}
    where the variance $\mathsf{V}_{\rmL}$ is given as
    \begin{align}
        \mathsf{V}_{\rmL} &= \EE\sbr{G(Q_1,Q_2)(\alpha_1^2+\alpha_1\alpha_2)} + \EE\sbr{g(Q_1)\eta_1^2}/(\pi(1-\pi)), \\
        \alpha_i &= y_i(1,\pi,p_{\pi}^{\ast}) - y_i(0,\pi,p_{\pi}^{\ast}), \\
        b_i &= \pi y_i(1,\pi,p_{\pi}^{\ast}) + (1-\pi) y_i(0,\pi,p_{\pi}^{\ast}), \\
        \eta_i &= b_i - \sum_{k=1}^r \EE\sbr{b_i\psi_k(Q_i)} \psi_k(Q_i),
    \end{align}
    where $g(Q):=\EE_{Q_1} G(Q,Q_1)$ is the marginal of $G$. Recall that $\rho_n=cn^{-\kappa}$ is introduced in Condition~\ref{assump:graphon} as network density.
\end{theorem}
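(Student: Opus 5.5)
Our approach is to reduce to the PC-balancing theory of \citet{li2022random}: we linearize out the market channel, show the price contribution is negligible at the $\sqrt{\rho_n}$ scale, and then read off the limiting law from their central limit theorem.

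First we plug the outcome model into the estimator,
\[
\hat{\tau}_{\AIE}^{\rmL}=\frac{1}{n}\bnu^\top(\Ib_n-\hat{\bPsi}\hat{\bPsi}^\top)\bY, \qquad Y_i=y_i(W_i,S_i,P_n(\bW)+U_i).
\]
As in the proof of Theorem~\ref{thm:direct effect estimator}, we Taylor expand each $y_i$ in the price argument around $p_\pi^\ast$. This writes $\bY=\bY^{(0)}+\bD\,(P_n(\bW)-p_\pi^\ast)+\bR$. Here $Y^{(0)}_i=y_i(W_i,S_i,p_\pi^\ast+U_i)$, the $i$th row of $\bD$ is $\nabla_p y_i(W_i,S_i,p_\pi^\ast+U_i)^\top$, and $|R_i|\le B\|P_n(\bW)-p_\pi^\ast\|^2=O_p(1/n)$ by Lemma~\ref{lemma: convergence of price}.

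The leading piece $\frac1n\bnu^\top(\Ib-\hat\bPsi\hat\bPsi^\top)\bY^{(0)}$ is exactly the estimator of \citet{li2022random}, applied to the potential-outcome functions $f_i(w,s)=y_i(w,s,p_\pi^\ast+U_i)$. These are i.i.d.\ across units and satisfy their regularity conditions by Assumption~\ref{assump:regularity}. Their Theorem 4 then gives $\rho_n^{-1/2}$-asymptotic normality around $\EE[\pi\nabla_s f(1,\pi)+(1-\pi)\nabla_s f(0,\pi)]$, with variance of the stated form built from $\alpha_i,b_i,\eta_i$ evaluated at $p_\pi^\ast+U_i$.

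Removing the perturbations $U_i$ is a routine step, as in Step 1 of the proof of Theorem~\ref{thm:direct effect estimator}. The mean shifts by $O(h_n/\sqrt n+h_n^2)$, and $h_n^2/\sqrt{\rho_n}=n^{-2\alpha+\kappa/2}\to0$ because $\alpha>1/4$ and $\kappa<1/2$. The variance ingredients change by $O(h_n)$. This produces exactly $\tau_{\AIE}^{\rmL,\ast}$ and $\mathsf V_{\rmL}$.

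The remainder is easy. We have $\|\bnu\|_2=O_p(\sqrt{n\cdot n\rho_n})$, and projection is a contraction, so
\[
\Bigl|\tfrac1n\bnu^\top(\Ib-\hat\bPsi\hat\bPsi^\top)\bR\Bigr|\le \tfrac1n\|\bnu\|\,\|\bR\|=O_p\bigl(\sqrt{\rho_n}\cdot n^{-1/2}\bigr)=o_p(\sqrt{\rho_n}).
\]

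The main obstacle is the linear price term $\frac1n\bnu^\top(\Ib-\hat\bPsi\hat\bPsi^\top)\bD\,(P_n(\bW)-p_\pi^\ast)$. We have $P_n(\bW)-p_\pi^\ast=O_p(n^{-1/2})$, so it suffices that $\frac1n\bnu^\top(\Ib-\hat\bPsi\hat\bPsi^\top)\bD=o_p(\sqrt{n\rho_n})$. A crude Cauchy--Schwarz bound gives only $O_p(\sqrt{n\rho_n})$, which just fails, so we need the structure of $\bD$.

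We would first replace $S_i$ by $\pi$ inside $\bD$, as in Step 2 of the proof of Theorem~\ref{thm:direct effect estimator}. This costs $\|\bnu\|\cdot\sqrt n\,\max_i|S_i-\pi|/n=O_p(\sqrt{\rho_n}\sqrt{\log n})$, which is negligible after multiplication by $n^{-1/2}$. Each row of the resulting $\bD$ depends only on $(W_i,U_i,y_i)$.

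We then split $\bD_i$ into its conditional mean given $Q_i$, which lies approximately in the span of graphon eigenfunctions plus an orthogonal part, and a mean-zero fluctuation. We treat $\frac1n\bnu^\top(\Ib-\hat\bPsi\hat\bPsi^\top)\cdot$ applied to each part. This is exactly the same bookkeeping that \citet{li2022random} use to show that the projected estimator applied to a generic bounded unit-level function has mean $O(1)$ and fluctuation $O_p(\sqrt{\rho_n})$. The $\nu_i$ are sums of centred neighbor treatments, and $\bD_i$ involves only the own treatment, so the mean of the cross term is $O(\rho_n)$-small. Applying their result to $\bD$ (viewing $\nabla_p y_i(W_i,\pi,\cdot)$ as an outcome with no $s$-dependence, hence zero local-spillover target) gives $\frac1n\bnu^\top(\Ib-\hat\bPsi\hat\bPsi^\top)\bD=O_p(\sqrt{\rho_n})$.

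If the dependence between $P_n(\bW)$ and $\bnu$ causes trouble, we would instead substitute the linearization $P_n(\bW)-p_\pi^\ast=-\xi_z^{-1}\bar Z_n+O_p(1/n)$. We then bound the bilinear form by a second-moment computation, using independence of the $W_j$.

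Multiplying by $n^{-1/2}$ makes this term $o_p(\sqrt{\rho_n})$. Slutsky's lemma then concludes the proof.
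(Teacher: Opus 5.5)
Your proposal is correct and follows essentially the paper's proof. You use the same price expansion $\hat\tau_{\AIE}^{\rmL}=\mathcal{A}+\mathcal{B}+\mathcal{C}$, the Li--Wager PC-balancing CLT for $\mathcal{A}$, the same Li--Wager machinery applied to $\nabla_p y_i$ to show that the coefficient of $P_n(\bW)-p_\pi^\ast$ is small enough (giving $\mathcal{B}=O_p(n^{-1/2})=o_p(\sqrt{\rho_n})$), and the projection-is-a-contraction Cauchy--Schwarz bound $O_p(\sqrt{\rho_n/n})$ for $\mathcal{C}$. The one difference is that the paper applies Li--Wager directly to $\nabla_p y_i(W_i,S_i,\cdot)$, obtaining the limit $\EE[\pi\nabla_s\nabla_p y(1,\pi,p_\pi^\ast)+(1-\pi)\nabla_s\nabla_p y(0,\pi,p_\pi^\ast)]$, rather than first replacing $S_i$ by $\pi$. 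This also shows your replacement cost is genuinely $O_p(1)$, not $O_p(\sqrt{\rho_n\log n})$; your own displayed bound in fact evaluates to $O_p(\sqrt{\log n})$. The slip is harmless, since you only need $o_p(\sqrt{n\rho_n})$.
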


Before showing proof, we recall some notations. Compute $\hat{\bPsi}\in\RR^{n\times r}$ as the (normalized)\footnote{The eigenvectors are normalized so that $\hat{\bPsi}^\top\hat{\bPsi}=\Ib_r$.} top-$r$ eigenvectors of the observed adjacency matrix $\bE=(E_{ij})$. Form a raw weight vector $\bnu\in\RR^n$ with $\nu_i=\frac{M_i}{\pi}-\frac{N_i-M_i}{1-\pi}$. Then derive a PC-balancing weight vector
\begin{equation}
    \bnu^{\PC} = \rbr{\Ib_n-\hat{\bPsi}\hat{\bPsi}^\top}\bnu \in\RR^n,
\end{equation}
and output a weighted average
\begin{equation}
    \hat{\tau}_{\AIE}^{\rmL} = \frac{1}{n}\sum_{i=1}^n \nu_i^{\PC} Y_i \in\RR.
\end{equation}

\begin{proof}[Proof of Theorem~\ref{thm:local spillover effect estimator formal}]
    By a Taylor expansion on the potential outcomes, the estimator is then decomposed into three different terms,
    \begin{align}
        \hat{\tau}_{\AIE}^{\rmL} &= \frac{1}{n}\sum_{i=1}^n \nu_i^{\PC} y_i(W_i,S_i,P_n(\bW)+U_i) \\
        &= \sbr{\frac{1}{n}\sum_{i=1}^n \nu_i^{\PC} y_i(W_i,S_i,p_\pi^\ast+U_i)} \\
        &\quad +\rbr{P_n(\bW)-p_\pi^\ast}^\top\sbr{\frac{1}{n}\sum_{i=1}^n \nu_i^{\PC} \nabla_p y_i(W_i,S_i,p_\pi^\ast+U_i)} \\
        &\quad +\frac{1}{2}\rbr{P_n(\bW)-p_\pi^\ast}^\top\sbr{\frac{1}{n}\sum_{i=1}^n \nu_i^{\PC} \nabla_p^2 y_i(W_i,S_i,\tilde{p}_i+U_i)}\rbr{P_n(\bW)-p_\pi^\ast} \\
        &= \mathcal{A}+\mathcal{B}+\mathcal{C}.
    \end{align}
    
    \vspace{2mm}
    \noindent\textit{Step 1.} Directly apply Theorem 6 and Proposition 13 in \cite{li2022random} to the term $\mathcal{A}$ to obtain that
    \begin{align}
        \mathcal{A}-\tau_{\AIE}^{\rmL,\ast} &= \frac{1}{n\pi(1-\pi)} \sum_{i,j:i\neq j} (W_i-\pi) E_{ij} \beta_j + o_p(\sqrt{\rho_n}), \\
        \beta_j &= (W_j-\pi)\sbr{y_j(1,\pi,p_\pi^\ast)-y_j(0,\pi,p_\pi^\ast)} + \tilde{\eta}_j
    \end{align}
    where $\tilde{\eta}_j$ is a residual term (derived by regressing $\pi y(1,\pi,p_\pi^\ast) + (1-\pi)y(0,\pi,p_\pi^\ast)$ onto the population-level principal components). Moreover, the summand has such an asymptotic normal distribution,
    \begin{equation}
        \frac{1}{\sqrt{\rho_n}}\cdot\frac{1}{n\pi(1-\pi)} \sum_{i,j:i\neq j} (W_i-\pi) E_{ij} \beta_j \Rightarrow \cN(0,\mathsf{V}_{\rmL}),
    \end{equation}
    where the variance term is given as
    \begin{align}
        \mathsf{V}_{\rmL} &= \EE\sbr{G(Q_1,Q_2)(\alpha_1^2+\alpha_1\alpha_2)} + \EE\sbr{g(Q_1)\eta_1^2}/(\pi(1-\pi)), \\
        \alpha_i &= y_i(1,\pi,p_{\pi}^{\ast}) - y_i(0,\pi,p_{\pi}^{\ast}), \\
        b_i &= \pi y_i(1,\pi,p_{\pi}^{\ast}) + (1-\pi) y_i(0,\pi,p_{\pi}^{\ast}), \\
        \eta_i &= b_i - \sum_{k=1}^r \EE\sbr{b_i\psi_k(Q_i)} \psi_k(Q_i).
    \end{align}
    
    \vspace{2mm}
    \noindent\textit{Step 2.} Using the same tools on $\mathcal{A}$, we find that
    \begin{align}
        &\quad \frac{1}{n}\sum_{i=1}^n \nu_i^{\PC} \nabla_p y_i(W_i,S_i,p_\pi^\ast+U_i) \\
        &= \EE\sbr{\pi \nabla_s\nabla_p y_i(1,\pi,p_{\pi}^{\ast})+(1-\pi)\nabla_s\nabla_p y_i(0,\pi,p_{\pi}^{\ast})} + O_p(\sqrt{\rho_n}).
    \end{align}
    In the meantime, Lemma~\ref{lemma: convergence of price} implies that $P_n(\bW)-p_\pi^\ast=O_p(1/\sqrt{n})$. Therefore, $\mathcal{B}=O_p(1/\sqrt{n})=o_p(\sqrt{\rho_n})$ is in fact negligible. Note that Assumption~\ref{assump:regularity} has made every $\nabla_p y_i$ to be sufficiently smooth in $s$ so that the tool in \textit{Step 1.} is indeed applicable.
    
    \vspace{2mm}
    \noindent\textit{Step 3.}
    Since $\nbr{\nabla_p y_i}$ is uniformly bounded by $B$, we can use Cauchy-Schwarz inequality to obtain
    \begin{align}
        \EE\abr{\mathcal{C}} &\le \frac{B}{2n}\sum_{i=1}^n \EE\sbr{\abr{\nu_i^\PC} \cdot\nbr{P_n(\bW)-p_\pi^\ast}^2} \\
        &\le \frac{B}{n} \sum_{i=1}^n \sqrt{\EE \abr{\nu_i^\PC}^2} \cdot \sqrt{\EE \nbr{P_n(\bW)-p_\pi^\ast}^4} \\
        &\le B\sqrt{\frac{1}{n}\EE\nbr{\bnu^{\PC}}^2} \cdot \sqrt{\EE \nbr{P_n(\bW)-p_\pi^\ast}^4}
    \end{align}
    With Lemma 15 in \cite{li2022random} providing an upper bound on each separate entry of $\bnu$, the norm of $\bnu^{\PC}$ can be controlled as follows
    \begin{align}
        \EE\nbr{\bnu^{\PC}}^2 &\le \EE\nbr{\bnu}^2=\sum_{i=1}^n \EE\rbr{\frac{M_i}{\pi}-\frac{N_i-M_i}{1-\pi}}^2 \\
        &= \sum_{i=1}^n \EE \sbr{\frac{N_i}{\pi(1-\pi)}} = O(n^2\rho_n).  
    \end{align}
    Moreover, by Lemma~\ref{lemma: convergence of price}, we know $\EE \nbr{P_n(\bW)-p_\pi^\ast}^4 = O(1/n^2)$. Therefore, $\mathcal{C} = O_p(\sqrt{\rho_n/n})=o_p(\sqrt{\rho_n})$.
\end{proof}

\subsection{Estimation of global spillover effects}\label{sec: estimator proof global}
The following is a formal presentation of Theorem~\ref{thm:global spillover effect estimator informal} in the main text.
\begin{theorem}\label{thm:global spillover effect estimator formal}
    Under assumptions detailed in Section~\ref{sec:assump}, the estimator $\hat{\tau}_{\AIE}^{\rmG}$ has a limiting Gaussian distribution around the asymptotic global spillover estimand $\tau_{\AIE}^{\rmG,\ast}$,
    \begin{equation}
        h_n\sqrt{n} \rbr{ \hat{\tau}_{\AIE}^{\rmG} - \tau_{\AIE}^{\rmG,\ast} } \Rightarrow \cN\rbr{0, \mathsf{V}_{\rmG}},
    \end{equation}
    where the variance $\mathsf{V}_{\rmG}$ is given as
    \begin{align}
        \mathsf{V}_{\rmG} &= \psi^\top \EE\sbr{ \rbr{y(W,\pi,p_{\pi}^{\ast})-z(W,p_{\pi}^{\ast})^\top\xi_z^{-\top}\xi_y}^2 I_J } \psi, \\
        \psi &= \xi_z^{-1}\EE\sbr{ z(1,p_{\pi}^{\ast}) - z(0,p_{\pi}^{\ast}) }.
    \end{align}
    Recall that $h_n=cn^{-\alpha}$ is introduced in Condition~\ref{assump: augmented trial} as the magnitude of individualized price perturbations.
\end{theorem}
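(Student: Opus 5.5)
We linearize $\hat\tau_{\AIE}^{\rmG}=-\hat\gamma^\top\hat\tau_z$ around $(\gamma,\tau_z)$ with $\gamma=\xi_z^{-\top}\xi_y$ and $\tau_z=\EE[z_i(1,p_\pi^\ast)-z_i(0,p_\pi^\ast)]$:
\[
\hat\tau_{\AIE}^{\rmG}-\tau_{\AIE}^{\rmG,\ast}=-(\hat\gamma-\gamma)^\top\tau_z-\gamma^\top(\hat\tau_z-\tau_z)-(\hat\gamma-\gamma)^\top(\hat\tau_z-\tau_z).
\]
The Horvitz--Thompson piece $\hat\tau_z$ is handled exactly as $\hat\tau_{\ADE}^{\oracle}$ in Theorem~\ref{thm:direct effect estimator}. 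Since $z_i$ does not depend on $S_i$, only the price expansion (terms $\mathcal A,\mathcal B,\mathcal C$ there with $y$ replaced by $z$) is needed, and it gives $\hat\tau_z-\tau_z=O_p(n^{-1/2})$. Because $h_n\sqrt n\cdot n^{-1/2}=h_n\to0$, the second and third terms are negligible at the rate $h_n\sqrt n$. So the whole result reduces to showing that $h_n\sqrt n(\hat\gamma-\gamma)$ is asymptotically normal with covariance $\EE[(y-z^\top\gamma)^2]I_J$, together with consistency of $\hat\gamma$. The delta method with $\psi=\xi_z^{-1}\tau_z$ then gives $\mathsf V_{\rmG}$.

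For $\hat\gamma=(\bU^\top\bZ)^{-1}\bU^\top\bY$, we Taylor expand $Y_i=y_i(W_i,S_i,P_n(\bW)+U_i)$ and $Z_i$ in the price argument around $P_n(\bW)$:
\[
Y_i=y_i(W_i,S_i,P_n)+\nabla_p y_i(\cdot)^\top U_i+O(h_n^2),\qquad Z_i=z_i(W_i,P_n)+\nabla_p z_i(\cdot)U_i+O(h_n^2).
\]
Then $\frac1{nh_n^2}\bU^\top\bZ=\frac1{nh_n^2}\sum_i U_iz_i(W_i,P_n)^\top+\frac1{nh_n^2}\sum_iU_iU_i^\top\nabla_pz_i^\top+O(h_n)$. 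The second term converges to $\xi_z^\top$, using $U_iU_i^\top\approx h_n^2I_J$ off-diagonals averaging out, a law of large numbers, $P_n\pto p_\pi^\ast$, and Lipschitz continuity of $\nabla_pz_i$. The first term is $O_p(1/(h_n\sqrt n))=o_p(1)$, since $\alpha<1/2$. The analogous expansion of $\frac1{nh_n^2}\bU^\top\bY$ has limit $\xi_y$. In that expansion, $\nabla_py_i(W_i,S_i,\cdot)$ must be replaced by $\nabla_py_i(W_i,\pi,\cdot)$, at cost $|S_i-\pi|=O_p((n\rho_n)^{-1/2})$; this is how the network channel enters, and it is harmless for consistency.

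For the CLT, we write
\[
\bU^\top(\bY-\bZ\gamma)=\sum_iU_i\bigl[y_i(W_i,S_i,P_n)-z_i(W_i,P_n)^\top\gamma\bigr]+\sum_iU_iU_i^\top\bigl[\nabla_py_i-\nabla_pz_i^\top\gamma\bigr]+O(nh_n^3).
\]
After centering by population limits, the middle sum has mean of order $nh_n^2(\xi_y-\xi_z^\top\gamma)=0$, and its fluctuation is $O_p(h_n^2\sqrt n)$, which is negligible. The remainder satisfies $nh_n^3/(nh_n^2)\cdot h_n\sqrt n=h_n^2\sqrt n\to0$ because $\alpha>1/4$; this is exactly where the lower bound on $\alpha$ is used. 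The leading sum, divided by $h_n\sqrt n$, is a sum of terms $U_i\varepsilon_i$ with $\varepsilon_i=y_i-z_i^\top\gamma$.

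The main obstacle is that $\varepsilon_i$ depends on the whole $(\bW,\bU)$ through $P_n(\bW)$ and $S_i$, so the summands are not independent. We handle this by substituting $P_n\to p_\pi^\ast$ and $S_i\to\pi$ and bounding the error.

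The price substitution gives $\sum_iU_i\nabla(\cdot)^\top(P_n-p_\pi^\ast)$. Since $\sum_iU_i\nabla(\cdot)=O_p(h_n\sqrt n)$ and $P_n-p_\pi^\ast=O_p(n^{-1/2})$, this contributes $O_p(h_n)$, which is negligible after dividing by $h_n\sqrt n$. Dependence of $P_n$ on $U$ enters only through $\bar Z_n$ in Lemma~\ref{lemma: convergence of price} and is of the same small order.

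For the $S_i$ substitution, $\sum_iU_i\nabla_sy_i(S_i-\pi)$ conditional on $(\bW,\bE)$ has mean zero and variance $O(h_n^2\sum_i(S_i-\pi)^2)=O(h_n^2/\rho_n)$. It is therefore $o_p(h_n\sqrt n)$.

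What remains is an i.i.d. sum of $U_i\varepsilon_i^\ast$, where $\varepsilon_i^\ast=y_i(W_i,\pi,p_\pi^\ast)-z_i(W_i,p_\pi^\ast)^\top\gamma$, with $U_i$ independent of $\varepsilon_i^\ast$. The Lindeberg CLT gives covariance $h_n^2\EE[\varepsilon^{\ast2}]I_J$, and combining with Slutsky's lemma yields the stated $\mathsf V_{\rmG}$.
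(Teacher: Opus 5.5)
Your proposal is correct and has the same skeleton as the paper's proof. The common steps are: reduce to $-(\hat\gamma-\gamma)^\top\tau_z$, since $\hat\tau_z-\tau_z=O_p(n^{-1/2})$ is killed by the factor $h_n\sqrt n$; obtain the influence representation $\hat\gamma-\gamma=\frac{1}{nh_n}\sum_i\varepsilon_i^\ast\,\xi_z^{-\top}\tilde U_i+o_p(1/(h_n\sqrt n))$; show the network channel is negligible; apply a CLT.

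The execution differs in two places.
\begin{itemize}
\item The paper expands in $S_i$ around $\pi$ first. It then imports the entire price/perturbation analysis of the $S_i=\pi$ piece (its $\mathcal A$ and $\mathcal D$) from Section B.5 of \citet{munro2021treatment}. You instead expand in $U_i$ around the realized $P_n(\bW)$, then replace $P_n(\bW)$ by $p_\pi^\ast$ using only products of $O_p$ bounds. This makes the argument self-contained and sidesteps the fact that $P_n(\bW)$ depends on $\bU$.
\item For the network channel, the paper bounds the first-order $S$-term by regrouping it around $W_j-\pi$ and invoking Lemma~\ref{lemma:averaging over neighborhood}. It bounds the second-order term by $\frac1n\sum_i\EE(S_i-\pi)^2=O(1/(n\rho_n))$, which is where it needs $\kappa<1/2$. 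You condition on $(\bW,\bE)$ and the unit primitives and use that the $U_i$ are independent and mean zero. This disposes of the whole difference $\sum_iU_i[y_i(W_i,S_i,p_\pi^\ast)-y_i(W_i,\pi,p_\pi^\ast)]$ in one step, at order $h_n/\sqrt{\rho_n}$, and requires only $n\rho_n\to\infty$ there.
\end{itemize}
Your route is more transparent; the paper's is shorter on the page because it outsources the IV expansion.

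Three small repairs are needed.
\begin{itemize}
\item The asymptotic covariance of $h_n\sqrt n(\hat\gamma-\gamma)$ is $\EE[\varepsilon^{\ast2}]\,\xi_z^{-\top}\xi_z^{-1}$, not $\EE[\varepsilon^{\ast2}]I_J$. The identity covariance belongs to the score $\frac{1}{h_n\sqrt n}\bU^\top(\bY-\bZ\gamma)$. It is exactly the $\xi_z^{-\top}$ factor that turns $\tau_z$ into $\psi=\xi_z^{-1}\tau_z$; your final Slutsky step does this correctly, but the intermediate claim as written would give $\EE[\varepsilon^{\ast2}]\|\tau_z\|^2$.
\item When replacing $P_n(\bW)$ by $p_\pi^\ast$, take the first derivative at $p_\pi^\ast$, not at a mean-value point. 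A mean-value point depends on $\bU$ and destroys the conditional mean-zero property of $\sum_iU_i\nabla(\cdot)$. Then bound the quadratic remainder crudely by $O_p(nh_n\cdot n^{-1})=O_p(h_n)$.
\item The $S_i$-dependence inside the $U_iU_i^\top$ sum is, by Cauchy--Schwarz, $O_p(h_n^2\sqrt{n/\rho_n})$ rather than $O_p(h_n^2\sqrt n)$. This is the paper's $\zeta_1-\zeta_2$ term. It is still negligible relative to $h_n\sqrt n$, but this step uses $\alpha>\kappa/2$.
\end{itemize}
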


\begin{proof}
To study the global-spillover part, the observed outcomes come in the form of estimating the price elasticities $\xi_z^{-\top}\xi_y$ via $\hat{\gamma} = \rbr{\bU^\top\bZ}^{-1}\rbr{\bU^\top\bY}$.
For convenience, we denote $U_i=h_n\tilde{U}_i$, then each $\tilde{U}_{ij}\sim\mathrm{Unif}\rbr{\cbr{\pm 1}}$ for any $i\in[n]$ and $j\in[J]$.
By plugging our definitions of the potential outcomes
\begin{align}
    \rbr{\tilde{\bU}^\top\tilde{\bU}}^{-1}\tilde{\bU}^\top\bY &=\rbr{\tilde{\bU}^\top\tilde{\bU}}^{-1}\tilde{\bU}^\top\vec\sbr{ y_i\rbr{W_i,S_i,P_n(\bW)+h_n\tilde{U}_i} }\\
    &= \rbr{\tilde{\bU}^\top\tilde{\bU}}^{-1}\tilde{\bU}^\top\vec\sbr{ y_i\rbr{W_i,\pi,P_n(\bW)+h_n\tilde{U}_i} } \\
    &\quad + \rbr{\tilde{\bU}^\top\tilde{\bU}}^{-1}\tilde{\bU}^\top\vec\sbr{ (S_i-\pi)\nabla_s y_i\rbr{W_i,\pi,P_n(\bW)+h_n\tilde{U}_i} } \\
    &\quad +\frac{1}{2} \rbr{\tilde{\bU}^\top\tilde{\bU}}^{-1}\tilde{\bU}^\top\vec\sbr{ (S_i-\pi)^2\nabla_s^2 y_i\rbr{W_i,\tilde{S}_i,P_n(\bW)+h_n\tilde{U}_i} } \\
    &=\mathcal{A}+\mathcal{B}+\mathcal{C}\in\RR^J.
\end{align}
For convenience, we also let
\begin{equation}
    \mathcal{D} = \rbr{\tilde{\bU}^\top\tilde{\bU}}^{-1}\tilde{\bU}^\top\bZ = \rbr{\tilde{\bU}^\top\tilde{\bU}}^{-1}\tilde{\bU}^\top\vec\sbr{ z_i\rbr{W_i,P_n(\bW)+h_n\tilde{U}_i} } \in \RR^{J \times J}.
\end{equation}
Henceforth, we can write $\hat{\gamma}=\mathcal{D}^{-1}\rbr{\mathcal{A}+\mathcal{B}+\mathcal{C}}$.

\vspace{2mm}
\noindent\textit{Step 1.} Directly using the results in section B.5 (which is the proof of their main Theorem 7) from \cite{munro2021treatment}, we find that
\begin{align}
    \mathcal{A} &= h_n\xi_y +  \rbr{\tilde{\bU}^\top\tilde{\bU}}^{-1}\tilde{\bU}^\top \Vec\sbr{y_i(W_i,\pi,p_\pi^\ast)} + o_p(1/\sqrt{n}),\\
    \mathcal{D} &= h_n\xi_z^{\top} +  \rbr{\tilde{\bU}^\top\tilde{\bU}}^{-1}\tilde{\bU}^\top \Vec\sbr{z_i(W_i,p_\pi^\ast)} + o_p(1/\sqrt{n}).
\end{align}
Recall that in Assumption~\ref{assump:RCT}, we have set $h_n=cn^{-\alpha}$ with $\frac{1}{4}<\alpha<\frac{1}{2}$.

\vspace{2mm}
\noindent\textit{Step 2.} Start from
\begin{align}
    &\quad\frac{1}{n}\tilde{\bU}^\top\vec\sbr{ (S_i-\pi)\nabla_s y_i\rbr{W_i,\pi,P_n(\bW)+h_n\tilde{U}_i} } \\
    &= \frac{1}{n}\tilde{\bU}^\top\vec\sbr{ (S_i-\pi)\nabla_s y_i\rbr{W_i,\pi,p_\pi^\ast+h_n\tilde{U}_i} } +o_p(1/\sqrt{n}) \\
    &= \frac{1}{n}\sum_{i=1}^n (S_i-\pi)\tilde{U}_i \nabla_s y_i\rbr{W_i,\pi,p_\pi^\ast+h_n\tilde{U}_i} +o_p(1/\sqrt{n}).
\end{align}
We denote the major term as
\begin{equation}
    \zeta_1 := \frac{1}{n}\sum_{i=1}^n \rbr{\frac{M_i}{N_i}-\pi}\tilde{U}_i \nabla_s y_i\rbr{W_i,\pi,p_\pi^\ast+h_n\tilde{U}_i}.
\end{equation}
and a tight approximation of it by
\begin{equation}
    \zeta_2 := \frac{1}{n}\sum_{i=1}^n \rbr{\frac{M_i}{N_i}-\pi}\tilde{U}_i \nabla_s y_i\rbr{W_i,\pi,p_\pi^\ast},
\end{equation}
which cancels out the influence of $h_n\tilde{U}_i$ onto every $\nabla_s y_i(\cdot)$. Their difference is then denoted as
\begin{align}
    \zeta_1-\zeta_2 &= \frac{1}{n}\sum_{i=1}^n (S_i-\pi)\,\tilde{U}_i\, d_i, \\
    d_i &:= \nabla_s y_i\rbr{W_i,\pi,p_\pi^\ast+h_n\tilde{U}_i} - \nabla_s y_i\rbr{W_i,\pi,p_\pi^\ast}.
\end{align}
Here $\abr{d_i}\le B\nbr{h_n\tilde{U}_i}=O(h_n)$ by Assumption~\ref{assump:regularity}. Further by the Cauchy--Schwarz inequality,
\begin{align}
    \abr{\zeta_1-\zeta_2}
    &\le \rbr{\frac{1}{n}\sum_{i=1}^n (S_i-\pi)^2}^{1/2}
    \rbr{\frac{1}{n}\sum_{i=1}^n \nbr{\tilde{U}_i d_i}^2}^{1/2}
    \\
    &= O_p\rbr{\frac{1}{\sqrt{n\rho_n}}}\cdot O_p(h_n)
    = O_p\rbr{\frac{h_n}{\sqrt{n\rho_n}}},
\end{align}
where $\frac{1}{n}\sum_i(S_i-\pi)^2=O_p(1/(n\rho_n))$ because $\EE[(S_i-\pi)^2]=\EE[\pi(1-\pi)/N_i]=O(1/(n\rho_n))$ by Lemma~15 of \cite{li2022random} (which uses $g(q)\ge c_g>0$ from Condition~\ref{assump:graphon}). Since $h_n/\sqrt{\rho_n}=n^{-(\alpha-\kappa/2)}\to0$ under $\alpha>1/4$ and $\kappa<1/2$ (so that $\alpha>\kappa/2$), it follows that $\zeta_1-\zeta_2=o_p(1/\sqrt{n})$.

To proceed, we can plug in $S_i=M_i/N_i$ so that $\zeta_2$ can be reorganized as
\begin{equation}
    \zeta_2 = \frac{1}{n}\sum_{j=1}^n (W_j-\pi)\sum_{i\neq j} \frac{E_{ij}}{\sum_{k\neq i}E_{ik}} \tilde{U}_i \nabla_s y_i\rbr{W_i,\pi,p_\pi^\ast}.
\end{equation}
Lastly, since
\begin{equation}
    \chi_{j} = \EE\sbr{\frac{E_{ij}}{g_n(Q_i)} \tilde{U}_i \nabla_s y_i\rbr{W_i,\pi,p_\pi^\ast}\bigg|Q_j} = 0,
\end{equation}
Lemma~\ref{lemma:averaging over neighborhood} implies that for any $j\in[n]$,
\begin{equation}
    \EE\!\sbr{\rbr{\sum_{i\neq j} \frac{E_{ij}}{\sum_{k\neq i}E_{ik}} \tilde{U}_i \nabla_s y_i\rbr{W_i,\pi,p_\pi^\ast}}^2}
    = O\rbr{\frac{1}{n\rho_n}}.
\end{equation}
we can conclude that
\begin{equation}
    \zeta_2 = O_p\rbr{\frac{1}{\sqrt{n}\sqrt{n\rho_n}}} = o_p(1/\sqrt{n}).
\end{equation}
Since $\tilde{\bU}^\top\tilde{\bU} / n \pto \Ib_{J}$, we end up with $\mathcal{B} = o_p(1/\sqrt{n})$.

\vspace{2mm}
\noindent\textit{Step 3.} To deal with the last term, we go from
\begin{align}
    &\quad \EE\!\nbr{\frac{1}{n}\sum_{i=1}^n \tilde{U}_i (S_i-\pi)^2\nabla_s^2 y_i\rbr{W_i,\tilde{S}_i,P_n(\bW)+h_n\tilde{U}_i}} \\
    &\le \frac{1}{n}\sum_{i=1}^n \EE (S_i-\pi)^2
    = O\rbr{\frac{1}{n\rho_n}}
    = o\rbr{1/\sqrt{n}},
\end{align}
where we have used Lemma 15 in \cite{li2022random}. Therefore, it also holds that $\mathcal{C}=o_p(1/\sqrt{n})$.

As a result, we can conclude the asymptotic characterization for $\hat{\gamma}$ as
\begin{align}
    \hat{\gamma} &= \mathcal{D}^{-1}\rbr{\mathcal{A}+\mathcal{B}+\mathcal{C}} \\
    &= \xi_z^{-\top}\xi_y +\frac{1}{nh_n} \sum_{i=1}^n \sbr{y_i(W_i,\pi,p_\pi^\ast) - z_i(W_i,p_\pi^\ast)^\top\xi_z^{-\top}\xi_y}  \xi_{z}^{-\top} \tilde{U}_i +o_p\rbr{\frac{1}{\sqrt{n}h_n}},
\end{align}
where we have used the fact that $\frac{1}{n}\tilde{\bU}^\top\tilde{\bU}=I_J+O_p(1/\sqrt{n})$. Lastly, as suggested by Theorem 5 in \cite{munro2021treatment},
\begin{equation}
    \hat{\tau}_z = \EE\sbr{ z(1,p_{\pi}^{\ast}) - z(0,p_{\pi}^{\ast}) } + O_p(1/\sqrt{n}),
\end{equation}
where for simplicity we can denote $\tau_z^{\ast}:=\EE\sbr{ z(1,p_{\pi}^{\ast}) - z(0,p_{\pi}^{\ast}) }$. Henceforth, 
\begin{align}
    \hat{\tau}_{\AIE}^{\rmG} &= -\hat{\gamma}^{\top} \hat{\tau}_z \\
    &= -\xi_y^{\top}\xi_z^{-1} \tau_z^{\ast} -\frac{1}{nh_n} \sum_{i=1}^n \sbr{y_i(W_i,\pi,p_\pi^\ast) - z_i(W_i,p_\pi^\ast)^\top\xi_z^{-\top}\xi_y} \tilde{U}_i^\top \xi_{z}^{-1} \tau_z^{\ast} +o_p\rbr{\frac{1}{\sqrt{n}h_n}},
\end{align}
which yields the final asymptotic normal distribution.
\end{proof}

\section{Notation}

Throughout the draft, we use $C,c>0$ for positive constants that do not depend on $n$.
We write $O(\cdot),o(\cdot)$ and $O_p(\cdot),o_p(\cdot)$ in the following sense:
$a_n=O(b_n)$ if there exists some $C>0$ such that $|a_n|\le C|b_n|$;
$a_n=o(b_n)$ if $\lim_{n\to\infty}|a_n|/|b_n|=0$;
$X_n=O_p(b_n)$ if for any $\delta>0$, there exist $M,N>0$ such that
$\PP(|X_n|\ge M|b_n|)\le\delta$ for all $n>N$;
and $X_n=o_p(b_n)$ if
$\PP(|X_n|\ge \epsilon |b_n|)\to 0$ as $n\to\infty$ for any $\epsilon>0$.

\begin{table}[h!]
\centering
\renewcommand{\arraystretch}{1.15}
\scalebox{0.8}{
\begin{tabular}{l p{0.75\textwidth}}
\hline
\textbf{Symbol} & \textbf{Meaning} \\
\hline
$n$ & Number of units in the sample. \\[2pt]

$i,j \in [n]$ & Unit indices, with $[n]:=\{1,\dots,n\}$. \\[2pt]

$\{0,1\}$ & Binary treatment space, where $0$ denotes control and $1$ denotes treatment. \\[2pt]

$\bw=(w_1,\dots,w_n)\in\{0,1\}^n$ & Generic (possibly counterfactual) treatment assignment vector. \\[2pt]

$\bW=(W_1,\dots,W_n)$ & Random treatment assignment vector generated by the design. \\[2pt]

$\RCT(\pi)$ & Bernoulli randomized controlled trial with $\PP(W_i=1)=\pi$ independently across $i$. \\[2pt]

$\pi\in(0,1)$ & Baseline treatment probability under $\RCT(\pi)$. \\[2pt]

$\bW^{(1)},\bW^{(2)}$ & Independent treatment assignments used in the two-copy construction. \\[4pt]

$y_i:\{0,1\}^n\to\RR$ & Potential outcome function of unit $i$ under assignment $\bw$. \\[2pt]

$Y_i=y_i(\bW)$ & Realized outcome of unit $i$ under the realized assignment $\bW$. \\[2pt]

$\tau_{\MPE}^{\oracle}(\pi)$ & Oracle marginal policy effect. \\[2pt]

$\tau_{\ADE}^{\oracle}(\pi)$ & Oracle average direct effect. \\[2pt]

$\tau_{\AIE}^{\oracle}(\pi)$ & Oracle average indirect effect. \\[2pt]

\hline
\end{tabular}
}
\caption{General design and oracle potential-outcome notation.}
\label{tab:notation-general}
\end{table}

\begin{table}[h!]
\centering
\renewcommand{\arraystretch}{1.15}
\scalebox{0.8}{
\begin{tabular}{l p{0.75\textwidth}}
\hline
\textbf{Symbol} & \textbf{Meaning} \\
\hline
$d_i(\bw)$ & Researcher-chosen exposure mapping for unit $i$ as a function of the full assignment vector $\bw$. \\[2pt]

$\cD_i$ & Codomain of the exposure mapping $d_i:\{0,1\}^n\to\cD_i$. \\[2pt]

$h_i(d_i(\bw))$ & Generic exposure-based outcome model that depends on $\bw$ only through $d_i(\bw)$. \\[2pt]

$h_i^*(d;\pi)$ & Pseudo-true projection under $\RCT(\pi)$:
$h_i^*(d;\pi)=\EE_{\bW\sim\RCT(\pi)}[\,y_i(\bW)\mid d_i(\bW)=d\,]$. \\[2pt]

$\tilde y_i(\bw;\pi)$ & Design-induced pseudo-true potential outcome:
$\tilde y_i(\bw;\pi)=h_i^*(d_i(\bw);\pi)
=\EE_{\bW^{(2)}\sim\RCT(\pi)}
[\,y_i(\bW^{(2)})\mid d_i(\bW^{(2)})=d_i(\bw)\,]$. \\[4pt]

$\mu(\pi_1,\pi_2)$ & Two-copy population criterion:
$\frac{1}{n}\sum_{i=1}^n
\EE_{\bW^{(1)}\sim\RCT(\pi_1)}
\!\Big[
\EE_{\bW^{(2)}\sim\RCT(\pi_2)}
\big[
y_i(\bW^{(2)})\mid d_i(\bW^{(2)})=d_i(\bW^{(1)})
\big]
\Big]$. \\[2pt]

$\tau_{\MPE}(\pi)$ & Pseudo-true marginal policy effect induced by $\{d_i\}$. \\[2pt]

$\tau_{\ADE}(\pi)$ & Pseudo-true average direct effect induced by $\{d_i\}$. \\[2pt]

$\tau_{\AIE}(\pi)$ & Pseudo-true average indirect effect induced by $\{d_i\}$. \\[2pt]

$f=\{f_i\}_{i=1}^n$ & Generic candidate collection of outcome approximations $f_i:\{0,1\}^n\to\RR$. \\[2pt]

$\tau_{\star}^{\mathrm{func}}(f;\pi)$ & Functional induced by $f$ for $\star\in\{\MPE,\ADE,\AIE\}$, as defined in \eqref{eq:functional estimand}. \\[2pt]

\hline
\end{tabular}
}
\caption{Exposure mappings and pseudo-true objects in Section~\ref{sec:misspecified estimand}.}
\label{tab:notation-pseudotrue}
\end{table}

\begin{table}[h!]
\centering
\renewcommand{\arraystretch}{1.15}
\scalebox{0.8}{
\begin{tabular}{l p{0.75\textwidth}}
\hline
\textbf{Symbol} & \textbf{Meaning} \\
\hline
$\bE=(E_{ij})\in\{0,1\}^{n\times n}$ & Symmetric adjacency matrix of the observed network, where $E_{ij}=1$ indicates that $i$ and $j$ are linked. \\[2pt]

$\cN_i=\{j\neq i:E_{ij}=1\}$ & Neighborhood of unit $i$. \\[2pt]

$M_i(\bw)=\sum_{j\neq i}E_{ij}w_j$ & Number of treated neighbors of unit $i$ under assignment $\bw$. \\[4pt]

$d_i^{\rmL}(\bw)$ & Local exposure mapping; in Section~\ref{sec:local vs global example estimand},
$d_i^{\rmL}(\bw)=\{w_j:j\in\cN_i\}$. \\[2pt]

$d_i^{\rmG}(\bw)$ & Global exposure mapping; in Section~\ref{sec:local vs global example estimand},
$d_i^{\rmG}(\bw)=P_n(\bw)$. \\[2pt]

$\tau_{\AIE}^{\rmL}(\pi)$ & Local spillover effect in Section~\ref{sec:local vs global example estimand}, induced by $d_i^{\rmL}$. \\[2pt]

$\tau_{\AIE}^{\rmG}(\pi)$ & Global spillover effect in Section~\ref{sec:local vs global example estimand}, induced by $d_i^{\rmG}$. \\[4pt]

$Q_i\in\cQ$ & Latent heterogeneity for unit $i$ entering the graphon and equilibrium primitives. \\[2pt]

$G_n(u,v)$ & Graphon sequence governing link probabilities, with $E_{ij}\sim\mathrm{Bernoulli}(G_n(Q_i,Q_j))$ for $i<j$. \\[2pt]

$\rho_n$ & Sparsity parameter of the graphon sequence. \\[4pt]

$P_n(\bw)\in\RR^J$ & Equilibrium state (e.g.\ price vector) induced by assignment $\bw$. \\[2pt]

$z_i(w_i,p)\in\RR^J$ & Excess demand of unit $i$ at price $p$ under own treatment $w_i$. \\[2pt]

$Z_i=z_i(W_i,P_n(\bW))$ & Realized excess demand of unit $i$. \\[2pt]

$p_\pi^*$ & Population-clearing equilibrium under $\RCT(\pi)$. \\[2pt]

$\xi_z\in\RR^{J\times J}$ & Population price derivative of excess demand. \\[2pt]

$\xi_y\in\RR^J$ & Population price derivative of outcomes. \\[4pt]

$\tau_{\ADE}^{\oracle,*},\tau_{\AIE}^{\rmL,*},\tau_{\AIE}^{\rmG,*},\tau_{\MPE}^{\oracle,*}$ & Population limits in Theorem~\ref{thm: asymptotic limit of estimands}, satisfying
$\tau_{\MPE}^{\oracle,*}
=
\tau_{\ADE}^{\oracle,*}
+
\tau_{\AIE}^{\rmL,*}
+
\tau_{\AIE}^{\rmG,*}$. \\[2pt]

\hline
\end{tabular}
}
\caption{Local and global spillover notation in Section~\ref{sec:local vs global example}.}
\label{tab:notation-localglobal}
\end{table}

\begin{table}[h!]
\centering
\renewcommand{\arraystretch}{1.15}
\scalebox{0.8}{
\begin{tabular}{l p{0.75\textwidth}}
\hline
\textbf{Symbol} & \textbf{Meaning} \\
\hline
$\hat\tau_{\ADE}^{\oracle}$ & Horvitz--Thompson estimator for the oracle direct effect $\tau_{\ADE}^{\oracle}$ in Section~\ref{sec:local vs global example estimator}:
$\frac{1}{n}\sum_{i=1}^n
\left(\frac{W_i}{\pi}-\frac{1-W_i}{1-\pi}\right)Y_i$. \\[2pt]

$\hat\tau_{\AIE}^{\rmL}$ & PC-balancing estimator of the local spillover effect $\tau_{\AIE}^{\rmL}$. \\[2pt]

$\hat\tau_{\AIE}^{\rmG}$ & Augmented-trial estimator of the global spillover effect $\tau_{\AIE}^{\rmG}$. \\[2pt]

$\hat\tau_{\MPE}^{\oracle}$ & Estimator of the oracle marginal policy effect formed by
$\hat\tau_{\MPE}^{\oracle}
=
\hat\tau_{\ADE}+\hat\tau_{\AIE}^{\rmL}+\hat\tau_{\AIE}^{\rmG}$. \\[4pt]

$\bnu=(\nu_i)_{i=1}^n$ & Raw network Horvitz--Thompson weights,
$\nu_i=\frac{M_i}{\pi}-\frac{N_i-M_i}{1-\pi}
=\sum_{j\in\cN_i}\left(\frac{W_j}{\pi}-\frac{1-W_j}{1-\pi}\right)$. \\[2pt]

$\hat{\bPsi}\in\RR^{n\times r}$ & Matrix of top-$r$ normalized eigenvectors of $\bE$, satisfying $\hat{\bPsi}^\top\hat{\bPsi}=I_r$. \\[4pt]

$\bU\in\RR^{n\times J}$ & Individualized perturbations in the augmented trial, with $U_{ij}\sim\mathrm{Unif}(\{\pm h_n\})$. \\[2pt]

$h_n$ & Magnitude of the augmented-trial perturbations. \\[2pt]

$\hat\gamma$ & Estimator of the equilibrium-price derivative ratio:
$\hat\gamma=(\bU^\top\bZ)^{-1}(\bU^\top\bY)$. \\[2pt]

$\hat\tau_z$ & Horvitz--Thompson estimator of the direct effect on excess demand:
$\hat\tau_z=
\frac{1}{n}\sum_{i=1}^n
\left(\frac{W_i}{\pi}-\frac{1-W_i}{1-\pi}\right)Z_i$. \\[2pt]

$V^{(1)},V^{(2)},V^{(3)}$ & Components of the asymptotic variance decomposition for $\hat\tau_{\ADE}$ in Theorem~\ref{thm:direct effect estimator}. \\[2pt]

$\mathsf V_{\rmL},\mathsf V_{\rmG}$ & Asymptotic variance constants for $\hat\tau_{\AIE}^{\rmL}$ and $\hat\tau_{\AIE}^{\rmG}$, respectively. \\[2pt]

\hline
\end{tabular}
}
\caption{Estimator and asymptotic notation in Section~\ref{sec:local vs global example estimator}.}
\label{tab:notation-estimators}
\end{table}

\end{document}